\documentclass[11pt,letterpaper]{article}
\usepackage{matharticle}
\newcommand{\e}{\eps}
\DeclareMathOperator{\pE}{\mathop{\tilde{\mathbb{E}}}}


\newcommand{\Paren}[1]{\left(#1\right)}


\newcommand{\Brac}[1]{\left[#1\right]}

\newcommand{\Abs}[1]{\left\lvert#1\right\rvert}


\newcommand{\Norm}[1]{\left\lVert#1\right\rVert}






\newcommand{\iprod}[1]{\langle#1\rangle}
\newcommand{\Iprod}[1]{\left\langle#1\right\rangle}


\newcommand{\mper}{\,.}
\newcommand{\mcom}{\,,}

\newtheorem{problem}[theorem]{Problem}

\newcommand{\calB}{\mathcal B}

\newcommand{\proves}[1]{\vdash_{#1} \, }
\newcommand{\tensor}{\otimes}

\newcommand*{\Id}{\mathrm{Id}}

\newcommand{\cA}{\mathcal A}

\newcommand{\cD}{\mathcal D}

\newcommand{\cL}{\mathcal L}

\newcommand{\cN}{\mathcal N}

\newcommand{\cX}{\mathcal X}

\newcommand{\cAhat}{\widehat{\cA}}

\newcommand{\bmu}{\overline{\mu}}

\newcommand\MYcurrentlabel{xxx}
\newcommand{\MYstore}[2]{%
  \global\expandafter \def \csname MYMEMORY #1 \endcsname{#2}%
}
\newcommand{\MYload}[1]{%
  \csname MYMEMORY #1 \endcsname%
}
\newcommand{\MYnewlabel}[1]{%
  \renewcommand\MYcurrentlabel{#1}%
  \MYoldlabel{#1}%
}
\newcommand{\MYdummylabel}[1]{}
\newcommand{\torestate}[1]{%
  \let\MYoldlabel\label%
  \let\label\MYnewlabel%
  #1%
  \MYstore{\MYcurrentlabel}{#1}%
  \let\label\MYoldlabel%
}
\newcommand{\restatetheorem}[1]{%
  \let\MYoldlabel\label
  \let\label\MYdummylabel
  \begin{theorem*}[Restatement of \cref{#1}]
    \MYload{#1}
  \end{theorem*}
  \let\label\MYoldlabel
}
\newcommand{\restatelemma}[1]{%
  \let\MYoldlabel\label
  \let\label\MYdummylabel
  \begin{lemma*}[Restatement of \cref{#1}]
    \MYload{#1}
  \end{lemma*}
  \let\label\MYoldlabel
}
\newcommand{\restateprop}[1]{%
  \let\MYoldlabel\label
  \let\label\MYdummylabel
  \begin{proposition*}[Restatement of \cref{#1}]
    \MYload{#1}
  \end{proposition*}
  \let\label\MYoldlabel
}
\newcommand{\restatefact}[1]{%
  \let\MYoldlabel\label
  \let\label\MYdummylabel
  \begin{fact*}[Restatement of \prettyref{#1}]
    \MYload{#1}
  \end{fact*}
  \let\label\MYoldlabel
}
\newcommand{\restate}[1]{%
  \let\MYoldlabel\label
  \let\label\MYdummylabel
  \MYload{#1}
  \let\label\MYoldlabel
}

\usepackage[
pagebackref,
colorlinks=true,
urlcolor=blue,
linkcolor=blue,
citecolor=OliveGreen,
]{hyperref}
\usepackage{paralist}

\setcounter{tocdepth}{1}

\title{Mixture Models, Robustness, and Sum of Squares Proofs}

\author{
    Samuel B. Hopkins\thanks{Supported by an NSF graduate research fellowship, a Microsoft Research PhD fellowship, a Cornell University fellowship, and David Steurer's NSF CAREER award. Part of this work was accomplished while this author was an intern at Microsoft Research New England.}\\
    Cornell University\\
    \texttt{samhop{@}cs.cornell.edu} 
    \and
    Jerry Li\thanks{Supported by NSF CAREER Award CCF-1453261, CCF-1565235, a Google Faculty Research Award, and an NSF Graduate Research Fellowship.}\\
    MIT\\
    \texttt{jerryzli{@}mit.edu}
}

\begin{document}
\maketitle
\thispagestyle{empty} 

\begin{abstract}
    We use the Sum of Squares method to develop new efficient algorithms for learning well-separated mixtures of Gaussians and robust mean estimation, both in high dimensions, that substantially improve upon the statistical guarantees achieved by previous efficient algorithms.
    Our contributions are:
    \begin{itemize}
    \item \textbf{Mixture models with separated means: } We study mixtures of $k$ distributions in $d$ dimensions, where the means of every pair of distributions are separated by at least $k^{\e}$.
    In the special case of spherical Gaussian mixtures, we give a $(dk)^{O(1/\e^2)}$-time algorithm that learns the means assuming separation at least $k^{\eps}$, for any $\eps > 0$.
    This is the first algorithm to improve on greedy (``single-linkage'') and spectral clustering, breaking a long-standing barrier for efficient algorithms at separation $k^{1/4}$.

    \item \textbf{Robust estimation: } When an unknown $(1-\e)$-fraction of $X_1,\ldots,X_n$ are chosen from a sub-Gaussian distribution with mean $\mu$ but the remaining points are chosen adversarially, we give an algorithm recovering $\mu$ to error $\e^{1-1/t}$ in time $d^{O(t^2)}$, so long as sub-Gaussian-ness up to $O(t)$ moments can be certified by a Sum of Squares proof.
    This is the first polynomial-time algorithm with guarantees approaching the information-theoretic limit for non-Gaussian distributions.
    Previous algorithms could not achieve error better than $\eps^{1/2}$.
    \end{itemize}
    Both of these results are based on a unified technique.
    Inspired by recent algorithms of Diakonikolas et al. in robust statistics, we devise an SDP based on the Sum of Squares method for the following setting: given $X_1,\ldots,X_n \in \R^d$ for large $d$ and $n = \poly(d)$ with the promise that a subset of $X_1,\ldots,X_n$ were sampled from a probability distribution with bounded moments, recover some information about that distribution.
\end{abstract}

\newpage
\thispagestyle{empty} 

\tableofcontents{}

\newpage
\setcounter{page}{1}


\section{Introduction}
We propose and analyze a family of new algorithms for some fundamental high-dimensional statistical estimation problems.
In particular, we give new algorithms for the following problems.
\begin{enumerate}
  \item \textbf{Learning $\Delta$-separated mixture models:} Given $n$ samples $X_1,\ldots,X_n \in \R^d$ from a mixture of $k$ probability distributions $\cD_1,\ldots,\cD_k$ on $\R^d$ with means $\mu_1,\ldots,\mu_k \in \R^d$ and covariances $\Sigma_1,\ldots,\Sigma_k \preceq \Id$, where $\|\mu_i - \mu_j\| \geq \Delta$, estimate $\mu_1,\ldots,\mu_k$.\footnote{A mixture model consists of probability distributions $\cD_1,\ldots,\cD_k$ on $\R^d$ and mixing weights $\lambda_1,\ldots,\lambda_k \geq 0$ with $\sum_{i \leq k} \lambda_i = 1$. The distribution $\cD_i$ has mean $\mu_i$. Each sample $x_j$ is generated by first sampling a component $i \in [k]$ according to the weights $\lambda$, then sampling $x_j \sim \cD_i$.}
  \item \textbf{Robust mean estimation:} Given $n$ vectors $X_1,\ldots,X_n \in \R^d$, of which a $(1-\e)$-fraction are samples from a probability distribution $\cD$ with mean $\mu$ and covariance $\Sigma \preceq \Id$ and the remaining $\e$-fraction are arbitrary vectors (which may depend on the $(1-\e)n$ samples from $\cD$), estimate $\mu$.
\end{enumerate}
Mixture models, and especially Gaussian mixture models (where $\cD_1,\ldots,\cD_k$ are Gaussian distributions) have been studied since Pearson in 1894 \cite{pearson1894contributions}.
Work in theoretical computer science dates at least to the pioneering algorithm of Dasgupta in 1999 \cite{dasgupta1999learning}, which has been followed by numerous other algorithms and lower bounds \cite{wu1983convergence,dasgupta2007probabilistic,MR2115036-Arora05, DBLP:conf/focs/VempalaW02, DBLP:conf/focs/KumarK10, achlioptas2005spectral, DBLP:conf/colt/FeldmanSO06,DBLP:conf/stoc/KalaiMV10,DBLP:conf/focs/BelkinS10,DBLP:conf/focs/MoitraV10,MR3385380-Hsu13, DBLP:conf/colt/AndersonBGRV14, DBLP:conf/stoc/BhaskaraCMV14,DK14, SOAJ14,DBLP:conf/stoc/HardtP15,xu2016global,MR3388256-Ge15,LS17,regev-vijayaraghavan-17,daskalakis2016ten}.

Robust estimation in the form we study here is a more recent transplant to theoretical computer science \cite{DBLP:conf/focs/DiakonikolasKK016, DBLP:conf/focs/LaiRV16,DBLP:journals/corr/CharikarSV16,DKS16c,cherapanamjeri2017thresholding,DKKLMS17,DKKLMS17b,SCV17}, but statisticians have long sought outlier-robust estimators.
Formal study of arbitrarily-bad/adversarially-chosen outliers originates in the 1960s with the advent of ``breakdown points'' in statistics \cite{huber1964robust,Tuk75, HRRS86,JP78,Ber06}.

Though outwardly rather different, mixture model learning and robust estimation share some underlying structure.
An algorithm for either must identify or otherwise recover information about one or several \emph{structured} subsets of a number of samples $X_1,\ldots,X_n \in \R^d$.
In the mixture model case, each collection of all the samples from each distribution $\cD_i$ is a structured subset.
In the robust estimation case there is just one structured subset: the $(1-\e)n$ samples drawn from the distribution $\cD$.\footnote{The recent work \cite{DBLP:journals/corr/CharikarSV16} codifies this similarity by unifying both these problems into what they call a list-decodable learning setting.}
Our algorithms are based on new techniques for identifying such structured subsets of points in large data sets.

For mixture models, a special case of our main result yields the first progress in more than 15 years on efficiently clustering mixtures of separated spherical Gaussians.
The question here is: if $\cD_1,\ldots,\cD_k$ are all Gaussian with covariance identity and $k = \poly(d)$, what is the minimum cluster separation $\Delta$ which allows for a polynomial-time algorithm to estimate $\mu_1,\ldots,\mu_k$ from $\poly(k,d)$ samples from the mixture model?
The guarantees of the previous best algorithms for this problem, which require $\Delta \geq O(k^{1/4})$, are captured by a simple greedy clustering algorithm, sometimes called \emph{single-linkage clustering}: when $\Delta \geq O(k^{1/4})$, with high probability every pair of samples from the same cluster is closer in Euclidean distance than every pair of samples from differing clusters.
We break this single-linkage clustering barrier: for every $\gamma > 0$ we give a $\poly(k,d)$-time algorithm for this problem when $\Delta > k^\gamma$.

To do so we make novel algorithmic use of higher moments (in fact, $O(1/\gamma)$ moments) of the underlying distributions $\cD_i$.
Our main technical contribution is a new algorithmic technique for finding either a structured subset of data points or the empirical mean of such a subset when the subset consists of independent samples from a distribution $\cD$ which has bounded higher-order moments \emph{and there is a simple certificate of this boundedness}.
This technique leverages the Sum of Squares (SoS) hierarchy of semidefinite programs (SDPs), and in particular a powerful approach for designing SoS-based algorithms in machine learning settings, developed and used in \cite{DBLP:conf/stoc/BarakKS14, DBLP:conf/stoc/BarakKS15, ge2015decomposing, DBLP:conf/colt/BarakM16,DBLP:conf/colt/HopkinsSS15,DBLP:conf/focs/MaSS16,DBLP:journals/corr/PotechinS17}.
We suspect use of higher moments is necessary in light of second-moment indistinguishability results for mixtures with small separation \cite{achlioptas2005spectral}.

This SoS approach to unsupervised learning rests on a notion of \emph{simple identifiability proofs:} the main step in designing an algorithm using SoS to recover some parameters $\theta$ from samples $x_1,\ldots,x_n \sim p(x \, | \, \theta)$ is to prove in a restricted proof system that $\theta$ is likely to be uniquely identifiable from $x_1,\ldots,x_n$.
We develop this thoroughly later on, but roughly speaking one may think of this as requiring the identifiability proof to use only simple inequalities, such as Cauchy-Schwarz and H\"older's inequality, applied to low-degree polynomials.
The simple identifiability proofs we construct for both the mixture models and robust estimation settings are heavily inspired by the robust estimation algorithms of Diakonikolas et al. \cite{DBLP:conf/focs/DiakonikolasKK016}.

\subsection{Results}
Both of the problems we study have a long history; for now we just note some highlights and state our main results.

\paragraph{Mixture models} The problem of learning mixture models dates to Pearson in 1894, who invented the method of moments in order to separate a mixture of two Gaussians \cite{pearson1894contributions}.
Mixture models have since become ubiquitous in data analysis across many disciplines \cite{titterington1985statistical,mclachlan2004finite}.
In recent years, computer scientists have devised many ingenious algorithms for learning mixture models as it became clear that classical statistical methods (e.g. maximum likelihood estimation) often suffer from computational intractability, especially when there are many mixture components or the components are high dimensional.

A highlight of this work is a series of algorithmic results when the components of the mixture model are Gaussian \cite{dasgupta1999learning,dasgupta2007probabilistic,MR2115036-Arora05, DBLP:conf/focs/VempalaW02}.
Here the main question is: how small can the cluster separation $\Delta$ be such that there exists an algorithm to estimate $\mu_1,\ldots,\mu_k$ from samples $x_1,\ldots,x_n$ in $\poly(k,d)$ time (hence also using $n = \poly(k,d)$ samples)?
Focusing for simplicity on spherical Gaussian components (i.e. with covariance equal to the identity matrix $\Id$) and with number of components similar to the ambient dimension of the data (i.e. $k = d$) and uniform mixing weights (i.e. every cluster has roughly the same representation among the samples), the best result in previous work gives a $\poly(k)$-time algorithm when $\Delta \geq k^{1/4}$.

Separation $\Delta = k^{1/4}$ represents a natural algorithmic barrier: when $\Delta \geq k^{1/4}$, \emph{every pair of samples from the same cluster are closer to each other in Euclidean distance than are every pair of samples from distinct clusters (with high probability)}, while this is no longer true if $\Delta < k^{1/4}$.
Thus, when $\Delta \geq k^{1/4}$, a simple greedy algorithm correctly clusters the samples into their components (this algorithm is sometimes called \emph{single-linkage clustering}).
On the other hand, standard information-theoretic arguments show that the means remain approximately identifiable from $\poly(k,d)$ samples when $\Delta$ is as small as $O(\sqrt{\log k})$, but these methods yield only exponential-time algorithms.\footnote{Recent and sophisticated arguments show that the means are identifiable (albeit inefficiently) with error depending only on the number of samples and not on the separation $\Delta$ even when $\Delta = O(\sqrt{\log k})$ \cite{regev-vijayaraghavan-17}.}
Nonetheless, despite substantial attention, this $\Delta = k^{1/4}$ barrier representing the breakdown of single-linkage clustering has stood for nearly 20 years.

We prove the following main theorem, breaking the single-linkage clustering barrier.
\begin{theorem}[Informal, special case for uniform mixture of spherical Gaussians]

\label{thm:mixtures-intro}
  For every $\gamma > 0$ there is an algorithm with running time $(dk)^{O(1/\gamma^2)}$ using at most $n \leq k^{O(1)} d^{O(1/\gamma)}$ samples which, given samples $x_1,\ldots,x_n$ from a uniform mixture of $k$ spherical Gaussians $\cN(\mu_i,\Id)$ in $d$ dimensions with means $\mu_1,\ldots,\mu_k \in \R^d$ satisfying $\|\mu_i - \mu_j\| \geq k^{\gamma}$ for each $i \neq j$, returns estimators $\hat{\mu}_1,\ldots,\hat{\mu}_k \in \R^d$ such that $\|\hat{\mu}_i - \mu_i\| \leq 1/\poly(k)$ (with high probability).
\end{theorem}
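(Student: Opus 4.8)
The plan is to build a sum-of-squares relaxation whose feasible pseudo-distributions are \emph{forced}, by an SoS-provable identifiability argument, to concentrate on the true clusters, and then to round it. \textbf{Setup.} First I would record that a spherical Gaussian has \emph{certifiably bounded} moments: the polynomial $(Ct)^{t}\|u\|^{2t}-\bbE_{x\sim\cN(\mu,\Id)}\langle x-\mu,u\rangle^{2t}$ in $u$ equals $((Ct)^t-(2t-1)!!)(\|u\|^2)^t$ and hence is a sum of squares of degree-$t$ polynomials. Using matrix concentration over the $\approx d^{O(t)}$-dimensional space of degree-$t$ monomials, with $n=k^{O(1)}d^{O(t)}$ samples \emph{every} true cluster $S_a$ satisfies the empirical bound $\sum_{j\in S_a}\langle X_j-\mu_a,u\rangle^{2t}\preceq 2(Ct)^t|S_a|\|u\|^{2t}$ \emph{with a degree-$O(t)$ SoS proof}, each $|S_a|=(1\pm o(1))n/k$, and each empirical mean $\mu_{S_a}$ is within $1/\poly(k)$ of $\mu_a$ (this forces the $k^{O(1)}$ in $n$). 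I set $t=\Theta(1/\gamma)$. Now introduce $w\in\bbR^n$ with constraints $\cA=\{\,w_j^2=w_j,\ \textstyle\sum_j w_j=n/k,\ \sum_j w_j\langle X_j-\mu_w,u\rangle^{2t}\preceq_{\mathrm{SoS}}2(Ct)^t(n/k)\|u\|^{2t}\,\}$, where $\mu_w=\frac kn\sum_j w_jX_j$ and the last constraint is the PSD-ness of a $d^{O(t)}\times d^{O(t)}$ matrix whose entries are degree-$O(t)$ polynomials in $w$. Each true indicator $\mathbb 1_{S_a}$ (approximately) satisfies $\cA$, so $\cA$ is feasible and a degree-$O(t)$ pseudo-expectation $\pE$ satisfying it can be found in time $n^{O(t)}=(dk)^{O(1/\gamma^2)}$; to guarantee the rounding sees \emph{all} clusters I will either additionally minimize the ``spreading'' objective $\sum_j(\pE w_j)^2$ (whose minimum $n/k^2$ is attained at the uniform-over-clusters mixture, pinning $\pE[w_j]=1/k$) or else peel off one recovered cluster and recurse.

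\textbf{The identifiability lemma (the heart).} With $q_a=\langle w,\mathbb 1_{S_a}\rangle/(n/k)$, I would prove $\cA\ \proves{O(t)}\ q_a\|\mu_w-\mu_a\|^{2t}\le C'$ for $C'=2^{O(t)}(Ct)^t$: for $j\in S_a$ with $w_j=1$ one has $\|\mu_w-\mu_a\|^2=\langle X_j-\mu_a,\mu_w-\mu_a\rangle-\langle X_j-\mu_w,\mu_w-\mu_a\rangle$, so raising to the $2t$th power, multiplying by $w_j$, summing over $j\in S_a$, and invoking the two moment bounds (for $S_a$ from the Setup and for $w$ from $\cA$) together with the SoS inequality $(x-y)^{2t}\preceq 2^{2t-1}(x^{2t}+y^{2t})$ gives it. Combined with $\|\mu_a-\mu_b\|\ge\Delta=k^\gamma$ this yields, again in SoS, $q_aq_b\le 2^{O(t)}C'/k^{2t\gamma}=O_\gamma(1)\cdot k^{-\Omega(1)}$, the last step using crucially that $t\gamma=\Theta(1)$. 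Taking pseudo-expectations and using $\sum_a q_a=1$, we get $\pE[q_aq_b]=O_\gamma(1)k^{-\Omega(1)}$ for $a\ne b$ and $\sum_a\pE[q_a^2]\ge 1-o(1)$: the pseudo-distribution is ``blocky'' for the (unknown) cluster partition.

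\textbf{Rounding, and the main obstacle.} The degree-$2$ pseudo-moment matrix $M=\pE[ww^\top]$ is PSD, entrywise nonnegative, has $\pE[w_j]$ on the diagonal, and by the identifiability lemma (plus the spreading objective or peeling) is close in Frobenius norm to $\bar M=\frac1k\sum_a\mathbb 1_{S_a}\mathbb 1_{S_a}^\top$; hence for most indices $\ell$, with $\ell\in S_a$, the column $Me_\ell\approx\frac1k\mathbb 1_{S_a}$, so $X^\top(Me_\ell)/\langle Me_\ell,\mathbf 1\rangle\approx\mu_{S_a}\approx\mu_a$. Trying $O(k\log k)$ random $\ell$ (or peeling $k$ times) hits every cluster; the candidates then lie in $k$ clumps of radius $1/\poly(k)$ separated by $\ge\Delta-1/\poly(k)$, and outputting the clump centers gives the $\hat\mu_i$, once the constants hidden in $t=\Theta(1/\gamma)$ and $n=k^{O(1)}d^{O(1/\gamma)}$ are taken large enough. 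I expect two places to carry the weight. First, making the identifiability argument an SoS proof of degree only $O(1/\gamma)$ while still defeating separation $k^\gamma$ — this is exactly what pins $t=\Theta(1/\gamma)$ and the $(dk)^{O(1/\gamma^2)}$ runtime, and it leans on the certifiable moment bounds, whose $d^{O(t)}$-sample matrix concentration is the fiddliest part of the setup. Second, and the genuinely delicate step, is the rounding: forcing the SDP solution to represent \emph{all} $k$ clusters rather than collapsing onto one, and then driving the recovered means to $1/\poly(k)$ error, which requires the blockiness of $M$ in a strong per-column sense together with operator-norm control on the data matrix.
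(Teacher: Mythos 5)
Your overall architecture is the same as the paper's: SoS-certifiable moment bounds for Gaussians, a program in a Boolean indicator $w$ with a size constraint and a degree-$O(t)$ PSD-matrix moment constraint centered at the program variable $\mu_w$, satisfiability via $d^{O(t)}$-sample matrix concentration, a H\"older-plus-separation identifiability argument bounding the cross terms $\pE\iprod{\mathbb 1_{S_a},w}\iprod{\mathbb 1_{S_b},w}$, and a rounding of $\pE ww^\top$, with $t=\Theta(1/\gamma)$ and runtime $n^{O(t)}$. But there is a genuine gap at the heart of the identifiability step. The derivation you describe (decompose $\|\mu_w-\mu_a\|^2$ per sample, raise to the $2t$-th power, multiply by $w_j$, sum over $S_a$, apply the two moment bounds and $(x-y)^{2t}\le 2^{2t-1}(x^{2t}+y^{2t})$) does not yield your claimed inequality $\cA\proves{O(t)} q_a\|\mu_w-\mu_a\|^{2t}\le C'$; it yields only the weaker form $q_a\,\|\mu_w-\mu_a\|^{4t}\le C'\,\|\mu_w-\mu_a\|^{2t}$ (in the paper this is Lemma~\ref{lem:mixture-ident-dist}, which carries the extra factors $\iprod{a_j,w}^{t-1}\|\mu-\bmu_j\|^{t}$ on the right). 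The clean form you assert is true as a real-valued statement, but obtaining it requires dividing by $\|\mu_w-\mu_a\|^{2t}$ together with a case split on whether that quantity is small, which is not a legitimate low-degree SoS step; and your next move (multiplying by $q_b$, invoking the separation $\Delta^{2t}$ lower bound, and concluding a pointwise bound on $q_aq_b$) needs exactly that cancelled form. The paper bridges this gap not inside the proof system but at the pseudo-expectation level: Lemma~\ref{lem:nice-pe} performs the cancellation using pseudo-expectation Cauchy--Schwarz and pseudo-expectation H\"older (Fact~\ref{fact:pe-holders}) applied to $\pE\iprod{a_a,w}^t\iprod{a_b,w}^t\|\mu-\bmu_a\|^{2t}$, and only then arrives at $\pE\iprod{a_a,w}\iprod{a_b,w}\le 2^{O(t)}t^{t/2}n^2/(k\Delta^t)$. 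Without this (or an equivalent) argument your sketch does not close; with it, it becomes the paper's proof.

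Two further points on rounding, which you correctly flag as delicate but resolve differently from the paper. First, the paper does not pin the first moments $\pE w_j$; it minimizes $\|\pE ww^\top\|_F$ and uses minimality (the uniform mixture over cluster indicators is feasible) to get the upper bound $\|\pE ww^\top\|_F\le \tfrac1k\|A\|_F$, which combined with the cross-term bound gives Frobenius closeness $\|M-A\|_F^2\le n\cdot 2^{O(t)}t^{t/2}k^4/\Delta^t$ and then a greedy distance-based rounding; your first-moment ``spreading'' objective controls only the diagonal of $M$ and by itself does not give the per-column blockiness you use, so you would need an additional argument (or to switch to the Frobenius objective). Second, outputting (weighted) empirical means of the recovered soft clusters risks error that grows with $d$: the misclassified mass multiplies mean offsets that, even after naive clustering, can be $\poly(d,k)$, so to keep the error at $1/\poly(k)$ you need either the per-pair refinement of the cross-term bound (mass on cluster $b$ scaling like $\|\mu_a-\mu_b\|^{-t}$) together with operator-norm control of the within-cluster fluctuations, or, as the paper does, to run the robust mean-estimation routine \textsc{EstimateMean} on each rounded cluster, whose error depends only on the outlier fraction and $t$ and not on $d$.
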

We pause here to make several remarks about this theorem.
Our algorithm makes novel use of higher order moments of Gaussian (and sub-Gaussian) distributions.
Previous work for efficiently learning well-separated mixtures of Gaussians used only second order moment information, whereas we use $O(1 / \gamma)$ moments.

The guarantees of this theorem hold well beyond the Gaussian setting; the theorem applies to any mixture model with $k^{\gamma}$ separation and whose component distributions $\cD_1,\ldots,\cD_k$ are what we term $O(1/\gamma)$-\emph{explicitly bounded}.
We define this notion formally below, but roughly speaking, a $t$-explicitly bounded distribution $\cD$ has $t$-th moments obeying a subgaussian-type bound---that is, for every unit vector $u \in \R^d$ one has $\E_{Y \sim \cD} |\iprod{Y,u}|^t \leq t^{t / 2}$---and there is a certain kind of \emph{simple certificate} of this fact, namely a low-degree Sum of Squares proof.
Among other things, this means the theorem also applies to mixtures of symmetric product distributions with bounded moments.

For mixtures of distributions with sufficiently-many bounded moments (such as Gaussians), our guarantees go even further.
We show that using $d^{O(\log k)^2}$ time and $d^{O(\log k)}$ samples, we can recover the means to error $1 / \poly (k)$ even if the separation is only $C \sqrt{\log k}$ for some universal constant $C$.
Strikingly, \cite{regev-vijayaraghavan-17} show that any algorithm that can learn the means nontrivially given separation $o (\sqrt{\log k})$ must require super-polynomial samples and time.
Our results show that just above this threshold, it is possible to learn with just quasipolynomially many samples and time.

Finally, throughout the paper we state error guarantees roughly in terms of obtaining $\hat{\mu_i}$ with $\|\hat{\mu}_i - \mu_i\| \leq 1/\poly(k) \ll k^{\gamma}$, meaning that we get $\ell_2$ error which is much less than the true separation.
In the special case of spherical Gaussians, we note that we can use our algorithm as a warm-start to recent algorithms due to \cite{regev-vijayaraghavan-17}, and achieve error $\delta$ using $\poly (m, k, 1 / \delta)$ additional runtime and samples for some polynomial independent of $\gamma$.

\paragraph{Robust mean estimation}
Estimators which are robust to outlying or corrupted samples have been studied in statistics at least since the 1960s \cite{huber1964robust, tukey1975mathematics}.
The model we consider in this paper is a slight generalization of H\"{u}ber's contamination model \cite{huber1964robust}.
We are given $X_1, \ldots, X_n$, originally drawn iid from some unknown distribution $\cD$, but an adversary has changed an $\eps$ fraction of these points adversarially.
We call such a set of points $\eps$-corrupted.\footnote{H\"{u}ber's contamination model essentially only allows the adversary to add corrupted points, but not remove uncorrupted points.}
The goal of robust statistics is to recover statistics of $\cD$ such as mean and covariance, given $\eps$-corrupted samples from $\cD$.

In classical robust statistics, the robust mean estimation problem is known as \emph{robust estimation of location}, and robust covariance estimation is known as \emph{robust estimation of scale}.
Classical works consider a measure known as breakdown point, which is (informally) the fraction of samples that an adversary must corrupt before the estimator has no provable guarantees.
They often design robust estimators for mean and covariance that achieve optimal error in many fundamental settings.
For instance, given samples from a symmetric sub-Gaussian distribution in $k$ dimensions such that an $\eps$-fraction are arbitrarily corrupted, an estimator known as the Tukey median \cite{tukey1975mathematics} achieves error $O(\eps)$, which is information theoretically optimal.
However, these estimators are all $NP$-hard to compute \cite{JP78,Ber06} and the best known algorithms require $\exp(d)$ time in general.

For a long time, all known computationally efficient robust statistics for the mean or covariance of a $d$-dimensional Gaussian had error degrading polynomially with the dimension.\footnote{We remark that this was the state of affairs even for the H\"{u}ber contamination model.}
In recent work, \cite{DBLP:conf/focs/DiakonikolasKK016, DBLP:conf/focs/LaiRV16} gave efficient and robust estimators for these statistics which achieve substantially better error.
In particular, \cite{DBLP:conf/focs/DiakonikolasKK016} achieve error $O(\eps \sqrt{\log 1 / \eps})$ for estimating the mean of a Gaussian with identity covariance, and error $O(\eps \log^{3/2} 1 / \eps)$ for robustly estimating the mean of a Gaussian with unknown variance $\Sigma \preceq I$.

Unfortunately, these results are somewhat tailored to Gaussian distributions, or require covariance very close to identity.
For general sub-Gaussian distributions with unknown variance $\Sigma \preceq I$, the best known efficient algorithms achieve only $O(\eps^{1/2})$ error \cite{DKKLMS17b, SCV17}.
We substantially improve this, under a slightly stronger condition than sub-Gaussianity.
Recall that a distribution $\cD$ with mean $\mu$ over $\R^d$ is sub-Gaussian if for every unit vector $u$ and every $t \in \N$ even, the following moment bound holds:
\[
\E_{X \sim \cD} \iprod{u, X - \mu}^t \leq t^{t / 2} \; .
\]
Informally stated, our algorithms will work under the condition that this moment bound can be certified by a low degree SoS proof, for all $s \leq t$.
We call such distributions $t$-\emph{explicitly bounded} (we are ignoring some parameters, see Definition~\ref{def:explicitly-bounded} for a formal definition).
This class captures many natural sub-Gaussian distributions, such as Gaussians, product distributions of sub-Gaussians, and rotations thereof (see Appendix~\ref{sec:explicitly-bounded-families}).
For such distributions, we show:

\begin{theorem}[informal, see Theorem~\ref{thm:robust-main}]
\label{thm:robust-intro}
	Fix $\eps > 0$ sufficiently small and let $t \geq 4$.
	Let $\cD$ be a $O(t)$-explicitly bounded distribution over $\R^d$ with mean $\mu^*$.
	There is an algorithm with sample complexity $d^{O(t)} (1/\eps)^{O(1)}$ running time $(d^t \eps)^{O(t)}$ such that given an $\eps$-corrupted set of samples of sufficiently large size from $\cD$, outputs $\mu$ so that with high probability $\| \mu - \mu^* \| \leq O(\eps^{1 - 1 / t})$.
\end{theorem}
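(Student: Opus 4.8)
\emph{Proof strategy.} The plan is to follow the ``simple identifiability proof $\Rightarrow$ algorithm'' paradigm: I would design a Sum of Squares SDP that searches for a large subset of the samples whose recentered $t$-th moments admit a low-degree SoS bound, show that the uncorrupted samples form a feasible such subset, and prove \emph{inside} low-degree SoS that any such subset has empirical mean close to $\mu^*$. Write $\mu^*$ for the mean of $\cD$, let $S^* \subseteq [n]$ with $|S^*| = (1-\e)n$ index the uncorrupted samples, put $w^* = \mathbf{1}_{S^*}\in\{0,1\}^n$, and let $\bmu = \tfrac{1}{|S^*|}\sum_{i\in S^*} X_i$; we may take $t$ even. \emph{Step~1 (concentration).} Using $n = d^{O(t)}(1/\e)^{O(1)}$ samples I would first show that with high probability $\norm{\bmu - \mu^*} = o(\e)$ (using $\Sigma\preceq\Id$), and --- the only genuinely statistical ingredient --- that the empirical distribution on $S^*$ inherits explicit boundedness: there is a degree-$O(t)$ SoS proof, in a vector variable $u$, of $\tfrac1n\sum_i w^*_i\iprod{X_i - \bmu,u}^t \le 2t^{t/2}\cdot\tfrac1n\bigl(\sum_i w^*_i\bigr)\norm{u}^t$. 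This follows from $O(t)$-explicit boundedness (Definition~\ref{def:explicitly-bounded}) together with the fact that a degree-$q$ SoS proof is a PSD matrix, so the empirical certificate matrices concentrate around their population counterparts by matrix Bernstein; I would isolate this as a lemma and defer its proof to the appendix on explicitly bounded families.

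\emph{Step~2 (the program and its feasibility).} Consider the system of polynomial constraints $\cA$ in scalar variables $w_1,\dots,w_n$ and a vector variable $\mu$: Booleanity $w_i^2 = w_i$ for all $i$; size $\sum_i w_i \ge (1-\e)n$; centering $\sum_i w_i(X_i - \mu) = 0$, which forces $\mu$ to be the empirical mean of the chosen subset; and a moment constraint requiring a degree-$O(t)$ SoS proof, in a fresh variable $u$, of $\tfrac1n\sum_i w_i\iprod{X_i - \mu,u}^t \le Ct^{t/2}\cdot\tfrac1n\bigl(\sum_i w_i\bigr)\norm{u}^t$ for a large constant $C$ --- the coefficient matrix of the SoS witness is carried as an auxiliary PSD program variable. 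The algorithm computes a degree-$O(t)$ pseudo-distribution $\pE$ consistent with $\cA$, in time $n^{O(t)} = (d^t/\e)^{O(t)}$, and outputs $\hat\mu := \pE[\mu]$. The planted integral solution $w = w^*$, $\mu = \bmu$ witnesses feasibility: the centering constraint holds by definition of $\bmu$, and the moment constraint, for $C$ large enough, is exactly the conclusion of Step~1 (which supplies the required certificate).

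\emph{Step~3 (identifiability in SoS).} This is the heart of the argument: establish
\[
\cA \ \cup\ \{\,\text{Step-1 certificate for }w^*\,\}\ \proves{O(t)}\ \norm{\mu - \bmu}^2\ \le\ C' t\,\e^{2 - 2/t}\mcom
\]
an abstraction of the information-theoretic (scalar) proof behind the filtering analysis of Diakonikolas et al. Set $a_i := w_i w^*_i$; from Booleanity one derives the SoS facts $(w_i - a_i)^2 = w_i - a_i$, $(w^*_i - a_i)^2 = w^*_i - a_i$, $0 \le w_i - a_i \le w_i$, $0 \le w^*_i - a_i \le w^*_i$, and $\sum_i a_i \ge (1-2\e)n$. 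Clearing denominators in the two centering identities $\bigl(\sum_i w_i\bigr)\mu = \sum_i w_iX_i$ and $(1-\e)n\,\bmu = \sum_i w^*_iX_i$ yields the polynomial identity
\begin{multline*}
\Paren{\sum_i w^*_i}\Paren{\sum_i a_i}(\mu - \bmu)\ =\ \Paren{\sum_i (w^*_i - w_i)}\sum_i a_i(X_i - \bmu)\\
+\ \Paren{\sum_i w^*_i}\sum_i (w_i - a_i)(X_i - \mu)\ -\ \Paren{\sum_i w_i}\sum_i (w^*_i - a_i)(X_i - \bmu)\mper
\end{multline*}
Pairing both sides with $\mu - \bmu$ and bounding each term on the right by SoS H\"older with exponents $t$ and $\tfrac{t}{t-1}$ --- peeling off a factor $(\text{count}/n)^{1-1/t}\le(2\e)^{1-1/t}$ and leaving a $t$-th moment in the direction $\mu - \bmu$, then invoking $w_i - a_i \le w_i$ and the moment constraint (or, for the $w^*$-term, the Step-1 certificate) under the substitution $u \leftarrow \mu - \bmu$, and finally SoS root extraction --- bounds each term by $O(\sqrt t)\,\e^{1-1/t}\,n^2\,\norm{\mu - \bmu}$ (the first, ``count-difference'' term in fact saves a full power of $\e$). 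Since $\bigl(\sum_i w^*_i\bigr)\bigl(\sum_i a_i\bigr)\ge\tfrac12 n^2$ for small $\e$, this gives $\norm{\mu - \bmu}^2 \le O(\sqrt t)\,\e^{1-1/t}\norm{\mu - \bmu}$, whence the displayed bound by the elementary SoS fact that $x^2 \le bx$ implies $x^2 \le b^2$. A careful accounting shows the whole derivation has degree $O(t)$.

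\emph{Step~4 (rounding) and the main obstacle.} Applying $\pE$ to the Step-3 inequality and using $\bigl(\pE f\bigr)^2 \le \pE f^2$ coordinatewise gives $\norm{\hat\mu - \bmu}^2 = \sum_j\bigl(\pE(\mu - \bmu)_j\bigr)^2 \le \pE\norm{\mu - \bmu}^2 \le C't\,\e^{2-2/t}$, so $\norm{\hat\mu - \bmu}\le O(\sqrt t)\,\e^{1-1/t}$; together with $\norm{\bmu - \mu^*} = o(\e)$ this yields $\norm{\hat\mu - \mu^*}\le O(\e^{1-1/t})$ for $t = O(1)$, which is the content of Theorem~\ref{thm:robust-main} (the $\sqrt t$ factor and the $O(t)$-versus-$t$ slack in the moment order are absorbed into ``$O(t)$-explicitly bounded''). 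The main obstacle is Step~3: every inequality of the scalar argument --- passing to the agreement set, the denominator-clearing identity, and especially H\"older applied with a \emph{variable} direction --- must be realized by \emph{low-degree} SoS reasoning, which is exactly why $\mu$ is introduced as a formal variable pinned down by $\sum_i w_i(X_i - \mu) = 0$ (so that nothing is ever divided) and why the moment bound is carried as a certificate inside the program rather than asserted pointwise; making this encoding simultaneously SDP-expressible, feasible at the planted solution, and of total degree $O(t)$ is the delicate point. A secondary obstacle is that Step~1 must produce an honest SoS \emph{certificate} of the empirical moment bound, not merely the scalar inequality, since that certificate is what witnesses feasibility in Step~2.
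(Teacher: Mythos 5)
Your proposal follows essentially the same route as the paper: the same SoS program (Booleanness, size, the mean pinned by $\sum_i w_i(X_i-\mu)=0$, and a $t$-th moment bound carried as an explicit PSD certificate in auxiliary variables), feasibility witnessed by the planted indicator of the good samples via matrix concentration, a H\"older-based identifiability argument inside degree-$O(t)$ SoS, and rounding by outputting $\pE\mu$. The only genuine difference is bookkeeping: you center the identifiability argument at the empirical good mean $\bmu$ and use agreement indicators $a_i = w_iw^*_i$ (three error terms), whereas the paper centers at $\mu^*$ and uses the pre-corruption samples $Y_i$, yielding four error terms (Lemmata~\ref{lem:robust-good-samples}--\ref{lem:robust-bad-samples}); both decompositions work.

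Two specific points need repair. First, your Step~1 claim $\norm{\bmu-\mu^*}=o(\e)$ is false in this corruption model: the adversary \emph{deletes} $\e n$ points of its choosing from the iid sample, so the surviving good points are not an iid draw, and their empirical mean can be biased by $\Theta(\e^{1-1/t})$ (delete the $\e n$ samples most extreme in a fixed direction; with only $t$ bounded moments they contribute $\sim t^{1/2}\e^{1-1/t}$ to the mean). This is exactly why the paper's condition (E3) asserts only $\norm{\tfrac1{|S_g|}\sum_{i\in S_g}X_i-\mu^*}\le O(t^{1/2}\e^{1-1/t})$ and why it needs a separate term for the removed good samples (Lemma~\ref{lem:robust-removed-good-samples}). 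Your final bound survives, since $O(t^{1/2}\e^{1-1/t})$ is all you need, but the $o(\e)$ claim itself cannot be proved.

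Second, your intermediate inequality $\norm{\mu-\bmu}^2\le O(\sqrt t)\,\e^{1-1/t}\norm{\mu-\bmu}$ is not a polynomial inequality (the odd power $\norm{\mu-\bmu}$ is not a polynomial in $\mu$), so as written it is not an SoS statement, and the ``SoS root extraction'' you invoke is exactly where the degree accounting is delicate. The paper keeps everything at the level of even powers, proving $\cA\proves{O(t)}\norm{\mu-\mu^*}^{2t}\le O(t^{t/2})\,\e^{t-1}\norm{\mu-\mu^*}^{t}$ (Lemma~\ref{lem:robust-main-overview}), and performs all cancellation and root-taking only after applying $\pE$, via pseudoexpectation Cauchy--Schwarz/H\"older; your Step~4 should do the same rather than rely on an in-proof square-root step at the level of $\norm{\mu-\bmu}$.
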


As with mixture models, we can push our statistical rates further, if we are willing to tolerate quasipolynomial runtime and sample complexity.
In particular, we can obtain error $O(\eps \sqrt{\log 1 / \eps})$ error with $d^{O(\log 1 / \eps)}$ samples and $d^{O(\log 1/\eps)^2}$ time.

\subsection{Related work}

\paragraph{Mixture models}
The literature on mixture models is vast so we cannot attempt a full survey here.
The most directly related line of work to our results studies mixtures models under mean-separation conditions, and especially mixtures of Gaussians, where the number $k$ of components of the mixture grows with the dimension $d$ \cite{dasgupta1999learning,dasgupta2007probabilistic,MR2115036-Arora05, DBLP:conf/focs/VempalaW02}.
The culmination of these works is the algorithm of Vempala and Wang, which used spectral dimension reduction to improve on the $d^{1/4}$ separation required by previous works to $k^{1/4}$ in $\ell_2$ distance for $k \leq d$ spherical Gaussians in $d$ dimensions.

Other works have relaxed the requirement that the underlying distributions be Gaussian \cite{DBLP:conf/focs/KumarK10, achlioptas2005spectral}; we also address non-Gaussian distributions, relaxing the Gaussian-ness assumption to explicit moment boundedness.
One recent work in this spirit uses SDPs to cluster mixture models under separation assumptions \cite{mixon2017clustering}; the authors show that a standard SDP relaxation of $k$-means achieves guarantees comparable to previously-known specially-tailored mixture model algorithms; however, this algorithm suffers from the same $k^{1/4}$ barrier as other previous works.

\emph{Sample complexity: }
Recent work of \cite{regev-vijayaraghavan-17} considers the Gaussian mixtures problem in an information-theoretic setting: they show that there is some constant $C$ so that if the means are pairwise separated by at least $C \sqrt{\log k}$, then the means can be recovered to arbitrary accuracy (given enough samples).
They give an efficient algorithm which, warm-started with sufficiently-good estimates of the means, improves the accuracy to $\delta$ using $\poly(1/\delta,d,k)$ additional samples.
However, their algorithm for providing this warm start requires exponential time.
Our algorithm requires somewhat larger separation but runs in polynomial time.
Thus by combining the techniques in the spherical Gaussian setting we can estimate the means with $\ell_2$ error $\delta$ in polynomial time using an extra $\poly(1/\delta,d,k)$ samples, when the separation is at least $k^{\gamma}$, for any $\gamma > 0$.

\emph{Fixed number of Gaussians in many dimensions: }
Other works address parameter estimation for mixtures of $k \ll d$ Gaussians (generally $k = O(1)$ and $d$ grows) under weak identifiability assumptions \cite{DBLP:conf/stoc/KalaiMV10,DBLP:conf/focs/BelkinS10,DBLP:conf/focs/MoitraV10,DBLP:conf/stoc/HardtP15}.
In these works the only assumptions are that the component Gaussians are statistically distinguishable; the goal is to recover their parameters of the underlying Gaussians.
It was shown in \cite{DBLP:conf/stoc/HardtP15} that algorithms in this setting provably require $\exp(k)$ samples and running time.
The question addressed in our paper is whether this lower bound is avoidable under stronger identifiability assumptions.
A related line of work addresses proper learning of mixtures of Gaussians \cite{DBLP:conf/colt/FeldmanSO06, DK14, SOAJ14, LS17}, where the goal is to output a mixture of Gaussians which is close to the unknown mixture in total-variation distance, avoiding the $\exp(k)$ parameter-learning sample-complexity lower bound.
These algorithms achieve $\poly(k,d)$ sample complexity, but they all require $\exp(k)$ running time, and moreover, do not provide any guarantee that the parameters of the distributions output are close to those for the true mixture.

\emph{Tensor-decomposition methods: }
Another line of algorithms focus on settings where the means satisfy algebraic non-degeneracy conditions, which is the case for instance in smoothed analysis settings \cite{ MR3385380-Hsu13, DBLP:conf/colt/AndersonBGRV14, MR3388256-Ge15}.
These algorithms are typically based on finding a rank-one decomposition of the empirical $3$rd or $4$th moment tensor of the mixture; they heavily use the special structure of these moments for Gaussian mixtures.
One paper we highlight is \cite{DBLP:conf/stoc/BhaskaraCMV14}, which also uses much higher moments of the distribution.
They show that in the smoothed analysis setting, the $\ell$th moment tensor of the distribution has algebraic structure which can be algorithmically exploited to recover the means.
Their main structural result holds only in the smoothed analysis setting, where samples from a mixture model on perturbed means are available.

In contrast, we do not assume any non-degeneracy conditions and use moment information only about the individual components rather than the full mixture, which always hold under separation conditions.
Moreover, our algorithms do not need to know the exact structure of the 3rd or 4th moments.
In general, clustering-based algorithms like ours seem more robust to modelling errors than algebraic or tensor-decomposition methods.

\emph{Expectation-maximization (EM): }
EM is the most popular algorithm for Gaussian mixtures in practice, but it is notoriously difficult to analyze theoretically.
The works \cite{dasgupta2007probabilistic,daskalakis2016ten,xu2016global} offer some theoretical guarantees for EM, but non-convergence results are a barrier to strong theoretical guarantees \cite{wu1983convergence}.

\paragraph{Robust statistics}
The literature on robust estimation is too large to do justice to here.
There has been a long line of work on making algorithms tolerant to error in supervised settings \cite{DBLP:conf/ijcai/Valiant85,KL93}, especially for learning halfspaces \cite{Ser03,KLS09,ABL14,DKS17b}, and for problems such as PCA \cite{Bru09,CLMW11,LMTZ12,ZL14}.
See \cite{DBLP:conf/focs/DiakonikolasKK016} for a more detailed discussion on the relationship between these questions (and others) and the model we consider here.

We consider the classical statistical notion of robustness against corruption, introduced back in the 70's in seminal works of \cite{huber1964robust,Tuk75, HRRS86}.
Even for the mean of a Gaussian distribution, essentially all classical robust estimators are hard in the worst case to compute (\cite{JP78,Ber06}).
However, a recent flurry of work (\cite{DBLP:conf/focs/DiakonikolasKK016,DBLP:conf/focs/LaiRV16,DBLP:journals/corr/CharikarSV16,DKS16c,SCV17}) has given new, computationally efficient, nearly optimal robust estimators for the mean and covariance of a high dimensional Gaussian distribution.
Given sufficiently-many samples from a sub-Gaussian distribution with identity covariance, where an $\eps$-fraction are arbitrarily corrupted, these algorithms can output mean estimates which achieve error at most $O(\eps \sqrt{\log 1 / \eps})$ in $\ell_2$, which is information-theoretically optimal up to the $\sqrt{\log 1 / \eps}$ factor.
However, these mean estimation algorithms heavily rely on knowing that the covariance is equal (or very close) to the identity.
When the distribution is a general sub-Gaussian distribution with unknown covariance, the best known error achieved by an efficient algorithm is $O(\eps^{1/2})$ \cite{SCV17, DKKLMS17b}.
Under a slightly stronger assumption, our algorithm is able to achieve $O(\eps^{1 - 1/t})$ error in polynomial time, for arbitrarily large $t \in \N$, and error $O(\e \sqrt{\log 1/\e})$ in quasipolynomial time for distributions with $O(\log 1/\e)$ bounded moments.

\paragraph{SoS algorithms for unsupervised learning}
SoS algorithms for unsupervised learning obtain the best known polynomial-time guarantees for many problems, including dictionary learning, tensor completion, and others \cite{DBLP:conf/stoc/BarakKS14, DBLP:conf/stoc/BarakKS15, ge2015decomposing, DBLP:conf/colt/HopkinsSS15,DBLP:conf/focs/MaSS16,DBLP:conf/colt/BarakM16,DBLP:journals/corr/PotechinS17}.
While the running times of such algorithms are often large polynomials, due to the need to solve large SDPs, insights from the SoS algorithms have often been used in later works obtaining fast polynomial running times \cite{DBLP:conf/stoc/HopkinsSSS16, schramm-steurer,soslb}.
This lends hope that in light of our results there is a practical algorithm to learn mixture models under separation $k^{1/4 - \e}$ for some $\e > 0$.

\paragraph{Concurrent work}
Finally, we note that concurrent and independent works by several groups \cite{kothari-steinhardt-personal, kothari-steurer-personal, diakonikolas-kane-stewart-personal} have either obtained results or developed techniques similar to ours.

\subsection{Organization}
In Section~\ref{sec:techniques} we discuss at a high level the ideas in our algorithms and SoS proofs.
In Section~\ref{sec:prelims} we give standard background on SoS proofs.
Section~\ref{sec:overview-A} discusses the important properties of the family of polynomial inequalities we use in both algorithms.
Section~\ref{sec:mixture} and Section~\ref{sec:robust} state our algorithms formally and analyze them.
Finally, Section~\ref{sec:moment-polys} describes the polynomial inequalities our algorithms employ in more detail.

\section{Techniques}
\label{sec:techniques}
In this section we give a high-level overview of the main ideas in our algorithms.
First, we describe the proofs-to-algorithms methodology developed in recent work on SoS algorithms for unsupervised learning problems.
Then we describe the core of our algorithms for mixture models and robust estimation: a simple proof of identifiability of the mean of a distribution $\cD$ on $\R^d$ from samples $X_1,\ldots,X_n$ when some fraction of the samples may not be from $\cD$ at all.

\subsection{Proofs to algorithms for machine learning: the SoS method}
The Sum of Squares (SoS) hierarchy is a powerful tool in optimization, originally designed to approximately solve systems of polynomial equations via a hierarchy of increasingly strong but increasingly large semidefinite programming (SDP) relaxations (see \cite{DBLP:journals/corr/BarakS14} and the references therein).
There has been much recent interest in using the SoS method to solve unsupervised learning problems in generative models \cite{DBLP:conf/stoc/BarakKS14, DBLP:conf/stoc/BarakKS15, ge2015decomposing, DBLP:conf/colt/HopkinsSS15,DBLP:conf/focs/MaSS16,DBLP:journals/corr/PotechinS17}.
.

By now there is an established method for desgining such SoS-based algorithms, which we employ in this paper.
Consider a generic statistical estimation setting: there is a vector $\theta^* \in \R^k$ of parameters, and given some samples $x_1,\ldots,x_n \in \R^d$ sampled iid according to $p(x \, | \, \theta^*)$, one wants to recover some $\hat{\theta}(x_1,\ldots,x_n)$ such that $\|\theta^* - \hat{\theta} \| \leq \delta$ (for some appropriate norm $\|\cdot \|$ and $\delta \geq 0$).
One says that $\theta^*$ is \emph{identifiable} from $x_1,\ldots,x_n$ if, for any $\theta$ with $\|\theta^* - \theta\| > \delta$, one has $\Pr(x_1,\ldots,x_n \, | \, \theta') \ll \Pr(x_1,\ldots,x_n \, | \, \theta^*)$.
Often mathematical arguments for identifiability proceed via concentration of measure arguments culminating in a union bound over every possible $\theta$ with $\|\theta^* - \theta\| > \delta$.
Though this would imply $\theta$ could be recovered via brute-force search, this type of argument generally has no implications for efficient algorithms.

The SoS proofs-to-algorithms method prescribes designing a simple proof of identifiability of $\theta$ from samples $x_1,\ldots,x_n$.
Here ``simple'' has a formal meaning: the proof should be captured by the low-degree SoS proof system.
The SoS proof system can reason about equations and inequalities among low-degree polynomials.
Briefly, if $p(y_1,\ldots,y_m)$ and $q(y_1,\ldots,y_m)$ are polynomials with real coefficients, and for every $y \in \R^m$ with $p(y) \geq 0$ it holds also that $q(y) \geq 0$, the SoS proof system can deduce that $p(y) \geq 0$ implies $q(y) \geq 0$ if there is a simple certificate of this implication: polynomials $r(y), s(y)$ which are sums-of-squares, such that $q(y) = r(y) \cdot q(y) + s(y)$.
(Then $r,s$ form an SoS proof that $p(y) \geq 0$ implies $q(y) \geq 0$.)

Remarkably, many useful polynomial inequalities have such certificates.
For example, the usual proof of the Cauchy-Schwarz inequality $\iprod{y,z}^2 \leq \|y\|^2 \|z\|^2$, where $y,z$ are $m$-dimensional vectors, actually shows that the polynomial $\|y\|^2 \|z\|^2 - \iprod{y,z}^2$ is a sum-of-squares in $y$ and $z$.
The simplicity of the certificate is measured by the degree of the polynomials $r$ and $s$; when these polynomials have small (usually constant) degree there is hope of transforming SoS proofs into polynomial-time algorithms.
This transformation is possible because (under mild assumptions on $p$ and $q$) the set of low-degree SoS proofs is in fact captured by a polynomial-size semidefinite program.

Returning to unsupervised learning, the concentration/union-bound style of identifiability proofs described above are almost never captured by low-degree SoS proofs.
Instead, the goal is to design
\begin{enumerate}
\item A system of constant-degree polynomial equations and inequalties $\cA = \{ p_1(\theta)=0,\ldots,p_m(\theta)=0, q_1(\theta) \geq 0 ,\ldots,q_m(\theta) \geq 0 \}$, where the polynomials $p$ and $q$ depend on the samples $x_1,\ldots,x_n$, such that with high probability $\theta^*$ satisfies all the equations and inequalities.
\item A low-degree SoS proof that $\cA$ implies $\|\theta - \theta^*\| \leq \delta$ for some small $\delta$ and appropriate norm $\| \cdot \|$. \label{itm:sos-pf-overview}
\end{enumerate}
Clearly these imply that any solution $\theta$ of $\cA$ also solves the unsupervised learning problem.
It is in general NP-hard to find a solution to a system of low-degree polynomial equations and inequalities.

However, the SoS proof (\ref{itm:sos-pf-overview}) means that such a search can be avoided.
Instead, we will relax the set of solutions $\theta$ to $\cA$ to a simple(er) convex set: the set of \emph{pseudodistributions satisfying $\cA$}.
We define pseudodistributions formally later, for now saying only that they are the convex duals of SoS proofs which use the axioms $\cA$.
By this duality, the SoS proof (\ref{itm:sos-pf-overview}) implies not only that any solution $\theta$ to $\cA$ is a good choice of parameters but also that a good choice of parameters can be extracted any pseudodistribution satisfying $\cA$.
(We are glossing over for now that this last step requires some SDP rounding algorithm, since we use only standard rounding algorithms in this paper.)

Thus, the final SoS algorithms from this method take the form: solve an SDP to find a pseudodistribution which satisfies $\cA$ and round it to obtain a estimate $\hat{\theta}$ of $\theta^*$.
To analyze the algorithm, use the SoS proof (\ref{itm:sos-pf-overview}) to prove that $\| \hat{\theta} - \theta^* \| \leq \delta$.

\subsection{H\"older's inequality and identifiability from higher moments}
Now we discuss the core ideas in our simple SoS identifiability proofs.
We have not yet formally defined SoS proofs, so our goal will just be to construct identifiability proofs which are (a) phrased in terms of inequalities of low-degree polynomials and (b) provable using only simple inequalities, like Cauchy-Schwarz and H\"older's, leaving the formalities for later.

We consider an idealized version of situations we encounter in both the mixture model and robust estimation settings.
Let $\mu^* \in \R^d$.
Let $X_1,\ldots,X_n \in \R^d$ have the guarantee that for some $T \subseteq [n]$ of size $|T| = \alpha n$, the vectors $\{ X_i\}_{i \in T}$ are iid samples from $\cN(\mu^*, \Id)$, a spherical Gaussian centered at $\mu^*$; for the other vectors we make no assumption.
The goal is to estimate the mean $\mu^*$.

The system $\cA$ of polynomial equations and inequalities we employ will be designed so that a solution to $\cA$ corresponds to a subset of samples $S \subseteq [n]$ of size $|S| = |T| = \alpha n$.
We accomplish this by identifying $S$ with its $0/1$ indicator vector in $\R^n$ (this is standard).
The inequalities in $\cA$ will enforce the following crucial moment property on solutions: if $\mu = \tfrac 1 {|S|} \sum_{i \in S} X_i$ is the empirical mean of samples in $S$ and $t \in \N$, then
\begin{align}
  \frac 1 {|S|} \sum_{i \in S} \iprod{X_i - \mu, u}^t \leq 2 \cdot t^{t/2} \cdot \|u\|^t \qquad \text{ for all $u \in \R^d$}\mper \label{eq:moment-techniques}
\end{align}
This inequality says that every one-dimensional projection $u$ of the samples in $S$, centered around their empirical mean, has a sub-Gaussian empirical $t$-th moment.
(The factor $2$ accounts for deviations in the $t$-th moments of the samples.)
By standard concentration of measure, if $\alpha n \gg d^t$ the inequality holds for $S = T$.
It turns out that this property can be enforced by polynomials of degree $t$.
(Actually our final construction of $\cA$ will need to use inequalities of matrix-valued polynomials but this can be safely ignored here.)

Intuitively, we would like to show that any $S$ which satisfies $\cA$ has empirical mean close to $\mu^*$ using a low-degree SoS proof,.
This is in fact true when $\alpha = 1 - \e$ for small $\e$, which is at the core of our robust estimation algorithm.
However, in the mixture model setting, when $\alpha = 1/(\text{\# of components})$, for each component $j$ there is a subset $T_j \subseteq [n]$ of samples from component $j$ which provides a valid solution $S = T_j$ to $\cA$.
The empirical mean of $T_j$ is close to $\mu_j$ and hence not close to $\mu_i$ for any $i \neq j$.

We will prove something slightly weaker, which still demonstrates the main idea in our identifiability proof.
\begin{lemma}
\label{lem:overview-example}
With high probability, for every $S \subseteq [n]$, if $\mu = \tfrac 1 {|S|} \sum_{i \in S} X_i$ is the empirical mean of samples in $S$, then $\|\mu - \mu^*\| \leq 4t^{1/2} \cdot (|T|/|S \cap T|)^{1/t}$.
\end{lemma}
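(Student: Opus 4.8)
The plan is to prove the lemma by a one-dimensional projection argument that compares the empirical mean $\mu$ of $S$ against the empirical mean of the ``good overlap'' $W := S \cap T$, feeding two degree-$t$ moment bounds into a single application of the power-mean (H\"older) inequality. The first moment bound is the defining property of an admissible $S$: by construction of $\cA$ we have $|S| = |T| = \alpha n$ and $S$ satisfies \eqref{eq:moment-techniques} centered at its own empirical mean $\mu$. The second is a concentration statement: since $|T| = \alpha n \gg d^{t}$, with high probability the good samples obey the matching bound centered at $\mu^*$, namely $\frac{1}{|T|}\sum_{i \in T}\iprod{X_i - \mu^*, u}^{t} \leq 2 t^{t/2}\|u\|^{t}$ for every $u \in \R^d$, using that a $t$-explicitly-bounded distribution has $\E\iprod{X - \mu^*, u}^{t} \leq t^{t/2}$ for unit $u$ and letting the factor $2$ absorb empirical fluctuation (here and below I take $t$ even, as is implicit in \eqref{eq:moment-techniques}). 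Proving this uniform-over-$u$ concentration bound — really a statement about the empirical degree-$t$ moment tensor of the good samples — is the one genuinely technical ingredient and is where the $n \geq d^{\Omega(t)}$ sample requirement enters; I would establish it by an $\eps$-net over unit vectors together with a tail bound for degree-$t$ polynomials of Gaussians (resp. the explicitly-bounded tail hypothesis).

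Granting the two moment bounds, the deduction is three lines. Write $w = |W| = |S \cap T|$ (the statement is vacuous if $w = 0$) and $\mu_W = \frac1w\sum_{i\in W}X_i$. Because $t$ is even, every summand $\iprod{X_i - \mu, u}^{t}$ is non-negative, so restricting \eqref{eq:moment-techniques} from $S$ to the sub-collection $W$ and using $|S| = |T|$ gives $\frac1w\sum_{i \in W}\iprod{X_i - \mu, u}^{t} \leq \frac{|T|}{w}\cdot 2 t^{t/2}$ for every unit vector $u$; the power-mean inequality $\big(\tfrac1w\sum_{i\in W}|a_i|\big)^{t} \leq \tfrac1w\sum_{i\in W}|a_i|^{t}$ then yields $|\iprod{\mu_W - \mu, u}| \leq 2^{1/t} t^{1/2}(|T|/w)^{1/t}$. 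The identical manipulation applied to the concentration bound on $T$ (restricting from $T$ to $W$, again non-negative summands) gives $|\iprod{\mu_W - \mu^*, u}| \leq 2^{1/t} t^{1/2}(|T|/w)^{1/t}$. Taking $u = (\mu - \mu^*)/\|\mu - \mu^*\|$ (trivial if $\mu = \mu^*$), the triangle inequality gives $\|\mu - \mu^*\| = \iprod{\mu - \mu_W, u} + \iprod{\mu_W - \mu^*, u} \leq 2 \cdot 2^{1/t} t^{1/2}(|T|/w)^{1/t} \leq 4 t^{1/2}(|T|/w)^{1/t}$, which is the claimed bound.

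Two points deserve care. First, the argument invokes \eqref{eq:moment-techniques} \emph{for the specific bad direction} $u = (\mu - \mu^*)/\|\mu - \mu^*\|$, so it is essential that the moment property holds for all $u$ rather than merely on average over $u$; this is exactly the feature the later SoS/pseudodistribution formulation must preserve. Second, the deduction uses nothing about the bad points beyond the fact that $S$ satisfies \eqref{eq:moment-techniques}, and it uses only non-negativity of even powers plus the elementary (and SoS-friendly) power-mean inequality — so the same three lines transfer essentially verbatim once $S$ is replaced by a pseudodistribution over $0/1$ indicator vectors of such sets. Consequently the main obstacle is not this argument but the concentration lemma behind the moment bound on $T$; everything after that is a short H\"older-type computation.
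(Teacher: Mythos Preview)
Your proof is correct and follows essentially the same approach as the paper's: both split the comparison of $\mu$ and $\mu^*$ through the samples in $S\cap T$, bounding one term via the moment constraint \eqref{eq:moment-techniques} on $S$ and the other via $t$-th moment concentration on the good samples $T$, each time using H\"older/power-mean. The only cosmetic difference is that you introduce the pivot $\mu_W$ and finish with a triangle inequality, whereas the paper expands $\iprod{\mu^*-\mu,\mu^*-\mu}$ bilinearly into the same two pieces; the resulting inequalities coincide line for line.
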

Notice that a random $S \subseteq [n]$ of size $\alpha n$ will have $|S \cap T| \approx \alpha^2 n$.
In this case the lemma would yield the bound $\|\mu - \mu^*\| \leq \tfrac {4t^{1/2}}{\alpha^{1/t}}$.
Thinking of $\alpha \ll 1/t$, this bound improves exponentially as $t$ grows.
In the $d$-dimensional $k$-component mixture model setting, one has $1/\alpha = \poly(k)$, and thus the bound becomes $\|\mu - \mu^*\| \leq 4t^{1/2} \cdot k^{O(1/t)}$.
In a mixture model where components are separated by $k^\e$, such an estimate is nontrivial when $\|\mu - \mu^*\| \ll k^\e$, which requires $t = O(1/\e)$.
This is the origin of the quantitative bounds in our mixture model algorithm.

We turn to the proof of Lemma~\ref{lem:overview-example}.
As we have already emphasized, the crucial point is that this proof will be accomplished using only simple inequalities, avoiding any union bound over all possible subsets $S$.

\begin{proof}[Proof of Lemma~\ref{lem:overview-example}]
Let $w_i$ be the $0/1$ indicator of $i \in S$.
To start the argument, we expand in terms of samples:
\begin{align}
|S \cap T| \cdot \|\mu - \mu^*\|^2 & = \sum_{i \in T} w_i \|\mu - \mu^*\|^2 \nonumber \\
& = \sum_{i \in T} w_i \iprod{\mu^* - \mu, \mu^* - \mu} \\
& = \sum_{i \in T} w_i \Brac{\iprod{X_i - \mu, \mu^* - \mu} + \iprod{\mu^* - X_i, \mu^* - \mu}}\mper
\end{align}
The key term to bound is the first one; the second amounts to a deviation term.
By H\"older's inequality and for even $t$,
\begin{align*}
  \sum_{i \in T} w_i \iprod{X_i - \mu, \mu^* - \mu} & \leq \Paren{\sum_{i \in T} w_i}^{\tfrac{t-1}t} \cdot \Paren{\sum_{i \in T}w_i \iprod{X_i - \mu, \mu^* - \mu}^t}^{1/t}\\
  & \leq \Paren{\sum_{i \in T} w_i}^{\tfrac{t-1}t} \cdot \Paren{\sum_{i \in [n]}w_i \iprod{X_i - \mu, \mu^* - \mu}^t}^{1/t}\\
  & \leq \Paren{\sum_{i \in T} w_i}^{\tfrac{t-1}t} \cdot 2t^{1/2} \cdot \|\mu^* - \mu\|\\
  & = |S \cap T|^{\tfrac{t-1} t}\cdot 2t^{1/2} \cdot \|\mu^* - \mu\|\mper
\end{align*}
The second line follows by adding the samples from $[n] \setminus T$ to the sum; since $t$ is even this only increases its value.
The third line uses the moment inequality \eqref{eq:moment-techniques}.
The last line just uses the definition of $w$.

For the second, deviation term, we use H\"older's inequality again:
\begin{align*}
  \sum_{i \in T} w_i \iprod{\mu^* - X_i, \mu^* - \mu} & \leq \Paren{\sum_{i \in T} w_i}^{\tfrac{t-1}t} \cdot \Paren{\sum_{i \in T} \iprod{\mu^* - X_i, \mu^* - \mu}^t}^{1/t}\mper
\end{align*}
The distribution of $\mu^* - X_i$ for $i \in T$ is $\cN(0,\Id)$.
By standard matrix concentration, if $|T| = \alpha n \gg d^t$,
\[
\sum_{i \in T} \Brac{(X_i - \mu^*)^{\tensor t/2}} \Brac{(X_i - \mu^*)^{\tensor t/2}}^{\top} \preceq 2 |T| \E_{Y \sim \cN(0,\Id)} \Paren{Y^{\tensor t/2}} \Paren{Y^{\tensor t/2}}^{\top}
\]
with high probability and hence, using the quadratic form at $(\mu^* - \mu)^{\tensor t/2}$,
\[
  \sum_{i \in T} \iprod{\mu^* - X_i, \mu^* - \mu}^t \leq 2 |T| t^{t / 2} \cdot \|\mu^* - \mu\|^t\mper
\]

Putting these together and simplifying constants, we have obtained that with high probability,
\[
  |S \cap T| \cdot \|\mu - \mu^*\|^2 \leq 4t^{t/2} |T|^{1/t} \cdot |S \cap T|^{(t-1)/t} \cdot \|\mu - \mu^*\|
\]
which simplifies to
\[
 |S \cap T|^{1/t} \cdot \|\mu - \mu^*\| \leq 4t^{1/2} |T|^{1/t}\mper\qedhere
\]
\end{proof}

\subsection{From identifiability to algorithms}
We now discuss how to use the ideas described above algorithmically for learning well-separated mixture models.
The high level idea for robust estimation is similar.
Given Lemma~\ref{lem:overview-example}, a naive algorithm for learning mixture models would be the following: find a set of points $T$ of size roughly $n / k$ that satisfy the moment bounds described, and simply output their empirical mean.
Since by a simple counting argument this set must have nontrivial overlap with the points from some mixture component, Lemma~\ref{lem:overview-example} guarantees that the empirical mean is close to mean of this component.

However, in general finding such a set of points is algorithmically difficult.
In fact, it would suffice to find a distribution over such sets of points (since then one could simply sample from this distribution), however, this is just as computationally difficult.
The critical insight is that because of the proof of Lemma~\ref{lem:overview-example} only uses facts about low degree polynomials, it suffices to find an object which is indistinguishable from such a distribution, considered as a functional on low-degree polynomials.

The natural object in this setting is a \emph{pseudo-distribution}.
Pseudo-distributions form a convex set, and for a set of low-degree polynomial equations and inequalities $\cA$, it is possible to find a pseudo-distribution which is indistinguishable from a distribution over solutions to $\cA$ (as such a functional) in polynomial time via semidefinite programming (under mild assumptions on $\cA$).
More specifically, the set of SoS proofs using axioms $\cA$ is a semidefinite program (SDP), and the above pseudodistributions form the dual SDP.
(We will make these ideas more precise in the next two sections.)

Our algorithm then proceeds via the following general framework: find an appropriate pseudodistribution via convex optimization, then leverage our low-degree sum of squares proofs to show that information about the true clusters can be extracted from this object by a standard SDP rounding procedure.


\section{Preliminaries}
\label{sec:prelims}

Throughout the paper we let $d$ be the dimensionality of the data, and we will be interested in the regime where $d$ is at least a large constant.
We also let $\| v \|$ denote the $\ell_2$ norm of a vector $v$, and $\| M \|_F$ to denote the Frobenius norm of a matrix $M$; often we just write $\|M\|$.
We will also give randomized algorithms for our problems that succeed with probability $1 - \poly(1 / k,1/d)$; by standard techniques this probability can be boosted to $1 - \xi$ by increasing the sample and runtime complexity by a mulitplicative $\log 1 / \xi$.

We now formally define the class of distributions we will consider throughout this paper.
At a high level, we will consider distributions which have bounded moments, for which there exists a low degree SoS proof of this moment bound.
Formally:
\begin{definition}
\label{def:explicitly-bounded}
Let $\cD$ be a distribution over $\R^d$ with mean $\mu$.
For $c \geq 1, t \in \N$, we say that $\cD$ is $t$-explicitly bounded with variance proxy $\sigma$ if for every even $s \leq t$ there is a degree $s$ SoS proof (see Section~\ref{sec:sos} for a formal definition) of
\[
  \proves{s} E_{Y \sim \cD_k} \iprod{\Paren{Y - \mu},u}^s \leq (\sigma s)^{s / 2} \|u\|^s \mper
\]
Equivalently, the polynomial $p(u) = (\sigma s)^{s / 2} \|u\|^s - E_{Y \sim \cD_k} \iprod{\Paren{Y - \mu},u}^s $ should be a sum-of-squares.
In our typical use case, $\sigma = 1$, we will omit it and call the distribution $t$-explicitly bounded.
\end{definition}
\noindent
Throughout this paper, since all of our problems are scale invariant, we will assume without loss of generality that $\sigma = 1$.
This class of distributions captures a number of natural classes of distributions.
Intuitively, if $u$ were truly a vector in $\R^k$ (rather than a vector of indeterminants), then this exactly captures sub-Gaussian type moment.
Our requirement is simply that these types of moment bounds not only hold, but also have a SoS proof.

We remark that our results also hold for somewhat more general settings.
It is not particularly important that the $s$-th moment bound has a degree $s$ proof; our techniques can tolerate degree $O(s)$ proofs.
Our techniques also generally apply for weaker moment bounds.
For instance, our techniques naturally extend to explicitly bounded sub-exponential type distributions in the obvious way.
We omit these details for simplicity.

As we show in Appendix~\ref{sec:explicitly-bounded-families}, this class still captures many interesting types of nice distributions, including Gaussians, product distributions with sub-Gaussian components, and rotations therof.
With this definition in mind, we can now formally state the problems we consider in this paper:

\paragraph{Learning well-separated mixture models}
We first define the class of mixture models for which our algorithm works:
\begin{definition}[$t$-explicitly bounded mixture model with separation $\Delta$]
  Let $\mu_1,\ldots,\mu_k \in \R^d$ satisfy $\|\mu_i - \mu_j\| > \Delta$ for every $i \neq j$, and let $\cD_1,\ldots,\cD_k$ have means $\mu_1,\ldots,\mu_k$, so that each $\cD_i$ is $t$-explicitly bounded.
  Let $\lambda_1,\ldots,\lambda_k \geq 0$ satisfy $\sum_{i \in [k]} \lambda_i = 1$.
  Together these define a mixture distribution on $\R^d$ by first sampling $i \sim \lambda$, then sampling $x \sim \cD_i$.
\end{definition}
\noindent{}
The problem is then:
\begin{problem}
Let $\cD$ be a $t$-explicitly bounded mixture model in $\R^d$ with separation $\Delta$ with $k$ components.
Given $k, \Delta$, and $n$ independent samples from $\cD$, output $\muhat_1, \ldots, \muhat_m$ so that with probability at least $0.99$, there exists a permutation $\pi: [k] \to [k]$ so that $\| \mu_i - \muhat_{\pi(i)} \| \leq \delta$ for all $i = 1, \ldots, k$.
\end{problem}

\paragraph{Robust mean estimation}
We consider the same basic model of corruption introduced in \cite{DBLP:conf/focs/DiakonikolasKK016}.
\begin{definition}[$\eps$-corruption]
We say a set of samples $X_1, \ldots, X_n$ is $\eps$-corrupted from a distribution $\cD$ if they are generated via the following process.
First, $n$ independent samples are drawn from $\cD$.
Then, an adversary changes $\eps n$ of these points arbitrarily, and the altered set of points is then returned to us in an arbitrary order.
\end{definition}
\noindent
The problem we consider in this setting is the following:
\begin{problem}[Robust mean estimation]
Let $\cD$ be an $O(t)$-explicitly bounded distribution over $\R^d$ wih mean $\mu$.
Given $t, \eps$, and an $\eps$-corrupted set of samples from $\cD$, output $\muhat$ satisfying $\| \mu - \muhat \| \leq O(\eps^{1 - 1 / t})$.
\end{problem}

\subsection{The SoS proof system}
\label{sec:sos}
We refer the reader to \cite{DBLP:conf/soda/ODonnellZ13,DBLP:journals/corr/BarakS14} and the references therein for a thorough exposition of the SoS algorithm and proof system; here we only define what we need.\footnote{Our definition of SoS proofs differs slightly from O'Donnell and Zhou's in that we allow proofs to use products of axioms.}

Let $x_1,\ldots,x_n$ be indeterminates and $\cA$ be the set of polynomial equations and inequalities $\{p_1(x) \geq 0,\ldots,p_m(x) \geq 0, q_1(x) = 0,\ldots,q_m(x) = 0\}$.
We say that the statement $p(x) \geq 0$ has an SoS proof if there are polynomials $\{r_\alpha\}_{\alpha \subseteq [m]}$ (where $\alpha$ may be a multiset) and $\{s_i\}_{i \in [m]}$ such that
\[
  p(x) = \sum_{\alpha} r_\alpha(x) \cdot \prod_{i \in \alpha} p_i(x) + \sum_{i \in [m]} s_i(x) q_i(x)
\]
and each polynomial $r_\alpha(x)$ is a sum of squares.

If the polynomials $r_\alpha(x) \cdot \prod_{i \in \alpha} p_i(x)$ and $s_i(x) q_i(x)$ have degree at most $d$, we say the proof has degree at most $d$, and we write
\[
  \cA \proves{d} p(x) \geq 0\mper
\]

SoS proofs compose well, and we frequently use the following without comment.
\begin{fact}
  If $\cA \proves{d} p(x) \geq 0$ and $\cA \proves{d'} q(x) \geq 0$, then $\cA \cup \calB \proves{\max(d,d')} p(x) + q(x) \geq 0$ and $\cA \cup \calB \proves{dd'} p(x) q(x) \geq 0$.
\end{fact}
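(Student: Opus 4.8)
The plan is to prove both claims by directly manipulating the SoS certificates guaranteed by the hypotheses; no new ideas are needed beyond careful bookkeeping. Write the certificate witnessing $p(x) \geq 0$ as $p = \sum_{\alpha} r_\alpha \prod_{i \in \alpha} p_i + \sum_i s_i q_i$, where each $r_\alpha$ is a sum of squares and every summand has degree at most $d$, and similarly write $q = \sum_{\beta} r'_\beta \prod_{i \in \beta} p_i + \sum_i s'_i q_i$ with every summand of degree at most $d'$. (If $p$ is derived from $\cA$ and $q$ from $\calB$, then the inequality axioms $p_i \geq 0$ and equality axioms $q_i = 0$ appearing here are understood to range over all of $\cA \cup \calB$.)

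For the sum, I would simply add the two certificates and collect like terms. Indexing by the multiset $\alpha$, the coefficient of $\prod_{i \in \alpha} p_i$ in $p + q$ becomes $r_\alpha + r'_\alpha$ (reading an absent coefficient as $0$), and the sum of two sums of squares is again a sum of squares; likewise the coefficient multiplying $q_i$ becomes $s_i + s'_i$. Every resulting summand has degree at most $\max(d, d')$, so this is a valid degree-$\max(d,d')$ SoS proof of $p + q \geq 0$ from $\cA \cup \calB$.

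For the product, I would multiply the two certificates and expand. Each summand of $p \cdot q$ is a product of one summand from $p$'s certificate and one from $q$'s, so it falls into one of four shapes: (i) $(r_\alpha r'_\beta) \cdot \prod_{i \in \alpha \sqcup \beta} p_i$, where $r_\alpha r'_\beta$ is a product of two sums of squares, hence itself a sum of squares, and the product runs over the multiset union $\alpha \sqcup \beta$; (ii)--(iii) terms of the form $(\text{SoS}) \cdot (\prod p_i) \cdot s'_j q_j$, which we absorb into a single polynomial multiplying the equality axiom $q_j$; and (iv) terms $s_i q_i \cdot s'_j q_j$, which we view as the polynomial $s_i s'_j q_j$ times $q_i$. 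Every such summand has degree at most $d + d'$, and $d + d' \leq d d'$ whenever $d, d' \geq 2$, which covers the regime of interest (all SoS proofs used in this paper have even degree at least two). This exhibits a degree-$dd'$ SoS proof of $p q \geq 0$ from $\cA \cup \calB$.

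The only point worth flagging is that the product step genuinely relies on the convention—noted in the paper's footnote on the definition of SoS proofs—that a proof may multiply together several axioms at once: the term $\prod_{i \in \alpha \sqcup \beta} p_i$ is a product of more inequality axioms than occurs in either original certificate, so without this closure property the argument would not go through. Aside from that, the \emph{hard part} is purely organizational: checking that after regrouping, every coefficient standing in front of a product of axioms is still a sum of squares and that each degree bound is respected. There is no analysis here—only algebra.
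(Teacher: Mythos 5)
Your proof is correct. The paper itself offers no argument for this fact (it is stated as a composition property to be ``used without comment''), so there is nothing to compare against: your direct manipulation of the certificates --- adding them for the sum, multiplying and regrouping for the product, using that a product of sums of squares is a sum of squares and absorbing all cross terms involving the equality axioms $q_i$ into the ideal part --- is the standard and intended argument. You also correctly identify the two points that need care: the product step requires the paper's convention that proofs may multiply several axioms together (the multiset $\alpha\sqcup\beta$), and the degree of the product certificate is really $d+d'$, which is at most $dd'$ in the regime $d,d'\geq 2$ relevant here; note also that the hypothesis is presumably meant to read $\cA \proves{d} p \geq 0$ and $\calB \proves{d'} q \geq 0$, which your reading (letting the axioms range over $\cA\cup\calB$) accommodates without change.
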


We turn to the dual objects to SoS proofs.
A degree-$d$ pseudoexpectation (for variety we sometimes say ``pseudodistribution") is a linear operator $\pE \, : \, \R[x]_{\leq d} \rightarrow \R$, where $\R[x]_{\leq d}$ are the polynomials in indeterminates $x$ with real coefficients, which satisfies the following
\begin{enumerate}
  \item Normalization: $\pE[1] = 1$
  \item Positivity: $\pE[p(x)^2] \geq 0$ for every $p$ of degree at most $d/2$.
\end{enumerate}
We say that a degree-$d$ pseudoexpectation $\pE$ satisfies inequalities and equalities $\{p_1(x) \geq 0,\ldots,p_m(x) \geq 0, q_1(x) = 0,\ldots,q_m(x) = 0\}$ if
\begin{enumerate}
  \item for every multiset $\alpha \subseteq [m]$ and SoS polynomial $s(x)$ such that the degree of $s(x) \prod_{i \in \alpha} p_i(x)$ is at most $d$, one has $\pE s(x) \prod_{i \in \alpha} p_i(x) \geq 0$, and
  \item for every $q_i(x)$ and every polynomial $s(x)$ such that the degree of $q_i(x) s(x) \leq d$, one has $\pE s(x) q_i(x) = 0$.
\end{enumerate}

The main fact relating pseudoexpectations and SoS proofs is:
\begin{fact}[Soundness of SoS proofs]
  If $\cA$ is a set of equations and inequalities and $\cA \proves{d} p(x) \geq 0$, and $\pE$ satisfies $\cA$, then $\pE$ satisfies $\cA \cup \{ p \geq 0\}$.
\end{fact}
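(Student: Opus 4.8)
The plan is to unwind the definition of an SoS proof into an algebraic identity and then push it through $\pE$ using linearity. Since $\cA \proves{d} p(x) \geq 0$, by definition there are sum-of-squares polynomials $\{r_\alpha\}$ and polynomials $\{s_i\}$ with
\[
  p(x) = \sum_{\alpha} r_\alpha(x) \prod_{i \in \alpha} p_i(x) + \sum_{i} s_i(x)\, q_i(x)\mcom
\]
where every summand has degree at most $d$. Applying $\pE$ to both sides and using linearity, $\pE[p] = \sum_\alpha \pE\bigbrac{r_\alpha \prod_{i\in\alpha} p_i} + \sum_i \pE\bigbrac{s_i q_i}$. Each term $\pE\bigbrac{r_\alpha \prod_{i\in\alpha} p_i}$ is nonnegative: $r_\alpha$ is a sum of squares, so this is precisely the first defining condition for ``$\pE$ satisfies $\cA$'' applied to the multiset $\alpha \subseteq [m]$ with SoS polynomial $s = r_\alpha$. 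Each term $\pE\bigbrac{s_i q_i}$ vanishes by the second defining condition. Hence $\pE[p] \geq 0$, so $\pE$ satisfies the new inequality $p \geq 0$.

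To conclude that $\pE$ satisfies all of $\cA \cup \{p \geq 0\}$, I would verify the remaining defining conditions of this larger system. The equality conditions are inherited verbatim, since $\cA \cup \{p \geq 0\}$ has the same equalities as $\cA$. For the inequality conditions one must show $\pE\bigbrac{\sigma \cdot p^k \cdot \prod_{i\in\alpha} p_i} \geq 0$ for every SoS polynomial $\sigma$, multiset $\alpha \subseteq [m]$, and multiplicity $k \geq 0$ of the new axiom (subject to the usual degree bound). The case $k=0$ is just the first defining condition for $\cA$. For $k \geq 1$, substitute the identity above in place of one factor of $p$, and iterate on the remaining $k-1$ factors; since products and powers of sums of squares are again sums of squares, this rewrites $\sigma\, p^k \prod_{i\in\alpha} p_i$ as a sum of terms, each of which is a sum of squares times a product of the $p_i$'s, plus terms that are polynomial multiples of the $q_i$'s. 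Applying $\pE$ term by term and invoking the two defining conditions for $\cA$ exactly as before yields nonnegativity.

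I expect the only delicate point to be the degree bookkeeping in this last step: substituting the identity for a factor $p$ (of degree at least $\deg p$) can raise the degree of a summand to as much as $d$, so eliminating $k$ copies of $p$ may inflate the degree of the rewritten expression beyond that of the product we started with, by up to roughly $k(d - \deg p)$. Strictly speaking one should therefore take $\pE$ to have degree large enough to absorb this blow-up (harmless in all our applications, where $d$ is within a constant factor of $\deg p$), or else be content with the first-order conclusion $\pE[p] \geq 0$ together with its products with a bounded number of axioms, which is all that the later sections use. Apart from this bookkeeping the argument is a one-line application of linearity of $\pE$ together with the positivity and vanishing conditions packaged into ``$\pE$ satisfies $\cA$'', and the composition fact for SoS proofs stated just above is exactly what guarantees that the rewriting in the previous paragraph remains inside the SoS framework.
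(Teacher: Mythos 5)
Your argument is correct and is essentially the standard soundness argument; the paper itself states this fact without proof, and its appendix proof of the matrix-valued soundness lemma proceeds by exactly the same strategy you use — substitute the certificate into the product with the other axioms, use closure of sums of squares under products, and apply linearity of $\pE$. Your caveat about degree bookkeeping is the right one to flag: the paper's informal statement suppresses the degree of $\pE$, and in all of its applications the pseudoexpectation is taken of degree a constant factor larger than the proofs invoked (degree $O(t)$), which absorbs the blow-up you describe.
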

\noindent
In Section~\ref{sec:toolkit} we state and prove many basic SoS inequalities that we will require throughout the paper.

\paragraph{Gaussian distributions are explicitly bounded}
In Section~\ref{sec:toolkit} we show that product distributions (and rotations thereof) with bounded $t$-th moments are explicitly bounded.
\begin{lemma}
Let $\cD$ be a distribution over $\R^d$ so that $\cD$ is a rotation of a product distribution $\cD'$ where each coordinate $X$ with mean $\mu$ of $\cD$ satisfies
\[
  \E [(X - \mu)^{s}] \leq 2^{-s} \Paren{ \frac s2}^{s / 2}
\]
Then $\cD$ is $t$-explicitly bounded (with variance proxy $1$).
\end{lemma}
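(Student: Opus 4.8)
The plan is to reduce to the case where $\cD$ is itself a product distribution, and then to certify the moment bound by expanding $\E_{Y\sim\cD}\iprod{Y-\mu,u}^s$ as an explicit polynomial in $u$ and exhibiting a sum-of-squares decomposition ``by hand''. The reduction comes from the fact that low-degree SoS identities are preserved under orthogonal changes of variable: if a sample from $\cD$ is obtained by applying an orthogonal $R$ to a sample from a product distribution $\cD'$ with mean $\mu'$, then $\iprod{Y-\mu,u}=\iprod{Y'-\mu',R^\top u}$ in distribution and $\|R^\top u\|^2=\|u\|^2$ as a polynomial identity in $u$, so a degree-$s$ SoS identity $s^{s/2}\|v\|^s-\E_{Y'\sim\cD'}\iprod{Y'-\mu',v}^s=\sum_j q_j(v)^2$ becomes, under $v\mapsto R^\top u$, a degree-$s$ SoS identity for $\cD$. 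Hence it suffices to treat product distributions.

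For a product distribution, write $m_{i,j}=\E[(Y_i-\mu_i)^j]$ (so $m_{i,0}=1$, $m_{i,1}=0$) and expand via the multinomial theorem:
\[
  \E_{Y}\iprod{Y-\mu,u}^s=\sum_{\alpha\in\N^d,\ |\alpha|=s}\binom{s}{\alpha}\Paren{\prod_i m_{i,\alpha_i}}u^\alpha\mper
\]
Comparing with $s^{s/2}\|u\|^s=s^{s/2}\sum_{|\beta|=s/2}\binom{s/2}{\beta}u^{2\beta}$, I would first record the transparent case in which every $\cD_i$ is symmetric, so only even multi-indices $\alpha=2\beta$ survive. Matching coefficients of the square monomial $u^{2\beta}=(u^\beta)^2$, it then suffices to check the scalar inequality $\binom{s}{2\beta}\prod_i m_{i,2\beta_i}\le s^{s/2}\binom{s/2}{\beta}$; using the hypothesis $m_{i,2\beta_i}\le 2^{-2\beta_i}\beta_i^{\beta_i}$ this rearranges into a product of elementary factorial ratios, each at most $1$, together with an overall factor at most $2^{-s/2}$, so it holds with room to spare. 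This writes $s^{s/2}\|u\|^s-\E_Y\iprod{Y-\mu,u}^s$ as a nonnegative combination of the squares $(u^\beta)^2$.

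The genuine work is the non-symmetric case, where monomials $u^\alpha$ with some odd $\alpha_i\ge 3$ appear and are not squares. I would handle these in two steps: (i) bound odd central moments in terms of even ones, e.g. $|m_{i,j}|\le\bigl(\E[(Y_i-\mu_i)^{2\lceil j/2\rceil}]\bigr)^{j/(2\lceil j/2\rceil)}$ by the power-mean inequality, so $|m_{i,j}|$ still obeys a subgaussian-type bound; and (ii) charge each off-diagonal monomial $u^\alpha$ against diagonal even monomials using elementary SoS AM--GM — pairing up the (evenly many) odd coordinates of $\alpha$ and repeatedly applying $2u_iu_j\le u_i^2+u_j^2$ after factoring out even powers — and then re-checking that the constants still balance, the slack once more coming from the $2^{-s}$ savings in the moment hypothesis.

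I expect step (ii) to be the main obstacle: keeping the combinatorics of the AM--GM redistribution organized (so that every cross term is accounted for exactly once and ends up on a monomial that actually occurs in the expansion of $\|u\|^s$) and carrying through the constant bookkeeping is where the care is needed. By contrast, the reduction to product distributions and the symmetric case are essentially routine.
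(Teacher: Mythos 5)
Your reduction to product distributions and your treatment of the purely symmetric case match the paper, but the non-symmetric case is exactly where your argument stops being a proof: you propose to bound odd central moments by even ones and then charge each cross monomial $u^\alpha$ (with odd exponents $\geq 3$) against diagonal monomials via repeated $2u_iu_j \leq u_i^2+u_j^2$, and you yourself flag that verifying the combinatorics and the constants of this redistribution is the main obstacle. That obstacle is the entire content of the lemma in the non-symmetric case, and it is left unresolved: you never show that the total charge landing on a given square monomial $u^{2\beta}$ stays below the available coefficient $s^{s/2}\binom{s/2}{\beta}$ minus the diagonal contribution, and the number of contributing multi-indices $\alpha$ grows combinatorially in $s$, so "the slack once more coming from the $2^{-s}$ savings" is an assertion, not an estimate. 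As written, the proposal proves the lemma only for coordinatewise-symmetric product distributions.

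The paper avoids this difficulty entirely by symmetrizing before expanding: writing $\mu = \E_{X'}X'$ for an independent copy $X'$, it uses the SoS-certifiable Jensen/Cauchy--Schwarz step
\[
\proves{s}\ \E_{X}\Paren{\E_{X'}\iprod{X-X',u}}^{s} \leq \E_{X,X'}\iprod{X-X',u}^{s}\mcom
\]
and then expands $\E_{X,X'}\iprod{X-X',u}^s$ in the monomial basis. Since $X-X'$ is a \emph{symmetric} product distribution, every monomial with an odd exponent vanishes identically, so the expansion is automatically a nonnegative combination of squares $u^{2\beta}$ and only the easy coefficient comparison (your symmetric case) remains. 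The factor $2^{-s}\Paren{s/2}^{s/2}$ in the hypothesis is precisely the price of this symmetrization, since the central moments of $X_i-X_i'$ exceed those of $X_i-\mu_i$ by at most a factor $2^{s}$. If you want to rescue your route, you would either need to carry out the charging argument with explicit bookkeeping (e.g.\ bounding the off-diagonal part by a coefficient-norm estimate in the spirit of Fact~\ref{fact:poly-moment-bound}) or simply adopt the symmetrization trick, which reduces the general case to the case you already handled.
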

\noindent (The factors of $\frac 12$ can be removed for many distributions, including Gaussians.)


\section{Capturing empirical moments with polynomials}
\label{sec:overview-A}
To describe our algorithms we need to describe a system of polynomial equations and inequalities which capture the following problem: among $X_1,\ldots,X_n \in \R^d$, find a subset of $S \subseteq [n]$ of size $\alpha n$ such that the empirical $t$-th moments obey a moment bound: $\tfrac 1 {\alpha n} \sum_{i \in S} \iprod{X_i,u}^t \leq t^{t/2} \|u\|^t$ for every $u \in \R^d$.

Let $k, n \in \N$ and let $w = (w_1,\ldots,w_n), \mu = (\mu_1,\ldots,\mu_k)$ be indeterminates.
Let
\begin{compactenum}
  \item $X_1,\ldots,X_n \in \R^d$
  \item $\alpha \in [0,1]$ be a number (the intention is $|S| = \alpha n$).
  \item $t \in \N$ be a power of $2$, the order of moments to control
  \item $\mu_1,\ldots,\mu_k \in \R^d$, which will eventually be the means of a $k$-component mixture model, or when $k=1$, the true mean of the distribution whose mean we robustly estimate.
  \item $\tau > 0$ be some error magnitude accounting for fluctuations in the sizes of clusters (which may be safely ignored at first reading).
\end{compactenum}

\begin{definition}
Let $\cA$ be the following system of equations and inequalities, depending on all the parameters above.
\begin{enumerate}
  \item $w_i^2 = w_i$ for all $i \in [n]$ (enforcing that $w$ is a $0/1$ vector, which we interpret as the indicator vector of the set $S$).
  \item $(1 - \tau) \alpha n \leq \sum_{i \in [n]} w_i \leq (1 + \tau) \alpha n$, enforcing that $|S| \approx \alpha n$ (we will always choose $\tau = o(1)$).
  \item $\mu \cdot \sum_{i \in [n]} w_i = \sum_{i \in [n]} w_i X_i$, enforcing that $\mu$ is the empirical mean  of the samples in $S$
  \item $\sum_{i \in [n]} w_i \iprod{X_i - \mu, \mu - \mu_j}^t \leq 2 \cdot t^{t / 2} \sum_{i \in [n]} w_i \|\mu - \mu_j\|^t$ for every $\mu_j$ among $\mu_1,\ldots,\mu_m$.
  This enforces that the $t$-th empirical moment of the samples in $S$ is bounded \emph{in the direction $\mu - \mu_j$}.
  \end{enumerate}
\end{definition}
Notice that since we will eventually take $\mu_j$'s to be unknown parameters we are trying to estimate, the algorithm cannot make use of $\cA$ directly, since the last family of inequalities involve the $\mu_j$'s.
  Later in this paper we exhibit a system of inequalities which requires the empirical $t$-th moments to obey a sub-Gaussian type bound in every direction, hence implying the inequalities here without requiring knowledge of the $\mu_j$'s to write down.
  Formally, we will show:
  \begin{lemma}
  \label{lem:general-structured-subset-polys}
  Let $\alpha \in [0,1]$.
  Let $t \in \N$ be a power of $2$, $t \geq 4$.\footnote{The condition $t \geq 4$ is merely for technical convenience.}
  Let $0.1 > \tau > 0$.
  Let $X_1,\ldots,X_n \in \R^d$.
  Let $\cD$ be a $10t$-explicitly bounded distribution.

  There is a family $\cAhat$ of polynomial equations and inequalities of degree $O(t)$ on variables $w = (w_1,\ldots,w_n), \mu = (\mu_1,\ldots,\mu_k)$ and at most $n^{O(t)}$ other variables, whose coefficients depend on $\alpha,t,\tau,X_1,\ldots,X_n$, such that
  \begin{enumerate}
    \item (Satisfiability) If there $S \subseteq [n]$ of size at least $(\alpha - \tau)n$ so that $\{X_i\}_{i \in S}$ is an iid set of samples from $\cD$, and $(1 - \tau) \alpha n \geq d^{100 t}$, then for $d$ large enough, with probability at least $1 - d^{-8}$, the system $\cAhat$ has a solution over $\R$ which takes $w$ to be the $0/1$ indicator vector of $S$. \label{enum:satisfiable}
  \item (Solvability) For every $C \in \N$ there is an $n^{O(Ct)}$-time algorithm which, when $\cAhat$ is satisfiable, returns a degree-$Ct$ pseudodistribution which satisfies $\cAhat$ (up to additive error $2^{-n}$). \label{itm:solve}
    \item (Moment bounds for polynomials of $\mu$) Let $f(\mu)$ be a length-$d$ vector of degree-$\ell$ polynomials in indeterminates $\mu = (\mu_1,\ldots,\mu_k)$.
  $\cAhat$ implies the following inequality and the implication has a degree $t\ell$ SoS proof.
  \begin{align*}
    \cAhat \proves{O(t \ell)} \frac 1 {\alpha n} \sum_{i \in [n]} w_i \iprod{X_i - \mu, f(\mu)}^t \leq 2 \cdot t^{t / 2} \|f(\mu)\|^t \mper
  \end{align*}
  \label{enum:moment-bounds}
  \item (Booleanness) $\cAhat$ includes the equations $w_i^2 = w_i$ for all $i \in [n]$. \label{enum:boolean}
  \item (Size) $\cAhat$ includes the inequalities $(1 - \tau) \alpha n \leq \sum w_i \leq (1 + \tau) \alpha n$. \label{enum:size}
  \item (Empirical mean) $\cAhat$ includes the equation $\mu \cdot \sum_{i \in [n]} w_i = \sum_{i \in [n]} w_i X_i$. \label{enum:empirical-mean}
  \end{enumerate}
  In particular this implies that $\cAhat \proves{O(t)} \cA$.
\end{lemma}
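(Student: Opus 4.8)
The plan is to build $\cAhat$ by starting from $\cA$ and replacing the problematic moment inequality (item 4 of $\cA$, which mentions the unknown means $\mu_j$) with a \emph{direction-free} family of polynomial inequalities that controls the empirical $t$-th moment tensor of the selected subset in every direction simultaneously, and whose certification is exactly an SoS statement. Concretely, for the selected set $S$ with empirical mean $\mu$, consider the matrix-valued polynomial
\[
  M(w,\mu) \;=\; \sum_{i \in [n]} w_i \,\bigl[(X_i - \mu)^{\otimes t/2}\bigr]\bigl[(X_i - \mu)^{\otimes t/2}\bigr]^{\top}\,,
\]
a $d^{t/2}\times d^{t/2}$ matrix whose entries are degree-$t$ polynomials in $(w,\mu)$. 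The key inequality we want to include in $\cAhat$ is a psd-type constraint asserting $M(w,\mu) \preceq 2 t^{t/2} \cdot \alpha n \cdot \Pi$, where $\Pi$ is a fixed matrix comparable to $\E_{Y\sim\cN(0,\Id)}(Y^{\otimes t/2})(Y^{\otimes t/2})^\top$ (the appropriate symmetrized combinatorial tensor). Since a psd constraint on a matrix of polynomials is shorthand for the family of scalar inequalities $v^\top M(w,\mu) v \ge 0$ ranging over auxiliary indeterminates $v$, this is a legitimate polynomial system; the "at most $n^{O(t)}$ other variables" in the statement are these auxiliary $v$'s (and any variables needed to encode $\Pi$ as a sum of squares). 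Contracting the psd constraint against the test vector $f(\mu)^{\otimes t/2}$ and using that $\Pi$ acts like the Gaussian moment matrix gives item~\ref{enum:moment-bounds} with a degree-$O(t\ell)$ SoS proof: the contraction $f(\mu)^{\otimes t/2}{}^\top \Pi\, f(\mu)^{\otimes t/2} \le t^{t/2}\|f(\mu)\|^t$ is itself an SoS identity once we know $\Pi \preceq t^{t/2}\cdot(\text{projector onto symmetric tensors of }\|\cdot\|^t)$, which is a finite-dimensional fact provable in constant degree relative to $t$.

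For the remaining bookkeeping items, I would just carry over the relevant equations directly: items~\ref{enum:boolean}, \ref{enum:size}, \ref{enum:empirical-mean} are literally items 1, 2, 3 of $\cA$, so $\cAhat$ includes them verbatim. The final clause $\cAhat \proves{O(t)} \cA$ then reduces to showing $\cAhat$ implies item 4 of $\cA$ with a degree-$O(t)$ SoS proof; but item 4 of $\cA$ is exactly item~\ref{enum:moment-bounds} specialized to the linear polynomial $f(\mu) = \mu - \mu_j$ (so $\ell = 1$), after clearing the $\sum_i w_i \approx \alpha n$ denominator using item~\ref{enum:size} and the nonnegativity $\sum w_i \ge (1-\tau)\alpha n$ — a routine SoS manipulation. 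Item~\ref{itm:solve} (solvability) is the standard fact that for a system of polynomial equations/inequalities of bounded bit-complexity the degree-$Ct$ pseudodistribution is the solution of an SDP of size $n^{O(Ct)}$ and can be found to additive error $2^{-n}$ in that time by the ellipsoid method; I would cite this rather than reprove it, noting that one must check $\cAhat$ has the mild "Archimedean"/boundedness property (the $w_i \in \{0,1\}$ and the size constraints bound everything, and $\mu$ is pinned down by the empirical-mean equation) so that the SDP is well-conditioned.

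The main obstacle is item~\ref{enum:satisfiable} (satisfiability): we must show that with high probability, when $S$ consists of $\ge(\alpha-\tau)n$ genuine iid samples from the $10t$-explicitly bounded $\cD$ and $(1-\tau)\alpha n \ge d^{100t}$, the psd constraint $M(w,\mu)\preceq 2t^{t/2}\alpha n\,\Pi$ actually holds at $w = \mathbf{1}_S$. This is a matrix concentration statement: the empirical moment matrix $\frac{1}{|S|}\sum_{i\in S}[(X_i-\mu)^{\otimes t/2}][(X_i-\mu)^{\otimes t/2}]^\top$ must concentrate around its population version $\E_{Y\sim\cD}[(Y-\mu^*)^{\otimes t/2}][(Y-\mu^*)^{\otimes t/2}]^\top$, which in turn is dominated by $\Pi$ precisely because $\cD$ is $10t$-explicitly bounded (the explicit bound controls moments up to order $10t$, i.e. up to the $2$nd moment of the degree-$t/2$ tensor entries, which is what a matrix Bernstein / truncation argument needs). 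The sample size threshold $d^{100t}$ is what makes the $d^{t/2}$-dimensional matrix Bernstein bound kick in with failure probability $d^{-8}$; I would handle the heavy-tailedness of the $t$-th-moment entries by a truncation argument, using the higher ($10t$) moment bounds to control the truncation error, and also pass from the empirical mean $\mu$ over $S$ to the true mean $\mu^*$ (which only shifts things by $\|\mu - \mu^*\| = O(\sqrt{t}\,(d^t/|S|)^{1/2})$, negligible at this sample size) — this last step being exactly the kind of estimate already carried out in the proof of Lemma~\ref{lem:overview-example}. I expect this concentration argument, done carefully with the right truncation level and a net over the symmetric-tensor test vectors, to be the technical heart; everything else is packaging.
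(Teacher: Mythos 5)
There is a genuine gap, and it is exactly at the point you identify as the technical heart. Your encoding asks for PSD domination of the empirical moment matrix by a \emph{fixed} reference matrix, $M(w,\mu) \preceq 2t^{t/2}\alpha n\,\Pi$, and your satisfiability argument claims the population moment matrix $P = \E_{Y\sim\cD}\bigl[(Y-\mu^*)^{\tensor t/2}\bigr]\bigl[(Y-\mu^*)^{\tensor t/2}\bigr]^\top$ is dominated by $\Pi$ ``precisely because $\cD$ is $10t$-explicitly bounded.'' That implication is false in general: explicit boundedness only says that $t^{t/2}\|u\|^t - \E\iprod{Y-\mu^*,u}^t$ is a sum of squares, i.e.\ it controls the quadratic form of $P$ on the \emph{rank-one symmetric} test vectors $u^{\tensor t/2}$ only. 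Two matrices whose quadratic forms agree on all $u^{\tensor t/2}$ can differ by any matrix in the kernel of the matrices-to-polynomials map, so from the SoS certificate you can only conclude $P \preceq t^{t/2}\Pi + \Lambda$ for some such $\Lambda$, not $P \preceq O(1)\cdot t^{t/2}\Pi$ as a matrix inequality. Consequently your constraint may simply be unsatisfiable at $w=\mathbf{1}_S$ for a legitimate explicitly bounded $\cD$, and no amount of matrix-Bernstein care fixes that, since the failure is already at the population level. (This is why the paper presents the fixed-matrix constraint $M(w,\mu)\preceq 2\,\E_{X\sim\cN(0,\Id)}[X^{\tensor t/2}][X^{\tensor t/2}]^\top$ only as a Gaussian warmup, where genuine matrix concentration against the Gaussian moment matrix is available.)

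The paper's general construction avoids this by making the SoS certificate itself part of the variable set: it introduces $d^{O(t)}$ new scalar variables $M_{\gamma,\rho}$, linear equations $\cL$ forcing $\iprod{u^{\tensor t/2}, M u^{\tensor t/2}} = 2t^{t/2}\|u\|^t - \tfrac{1}{\alpha n}\sum_i w_i\iprod{X_i-\mu,u}^t$ as polynomials in $u$, and the single constraint $M\succeq 0$. Satisfiability then means exactly ``the empirical directional $t$-th moment bound admits a degree-$t$ SoS proof,'' which is what explicit boundedness plus concentration gives: the paper decomposes the empirical polynomial into the population one (SoS-bounded by assumption), a recentering error from $\mu^*$ to the empirical mean, and an empirical-vs-population error, and bounds the latter two by \emph{coefficient-wise} scalar concentration (each of the $d^{O(t)}$ coefficients to accuracy $d^{-10t}$, using $n\ge d^{100t}$ and Berry--Esseen), converting small coefficient norm into an SoS bound against $\|u\|^t$. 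Your item~3 and bookkeeping steps would go through under either encoding, and your solvability discussion is fine, but the satisfiability step needs the certificate-as-variables encoding (or some substitute) rather than domination by a fixed $\Pi$; also note the paper's ``other variables'' are these certificate entries, not auxiliary test vectors $v$.
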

The proof of Lemma~\ref{lem:general-structured-subset-polys} can be found in Section~\ref{sec:moment-polys}.

\begin{remark}[Numerical accuracy, semidefinite programming, and other monsters]
  We pause here to address issues of numerical accuracy.
  Our final algorithms use point \ref{itm:solve} in Lemma~\ref{lem:general-structured-subset-polys} (itself implemented using semidefinite programming) to obtain a pseudodistribution $\pE$ satisfying $\cAhat$ approximately, up to error $\eta = 2^{-n}$ in the following sense:
  for every $r$ a sum of squares and $f_1,\ldots,f_\ell \in \cA$ with $\deg \Brac{r \cdot \prod f_i \leq Ct}$, one has $\pE r \cdot \prod_{i \in \cA} f \geq -\eta \cdot \|r\|$, where $\|r\|$ is $\ell_2$ norm of the coefficients of $r$.
  Our main analyses of this pseudodistribution employ the implication $\cAhat \proves{} \calB$ for another family of inequalities $\calB$ to conclude that if $\pE$ satisfies $\cA$ then it satisfies $\calB$, then use the latter to analyze our rounding algorithms.
  Because all of the polynomials eventually involved in the SoS proof $\cAhat \proves{} \calB$ have coefficients bounded by $n^B$ for some large constant $B$, it may be inferred that if $\pE$ approximately satisfies $\cAhat$ in the sense above, it also approximately satisfies $\calB$, with some error $\eta' \leq 2^{-\Omega(n)}$.
  The latter is a sufficient for all of our rounding algorithms.

  Aside from mentioning at a couple key points why our SoS proofs have bounded coefficients, we henceforth ignore all numerical issues.
  For further discussion of numerical accuracy and well-conditioned-ness issues in SoS, see \cite{o2016sos,sos-notes-12,DBLP:journals/corr/RaghavendraW17}
\end{remark}


\section{Mixture models: algorithm and analysis}
\label{sec:mixture}

In this section we formally describe and analyze our algorithm for mixture models.
We prove the following theorem.

\begin{theorem}[Main theorem on mixture models]
  \label{thm:mixture-main}
  For every large-enough $t \in \N$ there is an algorithm with the following guarantees.
  Let $\mu_1,\ldots,\mu_k \in \R^d$, satisfy $\|\mu_i - \mu_j\| \geq \Delta$.
  Let $\cD_1,\ldots,\cD_k$ be $10 t$-explicitly bounded, with means $\mu_1,\ldots,\mu_k$.
  Let $\lambda_1,\ldots,\lambda_k \geq 0$ satisfy $\sum \lambda_i = 1$.
  Given $n \geq (d^tk)^{O(1)} \cdot (\max_{i \in [m]} 1/\lambda_i)^{O(1)}$ samples from the mixture model given by $\lambda_1,\ldots,\lambda_k, \cD_1,\ldots,\cD_k$, the algorithm runs in time $n^{O(t)}$ and with high probability returns $\{\hat{\mu}_1,\ldots,\hat{\mu}_k \}$ (not necessarily in that order) such that
  \[
    \|\mu_i - \hat{\mu}_i\| \leq \frac{2^{Ct} m^C t^{t / 2}}{\Delta^{t-1}}
  \]
  for some universal constant $C$.
\end{theorem}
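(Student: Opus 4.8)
The plan is to run the proofs-to-algorithms template of Section~\ref{sec:techniques}: use Lemma~\ref{lem:general-structured-subset-polys} to write down the constraint system, solve the resulting SDP to obtain a pseudodistribution $\pE$, and analyze $\pE$ by transcribing the argument behind Lemma~\ref{lem:overview-example} into the SoS proof system and then sharpening it using the separation $\Delta$. Write $m=k$ and fix a tiny $\tau$, say $\tau=1/(\Delta^{t}m)^{10}$. For each $j\in[k]$ the algorithm runs the following with $\alpha=\lambda_j$ (the mixing weights may be taken as given, or first estimated to multiplicative $1\pm\tau$ from a coarse spectral clustering): instantiate $\cAhat=\cAhat(\alpha,t,\tau,X_1,\dots,X_n)$, possibly augmented with constraints $\Snorm{\mu-\nu}\geq(\Delta/2)^2$ forbidding $\mu$ near the means $\nu$ extracted in earlier rounds, solve the SDP of item~(\ref{itm:solve}) of Lemma~\ref{lem:general-structured-subset-polys} to get a degree-$\Theta(t)$ pseudodistribution satisfying it (up to $2^{-n}$ error), round it as below to extract one new mean, and repeat $k$ rounds. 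Taking $n\geq(d^{t}m)^{O(1)}(\max_i 1/\lambda_i)^{O(1)}$ so that $(1-\tau)\alpha n\geq d^{100t}$, item~(\ref{enum:satisfiable}) of Lemma~\ref{lem:general-structured-subset-polys}---applied to a size-$\alpha n$ subset of each $T_\ell:=\{i:X_i\sim\cD_\ell\}$ (which also satisfies the forbidding constraints, since $\Norm{\mu_\ell-\nu}\geq\Delta-\delta\geq\Delta/2$)---together with the matrix-concentration bounds used in Lemma~\ref{lem:overview-example} hold simultaneously with high probability over all $k$ sample subsets; condition on this event.

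For Phase~1 (the crude bound) I would transcribe the proof of Lemma~\ref{lem:overview-example} into SoS. For each component $\ell$ let $a^{(\ell)}\in\{0,1\}^{n}$ be the fixed indicator of $T_\ell$; expand $\bigiprod{a^{(\ell)},w}\Snorm{\mu-\mu_\ell}$ exactly as there, using the empirical-mean equation (item~(\ref{enum:empirical-mean})), bound the ``moment term'' by an SoS H\"older inequality combined with the degree-$O(t)$ moment bound of item~(\ref{enum:moment-bounds}) at $f(\mu)=\mu-\mu_\ell$, and bound the ``deviation term'' by the SoS fact (valid on the good event) $\sum_{i\in T_\ell}(X_i-\mu_\ell)^{\tensor t/2}\bigparen{(X_i-\mu_\ell)^{\tensor t/2}}^{\top}\preceq 2|T_\ell|\,t^{t/2}\,\Id$ evaluated at $(\mu-\mu_\ell)^{\tensor t/2}$. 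This gives a degree-$O(t)$ SoS consequence of $\cAhat$ which, via pseudo-H\"older and Booleanness (item~(\ref{enum:boolean})), shows that $\pE$ places all but a negligible part of the mass $\pE\bigiprod{a^{(\ell)},w}$ on $\mu$ within $O(t^{1/2})\cdot(|T_\ell|/\pE\bigiprod{a^{(\ell)},w})^{1/t}$ of $\mu_\ell$. Combined with the size constraint (item~(\ref{enum:size})) and $\sum_\ell\bigiprod{a^{(\ell)},w}=\sum_i w_i$, this forces some non-forbidden component $j^*$ with $\pE\bigiprod{a^{(j^*)},w}\geq\Omega(\alpha n/m)$ and $\pE$-belief that $\Norm{\mu-\mu_{j^*}}\leq r_0:=O(t^{1/2})\,m^{O(1/t)}$; as the theorem is vacuous unless $\Delta\geq 4r_0$, we may assume $r_0\leq\Delta/4$, so $\mu$ is pinned near a single, well-defined $\mu_{j^*}$.

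Phase~2 sharpens this to the stated rate by exploiting the separation. From $\Norm{\mu-\mu_\ell}\geq\Norm{\mu_{j^*}-\mu_\ell}-r_0\geq\tfrac12\Norm{\mu_{j^*}-\mu_\ell}\geq\Delta/2$ for $\ell\neq j^*$ and the Phase-1 inequality for $\ell$, one deduces $\pE$-belief that $\bigiprod{a^{(\ell)},w}\leq 2^{O(t)}t^{t/2}|T_\ell|/\Norm{\mu_{j^*}-\mu_\ell}^{t}$, so $\sum_{\ell\neq j^*}\bigiprod{a^{(\ell)},w}$ is negligible next to $\alpha n$ and (with $\alpha=\lambda_{j^*}$ and the size constraint) $S$ ``is'' all but a $\tau+o(1)$ fraction of $T_{j^*}$. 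Writing $\mu-\mu_{j^*}=\tfrac1{\sum w_i}\sum_\ell\sum_{i\in T_\ell}w_i(X_i-\mu_{j^*})$: the $\ell=j^*$ term contributes $O(\sqrt{d/|T_{j^*}|})+O(t^{1/2}\tau)$ (via $\Norm{\sum_{i\in T_{j^*}}(X_i-\mu_{j^*})}\leq O(\sqrt{|T_{j^*}|d})$ and the moment bound applied to the excluded subset of $T_{j^*}$), which is negligible, and each $\ell\neq j^*$ term contributes, after writing $X_i-\mu_{j^*}=(X_i-\mu_\ell)+(\mu_\ell-\mu_{j^*})$ and combining the count of $\bigiprod{a^{(\ell)},w}$ with the moment bound on $T_\ell$, at most $2^{O(t)}t^{t/2}|T_\ell|/(\Norm{\mu_{j^*}-\mu_\ell}^{t-1}|S\cap T_{j^*}|)$; summing over $\ell$ and using $|S\cap T_{j^*}|\geq\Omega(\lambda_{j^*}n)$ and $\lambda_{j^*}\geq m^{-O(1)}$ gives $\pE$-belief that $\Norm{\mu-\mu_{j^*}}\leq\delta:=2^{Ct}m^{C}t^{t/2}/\Delta^{t-1}$. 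To turn this into a returned vector, pick a uniformly random $a\in[n]$ with $\pE w_a\geq\alpha/2$ and condition $\pE$ on $w_a=1$: since $w_a\leq\bigiprod{a^{(j(a))},w}$ where $j(a)$ is the component of $X_a$, the pairwise-overlap bound ``purifies'' the conditioned pseudodistribution onto $j(a)$, and the computable estimate $\hat\mu=\pE[w_a\mu]/\pE[w_a]=\sum_i X_i\,\pE[w_aw_i]/\bigparen{\pE\sum_i w_iw_a}$ is within $\delta$ of $\mu_{j(a)}$ by the Phase-1/Phase-2 analysis applied to the conditioned pseudodistribution together with pseudo-Cauchy--Schwarz. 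As heavy indices $a$ lie predominantly in the dominant non-forbidden component, this recovers that component's mean; forbidding its $\Delta/2$-neighborhood and repeating $k$ rounds (over all $j$) recovers all $k$ means, and a final single-linkage clustering at radius $\Delta/3$ of the collected estimates outputs one per component. The runtime is $k\cdot k\cdot n\cdot n^{O(t)}=n^{O(t)}$.

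The step I expect to be the main obstacle is the faithful SoS formalization of Phases~1--2: the proof of Lemma~\ref{lem:overview-example} divides by $\Norm{\mu-\mu^*}$ and manipulates fractional powers of $\bigiprod{a^{(\ell)},w}$, neither of which is an SoS operation, so each such step must be replaced by pseudo-H\"older/Cauchy--Schwarz applied directly to $\pE$ (exploiting $w_i^2=w_i$ to tame the fractional exponents), the matrix-concentration inputs must be recorded as genuine degree-$O(t)$ SoS facts about the fixed data with $\poly(n)$-bounded coefficients (for the numerical-accuracy reasons noted after Lemma~\ref{lem:general-structured-subset-polys}), and the Phase-2 estimates must be threaded to avoid circularity---the crude $r_0$-bound from Phase~1 is precisely what makes the separation-based bounds non-vacuous. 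The ``purification'' step and the bookkeeping needed to recover all $k$ means rather than just the dominant one are intuitively clear but also need to be stated as honest claims about the approximately feasible pseudodistribution; the choices of $\tau$ and $n$ and the union bounds are routine.
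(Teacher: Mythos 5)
There is a genuine gap, and it sits exactly where the real work of the paper lies. Your Phase~1 conclusion --- that the SDP solution must single out one component $j^*$ and have ``$\pE$-belief that $\Norm{\mu-\mu_{j^*}}\le r_0$'' --- is false for legitimate feasible pseudodistributions. With $\alpha=\lambda_{j}=1/k$, the uniform mixture over the $k$ cluster-indicator solutions (take $w=a^{(\ell)}$, $\mu=\bmu_\ell$ with probability $1/k$ each) satisfies $\cAhat$, and for it no single $j^*$ exists: every component has overlap $\Theta(\alpha n)$ and $\pE\mu$ is the average of all the cluster means. This is precisely the obstruction the paper flags right before Lemma~\ref{lem:overview-example} (``we will prove something slightly weaker''): the localization you want holds in the robust setting ($\alpha=1-\e$) but not in the mixture setting, where any honest SoS consequence of $\cAhat$ can only be an inequality that \emph{correlates} overlap with cluster $\ell$ and distance to $\mu_\ell$ (as in Lemma~\ref{lem:mixture-ident-dist}), never a statement pinning down $\mu$ itself. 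Your Phase~2 then uses ``$\Norm{\mu-\mu_\ell}\ge\Delta/2$ for $\ell\neq j^*$'' as if it held under $\pE$, and your purification step is justified by appeal to ``the pairwise-overlap bound'' $\pE\iprod{a^{(j)},w}\iprod{a^{(\ell)},w}\le\text{small}$ --- but that bound is never derived in your proposal, and deriving it from the per-cluster H\"older/moment inequality plus separation, \emph{at the level of an arbitrary pseudoexpectation}, is the heart of the paper's proof: Lemma~\ref{lem:mixture-ident-main}, proved via Lemma~\ref{lem:mixture-ident-dist} together with the purely pseudoexpectation-theoretic cancellation argument of Lemma~\ref{lem:nice-pe} (repeated pseudo-H\"older/Cauchy--Schwarz to get $\pE\iprod{a_j,w}\iprod{a_\ell,w}\le 2n^2B/\sqrt A$). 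Without that ingredient (or an equivalent), conditioning on $w_a=1$ gives you no control: the conditioned pseudoexpectation of $\mu$ can still blend several components.

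Two smaller points, conditional on the above being repaired. First, once the cross-overlap bound is in hand, your one-shot rounding $\hat\mu=\pE[w_a\mu]/\pE[w_a]$ and the per-component SDP with forbidding constraints could plausibly be made to work, but it is a genuinely different (and more delicate) route than the paper's: the paper instead minimizes $\Norm{\pE ww^\top}_F$, shows $\pE ww^\top$ is Frobenius-close to the ground-truth partition matrix, rounds rows greedily (\textsc{RoundSecondMoments}), and then runs the robust mean estimator on each recovered cluster precisely so that the final error is dimension-free rather than carrying the $O(\sqrt{d/|T_{j^*}|})$ term your direct averaging incurs. Second, your quantitative target $2^{O(t)}t^{t/2}\poly(k)/\Delta^{t-1}$ does match what the cross-overlap bound ultimately yields, so the arithmetic of Phase~2 is consistent with the paper's conclusion --- the missing piece is not the rate but the pseudodistribution-level argument that makes any per-component statement available in the first place.
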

  In particular, we note two regimes: if $\Delta = k^\gamma$ for a constant $\gamma > 0$, choosing $t = O(1 / \gamma)$ we get that the $\ell_2$ error of our estimator is $\poly(1/k)$ for any $O(1/ \gamma)$-explicitly bounded distribution, and our estimator requires only $(dk)^{O(1)}$ samples and time.
  This matches the guarantees of Theorem~\ref{thm:mixtures-intro}.

  On the other hand, if $\Delta = C' \sqrt{\log k}$ (for some universal $C'$) then taking $t = O(\log k)$ gives error
  \[
    \| \mu_i - \hat{\mu}_i \| \leq k^{O(1)} \cdot \Paren{\frac{\sqrt t}{\Delta}}^t
  \]
  which, for large-enough $C'$ and $t$, can be made $1/\poly(k)$.
  Thus for $\Delta = C' \sqrt{\log k}$ and any $O(\log k)$-explicitly bounded distrituion we obtain error $1/\poly(k)$ with $d^{O(\log k)}$ samples and $d^{O(\log k)^2}$ time.

  In this section we describe and analyze our algorithm.
To avoid some technical work we analyze the uniform mixtures setting, with $\lambda_i = 1/m$.
In Section~\ref{sec:nonuniform-mixtures} we describe how to adapt the algorithm to the nonuniform mixture setting.

\subsection{Algorithm and main analysis}
We formally describe our mixture model algorithm now.
We use the following lemma, which we prove in Section~\ref{sec:rounding}.
The lemma says that given a matrix which is very close, in Frobenious norm, to the $0/1$ indicator matrix of a partition of $[n]$ it is possible to approximately recover the partition.
(The proof is standard.)

\begin{lemma}[Second moment rounding, follows from Theorem~\ref{thm:round-main}]
  \label{lem:rounding-main}
  Let $n,m \in \N$ with $m \ll n$.
  There is a polynomial time algorithm \textsc{RoundSecondMoments} with the following guarantees.
  Suppose $S_1,\ldots,S_m$ partition $[n]$ into $m$ pieces, each of size $\tfrac n{2m} \leq |S_i| \leq  \tfrac {2n} m$.
  Let $A \in \R^{n \times n}$ be the $0/1$ indicator matrix for the partition $S$; that is, $A_{ij} = 1$ if $i,j \in S_\ell$ for some $\ell$ and is $0$ otherwise.
  Let $M \in \R^{n \times n}$ be a matrix with $\|A - M\|_F \leq \e n$.
  Given $M$, with probability at least $1 - \e^2 m^3$ the algorithm returns a partition $C_1,\ldots,C_m$ of $[n]$ such that up to a global permutation of $[m]$, $C_i = T_i \cup B_i$, where $T_i \subseteq S_i$ and $|T_i| \geq |S_i| - \e^2 m^2 n$ and $|B_i| \leq \e^2 m^2 n$.
\end{lemma}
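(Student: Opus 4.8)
The plan is to analyze the natural greedy ball‑peeling procedure for $\textsc{RoundSecondMoments}$: while unassigned rows of $M$ remain, pick a uniformly random unassigned index $i$, form the cluster $C=\{\,j \text{ unassigned} : \|M_i-M_j\|^2 \le n/(4m)\,\}$, output $C$ and delete it; run this for exactly $m$ rounds and then dump any surviving rows into $C_1$. The running time is $O(n^2m)$, so the only content is the correctness/probability analysis. Throughout I will assume $\e$ is small enough that $\e^2m^3$ is below an absolute constant (otherwise the lemma is vacuous) and, where needed, that $m$ exceeds an absolute constant.

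The structural backbone is the geometry of $A$: two rows $A_i,A_j$ coincide when $i,j$ lie in the same block $S_\ell$, and otherwise have disjoint $0/1$ supports, so $\|A_i-A_j\|^2=|S_{\ell(i)}|+|S_{\ell(j)}|\ge n/m$. Call a row $i$ \emph{good} if $\|M_i-A_i\|^2\le n/(100m)$; since $\sum_i\|M_i-A_i\|^2=\|M-A\|_F^2\le \e^2n^2$, Markov's inequality shows at most $100\e^2mn$ rows are bad. Two triangle‑inequality estimates then pin down the ball around a good center: if the picked $i$ is good, every remaining good $j$ in $i$'s own block satisfies $\|M_i-M_j\|^2 \le 4\cdot n/(100m) < n/(4m)$ and so lands in $C$, while every good $j$ in a different block satisfies $\|M_i-M_j\| \ge \sqrt{n/m}-2\sqrt{n/(100m)} = \tfrac45\sqrt{n/m}$, hence $\|M_i-M_j\|^2 \ge 0.64\,n/m > n/(4m)$ and is excluded. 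Thus a good pick produces exactly $C=(\text{all remaining good rows of one block})\cup(\text{some bad rows})$, and deleting it leaves the good rows of all other blocks untouched.

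Next I would run the obvious induction on rounds. Conditioned on the first $j-1$ picks being good, exactly $j-1$ distinct blocks have been consumed, and each of the remaining $m-j+1\ge 1$ blocks still carries at least $|S_i|-100\e^2mn \ge n/(2m)-n/(4m)=n/(4m)$ unassigned good rows (using the smallness of $\e$); so the unassigned set has size at least $n/(4m)$, and the $j$‑th pick is bad with probability at most $100\e^2mn/(n/(4m)) = O(\e^2m^2)$. A union bound over the $m$ rounds shows that all $m$ picks are good with probability $1-O(\e^2m^3)$ (the constant can be improved to $O(\e^2m^2\log m)$ by keeping the better size bound $|U_j|\ge (m-j+1)\,n/(4m)$, but this is not needed). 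On that event, after $m$ rounds the good rows have been partitioned exactly, one block per round, and only bad rows survive; moving them into $C_1$ yields $C_i=T_i\cup B_i$ with $T_i\subseteq S_{\pi(i)}$, $|B_i|\le 100\e^2mn$, and $|T_i|\ge |S_{\pi(i)}|-100\e^2mn$. For $m$ above an absolute constant these are at least as strong as the stated bounds $|B_i|\le \e^2m^2n$ and $|T_i|\ge |S_i|-\e^2m^2n$; alternatively this is precisely the special case of the general rounding result, Theorem~\ref{thm:round-main}.

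The genuinely routine parts are the two triangle inequalities, the Markov bound, and the union bound. The one place that needs care is \emph{choosing the ball radius and the ``good'' threshold together}: both must be $\Theta(n/m)$ so that distinct blocks are separated while per‑row error is absorbed, yet small enough that the total bad‑row count $\Theta(\e^2mn)$ is swamped by a single block's good‑row count $\Theta(n/m)$ — which is exactly what forces $m$ past a constant and $\e\lesssim m^{-3/2}$, the regime where the lemma has content. I expect this parameter balancing, together with the minor bookkeeping needed to output exactly $m$ clusters (padding or dumping leftover bad rows), to be the only obstacle; everything else is standard.
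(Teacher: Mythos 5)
Your proposal is essentially the paper's own proof: the same random-center ball-peeling algorithm, the same good/bad row decomposition via the Frobenius bound and a counting (Markov) argument, the same two triangle-inequality estimates separating blocks, and the same per-round bad-pick probability plus union/product bound giving $1-O(\e^2 m^3)$. The only difference is your explicit constants (threshold $n/(100m)$, radius$^2$ $n/(4m)$), which cost you constant-factor slack absorbed by assuming $m$ exceeds an absolute constant --- the paper's parameterization with $E=m$ incurs the same kind of slack, so this is immaterial.
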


\begin{algorithm}[htb]{}
\begin{algorithmic}[1]
\Function{LearnMixtureMeans}{$t,X_1,\ldots,X_n,\delta,\tau$}
\State By semidefinite programming (see Lemma~\ref{lem:general-structured-subset-polys}, item \ref{itm:solve}), find a pseudoexpectation of degree $O(t)$ which satisfies the structured subset polynomials from Lemma~\ref{lem:general-structured-subset-polys}, with $\alpha = n/m$ such that $\|\pE ww^\top \|_F$ is minimized among all such pseudoexpectations.
\State Let $M \gets m \cdot \pE ww^\top$.
\State Run the algorithm \textsc{RoundSecondMoments} on $M$ to obtain a partition $C_1,\ldots,C_m$ of $[n]$.
\State Run the algorithm \textsc{EstimateMean} from Section~\ref{sec:robust} on each cluster $C_i$, with $\e = 2^{Ct} t^{t/2} m^4 / \Delta^t$ for some universal constant $C$ to obtain a list of mean estimates $\hat{\mu}_1,\ldots,\hat{\mu}_m$.
\State Output $\hat{\mu}_1,\ldots,\hat{\mu}_m$.
\EndFunction
\end{algorithmic}
\caption{Mixture Model Learning}
\label{alg:learn-mixture}
\end{algorithm}

\begin{remark}[On the use of \textsc{EstimateMean}]
As described, \textsc{LearnMixtureMeans} has two phases: a clustering phase and a mean-estimation phase.
The clustering phase is the heart of the algorithm; we will show that after running \textsc{RoundSecondMoments} the algorithm has obtained clusters $C_1,\ldots,C_k$ which err from the ground-truth clustering on only a $\tfrac{2^{O(t)} t^{t/2} \poly(k)}{\Delta^t}$-fraction of points.
To obtain estimates $\hat{\mu}_i$ of the underlying means from such a clustering, one simple option is to output the empirical mean of the clusters.
However, without additional pruning this risks introducing error in the mean estimates which grows with the ambient dimension $d$.
By using the robust mean estimation algorithm instead to obtain mean estimates from the clusters we obtain errors in the mean estimates which depend only on the number of clusters $k$, the between-cluster separation $\Delta$, and the number $t$ of bounded moments.
\end{remark}

\begin{remark}[Running time]
We observe that \textsc{LearnMixtureMeans} can be implemented in time $n^{O(t)}$.
The main theorem requires $n \geq k^{O(1)} d^{O(t)}$, which means that the final running time of the algorithm is $(kd^t)^{O(t)}$.\footnote{As discussed in Section~\ref{sec:overview-A}, correctness of our algorithm at the level of numerical accuracy requires that the coefficients of every polynomial in the SoS program $\cAhat$ (and every polynomial in the SoS proofs we use to analyze $\cAhat$) are polynomially bounded.
This may not be the case if some vectors $\mu_1,\ldots,\mu_m$ have norms $\|\mu_i\| \geq d^{\omega(1)}$.
This can be fixed by naively clustering the samples $X_1,\ldots,X_n$ via single-linkage clustering, then running \textsc{LearnMixtureMeans} on each cluster.
It is routine to show that the diameter of each cluster output by a naive clustering algorithm is at most $\poly(d,k)$ under our assumptions, and that with high probability single-linkage clustering produces a clustering respecting the distributions $\cD_i$.
Hence, by centering each cluster before running \textsc{LearnMixtureMeans} we can assume that $\|\mu_i\| \leq \poly(d,k)$ for every $i \leq d$.}
\end{remark}

\subsection{Proof of main theorem}
In this section we prove our main theorem using the key lemmata; in the following sections we prove the lemmata.

\paragraph{Deterministic Conditions}
We recall the setup.
There are $k$ mean vectors $\mu_1,\ldots,\mu_k \in \R^d$, and corresponding distributions $\cD_1,\ldots,\cD_k$ where $\cD_j$ has mean $\mu_j$.
The distributions $\cD_j$ are $10t$-explicitly bounded for a choice of $t$ which is a power of $2$.
Vectors $X_1,\ldots,X_n \in \R^d$ are samples from a uniform mixture of $\cD_1,\ldots,\cD_k$.
We will prove that our algorithm succeeds under the following condition on the samples $X_1,\ldots,X_n$.

\begin{enumerate}
\item[(D1)] (Empirical moments) For every cluster $S_j = \{X_i \, : \, X_i \text{ is from $\cD_j$} \}$, the system $\cAhat$ from Lemma~\ref{lem:general-structured-subset-polys} with $\alpha = 1/m$ and $\tau = \Delta^{-t}$ has a solution which takes $w \in \{0,1\}^n$ to be the $0/1$ indicator vector of $S_j$.
\item[(D2)] (Empirical means) Let $\bmu_j$ be the empirical mean of cluster $S_j$.
The $\bmu_j$'s satisfy $\|\bmu_i - \mu_i \| \leq \Delta^{-t}$.
\end{enumerate}
We note a few useful consequences of these conditions, especially (D1).
First of all, it implies all clusters have almost the same size: $(1 - \Delta^{-t}) \cdot \tfrac nk \leq |S_j| \leq (1 + \Delta^{-t}) \cdot \tfrac nk$.
Second, it implies that all clusters have explicitly bounded moments: for every $S_j$,
\[
  \proves{t} \frac kn \sum_{i \in S_j} \iprod{X_i - \bmu_j, u}^t \leq 2 \cdot t^{t/2} \cdot \|u\|^t\mper
\]

\paragraph{Lemmas}

The following key lemma captures our SoS identifiability proof for mixture models.
\begin{lemma}\label{lem:mixture-ident-main}
  Let $\mu_1,\ldots,\mu_k,\cD_1,\ldots,\cD_k$ be as in Theorem~\ref{thm:mixture-main}, with mean separation $\Delta$.
  Suppose (D1), (D2) occur for samples $X_1,\ldots,X_n$.
  Let $t \in \N$ be a power of two.
  Let $\pE$ be a degree-$O(t)$ pseudoexpectation which satisfies $\cA$ from Lemma~\ref{lem:general-structured-subset-polys} with $\alpha = 1/k$ and $\tau \leq \Delta^{-t}$.
  Then for every $j,\ell \in [k]$,
  \[
    \pE \iprod{a_j,w} \iprod{a_\ell,w} \leq 2^{8t+8} \cdot t^{t/2} \cdot \frac{n^2}{k} \cdot \frac 1 {\Delta^t}\mper
   \]
\end{lemma}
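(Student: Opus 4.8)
The plan is to prove the corresponding \emph{polynomial} inequality $\cA \proves{O(t)} \iprod{a_j,w}\iprod{a_\ell,w} \le 2^{8t+8}\, t^{t/2}\, \tfrac{n^2}{k\Delta^t}$ and then appeal to soundness of SoS proofs: since $\pE$ has degree $O(t)$ and satisfies $\cA$, it satisfies this inequality as well, which is exactly the claim. The diagonal case $j=\ell$ follows directly from the size axioms of $\cA$ (together with $w_i^2 = w_i$), so I focus on $j \ne \ell$, which is where the separation $\norm{\mu_j - \mu_\ell} \ge \Delta$ is used. The derivation rests on three ingredients: (i) an SoS version of the identifiability bound of Lemma~\ref{lem:overview-example}, controlling $\iprod{a_j,w}\cdot\norm{\mu-\mu_j}^t$ for each cluster $j$; (ii) a separation inequality lower-bounding $\norm{\mu-\mu_j}^t + \norm{\mu-\mu_\ell}^t$ by $\Delta^t/2^{t-1}$; and (iii) a product-and-simplify step.

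For (i), write $a_j$ for the $0/1$ indicator vector of the true cluster $S_j$, and over $\cA$ expand $\iprod{a_j,w}\norm{\mu-\mu_j}^2 = A_j + B_j$, where $A_j = \sum_{i \in S_j} w_i \iprod{X_i - \mu,\, \mu_j - \mu}$ and $B_j = \sum_{i \in S_j} w_i \iprod{\mu_j - X_i,\, \mu_j - \mu}$ (using $\norm{\mu_j-\mu}^2 = \iprod{X_i-\mu,\mu_j-\mu} + \iprod{\mu_j-X_i,\mu_j-\mu}$ term-by-term). I would bound the ``main'' term $A_j$ by the SoS H\"older inequality $\bigparen{\sum_i w_i y_i}^t \le \bigparen{\sum_i w_i}^{t-1}\sum_i w_i y_i^t$ (valid since $w_i^2 = w_i$), then enlarge the inner sum from $S_j$ to $[n]$ (legitimate as $t$ is even and $w_i \ge 0$), and finally invoke the fourth (moment) family of inequalities of $\cA$ — equivalently, item~\ref{enum:moment-bounds} of Lemma~\ref{lem:general-structured-subset-polys} with $f(\mu) = \mu_j - \mu$ — to get $A_j^t \le 2^{O(t)}\, t^{t/2}\, \tfrac nk\, \iprod{a_j,w}^{t-1}\norm{\mu-\mu_j}^t$. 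For the ``deviation'' term $B_j$, the same H\"older step followed by the deterministic empirical-moment condition (D1) for $S_j$ (and (D2) to replace $\mu_j$ by $\bmu_j$, incurring a negligible $\Delta^{-\Omega(t^2)}$ error) gives the same bound $B_j^t \le 2^{O(t)}\, t^{t/2}\, \tfrac nk\, \iprod{a_j,w}^{t-1}\norm{\mu-\mu_j}^t$. Combining with $(A_j+B_j)^t \le 2^{t-1}(A_j^t+B_j^t)$ yields $\iprod{a_j,w}^t\norm{\mu-\mu_j}^{2t} \le 2^{O(t)}\, t^{t/2}\, \tfrac nk\, \iprod{a_j,w}^{t-1}\norm{\mu-\mu_j}^t$; dividing out the common SoS-nonnegative factor $\iprod{a_j,w}^{t-1}\norm{\mu-\mu_j}^t$ within the SoS system then gives the key bound, call it ($\star$): $\cA \proves{O(t)} \iprod{a_j,w}\norm{\mu-\mu_j}^t \le 2^{O(t)}\, t^{t/2}\, \tfrac nk$.

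For (ii): since $\mu_j - \mu_\ell = (\mu_j-\mu) + (\mu-\mu_\ell)$, the identity $2\snorm{\mu-\mu_j} + 2\snorm{\mu-\mu_\ell} - \snorm{\mu_j-\mu_\ell} = \snorm{(\mu_j-\mu) - (\mu-\mu_\ell)} \ge 0$ together with $\snorm{\mu_j-\mu_\ell} \ge \Delta^2$ gives $\cA \proves{2} \Delta^2 \le 2\snorm{\mu-\mu_j} + 2\snorm{\mu-\mu_\ell}$, and raising to the power $t/2$ (SoS convexity of $x \mapsto x^{t/2}$, using that $t$ is a power of two) gives $\cA \proves{O(t)} \Delta^t \le 2^{t-1}\paren{\norm{\mu-\mu_j}^t + \norm{\mu-\mu_\ell}^t}$. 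For (iii), multiply this last inequality by the SoS-nonnegative quantity $\iprod{a_j,w}\iprod{a_\ell,w}$ and bound each of the two resulting summands using ($\star$) together with the size axiom $\iprod{a_\ell,w} \le (1+\tau)\tfrac nk$ (respectively $\iprod{a_j,w}$) — using that the product of two SoS-bounded nonnegative quantities is SoS-bounded — to conclude $\cA \proves{O(t)} \iprod{a_j,w}\iprod{a_\ell,w} \le 2^{O(t)}\, t^{t/2}\, \tfrac{n^2}{k^2\Delta^t}$, which is stronger than the stated bound; soundness of SoS finishes the proof.

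The step I expect to be the main obstacle is the cancellation inside ($\star$): passing from $\iprod{a_j,w}^t\norm{\mu-\mu_j}^{2t} \le 2^{O(t)}t^{t/2}\tfrac nk\,\iprod{a_j,w}^{t-1}\norm{\mu-\mu_j}^t$ to $\iprod{a_j,w}\norm{\mu-\mu_j}^t \le 2^{O(t)}t^{t/2}\tfrac nk$ is a ``division'' that is harmless over $\R$ but must be carried out inside a degree-$O(t)$ SoS derivation. I plan to handle it with the monotonicity/cancellation lemmas from the SoS toolkit of Section~\ref{sec:toolkit}, exploiting that $t$ is a power of two so that the cancelled factors are perfect squares (e.g.\ $\norm{\mu-\mu_j}^t = (\snorm{\mu-\mu_j})^{t/2}$ and $\iprod{a_j,w}^{t/2}$); should no clean polynomial cancellation be available, the same reduction can instead be performed at the level of the pseudoexpectation via pseudo-Cauchy--Schwarz. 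The remaining points — accounting for the $\Delta^{-\Omega(t^2)}$ deviation errors from (D1)/(D2), and checking that every SoS certificate above has $n^{O(1)}$-bounded coefficients, as the numerical-accuracy discussion in Section~\ref{sec:overview-A} demands — are routine.
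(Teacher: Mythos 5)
Your steps (i)--(ii) reproduce the paper's ingredients (the H\"older-plus-moment-axiom bound of Lemma~\ref{lem:mixture-ident-dist} and the SoS triangle inequality of Lemma~\ref{lem:mixture-sos-triangle}), but the proof has a genuine gap exactly at the point you flag: the passage from $\iprod{a_j,w}^t\|\mu-\bmu_j\|^{2t} \le C\,\iprod{a_j,w}^{t-1}\|\mu-\bmu_j\|^{t}$ to your $(\star)$, $\cA \proves{O(t)} \iprod{a_j,w}\|\mu-\bmu_j\|^{t} \le C$. The SoS proof system has no cancellation rule: from an SoS certificate of $q\cdot(C-p)\ge 0$ one cannot in general extract a low-degree certificate of $C-p\ge 0$, even when the cancelled factor $q$ is a perfect square, and Section~\ref{sec:toolkit} contains no ``monotonicity/cancellation'' lemma of the kind you invoke (it has Cauchy--Schwarz, H\"older, and triangle inequalities only). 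The division is harmless semantically over $\R$, but that is precisely the kind of step the proofs-to-algorithms framework does not let you take for free, and the paper never claims the polynomial inequality $(\star)$ or the polynomial inequality in the lemma's conclusion.

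Your fallback---``perform the reduction at the level of the pseudoexpectation via pseudo-Cauchy--Schwarz''---is the right idea, but as stated it does not plug back into your step (iii): a pseudoexpectation-level bound such as $\pE\bigbrac{\iprod{a_j,w}\|\mu-\bmu_j\|^{t}} \le C$ is a numerical inequality about $\pE$, not a constraint $\pE$ satisfies, so you cannot multiply it by $\iprod{a_\ell,w}$ or by the separation inequality ($\pE$ is not multiplicative). This is why the paper's actual argument (Lemma~\ref{lem:nice-pe}) never separates the two clusters: it keeps the product $\iprod{a_j,w}^t\iprod{a_\ell,w}^t$ intact throughout, multiplies in the separation lower bound \emph{before} any cancellation, establishes the auxiliary inequality \eqref{eq:nice-1} by pseudoexpectation Cauchy--Schwarz and rearrangement, and only then cancels using pseudoexpectation H\"older (Fact~\ref{fact:pe-holders}, $\pE p^{t-2} \le (\pE p^t)^{(t-2)/t}$) together with the crude bounds $\iprod{a_j,w},\iprod{a_\ell,w}\le n$, finally taking $t/2$-th powers and roots. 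So the decomposition and the per-cluster SoS bound in your plan match the paper, but the cancellation must be restructured along the lines of Lemma~\ref{lem:nice-pe}; as written, neither your primary route nor your fallback completes the proof.
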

\noindent
The other main lemma shows that conditions (D1) and (D2) occur with high probability.
\begin{lemma}[Concentration for mixture models]
\label{lem:mixture-concentration}
  With notation as above, conditions (D1) and (D2) simultaneously occur with probability at least $1 - 1/d^{15}$ over samples $X_1,\ldots,X_n$, so long as $n \geq d^{O(t)} k^{O(1)}$, for $\Delta \geq 1$.
\end{lemma}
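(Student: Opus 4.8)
The plan is to verify (D1) and (D2) separately for each of the $k \le \poly(d)$ clusters and then union bound, so it is enough to make every per-cluster failure event have probability at most $d^{-16}$. We may assume $\Delta \ge 2$ (for $1 \le \Delta < 2$ one argues identically, replacing the tiny $\tau = \Delta^{-t}$ by a small absolute constant, which only widens the size window admitted by $\cAhat$). Condition throughout on the partition of $[n]$ among the $k$ components; conditionally, $\{X_i\}_{i \in S_j}$ is an iid sample from $\cD_j$. The first step is to pin down the cluster sizes: $|S_j| \sim \mathrm{Bin}(n,1/k)$, so a Chernoff bound gives $(1-\tau)\tfrac{n}{k} \le |S_j| \le (1+\tau)\tfrac{n}{k}$ with $\tau = \Delta^{-t}$ except with probability $2\exp(-\Omega(\Delta^{-2t} n/k))$, which is below $d^{-16}$ once $n$ is a sufficiently large multiple of $k\Delta^{2t}\log d$; moreover this forces $(1-\tau)\tfrac{n}{k} \ge d^{100t}$ provided $n \ge 2k\,d^{100t}$ (using $1 - \Delta^{-t} \ge \tfrac12$). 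Condition on this size event for all $j$.

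\textbf{Establishing (D1).} Fix $j$. On the size event, $\{X_i\}_{i\in S_j}$ is an iid sample from the $10t$-explicitly bounded distribution $\cD_j$, of size lying in $[(1-\tau)\alpha n, (1+\tau)\alpha n]$ with $\alpha = 1/k$, and $(1-\tau)\alpha n \ge d^{100t}$. These are exactly the hypotheses of Lemma~\ref{lem:general-structured-subset-polys}, item~\ref{enum:satisfiable} (instantiated with this $\alpha$ and $\tau = \Delta^{-t}$), whose conclusion is that, except with probability $d^{-8}$ --- driven below $d^{-16}$ by a modest increase in the constant of the $n \ge d^{O(t)}$ hypothesis --- the system $\cAhat$ has a solution in which $w$ is the $0/1$ indicator vector of $S_j$. (The Booleanness and size constraints on this $w$ hold precisely because $|S_j|$ lies in the prescribed window; the $t$-th moment constraints are exactly what item~\ref{enum:satisfiable} certifies.) This is (D1) for cluster $j$.

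\textbf{Establishing (D2).} Fix $j$ and condition as above. Applying the moment bound defining $10t$-explicit boundedness with $s = 2$ shows $\Sigma_j \preceq 2\,\Id$, so $\bmu_j - \mu_j = \tfrac1{|S_j|}\sum_{i \in S_j}(X_i - \mu_j)$ is a centered average of iid vectors with $\E\bigl[\,\|\bmu_j - \mu_j\|^2 \,\big|\, |S_j|\,\bigr] = \tfrac1{|S_j|}\Tr(\Sigma_j) \le \tfrac{2d}{|S_j|} \le \tfrac{4dk}{n}$. Markov's inequality then gives $\|\bmu_j - \mu_j\| \le \Delta^{-t}$ except with probability $4dk\Delta^{2t}/n$, which is below $d^{-16}$ once $n \ge 4d^{17}k\Delta^{2t}$. (If one wanted a sub-exponential-in-$d$ tail here instead, a higher moment of $\|\bmu_j-\mu_j\|$ could be controlled using the bounded higher moments of $\cD_j$, but Markov suffices for the stated probability.) A union bound over $j \in [k]$ of the three events --- the size event, the event from item~\ref{enum:satisfiable}, and the Markov event --- shows that (D1) and (D2) hold simultaneously with probability at least $1 - d^{-15}$.

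\textbf{Main obstacle.} The only genuinely delicate point is reconciling the sample requirements above --- each carrying a factor $\Delta^{2t}$, since $\tau = \Delta^{-t}$ can be minuscule and the (D2) accuracy target is also $\Delta^{-t}$ --- with the claimed bound $n \ge d^{O(t)}k^{O(1)}$. This is handled exactly by the naive single-linkage preprocessing described in the running-time remark, after which we may assume $\Delta \le \poly(d,k)$; combined with the standing assumption $k = \poly(d)$ this gives $\Delta^{2t} \le d^{O(t)}$, so both $k\Delta^{2t}\log d$ and $dk\Delta^{2t}$ are $d^{O(t)}k^{O(1)}$. Beyond this bookkeeping the proof is routine: the one substantive ingredient, matrix concentration for the empirical $t$-th moment tensor of $\cD_j$, is already packaged inside Lemma~\ref{lem:general-structured-subset-polys}, so (D1) is a black-box invocation and (D2) needs only a second-moment estimate.
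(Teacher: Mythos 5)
Your proof is correct and follows the same route the paper takes (the paper only sketches it in one line): condition on the cluster partition, invoke the satisfiability item of Lemma~\ref{lem:general-structured-subset-polys} per component for (D1), and use standard concentration (Chernoff for cluster sizes, a second-moment/Markov bound for the empirical means) plus a union bound over the $k$ components for (D2). Your added bookkeeping about the $\Delta^{2t}$ factors and the $\tau<0.1$ corner case fills in details the paper glosses over, but does not change the approach.
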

\noindent Lemma~\ref{lem:mixture-concentration} follows from Lemma~\ref{lem:general-structured-subset-polys}, for (D1), and standard concentration arguments for (D2).
Now we can prove the main theorem.
\begin{proof}[Proof of Theorem~\ref{thm:mixture-main} (uniform mixtures case)]
  Suppose conditions (D1) and (D2) hold.
  Our goal will be to bound $\|M - A\|^2 \leq n \cdot \tfrac{2^{O(t)} t^{t /2} k^4}{\Delta^{t}}$, where $A$ is the $0/1$ indicator matrix for the ground truth partition $S_1,\ldots,S_k$ of $X_1,\ldots,X_n$ according to $\cD_1,\ldots,\cD_k$.
  Then by Lemma~\ref{lem:rounding-main}, the rounding algorithm will return a partition $C_1,\ldots,C_k$ of $[n]$ such that $C_\ell$ and $S_\ell$ differ by at most $n \tfrac{2^{O(t)} t^{t/2} k^{10}}{\Delta^{t}}$ points, with probability at least $1 - \tfrac{2^{O(t)} t^{t/2} k^{30}}{\Delta^{t}}$.
  By the guarantees of Theorem~\ref{thm:robust-main} regarding the algorithm \textsc{EstimateMean}, with high probability the resulting error in the mean estimates $\hat{\mu}_i$ will satisfy
  \[
  \|\mu_i - \hat{\mu}_i\| \leq \sqrt t \cdot \Paren{\frac{2^{O(t)} t^{t/2} k^{10}}{\Delta^t}}^{\tfrac {t-1}t} \leq \frac{2^{O(t)} \cdot t^{t/2} \cdot k^{10}}{\Delta^{t-1}}\mper
  \]

  We turn to the bound on $\|M - A\|^2$.
  First we bound $\iprod{\pE ww^\top, A}$.
  Getting started,
  \[
   \pE \Paren{\sum_{i \in [k]} \iprod{w,a_i}}^2 = \pE \Paren{ \sum_{i \in [n]} w_i}^2 \geq (1 - \Delta^{-t} )^2 \cdot n^2/k^2\mper
  \]
  By Lemma~\ref{lem:mixture-ident-main}, choosing $t$ later,
  \[
    \sum_{i \neq j \in [k]} \pE \iprod{a_i,w}\iprod{a_j,w} \leq n^2 2^{O(t)} t^{t/2} \cdot k \cdot \frac 1 {\Delta^{t}}\mper
  \]
  Together, these imply
  \[
    \pE \sum_{i \in [k]} \iprod{w,a_i}^2 \geq \frac{n^2}{k^2} \cdot \Brac{ 1- \frac{2^{O(t)} t^{t /2} k^3}{\Delta^{t}}}\mper
  \]

  At the same time, $\|\pE ww^T\|_F \leq \tfrac 1k \|A\|_F$ by minimality (since the uniform distribution over cluster indicators satisfies $\cA$), and by routine calculation and assumption (D1), $\|A\|_F \leq \tfrac n {\sqrt k} (1 + O(\Delta^{-t}))$.
  Together, we have obtained
  \[
  \iprod{M,A} \geq \Paren{1 - \frac{2^{O(t)} t^{t/2} k^3}{\Delta^{t}}} \cdot \|A\| \|M\|
  \]
  which can be rearranged to give $\|M-A\|^2 \leq n \cdot \tfrac{2^{O(t)} t^{t/2} k^4}{\Delta^{t}}$.
\end{proof}

\subsection{Identifiability}
In this section we prove Lemma~\ref{lem:mixture-ident-main}.
We use the following helpful lemmas.
The first is in spirit an SoS version of Lemma~\ref{lem:overview-example}.
\begin{lemma}\label{lem:mixture-ident-dist}
  Let $\mu_1,\ldots,\mu_k,\cD_1,\ldots,\cD_k,t$ be as in Theorem~\ref{thm:mixture-main}.
  Let $\bmu_i$ be as in (D1).
  Suppose (D1) occurs for samples $X_1,\ldots,X_n$.
  Let $\cA$ be the system from Lemma~\ref{lem:general-structured-subset-polys}, with $\alpha = 1/k$ and any $\tau$.
  Then
  \[
    \cA \proves{O(t)} \iprod{a_j, w}^t \|\mu - \bmu_j\|^{2t} \leq 2^{t+2} t^{t/2} \cdot \frac nk \cdot \iprod{a_j,w}^{t-1} \cdot \|\mu - \bmu_j\|^t\mper
  \]
\end{lemma}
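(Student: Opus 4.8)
The plan is to carry out the elementary computation behind Lemma~\ref{lem:overview-example} inside the SoS proof system, replacing the two applications of H\"older's inequality there with their SoS analogues and the concentration facts with the moment bounds supplied by Lemma~\ref{lem:general-structured-subset-polys} and condition (D1). Write $v := \bmu_j - \mu$, a vector of degree-one polynomials in the indeterminates $\mu$, and $N_j := \iprod{a_j, w} = \sum_{i \in S_j} w_i$, where $a_j \in \{0,1\}^n$ is the indicator of the ground-truth cluster $S_j$. Since $\snorm{v} = \iprod{\bmu_j - \mu, v}$ and $\bmu_j - \mu = (X_i - \mu) + (\bmu_j - X_i)$, we have the polynomial identity $N_j \cdot \snorm{v} = A + B$, where $A := \sum_{i \in S_j} w_i \iprod{X_i - \mu, v}$ and $B := \sum_{i \in S_j} w_i \iprod{\bmu_j - X_i, v}$ each have degree $O(1)$ in $(w,\mu)$ (indeed $b_i := \iprod{X_i-\mu, v}$ has degree $2$); the target inequality reads $N_j^t \norm{v}^{2t} \leq 2^{t+2} t^{t/2} \cdot \tfrac nk \cdot N_j^{t-1} \cdot \norm{v}^t$, i.e.\ that $(A+B)^t$ is small.

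The first step is $N_j^t \norm{v}^{2t} = (A+B)^t \leq 2^{t-1}(A^t + B^t)$ via the degree-$t$ SoS inequality $(x+y)^t \leq 2^{t-1}(x^t + y^t)$ (valid since $t$ is even), so it suffices to bound each of $A^t, B^t$ by roughly $t^{t/2}\,\tfrac nk\, N_j^{t-1}\,\norm{v}^t$. For $A$, the SoS form of H\"older's inequality, $(\sum_{i \in S_j} w_i b_i)^t \leq (\sum_{i \in S_j} w_i)^{t-1}(\sum_{i \in S_j} w_i b_i^t)$ for boolean $w_i$ and even $t$ (a degree-$O(t)$ proof as $b_i$ has degree $2$), gives $A^t \leq N_j^{t-1}\sum_{i \in S_j} w_i \iprod{X_i - \mu, v}^t$; since $w_i \iprod{X_i - \mu, v}^t = (w_i \iprod{X_i - \mu, v}^{t/2})^2$ is a square (using $w_i^2 = w_i$ and $t$ even), we may enlarge the sum to all of $[n]$, and then Lemma~\ref{lem:general-structured-subset-polys}, item~\ref{enum:moment-bounds}, applied with $f(\mu) = v$ and $\alpha = 1/k$, bounds $\tfrac kn\sum_{i \in [n]} w_i \iprod{X_i - \mu, v}^t$ by $2 t^{t/2}\norm{v}^t$; hence $A^t \leq 2 t^{t/2}\,\tfrac nk\, N_j^{t-1}\,\norm{v}^t$. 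For $B$, the same SoS H\"older (using $\iprod{\bmu_j - X_i, v}^t = \iprod{X_i - \bmu_j, v}^t$ as $t$ is even) gives $B^t \leq N_j^{t-1}\sum_{i \in S_j} w_i \iprod{X_i - \bmu_j, v}^t \leq N_j^{t-1}\sum_{i \in S_j}\iprod{X_i - \bmu_j, v}^t$, the last step dropping the square $\sum_{i \in S_j}(1-w_i)\iprod{X_i - \bmu_j, v}^t = \sum_{i \in S_j}((1-w_i)\iprod{X_i - \bmu_j, v}^{t/2})^2 \geq 0$; now the empirical sub-Gaussian $t$-th moment bound for $S_j$ provided by (D1), namely $\proves{t}\tfrac kn\sum_{i \in S_j}\iprod{X_i - \bmu_j, u}^t \leq 2 t^{t/2}\norm{u}^t$, specialized to the polynomial direction $u = v$, bounds this by $2 t^{t/2}\,\tfrac nk\, N_j^{t-1}\,\norm{v}^t$. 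Combining, $N_j^t \norm{v}^{2t} \leq 2^{t-1}\cdot 4 \cdot t^{t/2}\,\tfrac nk\, N_j^{t-1}\,\norm{v}^t = 2^{t+1} t^{t/2}\,\tfrac nk\, N_j^{t-1}\,\norm{\mu - \bmu_j}^t$, which is the claim (the statement's $2^{t+2}$ is a loose bound, with room to absorb the $O(\tau)$-slack between $|S_j|$ and $n/k$ that enters through (D1)). Each of these polynomials has degree $O(1)$ in $(w,\mu)$ before any $t$-th power is taken, so the entire derivation is a degree-$O(t)$ SoS proof.

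The one genuinely delicate point is getting each moment bound in the correct direction. The axioms of $\cA$ (which here include the polynomial-direction moment bounds of Lemma~\ref{lem:general-structured-subset-polys}) control the empirical moments of the \emph{selected} set $S$ only through polynomial directions $f(\mu)$, so the $A$-term --- where $v$ multiplies $X_i - \mu$, the samples centered at the selected set's own empirical mean --- must be routed through item~\ref{enum:moment-bounds} with $f(\mu) = \bmu_j - \mu$; the $B$-term instead involves $X_i - \bmu_j$, the samples of the \emph{fixed} cluster $S_j$ centered at their own empirical mean, and there one must invoke the deterministic condition (D1), which guarantees that this empirical $t$-th moment is sub-Gaussian in \emph{every} direction --- crucially including the indeterminate direction $v$. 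Everything else is routine, but one should check that each ``drop a nonnegative tail'' and each ``$w_i \leq 1$'' manipulation is literally an SoS step --- which it is, precisely because $t$ is even and $w_i^2 = w_i$ is an axiom of $\cA$ --- and that SoS H\"older is applied with the constant-degree bound on the $b_i$'s.
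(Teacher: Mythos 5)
Your proposal is correct and follows essentially the same route as the paper: the same decomposition of $\iprod{a_j,w}\,\snorm{\mu-\bmu_j}$ into the two sums, the same $(a+b)^t \le 2^{O(t)}(a^t+b^t)$ step, SoS H\"older on each term, and then the moment axiom of $\cAhat$ in the polynomial direction $f(\mu)=\bmu_j-\mu$ for the first term and condition (D1) for the second (these are exactly the paper's Lemmas~\ref{lem:mixture-ident-dist-1} and~\ref{lem:mixture-ident-dist-2}). Your bookkeeping of the $n/k$ and $t^{t/2}$ factors and the SoS-legality of extending the sum to $[n]$ and dropping the $w_i$'s is carried out carefully and matches the stated constant.
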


The second lemma is an SoS triangle inequality, capturing the consequences of separation of the means.
The proof is standard given Fact~\ref{fact:sos-triangle}.
\begin{lemma}
  \label{lem:mixture-sos-triangle}
  Let $a,b \in \R^k$ and $t \in \N$ be a power of $2$.
  Let $\Delta = \|a -b\|$.
  Let $u = (u_1,\ldots,u_k)$ be indeterminates.
  Then $\proves{t} \|a - u\|^t + \|b - u\|^t \geq 2^{-t} \cdot \Delta^t$.
\end{lemma}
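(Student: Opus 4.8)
The plan is to reduce to the SoS triangle inequality of Fact~\ref{fact:sos-triangle}. The key observation is that, as length-$k$ vectors of polynomials in the indeterminates $u = (u_1,\ldots,u_k)$, one has the identity $a - b = (a - u) + (u - b)$, while $a - b$ is a \emph{constant} vector with $\|a-b\| = \Delta$. So I would apply Fact~\ref{fact:sos-triangle} with $x := a - u$ and $y := u - b$, each a vector of degree-one polynomials in $u$; here it is used that $t$ is a power of two, so that $\|x\|^t = (\|x\|^2)^{t/2}$ and $\|y\|^t = (\|y\|^2)^{t/2}$ are genuine polynomials. This yields
\[
  \proves{t} \; \Delta^t \;=\; \|a - b\|^t \;\leq\; 2^{t-1}\Paren{\|a - u\|^t + \|u - b\|^t}\mper
\]

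Since $t$ is even we may replace $\|u - b\|^t$ by $\|b - u\|^t$, and rearranging the displayed inequality gives
\[
  \proves{t} \; \|a - u\|^t + \|b - u\|^t \;\geq\; 2^{-(t-1)} \Delta^t \;\geq\; 2^{-t} \Delta^t\mper
\]
Every polynomial appearing has degree $t$ in $u$, so the proof has degree $t$, as claimed. (One may also note that the coefficients are controlled by $\poly(\|a\|,\|b\|)$, hence polynomially bounded whenever $a,b$ have polynomially bounded norm, which is what the numerical-accuracy discussion elsewhere requires.)

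I do not expect any real obstacle here. The only points to be careful about are (i) that $a$ and $b$ are fixed vectors, so $\|a-b\|^t = \Delta^t$ is a scalar and no terms in $u$ survive on the left-hand side of the triangle inequality; and (ii) that Fact~\ref{fact:sos-triangle} is invoked in its vector-valued form with the appropriate power of two. If Fact~\ref{fact:sos-triangle} is stated for a single pair of vectors summing to a third, the instance above is exactly what is needed; if it is phrased more generally, we simply specialize to two summands.
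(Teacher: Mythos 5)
Your approach is the same one the paper intends (the paper only says the proof is ``standard given Fact~\ref{fact:sos-triangle}''): decompose $a-b=(a-u)+(u-b)$ and invoke an SoS triangle inequality at power $t$, then rearrange, using that $\Delta^t$ is a scalar. The one caveat is that Fact~\ref{fact:sos-triangle} as stated is only the degree-$2$ inequality $\proves{2}\|x+y\|^2\leq 2\|x\|^2+2\|y\|^2$, whereas your displayed step uses the power-$t$ form $\proves{t}\|x+y\|^t\leq 2^{t-1}\paren{\|x\|^t+\|y\|^t}$; this is exactly the content that the lemma's hypothesis ``$t$ a power of $2$'' is there to supply, and it should be derived rather than cited. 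The repair is one line: if $q-p$ has an SoS certificate and $p,q$ are themselves sums of squares, then $q^2-p^2=(q-p)(q+p)$ is again a sum of squares, so SoS inequalities between SoS polynomials may be squared; starting from $\Delta^2\leq 2\paren{\|a-u\|^2+\|u-b\|^2}$ and squaring $\log_2(t/2)$ times (interleaved with $(A+B)^2\leq 2(A^2+B^2)$ applied to $A=\|a-u\|^s$, $B=\|u-b\|^s$) gives $\proves{t}\|a-u\|^t+\|b-u\|^t\geq 2^{-(t-1)}\Delta^t\geq 2^{-t}\Delta^t$, matching your claimed constant. With that amplification spelled out, your argument is complete and coincides with the paper's.
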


The last lemma helps put the previous two together.
Although we have phrased this lemma to concorde with the mixture model setting, we note that the proof uses nothing about mixture models and consists only of generic manipulations of pseudodistributions.
\begin{lemma}
  \label{lem:nice-pe}
  Let $\mu_1,\ldots,\mu_k,\cD_1,\ldots,\cD_k,X_1,\ldots,X_n$ be as in Theorem~\ref{thm:mixture-main}.
  Let $a_j$ be the $0/1$ indicator for the set of samples drawn from $\cD_j$.
  Suppose $\pE$ is a degree-$O(t)$ pseudodistribution which satisfies
  \begin{align*}
    \iprod{a_j,w} & \leq n\\
    \iprod{a_\ell,w} & \leq n\\
    \|\mu - \bmu_j\|^{2t} + \|\mu - \bmu_\ell\|^{2t} & \geq A \\
    \iprod{a_j,w}^t \|\mu - \bmu_j\|^{2t} & \leq B n \iprod{a_j,w}^{t-1} \|\mu - \bmu_j\|^{t}\\
    \iprod{a_\ell,w}^t \|\mu - \bmu_\ell\|^{2t} & \leq B n \iprod{a_\ell,w}^{t-1} \|\mu - \bmu_\ell\|^{t}
  \end{align*}
  for some scalars $A,B \geq 0$.
  Then
  \[
    \pE \iprod{a_j,w}\iprod{a_\ell,w} \leq \frac{2 n^2 B}{\sqrt A}\mper
  \]
\end{lemma}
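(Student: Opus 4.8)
The plan is to bound $\pE\iprod{a_j,w}\iprod{a_\ell,w}$ by squaring it, inserting the lower bound $\Norm{\mu-\bmu_j}^{2t}+\Norm{\mu-\bmu_\ell}^{2t}\ge A$, and then killing the two resulting terms using the two moment inequalities. It is convenient to abbreviate $p=\iprod{a_j,w}$, $q=\iprod{a_\ell,w}$, $x=\Norm{\mu-\bmu_j}^t$, $y=\Norm{\mu-\bmu_\ell}^t$; since $t$ is a power of two and (as in the intended application) $\pE$ also satisfies the Boolean axioms $w_i^2=w_i$, each of $p,q,x,y$ --- and hence each of $px$ and $qy$ --- is a sum of squares. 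In this shorthand the five hypotheses read $p\le n$, $q\le n$, $x^2+y^2\ge A$, $p^tx^2\le Bn\,p^{t-1}x$, $q^ty^2\le Bn\,q^{t-1}y$, and the goal is $\pE[pq]\le 2n^2B/\sqrt A$.

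First I would reduce to two symmetric estimates. By Cauchy--Schwarz for pseudoexpectations, $(\pE[pq])^2\le\pE[(pq)^2]\cdot\pE[1]=\pE[p^2q^2]$. Since $p^2q^2=(pq)^2$ is a sum of squares and $\pE$ satisfies $x^2+y^2-A\ge 0$, multiplying through gives
\[
 A\cdot\pE[p^2q^2]\le\pE[p^2q^2(x^2+y^2)]=\pE[p^2q^2x^2]+\pE[p^2q^2y^2].
\]
Hence it suffices to prove $\pE[p^2q^2x^2]\le B^2n^4$ and, by the symmetric argument in the roles $(\ell,q,y)$, also $\pE[p^2q^2y^2]\le B^2n^4$; together these give $(\pE[pq])^2\le 2B^2n^4/A$, so $\pE[pq]\le\sqrt2\,Bn^2/\sqrt A\le 2Bn^2/\sqrt A$.

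To bound $\pE[p^2q^2x^2]$ I would first upgrade the fourth hypothesis. Multiplying $p^tx^2\le Bn\,p^{t-1}x$ by the sum of squares $x^{t-2}=\Norm{\mu-\bmu_j}^{t(t-2)}$ (legitimate since $t-2$ is even) yields the cleaner inequality $(px)^t\le Bn\,(px)^{t-1}$ in the single SoS quantity $v:=px$. Taking $\pE$ of both sides and invoking the standard SoS fact $\pE[v^{t-1}]\le(\pE[v^t])^{(t-1)/t}$ (a consequence of SoS AM--GM, since $v$ is a sum of squares) gives $\pE[v^t]\le Bn\,(\pE[v^t])^{(t-1)/t}$, hence $\pE[v^t]\le (Bn)^t$, and then $\pE[(px)^2]=\pE[v^2]\le(\pE[v^t])^{2/t}\le(Bn)^2$. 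Finally, since $0\le q\le n$ and $q$ is a sum of squares, $q^2\le n^2$ is an SoS consequence, so
\[
 \pE[p^2q^2x^2]=\pE[q^2(px)^2]\le n^2\,\pE[(px)^2]\le B^2n^4,
\]
as required; the $y$-term is identical using the fifth hypothesis and $p\le n$.

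I expect the only genuinely delicate point to be the upgrade step. The useful content of the moment hypotheses, $p^tx^2\le Bn\,p^{t-1}x$, sits at the high power $p^{t-1}$, and SoS affords no way to ``divide out'' that power, so one is forced to pass through $(px)^t$ and the bound $\pE[(px)^t]\le(Bn)^t$ before descending to the degree-$2$ quantity $\pE[pq]$ --- this is also what dictates how large a degree $\pE$ must be taken to have. The remaining ingredients (Cauchy--Schwarz, SoS AM--GM, and the elementary consequences of $p,q\le n$) are routine.
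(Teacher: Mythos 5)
Your overall strategy (square, insert $\Norm{\mu-\bmu_j}^{2t}+\Norm{\mu-\bmu_\ell}^{2t}\geq A$, then kill each term using the corresponding moment hypothesis) is sound as real-variable reasoning, but the way you use the moment hypotheses breaks the degree budget, and this is exactly the constraint the lemma is stated under. Your ``upgrade'' step multiplies the constraint $p^t x^2\leq Bn\,p^{t-1}x$ by $x^{t-2}=\Norm{\mu-\bmu_j}^{t(t-2)}$, a polynomial of degree $\Theta(t^2)$, and then evaluates $\pE[(px)^t]$ and $\pE[(px)^{t-1}]$, which are polynomials of degree $t^2+t$ in $(w,\mu)$. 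A degree-$O(t)$ pseudoexpectation is only defined on polynomials of degree $O(t)$, and ``$\pE$ satisfies $\{p^tx^2\leq Bn\,p^{t-1}x\}$'' only licenses multiplication by SoS multipliers that keep the total degree within that bound. So with the hypotheses of the lemma the quantities $\pE[(px)^t]$, $\pE[(px)^{t-1}]$ do not exist and the upgraded inequality cannot be derived. Your argument does prove the statement for pseudodistributions of degree $\Omega(t^2)$, but that is a strictly weaker lemma: in the algorithm the SoS degree dictates the SDP size and runtime $n^{O(\mathrm{deg})}$, so routing through $(px)^t$ would degrade $n^{O(t)}$ to $n^{O(t^2)}$ (you partially acknowledge this in your last paragraph, but the lemma you are asked to prove fixes the degree at $O(t)$).

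The paper's proof avoids this by never isolating the single product $px$: it works with the symmetric quantities $\iprod{a_j,w}^t\iprod{a_\ell,w}^t\Norm{\mu-\bmu_j}^{2t}$ (degree $4t$), uses the moment hypothesis with the low-degree multiplier $\iprod{a_\ell,w}^t$, and then ``divides out'' the high power via pseudoexpectation Cauchy--Schwarz, yielding $\pE\,p^tq^tx^2\leq B^2n^2\,\pE\,p^{t-2}q^t$. Combining with the separation hypothesis, the bounds $p,q\leq n$, and the even-exponent pseudo-H\"older fact $\pE[(pq)^{t-2}]\leq(\pE[(pq)^t])^{(t-2)/t}$ gives $\pE[p^tq^t]\leq(2B^2n^4/A)^{t/2}$, and only at the very end does one descend to $\pE[pq]\leq(\pE[p^tq^t])^{1/t}$. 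If you want to keep your decomposition, you would need to replace the upgrade step by a similar Cauchy--Schwarz rearrangement performed at degree $O(t)$. A secondary, more minor point: you invoke the Boolean axioms $w_i^2=w_i$ to make $p,q$ SoS; these are not among the lemma's hypotheses (the lemma is meant to be a generic pseudodistribution statement), though they are available in the application, and your odd-exponent H\"older step can in any case be justified without SoS-ness of $v$ by one pseudo-Cauchy--Schwarz followed by the even-exponent fact.
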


Now we have the tools to prove Lemma~\ref{lem:mixture-ident-main}.
\begin{proof}[Proof of Lemma~\ref{lem:mixture-ident-main}]
  We will verify the conditions to apply Lemma~\ref{lem:nice-pe}.
  By Lemma~\ref{lem:mixture-ident-dist}, when (D1) holds, the pseudoexpectation $\pE$ satisfies
  \[
    \iprod{a_j,w}^t \|\mu - \bmu_j\|^{2t} & \leq B n \iprod{a_j,w}^{t-1} \|\mu - \bmu_j\|^{t}
  \]
  for $B = 4 (4t)^{t  / 2} /k$, and similarly with $j,\ell$ interposed.
  Similarly, by separation of the empirical means, $\pE$ satisfies $\|\mu - \bmu_j\|^{2t} + \|\mu - \bmu_\ell\|^{2t} \geq A$ for $A = 2^{-2t} \Delta^{2t}$, recalling that the empirical means are pairwise separated by at least $\Delta - 2 \Delta^{-t}$.
  Finally, clearly $\cA \proves{O(1)} \iprod{a_j,w} \leq n$ and similarly for $\iprod{a_\ell,w}$.
  So applying Lemma~\ref{lem:nice-pe} we get
  \[
    \pE \iprod{a_j,w}\iprod{a_\ell,w} \leq \frac{2 n^2 B}{\sqrt A} \leq \frac{n^2 2^{2t+2} t^{t/2} }{k} \cdot \frac 1 {\Delta^t}\mper\qedhere
  \]
\end{proof}

\subsection{Proof of Lemma~\ref{lem:mixture-ident-dist}}

In this subsection we prove Lemma~\ref{lem:mixture-ident-dist}.
We use the following helpful lemmata.
The first bounds error from samples selected from the wrong cluster using the moment inequality.

\begin{lemma}
\label{lem:mixture-ident-dist-1}
  Let $j,\cA,X_1,\ldots,X_n,\mu_j,\bmu_j$ be as in Lemma~\ref{lem:mixture-ident-dist}.
  Then
  \[
    \cA \proves{O(t)} \Paren{\sum_{i \in S_j} w_i \iprod{\mu - X_i, \mu - \bmu_j}}^t \leq 2 t^{t / 2} \cdot \iprod{a_j,w}^{t-1} \|\mu - \bmu_j\|^t\mper
  \]
\end{lemma}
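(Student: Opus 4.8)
Here $\bmu_j$ is a fixed vector (not an indeterminate), and the plan is to mimic the bound on the ``key term'' in the proof of Lemma~\ref{lem:overview-example}: one application of the SoS version of H\"older's inequality, followed by the moment bound for polynomial directions supplied by Lemma~\ref{lem:general-structured-subset-polys}, item~\ref{enum:moment-bounds}. Concretely, write $b_i := \iprod{\mu - X_i,\, \mu - \bmu_j}$, a polynomial of degree $2$ in the indeterminates $\mu$. Since $\cA$ contains the Boolean axioms $w_i^2 = w_i$ (Lemma~\ref{lem:general-structured-subset-polys}, item~\ref{enum:boolean}), the standard SoS form of H\"older's inequality (see the toolkit in Section~\ref{sec:toolkit}), applied on the index set $S_j$ with exponents $\tfrac{t}{t-1}$ and $t$, gives at degree $O(t)$
\[
  \cA \proves{O(t)} \Paren{\sum_{i \in S_j} w_i\, b_i}^{t} \leq \Paren{\sum_{i \in S_j} w_i}^{t-1} \sum_{i \in S_j} w_i\, b_i^{t} \;=\; \iprod{a_j,w}^{t-1} \sum_{i \in S_j} w_i\, b_i^{t}\mper
\]

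Next I would pass from $S_j$ to $[n]$ and invoke the moment bound. Because $t$ is even, $b_i^{t} = (b_i^{t/2})^{2}$ is a square, so $w_i b_i^{t} = (w_i b_i^{t/2})^{2}$ (using $w_i^{2} = w_i$) is a square too; summing the nonnegative terms indexed by $[n] \setminus S_j$ yields $\cA \proves{O(t)} \sum_{i \in S_j} w_i b_i^{t} \leq \sum_{i \in [n]} w_i b_i^{t}$. Evenness of $t$ also gives $b_i^{t} = \iprod{X_i - \mu,\, \mu - \bmu_j}^{t}$, so $\sum_{i \in [n]} w_i b_i^{t}$ is exactly the left-hand side of Lemma~\ref{lem:general-structured-subset-polys}, item~\ref{enum:moment-bounds}, instantiated with the degree-$1$ vector $f(\mu) = \mu - \bmu_j$ (so $\|f(\mu)\| = \|\mu - \bmu_j\|$) and $\alpha = 1/k$; that item bounds it by $2 t^{t/2} \cdot \tfrac{n}{k} \cdot \|\mu - \bmu_j\|^{t}$. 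Chaining the three estimates gives $\cA \proves{O(t)} \Paren{\sum_{i \in S_j} w_i b_i}^t \leq 2 t^{t/2} \iprod{a_j,w}^{t-1} \|\mu - \bmu_j\|^t$, where the $\tfrac{n}{k} = \alpha n$ normalization produced by the moment inequality is exactly the factor that is carried along in the downstream Lemma~\ref{lem:mixture-ident-dist}.

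The only real subtlety is \emph{which} moment bound to invoke: the moment inequalities in the plain system $\cA$ control empirical moments only in the directions $\mu - \mu_r$ for the \emph{true} means $\mu_r$, which is not enough here, so the argument genuinely needs the stronger conclusion Lemma~\ref{lem:general-structured-subset-polys}, item~\ref{enum:moment-bounds}, which controls $\tfrac{1}{\alpha n} \sum_i w_i \iprod{X_i - \mu,\, f(\mu)}^{t}$ in \emph{every} polynomial direction $f(\mu)$ --- in particular in the direction $f(\mu) = \mu - \bmu_j$, the empirical mean of the $j$th cluster. (Alternatively one could use the direction $\mu - \mu_j$ and absorb the deviation $\|\bmu_j - \mu_j\| \leq \Delta^{-t}$ from (D2) with an SoS triangle inequality, but routing through item~\ref{enum:moment-bounds} is cleaner.) Every remaining step --- H\"older, the bound $w_i \leq 1$, and the evenness of $t$ --- is routine and of degree $O(t)$, so the whole implication has degree $O(t)$ as claimed.
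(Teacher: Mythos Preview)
Your proposal is correct and follows exactly the paper's approach: one application of SoS H\"older (Fact~\ref{fact:sos-holder}) on the index set $S_j$, followed by extending the sum to $[n]$ using evenness of $t$, followed by the moment bound from Lemma~\ref{lem:general-structured-subset-polys}, item~\ref{enum:moment-bounds} with $f(\mu)=\mu-\bmu_j$. You are also right to flag the $n/k$ factor: the moment inequality indeed produces an extra $\alpha n = n/k$, which the paper's statement of Lemma~\ref{lem:mixture-ident-dist-1} silently drops but which reappears in Lemma~\ref{lem:mixture-ident-dist}, exactly as you note.
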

\begin{proof}
  The proof goes by H\"older's inequality followed by the moment inequality in $\cA$.
  Carrying this out, by Fact~\ref{fact:sos-holder} and evenness of $t$,
  \[
    \{ w_i^2 = w_i \} \proves{O(t)}  \Paren{\sum_{i \in S_j} w_i \iprod{\mu - X_i, \mu - \bmu_j}}^t \leq \Paren{\sum_{i \in S_j} w_i}^{t-1} \cdot \Paren{\sum_{i \in [n]} w_i \iprod{\mu - X_i, \mu - \bmu_j}^t}\mper
  \]
  Then, using the main inequality in $\cA$,
  \[
    \cA \proves{O(t)} \Paren{\sum_{i \in S_j} w_i}^{t-1} \cdot 2 t^{t / 2} \cdot \|\mu - \bmu_j\|^t =  2 t^{t/2} \cdot \iprod{a_j,w}^{t-1} \|\mu - \bmu_j\|^t\mper\qedhere
  \]
\end{proof}

The second lemma bounds error from deviations in the empirical $t$-th moments of the samples from the $j$-th cluster.

\begin{lemma}
\label{lem:mixture-ident-dist-2}
  Let $\mu_1,\ldots,\mu_k, \cD_1,\ldots,\cD_k$ be as in Theorem~\ref{thm:mixture-main}.
  Suppose condition (D1) holds for samples $X_1,\ldots,X_n$.
  Let $w_1,\ldots,w_n$ be indeterminates.
  Let $u = u_1,\ldots,u_d$ be an indeterminate.
  Then for every $j \in [k]$,
  \[
    \{ w_i^2 = w_i \} \proves{O(t)} \Paren{\sum_{i \in S_j} w_i \iprod{X_i - \bmu_j, u}}^t \leq \iprod{a_j,w}^{t-1} \cdot 2 \cdot \frac nk \cdot \|u\|^t\mper
  \]
\end{lemma}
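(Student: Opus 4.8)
The plan is to prove Lemma~\ref{lem:mixture-ident-dist-2} by the same H\"older-then-concentration strategy used in the proof of Lemma~\ref{lem:overview-example}, but now carried out entirely inside the degree-$O(t)$ SoS proof system. First I would apply the SoS H\"older inequality (Fact~\ref{fact:sos-holder}), using the Booleanness axioms $w_i^2 = w_i$, to split the $t$-th power:
\[
  \{w_i^2 = w_i\} \proves{O(t)} \Paren{\sum_{i \in S_j} w_i \iprod{X_i - \bmu_j, u}}^t \leq \Paren{\sum_{i \in S_j} w_i}^{t-1} \cdot \Paren{\sum_{i \in S_j} w_i \iprod{X_i - \bmu_j, u}^t}\mper
\]
Since $t$ is even, $\iprod{X_i - \bmu_j, u}^t \geq 0$ as a polynomial in $u$, so I can only help myself by dropping the multipliers $w_i \leq 1$ and replacing $\sum_{i \in S_j} w_i \iprod{X_i-\bmu_j,u}^t$ by the deterministic quantity $\sum_{i \in S_j} \iprod{X_i - \bmu_j, u}^t$; combined with $\sum_{i\in S_j} w_i = \iprod{a_j, w}$ this reduces the claim to the \emph{purely deterministic} matrix inequality
\[
  \sum_{i \in S_j} \iprod{X_i - \bmu_j, u}^t \leq 2 \cdot \frac nk \cdot \|u\|^t
\]
holding with a degree-$t$ SoS proof in the indeterminate $u$.

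The second step is to observe that this is exactly the explicit-boundedness guarantee already recorded as a consequence of (D1). Indeed, in the ``useful consequences'' paragraph following the statement of conditions (D1), (D2), the excerpt notes that (D1) implies
\[
  \proves{t} \frac kn \sum_{i \in S_j} \iprod{X_i - \bmu_j, u}^t \leq 2 \cdot t^{t/2} \cdot \|u\|^t\mper
\]
Wait --- that bound carries a factor $t^{t/2}$, so what I actually get is $\sum_{i \in S_j} \iprod{X_i-\bmu_j,u}^t \leq 2 t^{t/2} \frac nk \|u\|^t$, i.e.\ the statement of the lemma as written (with the $t^{t/2}$ presumably absorbed, or the lemma statement in the paper should read $2 t^{t/2} \frac nk$ --- in any case the proof delivers precisely this). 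So the deterministic input is obtained for free from (D1) via Lemma~\ref{lem:general-structured-subset-polys}. Chaining the two SoS implications using composition of SoS proofs (the \texttt{Fact} on composability stated in Section~\ref{sec:sos}) then yields
\[
  \{w_i^2 = w_i\} \proves{O(t)} \Paren{\sum_{i \in S_j} w_i \iprod{X_i - \bmu_j, u}}^t \leq \iprod{a_j,w}^{t-1} \cdot 2 t^{t/2} \cdot \frac nk \cdot \|u\|^t\mper
\]

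The main subtlety, and the only place requiring care, is the step where I pass from $\sum_{i \in S_j} w_i \iprod{X_i - \bmu_j, u}^t$ to $\sum_{i \in S_j} \iprod{X_i - \bmu_j, u}^t$: this uses that each monomial $\iprod{X_i - \bmu_j, u}^t$ is an SoS polynomial in $u$ (true since $t$ is even), so that $(1 - w_i)\iprod{X_i-\bmu_j,u}^t$ is, modulo $w_i^2 = w_i$ which forces $1 - w_i = (1-w_i)^2 \geq 0$, a product of two SoS terms and hence has a degree-$O(t)$ SoS certificate of nonnegativity. I would spell this out as a one-line SoS manipulation: $\sum_{i \in S_j}(1 - w_i)\iprod{X_i - \bmu_j, u}^t = \sum_{i \in S_j}(1-w_i)^2 \iprod{X_i - \bmu_j,u}^t \succeq 0$. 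Everything else is routine: the H\"older step is a direct invocation of Fact~\ref{fact:sos-holder} with the $w_i$'s as the Boolean weights, and the final substitution of the deterministic bound is term-by-term. I do not expect the degree bookkeeping to be an obstacle --- H\"older contributes degree $O(t)$, the moment bound degree $t$, and their composition stays at $O(t)$.
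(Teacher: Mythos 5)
Your proof is correct and follows essentially the same route as the paper's: the paper likewise applies the SoS H\"older inequality (Fact~\ref{fact:sos-holder}) with the Booleanness axioms to reach $\iprod{a_j,w}^{t-1}\cdot\sum_{i\in S_j}\iprod{X_i-\bmu_j,u}^t$ (your explicit $(1-w_i)=(1-w_i)^2$ step is just the dropping of weights already built into that fact) and then invokes the moment bound recorded as a consequence of (D1). Your flag about the $t^{t/2}$ factor is also apt: the paper's own proof delivers $2\,t^{t/2}\cdot\tfrac nk\,\|u\|^t$, so its absence from the lemma statement is a typo, harmless because the downstream use in Lemma~\ref{lem:mixture-ident-dist} carries the $t^{t/2}$ factor anyway.
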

\begin{proof}
  The first step is H\"older's inequality again:
  \begin{align*}
    \{ w_i^2 = w_i \} \proves{O(t)} \Paren{\sum_{i \in S_j} w_i \iprod{X_i - \bmu_j, u}}^t \leq \iprod{a_j,w}^{t-1} \cdot \sum_{i \in S_j}\iprod{X_i - \bmu_j, u}^t\mper
  \end{align*}
  Finally, condition (D1) yields
  \[
    \{ w_i^2 = w_i \} \proves{O(t)} \Paren{\sum_{i \in S_j} w_i \iprod{X_i - \bmu_j, u}}^t\leq \iprod{a_j,w}^{t-1} \cdot 2 \cdot \frac nk \cdot \|u\|^t\mper \qedhere
  \]
\end{proof}

We can prove Lemma~\ref{lem:mixture-ident-dist} by putting together Lemma~\ref{lem:mixture-ident-dist-1} and Lemma~\ref{lem:mixture-ident-dist-2}.

\begin{proof}[Proof of Lemma~\ref{lem:mixture-ident-dist}]
  Let $j \in [k]$ be a cluster and recall $a_j \in \{0,1\}^n$ is the $0/1$ indicator for the samples in cluster $j$.
  Let $S_j$ be the samples in the $j$-th cluster, with empirical mean $\bmu_j$.
  We begin by writing $\iprod{a_j,w} \|\mu - \bmu_j\|^2$ in terms of samples $X_1,\ldots,X_n$.
  \begin{align*}
    \iprod{a_j,w} \|\mu - \bmu_j\|^2 & = \sum_{i \in [n]} w_i \iprod{\mu - \bmu_j, \mu - \bmu_j}\\
    & = \sum_{i \in S_j} w_i \iprod{\mu - X_i, \mu - \bmu_j} + \sum_{i \in [n]} w_i \iprod{X_i - \bmu_j, \mu - \bmu_j}\mper
  \end{align*}
  Hence, using $(a+b)^t \leq 2^t (a^t + b^t)$, we obtain
  \[
  \proves{O(t)} \iprod{a_j, w}^t \|\mu - \bmu_j\|^{2t} \leq 2^t \cdot \Paren{\sum_{i \in S_j} w_i \iprod{\mu - X_i, \mu - \bmu_j}}^t + 2^t \cdot \Paren{\sum_{i \in S_j} w_i \iprod{X_i - \bmu_j, \mu - \bmu_j}}^t\mper
  \]
  Now using Lemma~\ref{lem:mixture-ident-dist-1} and Lemma~\ref{lem:mixture-ident-dist-2},
  \[
    \cA \proves{O(t)} \iprod{a_j, w}^t \|\mu - \bmu_j\|^{2t} \leq 2^{t+2} t^{t / 2} \cdot \frac nk \cdot \iprod{a_j,w}^{t-1} \cdot \|\mu - \bmu_j\|^t
  \]
  as desired.
\end{proof}

\subsection{Proof of Lemma~\ref{lem:nice-pe}}
We prove Lemma~\ref{lem:nice-pe}.
The proof only uses standard SoS and pseudodistribution tools.
The main inequality we will use is the following version of H\"older's inequality.

\begin{fact}[Pseudoexpectation H\"older's, see Lemma A.4 in \cite{DBLP:conf/stoc/BarakKS14}]
\label{fact:pe-holders}
  Let $p$ be a degree-$\ell$ polynomial.
  Let $t \in N$ and let $\pE$ be a degree-$O(t\ell)$ pseudoexpectation on indeterminates $x$.
  Then
  \[
    \pE p(x)^{t-2} \leq \Paren{\pE p(x)^t}^{\tfrac{t-2}t}\mper
  \]
\end{fact}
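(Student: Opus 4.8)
The plan is to deduce this inequality from a single use of the positivity axiom for $\pE$, after exhibiting a low-degree sum-of-squares certificate for an ordinary (weighted) AM--GM inequality in one real variable. Throughout I take $t$ to be even, which is the only relevant case (e.g.\ $t$ a power of $2$); then $t-2$ is also even and $\pE\, p^t = \pE\, (p^{t/2})^2 \geq 0$.

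First I would record the pointwise bound: for every $x \in \R$ and every $c > 0$,
\[
  x^{t-2} \;\leq\; \tfrac{t-2}{t}\, c^{-2/t}\, x^t \;+\; \tfrac 2t\, c^{(t-2)/t}\mper
\]
Since $t$ is even this is just weighted AM--GM, $\tfrac{t-2}{t}A + \tfrac 2t B \geq A^{(t-2)/t}B^{2/t}$, applied with $A = c^{-2/t}|x|^t$ and $B = c^{(t-2)/t}$. Hence the univariate polynomial $q_c(x) := \tfrac{t-2}{t} c^{-2/t} x^t + \tfrac 2t c^{(t-2)/t} - x^{t-2}$ is nonnegative on all of $\R$ and has degree $t$, so it is a sum of two squares of polynomials of degree at most $t/2$: factor $q_c$ over $\mathbb C$, collect half of the roots into a degree-$(t/2)$ polynomial $g$ with complex coefficients, and write $q_c = \alpha\,(\operatorname{Re} g)^2 + \alpha\,(\operatorname{Im} g)^2$ with $\alpha > 0$ the (positive) leading coefficient. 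Fix such $\sigma_1,\sigma_2 \in \R[x]$ with $\deg \sigma_i \leq t/2$ and $q_c = \sigma_1^2 + \sigma_2^2$.

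Next I would substitute the degree-$\ell$ polynomial $p$ for $x$, obtaining the polynomial identity
\[
  p^{t-2} \;=\; \tfrac{t-2}{t}\, c^{-2/t}\, p^t \;+\; \tfrac 2t\, c^{(t-2)/t} \;-\; \sigma_1(p)^2 \;-\; \sigma_2(p)^2 \mcom
\]
in which every term has degree at most $t\ell$ and $\deg \sigma_i(p)^2 \leq t\ell$. Applying $\pE$ (whose degree is a sufficiently large multiple of $t\ell$, in particular at least $t\ell$) and using linearity, $\pE\, 1 = 1$, and positivity $\pE\, \sigma_i(p)^2 \geq 0$, one gets
\[
  \pE\, p^{t-2} \;\leq\; \tfrac{t-2}{t}\, c^{-2/t}\, \pE\, p^t \;+\; \tfrac 2t\, c^{(t-2)/t} \qquad\text{for every } c > 0 \mper
\]
Finally I would optimize over the free parameter $c$. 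Write $q := \pE\, p^t \geq 0$. If $q > 0$, the choice $c = q$ collapses the right-hand side to $\tfrac{t-2}{t}q^{(t-2)/t} + \tfrac 2t q^{(t-2)/t} = q^{(t-2)/t}$, which is exactly $\Paren{\pE\, p^t}^{(t-2)/t}$. If $q = 0$, letting $c \to 0^+$ forces $\pE\, p^{t-2} \leq 0 = q^{(t-2)/t}$. This gives the claim.

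I do not expect a genuine obstacle here; the only friction is bookkeeping: confirming that the sum-of-squares representation of $q_c$ keeps degree $\leq t/2$ so that after substituting $p$ nothing exceeds the degree budget of $\pE$, and isolating the boundary case $\pE\, p^t = 0$. A more pedestrian alternative would iterate pseudoexpectation Cauchy--Schwarz, $\pE\, p^{a+b} \leq (\pE\, p^{2a})^{1/2}(\pE\, p^{2b})^{1/2}$, to climb from $p^{t-2}$ up to $p^t$ via a dyadic induction, but that is messier than the one-shot Young-inequality certificate above and I would avoid it.
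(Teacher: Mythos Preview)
Your proof is correct. Note that the paper does not supply its own proof of this fact; it is stated as a black box with a pointer to Lemma~A.4 of \cite{DBLP:conf/stoc/BarakKS14}. Your argument---produce a univariate SoS certificate for the weighted AM--GM (Young) inequality $x^{t-2} \leq \tfrac{t-2}{t}\, c^{-2/t}\, x^t + \tfrac{2}{t}\, c^{(t-2)/t}$, substitute the degree-$\ell$ polynomial $p$ for $x$, apply $\pE$ using positivity, and then optimize the free parameter $c$---is the standard route to pseudoexpectation power-mean inequalities of this type and is along the lines of what the cited reference does. Your degree accounting is right (each $\sigma_i(p)^2$ has degree at most $t\ell$, within the budget of a degree-$O(t\ell)$ pseudoexpectation), and you correctly isolate the boundary case $\pE\, p^t = 0$. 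The restriction to even $t$ is appropriate, matching how the fact is invoked in the paper (where $t$ is a power of $2$).
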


Now we can prove Lemma~\ref{lem:nice-pe}.
\begin{proof}[Proof of Lemma~\ref{lem:nice-pe}]
  We first establish the following inequality.
  \begin{align}
    \pE \iprod{a_j,w}^t \iprod{a_\ell,w}^t  \|\mu - \bmu_j\|^{2t} \leq B^2 n^2 \cdot \pE \iprod{a_j,w}^{t-2} \iprod{a_\ell,w}^t\mper\label{eq:nice-1}
  \end{align}
  (The inequality will also hold by symmetry with $j$ and $\ell$ exchanged.)
  This we do as follows:
  \begin{align*}
  \pE \iprod{a_j,w}^t \iprod{a_\ell,w}^t \|\mu - \bmu_j\|^{2t} & \leq  Bn \pE \iprod{a_j,w}^{t-1} \iprod{a_\ell,w}^t \|\mu - \bmu_j\|^{2t}\\
  & \leq Bn \Paren{\pE \iprod{a_j,w}^{t-2} \iprod{a_\ell,w}^t }^{1/2} \cdot \Paren{ \pE \iprod{a_j,w}^t \iprod{a_\ell,w}^t \|\mu - \bmu_j\|^{2t}}^{1/2}
  \end{align*}
  where the first line is by assumption on $\pE$ and the second is by pseudoexpectation Cauchy-Schwarz.
  Rearranging gives the inequality \eqref{eq:nice-1}.

  Now we use this to bound $\pE \iprod{a_j,w}^t \iprod{a_\ell,w}^t$.
  By hypothesis,
  \[
  \pE \iprod{a_j,w}^t \iprod{a_\ell,w}^t \leq \frac 1 {A} \pE \iprod{a_j,w}^t \iprod{a_\ell,w}^t (\|\mu - \bmu_j\|^{2t} + \|\mu - \bmu_\ell\|^{2t})\mcom
  \]
  which, followed by \eqref{eq:nice-1} gives
  \[
    \pE \iprod{a_j,w}^t \iprod{a_\ell,w}^t \leq \frac 1 {A} \cdot B^2 n^2 \cdot \pE \Brac{\iprod{a_j,w}^{t-2} \iprod{a_\ell,w}^t + \iprod{a_\ell,w}^{t-2} \iprod{a_j,w}^t}\mper
  \]
  Using $\iprod{a_j,w},\iprod{a_\ell,w} \leq n$, we obtain
  \[
     \pE \iprod{a_j,w}^t \iprod{a_\ell,w}^t \leq \frac 2 {A} \cdot B^2 n^4 \cdot \pE \iprod{a_j,w}^{t-2} \iprod{a_\ell,w}^{t-2}\mper
  \]
  Finally, using Fact~\ref{fact:pe-holders}, the right side is at most $2B^2n^4/A \cdot \Paren{\pE \iprod{a_j,w}^t \iprod{a_\ell,w}^t}^{(t-2)/t}$, so cancelling terms we get
  \[
    \Paren{\pE \iprod{a_j,w}^t \iprod{a_\ell,w}^t}^{2/t} \leq \frac{2B^2n^4} {A}\mper
  \]
  Raising both sides to the $t/2$ power gives
  \[
  \pE \iprod{a_j,w}^t \iprod{a_\ell,w}^t \leq \frac{2^{t/2} B^{t} n^{2t}}{A^{t/2}}\mcom
  \]
  and finally using Cauchy-Schwarz,
  \[
    \pE \iprod{a_j,w} \iprod{a_\ell,w} \leq \Paren{\pE \iprod{a_j,w}^t \iprod{a_\ell,w}^t}^{1/t} \leq  \frac{2 n^2 B}{\sqrt A}\mper\qedhere
  \]
\end{proof}

\subsection{Rounding}
\label{sec:rounding}
In this section we state and analyze our second-moment round algorithm.
As have discussed already, our SoS proofs in the mixture model setting are quite strong, meaning that the rounding algorithm is relatively naive.

The setting in this section is as follows.
Let $n,m \in \N$ with $m \ll n$.
There is a ground-truth partition of $[n]$ into $m$ parts $S_1,\ldots,S_m$ such that $|S_i| = (1 \pm \delta) \tfrac nm$.
Let $A \in \R^{n \times n}$ be the $0/1$ indicator matrix for this partition, so $A_{ij} = 1$ if $i,j \in S_\ell$ for some $\ell$ and is $0$ otherwise.
Let $M \in \R^{n \times n}$ be a matrix such that $\|M - A\| \leq \e n$, where $\| \cdot \|$ is the Frobenious norm.
The algorithm takes $M$ and outputs a partition $C_1,\ldots,C_m$ of $[m]$ which makes few errors compared to $S_1,\ldots,S_m$.

\begin{algorithm}[htb]{}
\begin{algorithmic}[1]
\Function{RoundSecondMoments}{$M \in \R^{n \times n}, E \in \R$}
\State Let $S = [n]$
\State Let $v_1,\ldots,v_n$ be the rows of $M$
\For{$\ell = 1, \ldots, m$}
	\State Choose $i \in S$ uniformly at random
	\State Let 
	\[
	C_\ell = \left\{ i' \in S: \| v_i - v_{i'} \|_2 \leq 2 \frac{n^{1/2}}{E} \right\}
	\]
	\State Let $S \gets S \setminus C_\ell$
\EndFor
\State \textbf{return} The clusters $C_1,\ldots,C_m$.
\EndFunction
\end{algorithmic}
\caption{Rounding the second moment of $\pE [ww^\top]$}
\label{alg:round-second-moments}
\end{algorithm}
\noindent
We will prove the following theorem.
\begin{theorem}
\label{thm:round-main}
  With notation as before Algorithm~\ref{alg:round-second-moments} with $E = m$, with probability at least $1 - \e^2 m^3$ Algorithm~\ref{alg:round-second-moments} returns a partition $C_1,\ldots,C_m$ of $[n]$ such that (up to a permutation of $[m]$), $C_\ell = T_\ell \cup B_\ell$, where $T_\ell \subseteq S_\ell$ has size $|T_\ell| \geq |S_\ell| - \e^2 m n$ and $|B_\ell| \leq \e^2 m n$.
\end{theorem}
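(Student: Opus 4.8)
The plan is to exploit the rigid geometry of the rows of the ground-truth matrix $A$ and push it onto $M$ through the Frobenius bound $\|M - A\| \le \e n$. Write $a_i$ and $v_i$ for the $i$-th rows of $A$ and $M$. For $i \in S_\ell$ the vector $a_i$ is exactly the $0/1$ indicator of $S_\ell$, so $\|a_i - a_{i'}\| = 0$ whenever $i,i'$ lie in the same part, while $\|a_i - a_{i'}\|^2 = |S_\ell| + |S_{\ell'}| \ge n/m$ whenever they lie in different parts. Since $\|M-A\|^2 = \sum_i \|v_i - a_i\|^2 \le \e^2 n^2$, Markov's inequality shows that all but at most $\e^2 m^2 n$ indices $i$ are \emph{good}, meaning $\|v_i - a_i\| \le n^{1/2}/m$; call the remaining indices \emph{bad}. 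Comparing with the clustering threshold $2 n^{1/2}/m$ that \textsc{RoundSecondMoments} uses when $E = m$, two applications of the triangle inequality give the key dichotomy: any two good indices in the same part satisfy $\|v_i - v_{i'}\| \le 2 n^{1/2}/m$ and are therefore always grouped together, while any two good indices in distinct parts satisfy $\|v_i - v_{i'}\| \ge (n/m)^{1/2} - 2 n^{1/2}/m > 2 n^{1/2}/m$ (once $m$ exceeds an absolute constant) and are never grouped together.

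Next I would condition on the event $\mathcal E$ that every index sampled as a center during the loop is good, and show that on $\mathcal E$ the algorithm essentially recovers the partition. Inductively, suppose rounds $1,\dots,\ell-1$ picked good centers lying in distinct parts $S_{\pi(1)},\dots,S_{\pi(\ell-1)}$. By the dichotomy, the removed sets $C_1,\dots,C_{\ell-1}$ contain all good indices of those parts and no good index of any untouched part, so the surviving set $S$ still contains every good index of each untouched part. The round-$\ell$ center is good by $\mathcal E$, hence lies in some untouched part $S_{\pi(\ell)}$, and $C_\ell$ then collects exactly the remaining good indices of $S_{\pi(\ell)}$ together with some subset of the bad indices still present. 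After $m$ rounds we obtain, up to relabeling the parts, $C_\ell = T_\ell \cup B_\ell$ with $T_\ell \subseteq S_\ell$ consisting of good indices and $B_\ell$ consisting of bad indices. Since there are at most $\e^2 m^2 n$ bad indices in all, both $|S_\ell \setminus T_\ell| \le \e^2 m^2 n$ and $|B_\ell| \le \e^2 m^2 n$, which is the claimed structural guarantee.

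It then remains to bound $\Pr[\neg\mathcal E]$. Here I would argue that, as long as all earlier centers were good, after $\ell-1$ rounds the set $S$ still contains all good indices of the $m-\ell+1$ untouched parts, so (using the stated range of part sizes and $\e^2 m^3$ smaller than a small constant) $|S| \ge (m-\ell+1)\cdot n/(2m)$. Hence the conditional probability that round $\ell$ picks a bad center is at most $(\e^2 m^2 n)/|S| \le 2\e^2 m^3/(m-\ell+1)$, and summing over $\ell = 1,\dots,m$ gives $\Pr[\neg\mathcal E] = O(\e^2 m^3)$ (up to a harmonic-sum factor that is absorbed by adjusting constants or slightly sharpening the good/bad threshold), which is the stated failure probability.

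The step I expect to be the main obstacle is precisely this probabilistic bookkeeping: one must argue carefully that the good indices of a part are never removed before that part's own center is selected — so that $|S|$ stays $\Omega(n/m)$ in every round and the union bound over the $m$ rounds is affordable — and simultaneously keep track of the interplay between the bad-index count $\e^2 m^2 n$, the slack in the part sizes, and the regime $\e^2 m^3 = o(1)$ in which the theorem is non-vacuous. By contrast, the geometric core (the good/bad dichotomy at threshold $2 n^{1/2}/m$) is routine once the radius $n^{1/2}/m$ defining ``good'' is chosen to match the clustering threshold.
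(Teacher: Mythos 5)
Your proposal follows essentially the same route as the paper's proof: classify rows as good or bad according to the distance from $v_i$ to the indicator row $a_i$, count the bad rows by Markov via the Frobenius bound (Lemma~\ref{lem:few-bad}), show by induction that as long as every sampled center is good each $C_\ell$ consists of the good indices of one cluster plus some bad indices (Lemma~\ref{lem:if-choose-good}), and bound the probability of ever sampling a bad center (Lemma~\ref{lem:choose-good-indices}). The geometric dichotomy and the conditioning structure are identical to the paper's.

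The difference is quantitative, and it matters for the literal statement. Because you calibrate the good radius to the pseudocode threshold $2n^{1/2}/m$, you get at most $\e^2 m^2 n$ bad indices, hence error sets of size $\e^2 m^2 n$ rather than the claimed $\e^2 m n$, and a worst-round bad-center probability of order $\e^2 m^3$; even with your harmonic-sum refinement the total failure probability is $O(\e^2 m^3 \log m)$, and the $\log m$ cannot be ``absorbed by adjusting constants'' into the stated bound $\e^2 m^3$. The paper instead takes the good radius to be $\sqrt{n/E}=\sqrt{n/m}$ — its analysis effectively reads the acceptance threshold as $2\sqrt{n/E}$ rather than the $2n^{1/2}/E$ written in the pseudocode — which gives $|I_b| \le \e^2 m n$, per-round failure roughly $\e^2 m^2$, and hence exactly the claimed $\e^2 m n$ and $1-\e^2 m^3$ (at the cost of a separation computation whose constants at $E=m$ are themselves loose, so your cleaner dichotomy at radius $n^{1/2}/m$ is defensible given the pseudocode). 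In short, your argument is structurally sound but proves a version of the theorem weaker by a factor $m$ in the error sets and $\log m$ in the probability; that weaker form is what Lemma~\ref{lem:rounding-main} records and is all the mixture-model analysis uses, but to match the theorem as stated you would need to work at the larger radius $\sqrt{n/m}$ (reading the algorithm's threshold accordingly) and re-verify the separation dichotomy at that scale.
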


To get started analyzing the algorithm, we need a definition.
\begin{definition}
For cluster $S_j$, let $a_j \in \R^n$ be its $0/1$ indicator vector.
If $i \in S_j$, we say it is $E$-\emph{good} if $\|v_i - a_j\|_2 \leq \sqrt{n/E}$, and otherwise $E$-\emph{bad}, where $v_i$ is the $i$-th row of $M$.
Let $I_g \subseteq [n]$ denote the set of $E$-good indices and $I_b$ denote the set of $E$-bad indices.
(We will choose $E$ later.)
For any $j = 1, \ldots, k$, let $I_{g, j} = I_g \cap S_j$ denote the set of good indices from cluster $j$.
\end{definition}
\noindent
We have:
\begin{lemma}
\label{lem:if-choose-good}
Suppose $E$ as in \textsc{RoundSecondMoments} satisfies $E \geq m / 8$.
Suppose that in iterations $1, \ldots, m$, \textsc{RoundSecondMoments} has chosen only good vectors.
Then, there exists a permutation $\pi: [m] \to [m]$ so that $C_{\ell} = I_{g, \pi(\ell)} \cup B_{\ell}$, where $B_{\ell} \subseteq I_b$ for all $\ell$.
\end{lemma}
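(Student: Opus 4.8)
The plan is to argue by induction on the iteration $\ell = 1, \ldots, m$ that after $\ell$ steps, assuming only good vectors have been chosen so far, the clusters produced are exactly $C_1 = I_{g,\pi(1)} \cup B_1, \ldots, C_\ell = I_{g,\pi(\ell)} \cup B_\ell$ for some injection $\pi$ and some $B_i \subseteq I_b$, and moreover the remaining set $S$ still contains all of $I_{g,j}$ for the untouched clusters $j \notin \{\pi(1),\ldots,\pi(\ell)\}$. The key geometric fact driving the induction is a distance dichotomy among good vectors: if $i, i'$ are both good and lie in the same true cluster $S_j$, then $\|v_i - v_{i'}\| \le \|v_i - a_j\| + \|a_j - v_{i'}\| \le 2\sqrt{n/E}$ by the triangle inequality and the definition of good; whereas if $i \in S_j$ and $i' \in S_{j'}$ with $j \ne j'$ are both good, then $\|v_i - v_{i'}\| \ge \|a_j - a_{j'}\| - \|v_i - a_j\| - \|v_{i'} - a_{j'}\| \ge \sqrt{|S_j| + |S_{j'}|} - 2\sqrt{n/E}$, and since $|S_j|, |S_{j'}| \ge n/(2m)$ (up to the $\delta$ slack, which I will absorb) this is at least $\sqrt{n/m} - 2\sqrt{n/E} > 2\sqrt{n/E}$ precisely when $E \ge m/8$ — this is where the hypothesis $E \ge m/8$ gets used.

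First I would set up this dichotomy as a short preliminary computation, being careful that $\|a_j - a_{j'}\|^2 = |S_j| + |S_{j'}|$ since the indicator vectors have disjoint supports. Then, for the inductive step: suppose we are at iteration $\ell$, having peeled off good clusters for $\pi(1), \ldots, \pi(\ell-1)$, and we pick $i \in S$ uniformly at random and (by hypothesis) $i$ is good, say $i \in S_{j}$ with $j$ one of the not-yet-removed clusters. Set $\pi(\ell) = j$. I claim $C_\ell = \{i' \in S : \|v_i - v_{i'}\| \le 2\sqrt{n/E}\}$ equals $I_{g,j} \cup B_\ell$ with $B_\ell \subseteq I_b$. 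Indeed: every good $i' \in S$ from cluster $j$ is within $2\sqrt{n/E}$ of $v_i$ by the same-cluster bound, and by the inductive hypothesis $I_{g,j} \subseteq S$, so $I_{g,j} \subseteq C_\ell$; conversely any good $i' \in S$ from a different (still-present) cluster $j' \ne j$ is at distance $> 2\sqrt{n/E}$ by the cross-cluster bound, so it is excluded; hence the only other elements of $C_\ell$ are bad indices, which we lump into $B_\ell \subseteq I_b$. Removing $C_\ell$ from $S$ then removes all of $I_{g,j}$ and leaves $I_{g,j''} \subseteq S$ intact for all remaining $j'' \ne j$, completing the induction. After $m$ iterations every cluster has been selected exactly once, giving the permutation $\pi$.

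The main obstacle — really the only non-cosmetic point — is tracking the effect of the size slack $|S_i| = (1 \pm \delta)\tfrac nm$ on the cross-cluster lower bound, i.e. making sure $\sqrt{(1-\delta)n/m + (1-\delta)n/m} - 2\sqrt{n/E} > 2\sqrt{n/E}$ still holds with room to spare under $E \ge m/8$; since $\delta$ is small this is routine but must be stated. A secondary subtlety is the bookkeeping that $S$ always contains the full good part $I_{g,j''}$ of every untouched cluster — this needs the previous $C_{\ell'}$'s to have removed only good indices of \emph{their own} cluster plus bad indices, which is exactly what the induction hypothesis supplies, so there is no circularity. I would state the preliminary distance dichotomy as a displayed pair of inequalities, then write the induction in one paragraph, and conclude.
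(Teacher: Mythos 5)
Your proposal is correct and follows essentially the same route as the paper: induction over iterations, with the triangle-inequality dichotomy that good vectors from the same cluster lie within $2\sqrt{n/E}$ of each other while good vectors from distinct clusters are separated by at least $\|a_j - a_{j'}\| - 2\sqrt{n/E} \geq \sqrt{n/m} - 2\sqrt{n/E}$, plus the bookkeeping that untouched clusters' good indices remain in $S$. One caveat: your claim that the separation holds ``precisely when $E \geq m/8$'' inherits the paper's own loose constant bookkeeping (strictly, $\sqrt{n/m} - 2\sqrt{n/E} > 2\sqrt{n/E}$ needs $E \gtrsim 16m$), so this is not a divergence you introduced but a constant-tracking issue shared with the original proof.
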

\begin{proof}
We proceed inductively.
We first prove the base case.
WLOG assume that the algorithm picks $v_1$, and that $v_1$ is good, and is from component $j$.
Then, for all $i \in I_{g, j}$, by the triangle inequality we have $\| v_i - v_1 \|_2 \leq 2 \frac{n^{1/2}}{B}$, and so $I_{g, j} \subseteq C_1$.
Moreover, if $i \in I_{g, j'}$ for some $j' \neq j$, we have
\[
  \| v_i - v_1 \|_2 \geq \| a_j' - a_j \|_2 - 2 \frac{n^{1/2}}{E^{1/2}} \geq \frac{n^{1/2}}{\sqrt m} - 2 \frac{n^{1/2}}{E^{1/2}} \;  > 2 \frac{n^{1/2}}{E^{1/2}},
\]
and so in this case $i \not\in C_1$.
Hence $C_1 =  I_{g, j} \cup B_1$ for some $B_1 \subseteq I_b$.

Inductively, suppose that if the algorithm chooses good indices in iterations $1, \ldots, a-1$, then there exist distinct $j_1, \ldots, j_{a - 1}$ so that $C_\ell = I_{g, j_\ell} \cup B_\ell$ for $B_\ell \subseteq I_b$.
We seek to prove that if the algorithm chooses a good index in iteration $a$, then $C_a = I_{g, j_a} \cup B_a$ for some $j_a \not\in \{j_1, \ldots, j_{a-1} \}$ and $B_a \subseteq I_b$.
Clearly by induction this proves the Lemma.
WLOG assume that the algorithm chooses $v_1$ in iteration $a$.
Since by assumption $1$ is good, and we have removed $I_{g_\ell}$ for $\ell = 1, \ldots, a - 1$, then $1 \in I_{g, j_a}$ for some $j_a \not\in \{j_1, \ldots, j_{a-1} \}$.
Then, the conclusion follows from the same calculation as in the base case.
\end{proof}

\begin{lemma}
\label{lem:few-bad}
  There are at most $\e^2 E n$ indices which are $E$-bad; i.e. $|I_b| \leq \e^2 E n$.
\end{lemma}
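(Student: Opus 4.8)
The plan is to read off the bound from the Frobenius-norm hypothesis $\|M - A\|_F \leq \e n$ by an averaging (Markov-type) argument over the rows. The key observation is that the $i$-th row of $A$ is exactly $a_{j}$ where $j$ is the cluster containing $i$ (since $A_{i,i'} = 1$ iff $i,i'$ lie in a common part $S_j$), so the $i$-th row of $M - A$ is precisely $v_i - a_{j(i)}$, where $j(i)$ denotes the index of the part containing $i$.

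Concretely, I would first write
\[
  \e^2 n^2 \;\geq\; \|M - A\|_F^2 \;=\; \sum_{i \in [n]} \|v_i - a_{j(i)}\|_2^2 \;\geq\; \sum_{i \in I_b} \|v_i - a_{j(i)}\|_2^2 \mper
\]
By the definition of an $E$-bad index, every $i \in I_b$ satisfies $\|v_i - a_{j(i)}\|_2 > \sqrt{n/E}$, hence $\|v_i - a_{j(i)}\|_2^2 > n/E$. Combining, $\e^2 n^2 > |I_b| \cdot n / E$, which rearranges to $|I_b| < \e^2 E n$, as claimed.

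This is essentially a one-line counting argument, so there is no real obstacle; the only thing to be careful about is the bookkeeping identifying the rows of $A$ with the cluster indicators $a_{j(i)}$ and making sure the definition of $E$-good/$E$-bad (which is in terms of rows $v_i$ of $M$ and indicators $a_j$, not $v_i$ and $v_{i'}$) is applied correctly. No SoS machinery is needed here — it is a purely elementary consequence of $\|M-A\|_F \le \e n$.
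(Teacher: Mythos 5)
Your proof is correct and is essentially identical to the paper's: both bound $\|M-A\|_F^2$ from below by the contribution of the bad rows, using that row $i$ of $A$ is the cluster indicator $a_{j(i)}$, and each bad row contributes more than $n/E$. No gap here.
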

\begin{proof}
We have
\begin{align*}
\e^2n^2  &\geq \left\| M - \sum_{i \leq m} a_i a_i^\top \right\|_F^2 \geq \sum_{j} \sum_{i \in S_j \text{ bad} } \| v_i - a_j \|_2^2 \\
&\geq \frac{n}{E} |I_b| \; ,
\end{align*}
from which the claim follows by simplifying.
\end{proof}
\noindent
This in turns implies:
\begin{lemma}
\label{lem:choose-good-indices}
  With probability at least $1 - \e^2 m^3$, the algorithm \textsc{RoundSecondMoments} chooses good indices in all $k$ iterations.
\end{lemma}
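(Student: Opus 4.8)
The plan is to control, by a union bound over the $m$ iterations of \textsc{RoundSecondMoments}, the probability that the algorithm ever selects an $E$-bad index, conditioning at each step on all previous picks having been good. The structural input is Lemma~\ref{lem:if-choose-good}: conditioned on good indices having been chosen in iterations $1,\ldots,\ell-1$, the removed clusters are $C_i = I_{g,j_i}\cup B_i$ for distinct $j_1,\ldots,j_{\ell-1}$ and $B_i\subseteq I_b$. Hence the set $S$ remaining at the start of iteration $\ell$ contains every good index lying in one of the $m-\ell+1$ not-yet-removed clusters, so using $|S_j|\geq(1-\delta)\tfrac nm$,
\[
  |S| \;\geq\; (m-\ell+1)(1-\delta)\tfrac nm - |I_b| \;\geq\; (1-\delta)\tfrac nm - |I_b|\mper
\]

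Next I would plug in the count of bad indices. By Lemma~\ref{lem:few-bad} with $E=m$ we have $|I_b|\leq \e^2 m n$, and we may assume $\e^2 m^3<1$ (otherwise the conclusion is vacuous), so $|I_b|\leq \tfrac{n}{m^2}$, which since $\delta\leq\tfrac12$ is at most $\tfrac13(1-\delta)\tfrac nm$ once $m$ is larger than a small absolute constant. Thus $|S|\geq \tfrac12(m-\ell+1)(1-\delta)\tfrac nm$ at every iteration, and the conditional probability of picking a bad index at step $\ell$ is
\[
  \frac{|I_b\cap S|}{|S|}\;\leq\;\frac{\e^2 m n}{\tfrac12(m-\ell+1)(1-\delta)\tfrac nm}\;\leq\;\frac{4\e^2 m^2}{m-\ell+1}\mper
\]

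Finally I would assemble the union bound. Writing $A_\ell=\{\text{the }\ell\text{-th pick is bad}\}$ and noting that $\overline{A_1}\cap\cdots\cap\overline{A_{\ell-1}}$ is exactly the event ``good indices chosen in iterations $1,\ldots,\ell-1$'' appearing in Lemma~\ref{lem:if-choose-good}, the chain-rule form of the union bound gives
\[
  \Pr\Bigl[\,\bigcup_{\ell=1}^m A_\ell\,\Bigr]\;\leq\;\sum_{\ell=1}^m \Pr\bigl[A_\ell \mid \overline{A_1},\ldots,\overline{A_{\ell-1}}\bigr]\;\leq\;\sum_{r=1}^m \frac{4\e^2 m^2}{r}\;\leq\; 4\e^2 m^2(1+\ln m)\mper
\]
Since $4(1+\ln m)\leq m$ for $m$ past a small absolute constant, this is at most $\e^2 m^3$; the finitely many small values of $m$ (for which $\e^2 m^3<1$ already forces $\e$ tiny) are absorbed into the hidden constants, or handled by the cruder bound $|S|\geq|S_{j_\ell}|-|I_b|$. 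The step needing genuine care, as opposed to routine estimation, is exactly this conditioning: the $A_\ell$ are dependent, so one must use the chain-rule union bound and observe that the per-iteration estimate above is precisely a bound on $\Pr[A_\ell\mid\overline{A_1},\ldots,\overline{A_{\ell-1}}]$ — which is legitimate because conditioning on no prior failure is the very hypothesis under which Lemma~\ref{lem:if-choose-good} describes the structure of $S$.
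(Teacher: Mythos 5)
Your proof is correct and follows essentially the same route as the paper's: condition on all previous picks being good, use Lemma~\ref{lem:if-choose-good} to guarantee that the good indices of the untouched clusters remain, bound the bad indices via Lemma~\ref{lem:few-bad}, and combine the per-iteration conditional bounds by the chain rule. Your only deviation is keeping the factor $(m-\ell+1)$ to get a harmonic-sum bound $O(\e^2 m^2\log m)$ instead of the paper's cruder single-untouched-cluster bound of roughly $\e^2 m^2$ per iteration followed by $(1-\e^2m^2)^m\geq 1-\e^2m^3$; this is a minor refinement (and your small-$m$ caveat is no worse than the paper's own constant-factor slack), not a different argument.
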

\begin{proof}
By Lemma~\ref{lem:few-bad}, in the first iteration the probability that a bad vector is chosen is at most $\e^2 E$.
Conditioned on the event that in iterations $1, \ldots, a$ the algorithm has chosen good vectors, then by Lemma \ref{lem:if-choose-good}, there is at least one $j_a$ so that no points in $I_{g,j_a}$ have been removed.
Thus at least $(1-\delta)n/m$ vectors remain, and in total there are at most $\e^2 E n$ bad vectors, by Lemma~\ref{lem:few-bad}.
So, the probability of choosing a bad vector is at most $\e^2 E m$.
Therefore, by the chain rule of conditional expectation and our assumption , the probability we never choose a bad vector is at least
\[
\Paren{1 - \e^2 E m}^m
\]
Choosing $E = m$ this is $(1 - \e^2 m^2)^m \geq 1 - \e^2 m^3$.
as claimed.
\end{proof}

Now Theorem~\ref{thm:round-main} follows from putting together the lemmas.


\section{Robust estimation: algorithm and analysis}
\label{sec:robust}

Our algorithm for robust estimation is very similar to our algorithm for mixture models.
Suppose the underlying distribution $\cD$, whose mean $\mu^*$ the algorithm robustly estimates, is $10t$-explicitly bounded.
As a reminder, the input to the algorithm is a list of $X_1,\ldots,X_n \in \R^d$ and a sufficiently-small $\e >0$.
The guarantee is that at least $(1-\e)n$ of the vectors were sampled according to $\cD$, but $\e n$ of the vectors were chosen adversarially.

The algorithm solves a semidefinite program to obtain a degree $O(t)$ pseudodistribution which satisfies the system $\cA$ from Section~\ref{sec:overview-A} with $\alpha = 1-\e$ and $\tau = 0$.
Throughout this section, we will always assume that $\cA$ is instantiated with these parameters, and omit them for conciseness.
Then the algorithm just outputs $\pE \mu$ as its estimator for $\mu^*$.

Our main contribution in this section is a formal description of an algorithm \textsc{EstimateMean} which makes these ideas rigorous, and the proof of the following theorem about its correctness:
\begin{theorem}
  \label{thm:robust-main}
  Let $\eps > 0$ sufficiently small and $t \in \N$.
  Let $\cD$ be a $10t$-explicitly bounded distribution over $\R^d$ with mean $\mu^*$.
  Let $X_1, \ldots, X_n$ be an $\eps$-corrupted set of samples from $\cD$ where $n = d^{O(t)} / \eps^2$.
  Then, given $\eps, t$ and $X_1, \ldots, X_n$, the algorithm \textsc{EstimateMean} runs in time $d^{O(t)}$ and outputs $\mu$ so that $\| \mu - \mu^* \|_2 \leq O(t^{1/2} \eps^{1 - 1 / t})$, with probability at least $1 - 1 / d$.
\end{theorem}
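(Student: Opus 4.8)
The plan is to instantiate the proofs-to-algorithms pipeline with $k=1$. The algorithm \textsc{EstimateMean} uses Lemma~\ref{lem:general-structured-subset-polys}(\ref{itm:solve}) to compute a degree-$O(t)$ pseudodistribution $\pE$ satisfying $\cAhat$ with $\alpha = 1-\eps$ and $\tau = 0$ (no auxiliary objective is needed here, unlike in the mixture case), and returns $\hat{\mu} = \pE\mu$. Correctness then reduces to three ingredients: feasibility of the program, a handful of ``deterministic conditions'' that hold with high probability over the uncorrupted samples, and a constant-degree SoS identifiability statement saying that any $\mu$ consistent with $\cAhat$ is $O(t^{1/2}\eps^{1-1/t})$-close to $\mu^*$.

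Let $T \subseteq [n]$ with $|T| = (1-\eps)n$ be the set of uncorrupted samples, $a \in \{0,1\}^n$ its indicator, and $\bmu = \tfrac{1}{|T|}\sum_{i\in T}X_i$ their empirical mean. Since $n = d^{O(t)}/\eps^2 \gg d^{100t}$, Lemma~\ref{lem:general-structured-subset-polys}(\ref{enum:satisfiable}) gives, with probability $\ge 1-d^{-8}$, a solution of $\cAhat$ with $w = a$, so the SDP is feasible. By the matrix concentration underlying that lemma (using that $\cD$ is $10t$-explicitly bounded), with probability $\ge 1 - 1/d$ we also get the deterministic conditions $\proves{t} \sum_{i\in T}\iprod{X_i-\bmu,u}^t \le 2t^{t/2}|T|\cdot\|u\|^t$ (as an SoS statement in the indeterminate vector $u$) and $\|\bmu - \mu^*\| \le O(t^{1/2}\eps^{1-1/t})$ --- the latter because the empirical mean of \emph{any} $(1-\eps)$-fraction of the $n$ iid samples is within $O(t^{1/2}\eps^{1-1/t})$ of $\mu^*$, by a moment-bound argument of the same flavour as the one below. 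We condition on these events.

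The heart of the proof is the SoS identifiability statement
\[
  \cAhat \proves{O(t)} \|\mu - \bmu\|^{2t} \;\le\; \Paren{C\, t^{1/2}\,\eps^{1-1/t}}^{2t}
\]
for a universal constant $C$. The naive bound from Lemma~\ref{lem:overview-example} only yields $\|\mu-\bmu\|\le O(t^{1/2})$ here, which is useless; the point is to exploit that $w$ and $a$ disagree on at most $\eps n$ coordinates. From the size constraint $\sum_i w_i = (1-\eps)n$, from $w_i^2 = w_i$ (which yields $0\le w_i\le 1$ in SoS), and from $|T| = (1-\eps)n$, one gets constant-degree SoS proofs of $\sum_{i\notin T}w_i \le \eps n$, $\sum_{i\in T}(1-w_i) \le \eps n$, and $\iprod{a,w} \ge (1-2\eps)n \ge n/2$. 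Expanding $\iprod{a,w}\,\|\mu-\bmu\|^2$ in terms of the samples, everything telescopes except a sum over $S\setminus T$ and a sum over $T\setminus S$:
\[
  \iprod{a,w}\,\|\mu-\bmu\|^2 \;=\; \sum_{i\in S\setminus T}\iprod{X_i-\mu,\,\mu-\bmu} \;-\; \sum_{i\in T\setminus S}\iprod{X_i-\bmu,\,\mu-\bmu}\mper
\]
Applying the SoS form of H\"older's inequality (Fact~\ref{fact:sos-holder}) to each of these sums --- each now over an index set of total $w$-mass at most $\eps n$ --- produces a prefactor $(\eps n)^{(t-1)/t}$ in place of $n^{(t-1)/t}$, which is exactly the source of the $\eps^{1-1/t}$ rate. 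The residual factor $\sum_i w_i\iprod{X_i-\mu,\mu-\bmu}^t \le 2t^{t/2}(1-\eps)n\|\mu-\bmu\|^t$ comes from Lemma~\ref{lem:general-structured-subset-polys}(\ref{enum:moment-bounds}) with $f(\mu) = \mu-\bmu$, and the residual $\sum_{i\in T}\iprod{X_i-\bmu,\mu-\bmu}^t \le 2t^{t/2}|T|\|\mu-\bmu\|^t$ from the deterministic condition specialized to $u = \mu-\bmu$. Combining and raising to the $t$-th power gives $\iprod{a,w}^t\|\mu-\bmu\|^{2t} \le 2^{O(t)}t^{t/2}(\eps n)^{t-1}n\,\|\mu-\bmu\|^t$; using $\iprod{a,w}^t \ge (n/2)^t$ and cancelling a factor $n^t$ leaves $\|\mu-\bmu\|^{2t} \le 2^{O(t)}t^{t/2}\eps^{t-1}\|\mu-\bmu\|^t$. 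One cannot divide by $\|\mu-\bmu\|^t$ inside the proof system, so, as in the proof of Lemma~\ref{lem:nice-pe}, I would pass to pseudoexpectations: with $Q = \|\mu-\bmu\|^t$, soundness gives $\pE Q^2 \le 2^{O(t)}t^{t/2}\eps^{t-1}\,\pE Q$, and $(\pE Q)^2 \le \pE Q^2$ (positivity of $\pE$ on $(Q-\pE Q)^2$) yields $\pE\|\mu-\bmu\|^t \le 2^{O(t)}t^{t/2}\eps^{t-1} = (O(t^{1/2})\eps^{1-1/t})^t$.

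For rounding, since $\|\pE v\|^t \le \pE\|v\|^t$ for any degree-$O(t)$ pseudoexpectation and vector-valued $v$ ($t$ even; Cauchy--Schwarz plus the power-mean inequality), we get $\|\hat{\mu} - \bmu\| = \|\pE(\mu-\bmu)\| \le O(t^{1/2})\eps^{1-1/t}$, and the triangle inequality with $\|\bmu-\mu^*\| \le O(t^{1/2}\eps^{1-1/t})$ gives $\|\hat{\mu}-\mu^*\| \le O(t^{1/2}\eps^{1-1/t})$. The running time is dominated by solving the degree-$O(t)$ SDP, which is $n^{O(t)}$ by Lemma~\ref{lem:general-structured-subset-polys}(\ref{itm:solve}), i.e.\ the claimed bound for $n = d^{O(t)}/\eps^2$; the failure probability is the union of the two events above. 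The main obstacle is the identifiability statement: beyond the re-centering book-keeping (the moment bound baked into $\cAhat$ is around the empirical mean $\mu$ of the \emph{selected} set, whereas the target of comparison is $\bmu$, the empirical mean of the uncorrupted points), the genuinely new idea relative to the mixture case is the rewriting above that confines the ``error sums'' to the $\le 2\eps n$ misclassified samples, so that H\"older costs $(\eps n)^{(t-1)/t}$ rather than $n^{(t-1)/t}$; producing constant-degree SoS certificates for the auxiliary counting facts and for the final cancellation at the level of pseudoexpectations is where the remaining care lies.
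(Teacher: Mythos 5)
Your proposal is correct and follows the same overall pipeline as the paper (feasibility of $\cAhat$ with $\alpha=1-\eps$, high-probability deterministic conditions, a degree-$O(t)$ SoS identifiability inequality with the $\eps^{t-1}$ prefactor coming from H\"older over the $O(\eps n)$ misclassified mass, then rounding $\pE\mu$ via pseudoexpectation Cauchy--Schwarz/H\"older). The one real difference is inside the identifiability lemma: the paper (Lemma~\ref{lem:robust-main-overview}) centers everything at the true mean $\mu^*$ and decomposes $\sum_i w_i\|\mu-\mu^*\|^2$ into \emph{four} error terms (Lemmata~\ref{lem:robust-good-samples}--\ref{lem:robust-bad-samples}: sampling error of the good points, missed good points, removed good points, and adversarial points), using the empirical-mean concentration (E3) and the $t$-th moment tensor concentration (E4) centered at $\mu^*$ inside the SoS proof; you instead center at $\bmu$, the empirical mean of the surviving good points, so that the program's mean constraint and $\sum_{i\in T}(X_i-\bmu)=0$ make two of those terms vanish identically, leaving just the two sums over $S\setminus T$ and $T\setminus S$, and you transfer to $\mu^*$ only at the very end by a triangle inequality with $\|\bmu-\mu^*\|\le O(t^{1/2}\eps^{1-1/t})$ (which is exactly the paper's condition (E3)). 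Your route is algebraically leaner (two H\"older applications instead of four lemmas, and the moment condition you need is the one the satisfiability proof already certifies at the empirical mean), while the paper's choice of centering makes the final rounding statement about $\|\pE\mu-\mu^*\|$ immediate with no separate transfer step; the probabilistic inputs and the cancellation-at-the-pseudoexpectation-level maneuver are essentially identical in both. (Minor technicalities you elide but should keep in mind: the paper runs \textsc{NaivePrune} first purely for bit-complexity/SDP-conditioning reasons, and the surviving good points are not an iid sample—the "any $(1-\eps)$-fraction" form of your empirical-mean condition, which you do state, is exactly what handles this.)
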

\noindent
As a remark, observe that if we set $t = 2 \log 1/\eps$, then the error becomes $O(\eps \sqrt{\log 1 / \eps})$.
Thus, with $n = O(d^{O(\log 1 / \eps)} / \eps^2)$ samples and $n^{O(\log 1/\eps)} = d^{O(\log 1/\eps)^2}$ runtime, we achieve the same error bounds for general explicitly bounded distributions as the best known polynomial time algorithms achieve for Gaussian mean estimation.

\subsection{Additional Preliminaries}

Throughout this section, let $[n] = S_g \cup S_b$, where $S_g$ is the indices of the uncorrupted points, and $S_b$ is the indices of the corrupted points, so that $|S_b| = \eps n$ by assumption.
Moreover, let $Y_1, \ldots, Y_n$ be iid from $\cD$ so that $Y_i = X_i$ for all $i \in S_g$.

We now state some additional tools we will require in our algorithm.
\paragraph{Naive Pruning}
We will require the following elementary pruning algorithm, which removes all points which are very far away from the mean.
We require this only to avoid some bit-complexity issues in semidefinite programming; in particular we just need to ensure that the vectors $X_1,\ldots,X_n$ used to form the SDP have polynomially-bounded norms.
Formally:
\begin{lemma}[Naive pruning]
\label{lem:naive-pruning}
Let $\eps, t, \mu^*$, and $X_1, \ldots, X_n$ be as in Theorem~\ref{thm:robust-main}.
There is an algorithm \textsc{NaivePrune}, which given $\eps, t$ and $X_1, \ldots, X_n$, runs in time $O(\eps d n^2)$, and outputs a subset $S \subseteq [n]$ so that with probability $1 - 1 / d^{10}$, the following holds:
\begin{itemize}
\item No uncorrupted points are removed, that is $S_g \subseteq S$, and
\item For all $i \in S$, we have $\| X_i - \mu^* \| \leq O(d)$.
\end{itemize}
In this case, we say that \textsc{NaivePrune} succeeds.
\end{lemma}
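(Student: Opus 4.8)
The plan is to implement \textsc{NaivePrune} as a pairwise-distance filter and reduce its correctness to one probabilistic statement plus a deterministic counting argument. The algorithm sets a radius $R$ equal to an explicit value that is $O(d)$ and computable from $n,d,t$, forms the counts $N_i = |\{\, j \in [n] : \|X_i - X_j\| \le 2R \,\}|$, and outputs $S = \{\, i \in [n] : N_i \ge (1-2\eps)n \,\}$. The two ingredients are: (i) with probability at least $1 - 1/d^{10}$, \emph{every} clean sample $Y_i$ satisfies $\|Y_i - \mu^*\| \le R$; and (ii) conditioned on that event, no index of $S_g$ is removed and every surviving index lies within $3R$ of $\mu^*$.

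For (i) I would use the $10t$-explicit boundedness of $\cD$ only through the numerical moment inequalities it implies. By Definition~\ref{def:explicitly-bounded}, for each even $s \le 10t$ the polynomial $s^{s/2}\|u\|^s - \E_{Y\sim\cD}\iprod{Y-\mu^*,u}^s$ is a sum of squares in $u$, hence nonnegative for every real $u$; taking $u = e_j$ gives $\E Z_j^{2q} \le (2q)^q$ for the coordinates $Z_j = \iprod{Y-\mu^*,e_j}$ and every $q \le 5t$. Applying Minkowski's inequality in $L^q$ to $\|Y - \mu^*\|^2 = \sum_j Z_j^2$ yields $\bigl(\E\|Y-\mu^*\|^{2q}\bigr)^{1/q} \le \sum_j \bigl(\E Z_j^{2q}\bigr)^{1/q} \le 2qd$, i.e. $\E\|Y-\mu^*\|^{2q} \le (2qd)^q$. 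Fixing $q = 5t$, Markov's inequality and a union bound over the $n$ clean samples show that all of them satisfy $\|Y_i - \mu^*\| \le R$ as soon as $R^2 \ge 2qd\,(n d^{10})^{1/q}$, with failure probability at most $1/d^{10}$. Since $n = d^{O(t)}/\eps^2$ and $q = 5t$, the factor $(nd^{10})^{1/q}$ is a bounded power of $d$, so the requirement on $R$ is met by some $R = O(d)$; verifying that $R$ can indeed be taken this small (rather than a larger polynomial in $d$) is the one place the argument needs the full $\Theta(t)$ moments, and is the step I would handle most carefully. Since $Y_i = X_i$ for $i \in S_g$, this event also gives $\|X_i - \mu^*\| \le R$ for all $i \in S_g$.

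For (ii), assume the event of (i). If $i \in S_g$ then for every $j \in S_g$ we have $\|X_i - X_j\| \le \|X_i - \mu^*\| + \|\mu^* - X_j\| \le 2R$, and $|S_g| = (1-\eps)n \ge (1-2\eps)n$, so $N_i \ge (1-2\eps)n$ and $i \in S$; hence $S_g \subseteq S$. Conversely, if $i \in S$ then at least $(1-2\eps)n$ indices $j$ satisfy $\|X_i - X_j\| \le 2R$, and as $|S_b| = \eps n$ at least $(1-3\eps)n > 0$ of them lie in $S_g$; picking any such $j$ gives $\|X_i - \mu^*\| \le \|X_i - X_j\| + \|X_j - \mu^*\| \le 3R = O(d)$. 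The algorithm is deterministic given the samples, so its overall failure probability is that of the event in (i), at most $1/d^{10}$.

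The running time is dominated by the pairwise distances entering the $N_i$, which a direct implementation computes in $O(dn^2)$ time; this is polynomial as needed, and the sharper $O(\eps d n^2)$ of the statement follows by only certifying, per candidate, whether more than $2\eps n$ of its neighbours lie at distance $> 2R$. Overall, the only substantive step is the concentration bound (i): pushing the radius down to $O(d)$ using the order-$10t$ moments and the $n^{1/q}$ gain from Markov at that order. Everything else is triangle inequalities and counting.
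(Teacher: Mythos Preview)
Your approach—pairwise-distance filtering backed by a moment-plus-Markov tail bound on $\|Y-\mu^*\|$—is exactly the standard outlier-removal procedure the paper invokes; the paper in fact omits the proof entirely, pointing to Fact~4.18 of Diakonikolas et al.\ with bounded $t$-th moments in place of sub-Gaussianity. Your own caveat about whether $R$ can be pushed all the way to $O(d)$ rather than merely $d^{O(1)}$ via the Minkowski estimate is well-placed, but since the lemma is used only to make the SDP inputs polynomially bounded, the distinction is immaterial.
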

\noindent
This algorithm goes by straightforward outlier-removal.
It is very similar the procedure described in Fact 4.18 of \cite{DBLP:conf/focs/DiakonikolasKK016} (using bounded $t$-th moments instead of sub-Gaussianity), so we omit it.

\paragraph{Satisfiability}
In our algorithm, we will use the same set of polynomial equations $\cAhat$ as in Lemma~\ref{lem:general-structured-subset-polys}.
However, the data we feed in does not exactly fit the assumptions in the Lemma.
Specifically, because the adversary is allowed to remove an $\eps$-fraction of good points, the resulting uncorrupted points are no longer iid from $\cD$.
Despite this, we are able to specialize Lemma~\ref{lem:general-structured-subset-polys} to this setting:
\begin{lemma}
\label{lem:corrupted-structured-subset-polys}
Fix $\eps > 0$ sufficiently small, and let $t \in \N,  t \geq 4$ be a power of $2$.
Let $\cD$ be a $10t$-explicitly bounded distribution.
Let $X_1, \ldots, X_n \in \R^d$ be an $\eps$-corrupted set of samples from $\cD$, and let $\cAhat$ be as in Lemma~\ref{lem:general-structured-subset-polys}.
The conclusion (1 -- Satisfiability) of Lemma~\ref{lem:general-structured-subset-polys} holds, with $w$ taken to be the $0/1$ indicator of the $(1 - \e)n$ good samples among $X_1,\ldots,X_n$.
\end{lemma}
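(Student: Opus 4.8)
The plan is to inspect the proof of conclusion (1) of \cref{lem:general-structured-subset-polys} and extract from it the fact that, modulo the trivial syntactic constraints, $\cAhat$ is satisfiable with $w = \mathbf 1_S$ as soon as a single \emph{deterministic} event $\cE$ holds for the point set $\{X_i\}_{i\in S}$: namely, that the empirical $t$-th moment matrix $\tfrac1{|S|}\sum_{i\in S}(X_i-\bar\mu_S)^{\otimes t/2}\big((X_i-\bar\mu_S)^{\otimes t/2}\big)^{\mathsf T}$ is dominated in PSD order by (roughly) twice the population moment matrix $M_t := \E_{Y\sim\cD}(Y-\mu^*)^{\otimes t/2}\big((Y-\mu^*)^{\otimes t/2}\big)^{\mathsf T}$, the auxiliary $n^{O(t)}$ variables of $\cAhat$ then being filled in with the coefficients of the degree-$O(t)$ SoS certificate that such domination produces. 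The proof of \cref{lem:general-structured-subset-polys} argues $\cE$ holds w.h.p. when $\{X_i\}_{i\in S}$ is i.i.d.; here $\{X_i\}_{i\in S_g}$ is \emph{not} i.i.d., since $S_g$ is adversarially chosen, so the task is to re-establish $\cE(\{X_i\}_{i\in S_g})$ directly. Write $[n]=S_g\cup S_b$ with $|S_b|=\eps n$, and let $Y_1,\dots,Y_n$ be i.i.d.\ from $\cD$ with $Y_i=X_i$ for $i\in S_g$, so that $\{X_i\}_{i\in S_g}=\{Y_i\}_{i\in S_g}$ is a $(1-\eps)$-fraction subset of a genuine i.i.d.\ sample.

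First I would record the events on $Y_1,\dots,Y_n$ that drive everything, all holding with probability $\ge 1-d^{-8}$ once $(1-\eps)n\ge d^{100t}$, by exactly the matrix concentration already invoked inside \cref{lem:general-structured-subset-polys} (for the moment matrices) and by \cref{lem:overview-example} (for all subsets): (i) $\tfrac1n\sum_{i\in[n]}(Y_i-\mu^*)^{\otimes t/2}\big((Y_i-\mu^*)^{\otimes t/2}\big)^{\mathsf T}\preceq (1+o(1))\,M_t$; (ii) $\|\bar\mu_{[n]}-\mu^*\|=o(1)$; and (iii) the conclusion of \cref{lem:overview-example} for every subset. Since $t$ is even, the sum of these PSD matrices over any subset is dominated by the sum over $[n]$, so dividing by $|S_g|=(1-\eps)n$ gives $\tfrac1{|S_g|}\sum_{i\in S_g}(Y_i-\mu^*)^{\otimes t/2}\big((Y_i-\mu^*)^{\otimes t/2}\big)^{\mathsf T}\preceq \tfrac{1+o(1)}{1-\eps}M_t\preceq (1+O(\eps))M_t$. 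For the recentering I would use the identity $\bmu-\bar\mu_{[n]}=\eps(\bmu-\bmu_{S_b})$ (where $\bmu$ is the empirical mean of $S_g$ and $\bmu_{S_b}$ that of $S_b$), combined with (ii) and \cref{lem:overview-example} applied to $S_b$ — which gives $\|\bmu_{S_b}-\mu^*\|\le 4t^{1/2}\eps^{-1/t}$ — to conclude $\|\bmu-\mu^*\|\le O(t^{1/2}\eps^{1-1/t})$.

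Next I would pass from the $\mu^*$-centered bound to the $\bmu$-centered one: expand $\langle (Y_i-\mu^*)+(\mu^*-\bmu),u\rangle^t$ binomially, bound each cross term by H\"older against $\tfrac1{|S_g|}\sum_{i\in S_g}\langle Y_i-\mu^*,u\rangle^t\le 2t^{t/2}\|u\|^t$ and powers of $\|\bmu-\mu^*\|^{t}\|u\|^t\le (4t^{1/2}\eps^{1-1/t})^t\|u\|^t$, and observe that because $(4t^{1/2}\eps^{1-1/t})^t t^{t/2}\ll t^{t/2}$ for $\eps$ small, the total overhead is a multiplicative $(1+o(1))$ rather than an additive $2^{O(t)}$. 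This yields $\tfrac1{|S_g|}\sum_{i\in S_g}(X_i-\bmu)^{\otimes t/2}\big((X_i-\bmu)^{\otimes t/2}\big)^{\mathsf T}\preceq 2\,M_t$, i.e.\ $\cE(\{X_i\}_{i\in S_g})$. Finally I would invoke the same mechanism \cref{lem:general-structured-subset-polys} uses internally to turn this spectral domination into a degree-$O(t)$ SoS certificate: $10t$-explicit boundedness of $\cD$ supplies the SoS proof that the population moment polynomial is bounded, with degree to spare, and a PSD comparison of two matrices lifts to an SoS identity between the corresponding quadratic forms — the gap between $t$ and $10t$ being exactly what keeps the empirical certificate in degree $O(t)$. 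Populating the auxiliary variables with this certificate, setting $w=\mathbf 1_{S_g}$, $\mu=\bmu$, $\mu_1=\mu^*$, and checking the remaining constraints (booleanness of $w$; $|S_g|=(1-\eps)n\in[(1-\tau)\alpha n,(1+\tau)\alpha n]$ with $\alpha=1-\eps,\tau=0$; and $\mu\cdot\sum_i w_i=\sum_i w_i X_i$) produces a bona fide real solution of $\cAhat$; a union bound over the $O(1)$ concentration events keeps the failure probability at $d^{-8}$.

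I expect the main obstacle to be precisely the recentering/constant-tracking step: one must ensure the empirical moment matrix of $S_g$ comes in under the factor $2$ baked into $\cAhat$ rather than a $2^{O(t)}$, which forces us to use that $\|\bmu-\mu^*\|=O(t^{1/2}\eps^{1-1/t})$ is genuinely small and to absorb the binomial cross terms multiplicatively via H\"older rather than by the crude $(a+b)^t\le 2^{t-1}(a^t+b^t)$. Everything else is either a direct appeal to concentration estimates already established for \cref{lem:general-structured-subset-polys} and \cref{lem:overview-example}, or a mechanical verification that the deterministic constraints of $\cAhat$ are met by the proposed assignment.
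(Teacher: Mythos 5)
Your proposal follows essentially the same route as the paper's (sketched) argument in Section~\ref{sec:corrupted-polys-proof}: invoke the satisfiability proof of Lemma~\ref{lem:general-structured-subset-polys} for the underlying i.i.d.\ sample $Y_1,\dots,Y_n$, drop the corrupted indices using evenness of $t$ (each term is a square/PSD), and recenter from the full-sample mean to the good-sample mean using $\|\bmu-\mu^*\|\le O(t^{1/2}\eps^{1-1/t})$, absorbing the binomial cross terms via H\"older so the constant stays below the factor $2$ in $\cAhat$. The only cosmetic difference is that you derive the mean-shift bound from Lemma~\ref{lem:overview-example} applied to $S_b$ rather than citing the (E3) concentration argument of Section~\ref{sec:robust}, which is the same computation; your write-up is correct and in fact fills in details the paper leaves implicit.
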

\noindent
We sketch the proof of Lemma~\ref{lem:corrupted-structured-subset-polys} in Section~\ref{sec:corrupted-polys-proof}.

\subsection{Formal Algorithm Specification}
With these tools in place, we can now formally state the algorithm.
The formal specification of this algorithm is given in Algorithm~\ref{alg:estimate-mean}.

\begin{algorithm}[htb]{}
\begin{algorithmic}[1]
\Function{EstimateMean}{$\e,t,\kappa,X_1,\ldots,X_n$}
\State Preprocess: let $X_1,\ldots,X_n \gets \textsc{NaivePrune}(\e,X_1,\ldots,X_n)$, and let $\muhat$ be the empirical mean
\State Let $X_i \gets X_i - \muhat$
\State By semidefinite programming, find a pseudoexpectation of degree $O(t)$ which satisfies the structured subset polynomials from Lemma~\ref{lem:corrupted-structured-subset-polys}, with $\alpha = (1-\e)n$ and $\tau = 0$.
\State \textbf{return} $\pE \mu + \muhat$.
\EndFunction
\end{algorithmic}
\caption{Robust Mean Estimation}
\label{alg:estimate-mean}
\end{algorithm}

The first two lines of Algorithm~\ref{alg:estimate-mean} are only necessary for bit complexity reasons, since we cannot solve SDPs exactly.
However, since we can solve them to doubly-exponential accuracy in polynomial time, it suffices that all the quantities are at most polynomially bounded (indeed, exponentially bounded suffices) in norm, which these two lines easily achieve.
For the rest of this section, for simplicity of exposition, we will ignore these issues.

\subsection{Deterministic conditions}
With these tools in place, we may now state the deterministic conditions under which our algorithm will succeed.
Throughout this section, we will condition on the following events holding simultaneously:
\begin{enumerate}
\item[(E1)] \textsc{NaivePrune} succeeds, \label{enum:pruning}
\item[(E2)] The conclusion of Lemma~\ref{lem:corrupted-structured-subset-polys} holds, \label{enum:robust-A}
\item[(E3)] We have the following concentration of the uncorrupted points:
\[
\left\| \frac{1}{n} \sum_{i \in S_g} X_i - \mu^* \right\| \leq O(t^{1/2} \eps^{1 - 1 / t}) \; ,~\mbox{and}
\] \label{enum:robust-mean-conc}
\item[(E4)] We have the following concentration of the empirical $t$-th moment tensor:
\[
\frac{1}{n} \sum_{i \in [n]} \Brac{(Y_i - \mu^*)^{\otimes t / 2}} \Brac{ (Y_i - \mu^*)^{\otimes t / 2}}^\top \preceq \E_{X \sim \cD} \Brac{(X - \mu^*)^{\otimes t/2}} \Brac{(X - \mu^*)^{\otimes t/2}}^\top + 0.1 \cdot \Id \; ,
\]
for $\Id$ is the ${d^{t/2} \times d^{t / 2}}$-sized identity matrix.\label{enum:robust-tensor-conc}
\end{enumerate}
The following lemma says that with high probability, these conditions hold simultaneously:
\begin{lemma}
\label{lem:robust-deterministic-conditions}
Let $\eps, t, \mu^*$, and $X_1, \ldots, X_n \in \R^d$ be as in Theorem~\ref{thm:robust-main}.
Then, Conditions (E1)-(E4) hold simultaneously with probability at least $1 - 1 / d^{5}$.
\end{lemma}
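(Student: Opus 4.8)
The plan is to verify (E1)--(E4) one at a time and finish with a union bound, where the only real content lies in (E4) and (E3); in fact I will derive (E3) from (E4) plus an elementary concentration fact. Condition (E1) is precisely the success event of \textsc{NaivePrune}, so it holds with probability $1-d^{-10}$ by Lemma~\ref{lem:naive-pruning}. Condition (E2) is the satisfiability conclusion of Lemma~\ref{lem:general-structured-subset-polys}, which by Lemma~\ref{lem:corrupted-structured-subset-polys} applies verbatim to our $\e$-corrupted sample once $n$ is large enough (which holds since $n=d^{O(t)}/\e^2$), so it fails with probability at most $d^{-8}$.

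For (E4) we must prove a matrix-concentration bound for the empirical second moment of the lifted vectors $Z_i := \Brac{(Y_i-\mu^*)^{\otimes t/2}}\Brac{(Y_i-\mu^*)^{\otimes t/2}}^\top$, which live in dimension $D=d^{t/2}$. Since $\cD$ is $10t$-explicitly bounded it has genuine bounded moments up to order $10t$; in particular $\E\norm{Y-\mu^*}^{2t}$ and $\E\norm{Y-\mu^*}^{4t}$ are $d^{O(t)}$, so a union bound over the $n$ samples shows that with probability $1-d^{-10}$ every $Y_i$ has $\norm{Y_i-\mu^*}\leq d^{O(1)}$, hence every $Z_i$ is a PSD matrix of operator norm $d^{O(t)}$. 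Conditioning on this event, matrix Bernstein applied to the bounded i.i.d.\ matrices $Z_i$ (whose matrix variance is again $d^{O(t)}$, using the $10t$-boundedness) gives $\Norm{\tfrac1n\sum_i Z_i-\E Z}_{\mathrm{op}}\leq 0.1$ with probability $1-D^{-\Omega(1)}=1-d^{-\Omega(t)}$, once $n\geq d^{O(t)}$ --- which holds. In particular $\tfrac1n\sum_i Z_i\preceq \E Z+0.1\,\Id$. The point needing care here is the interplay between the sample complexity $d^{O(t)}$ and the ambient dimension $D=d^{t/2}$: the $\log D=O(t\log d)$ loss in matrix Bernstein is harmless, and it is exactly the strengthening from $t$- to $10t$-boundedness that supplies the slack to control the truncation level and the matrix variance. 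I expect this to be the main obstacle.

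For (E3) we decompose, using $X_i=Y_i$ for $i\in S_g$ and $|S_b|=\e n$,
\[
  \frac1n\sum_{i\in S_g}X_i-\mu^* \;=\; \frac1n\sum_{i\in[n]}(Y_i-\mu^*)\;-\;\frac1n\sum_{i\in S_b}(Y_i-\mu^*)\;-\;\e\,\mu^* \mper
\]
The last term is lower order after the centering step of \textsc{EstimateMean} (equivalently, by translation-invariance one may assume $\mu^*=0$), and the first term has norm $O(\sqrt{d/n})\ll \e^{1-1/t}$ by Chebyshev with the bounded second moment (a higher-moment version boosts the failure probability to $d^{-10}$). The interesting term is $\tfrac1n\sum_{i\in S_b}(Y_i-\mu^*)$; crucially $S_b$ may depend on $Y_1,\ldots,Y_n$, so we bound it uniformly over size-$\e n$ subsets: for any unit $u$, $\sup_{|S|=\e n}\sum_{i\in S}\iprod{Y_i-\mu^*,u}$ is the sum of the largest $\e n$ of the values $\iprod{Y_i-\mu^*,u}$, which by the power-mean inequality and evenness of $t$ is at most $(\e n)^{1-1/t}\Paren{\sum_{i\in[n]}\iprod{Y_i-\mu^*,u}^t}^{1/t}$. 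Writing $\sum_i\iprod{Y_i-\mu^*,u}^t=(u^{\otimes t/2})^\top\Paren{\sum_i Z_i}(u^{\otimes t/2})$ and invoking (E4) together with the $t$-th moment bound for $\cD$ gives $\sum_i\iprod{Y_i-\mu^*,u}^t\leq n\Paren{\E\iprod{Y-\mu^*,u}^t+0.1}\leq 2nt^{t/2}$, so $\tfrac1n\sup_{|S|=\e n}\Norm{\sum_{i\in S}(Y_i-\mu^*)}\leq (\e n)^{1-1/t}(2nt^{t/2})^{1/t}/n=O(t^{1/2}\e^{1-1/t})$, the supremum over $u$ being taken last at no cost since the bound is uniform in $u$. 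Thus (E3) holds on the intersection of (E4) with the auxiliary mean-concentration event. Finally, a union bound over the failure events ($d^{-10}$ for (E1), $d^{-8}$ for (E2), $O(d^{-10})$ for the truncation and matrix-Bernstein steps behind (E4), and $d^{-10}$ for the mean concentration used in (E3)) yields total failure probability at most $d^{-5}$, as claimed.
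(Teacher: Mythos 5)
Your proposal is correct, and for (E1), (E2), and (E3) it follows essentially the paper's own route: (E1) and (E2) are quoted from Lemma~\ref{lem:naive-pruning} and Lemma~\ref{lem:corrupted-structured-subset-polys}, and (E3) is derived exactly as in the paper by splitting off the full-sample mean (plain concentration, using $n = d^{O(t)}/\eps^2$) and controlling the removed set through H\"older applied direction-by-direction, $\iprod{\sum_{i \in S_b}(Y_i-\mu^*),u}^t \leq |S_b|^{t-1}\sum_{i\in[n]}\iprod{Y_i-\mu^*,u}^t$, followed by the quadratic-form reading of (E4); your ``sum of the largest $\eps n$ values'' phrasing is just this H\"older step made uniform over subsets, which handles the dependence of $S_b$ on the $Y_i$'s the same way the paper's extension of the sum to $[n]$ does. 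Two points of genuine difference. First, for (E4) the paper simply defers to the proof of Lemma~\ref{lem:general-structured-subset-polys}, where the spectral bound comes from coefficient-wise concentration of the degree-$t$ moment tensor (Berry--Esseen on each entry, then summing squared coefficients), whereas you use truncation plus matrix Bernstein in dimension $d^{t/2}$; both work, and your flagged interplay between the truncation level $d^{O(t)}$, the variance proxy, and $n=d^{O(t)}/\eps^2$ is absorbed into the unspecified constant in the exponent (the paper itself assumes $n \geq d^{100t}$-type bounds), though to be fully rigorous you should also note that conditioning on the boundedness event shifts $\E Z$ only negligibly, by Cauchy--Schwarz against the $O(t)$-th moments. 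Second, you explicitly treat the $-\eps\mu^*$ term created by the $1/n$ normalization in the statement of (E3), resolving it by centering/translation invariance; the paper's proof silently switches to the $1/|S_g|$-normalized quantity (equivalently bounds $\frac 1n\sum_{i\in S_g}(X_i-\mu^*)$, which is what Lemma~\ref{lem:robust-good-samples} actually uses), so on this point your write-up is, if anything, more careful than the original.
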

\noindent
We defer the proof of this lemma to the Appendix.

For simplicity of notation, throughout the rest of the section, we will assume that \textsc{NaivePrune} does not remove any points whatsoever.
Because we are conditioning on the event that it removes no uncorrupted points, it is not hard to see that this is without loss of generality.

\subsection{Identifiability}
Our main identifiability lemma is the following.
\begin{lemma}
  \label{lem:robust-main-overview}
  Let $\eps, t, \mu^*$ and $X_1,\ldots,X_n \in \R^d$ be as in Theorem~\ref{thm:robust-main}, and suppose they satisfy (E1)--(E4).
  Then, we have
  \[
    \cA \proves{O(t)} \|\mu - \mu^*\|^{2t} \leq O(t^{t/2}) \cdot \e^{t-1} \cdot  \|\mu - \mu^* \|^t\mper
  \]
\end{lemma}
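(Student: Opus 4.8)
The plan is to mirror the proof of the toy Lemma~\ref{lem:overview-example}, but to do everything symbolically in the SoS proof system so that the conclusion holds for the indeterminate $\mu$ rather than for the empirical mean of a fixed set. Let $w$ be the indeterminate indicator vector constrained by $\cA$, and recall $\cA$ forces $|S| = (1-\e)n$, $\mu \cdot \sum_i w_i = \sum_i w_i X_i$, and the $t$-th moment bound in direction $\mu - \mu^*$ (this last via item~\ref{enum:moment-bounds} of Lemma~\ref{lem:general-structured-subset-polys}, applied with $f(\mu) = \mu - \mu^*$, a degree-$1$ vector of polynomials). Writing $S_g$ for the (fixed, by (E2)) indicator of the good samples, the first step is the decomposition
\[
\Iprod{\textstyle\sum_{i} w_i \cdot \bbone[i \in S_g],\ \mu - \mu^*}\ \cdot\ \|\mu-\mu^*\|^2
= \sum_{i \in S_g} w_i \Iprod{X_i - \mu,\ \mu^* - \mu} + \sum_{i\in S_g} w_i \Iprod{\mu^* - X_i,\ \mu^* - \mu}\mper
\]
Since $|S_g| = (1-\e)n$ and $\sum_i w_i = (1-\e)n$, the overlap $\sum_{i} w_i \bbone[i\in S_g] \geq (1 - 2\e)n$ is forced by $\cA$ (two sets of density $1-\e$ in $[n]$ overlap in density at least $1-2\e$; this is a linear consequence of the size constraints and booleanness, so it has a low-degree SoS proof). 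Thus, up to the $\|\mu-\mu^*\|^2$ factor and constants, $\|\mu-\mu^*\|^2$ is controlled by these two sums.

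Next I would bound each of the two sums by an SoS-H\"older argument of degree $O(t)$. For the first (``wrong-cluster'') term, apply Fact~\ref{fact:sos-holder} to pull out $(\sum_{i\in S_g} w_i)^{(t-1)/t} \le ((1-\e)n)^{(t-1)/t}$, then pass from the sum over $S_g$ to the sum over all of $[n]$ (legitimate since $t$ is even and each summand is a $t$-th power, hence a square), then invoke the moment bound from item~\ref{enum:moment-bounds} of Lemma~\ref{lem:general-structured-subset-polys}:
\[
\cA \proves{O(t)}\ \frac 1n \sum_{i\in[n]} w_i \iprod{X_i - \mu, \mu^*-\mu}^t \le 2 t^{t/2}\|\mu-\mu^*\|^t\mper
\]
This yields $\cA \proves{O(t)} \big(\sum_{i\in S_g} w_i\iprod{X_i-\mu,\mu^*-\mu}\big)^t \le 2 t^{t/2} n^t \|\mu-\mu^*\|^t$ (absorbing the $(1-\e)$ factor). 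For the second (``deviation'') term, the same H\"older step reduces it to $\sum_{i\in S_g}\iprod{\mu^*-X_i,\mu^*-\mu}^t$, a quantity with no $w$ in it, which is bounded deterministically using (E4): expand $\iprod{\mu^*-X_i,\mu^*-\mu}^t$ as the quadratic form of the empirical $t$-th-moment tensor at $(\mu^*-\mu)^{\otimes t/2}$, note $S_g \subseteq [n]$ adds only nonnegative terms, use the PSD bound in (E4) together with the $10t$-explicit boundedness of $\cD$ to get $\E_{X\sim\cD}(X-\mu^*)^{\otimes t/2}(X-\mu^*)^{\otimes t/2\,\top} \preceq t^{t/2}\cdot(\text{the identity on the symmetric subspace})$ — this is precisely what the SoS moment bound of the distribution gives — concluding $\sum_{i\in S_g}\iprod{\mu^*-X_i,\mu^*-\mu}^t \le 2 n t^{t/2}\|\mu-\mu^*\|^t$ with an SoS proof in $\mu$.

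Finally, I would combine: raising the overlap lower bound to a suitable power and substituting the two $t$-th-power bounds (via $(a+b)^t \le 2^t(a^t+b^t)$, an SoS identity for even $t$) gives
\[
\cA \proves{O(t)}\ \big((1-2\e)n\big)^t \|\mu-\mu^*\|^{2t} \le 2^{O(t)} t^{t/2} \cdot \big((1-\e)n\big)^{t-1}\cdot n \cdot \|\mu-\mu^*\|^t\mcom
\]
which after cancelling $n^t$ and the $(1-2\e)^{-t} = 1+O(\e)$ factor is exactly the claimed $\cA \proves{O(t)} \|\mu-\mu^*\|^{2t} \le O(t^{t/2})\e^{t-1}\|\mu-\mu^*\|^t$. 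The one place requiring care — the main obstacle — is that all of this must be an SoS proof in the \emph{indeterminate} $\mu$ (and $w$): every inequality manipulation (H\"older, passing to a superset of indices, $(a+b)^t\le 2^t(a^t+b^t)$, converting the tensor PSD bound into a polynomial inequality) needs a bona fide low-degree SoS certificate, which is why we lean on the pre-packaged Facts (\ref{fact:sos-holder}, \ref{fact:pe-holders}) and on item~\ref{enum:moment-bounds} of Lemma~\ref{lem:general-structured-subset-polys} rather than reasoning about real numbers; and one must double-check the bookkeeping so the factor of $\e^{t-1}$ (not $\e^t$) emerges — it comes from the single factor $(\sum_{i\in S_g}w_i)^{(t-1)/t} \le n^{(t-1)/t}$ picked up in H\"older, versus the full $n^t$ produced on the left by the $((1-2\e)n)^t$ overlap term.
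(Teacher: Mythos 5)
There is a genuine gap: your final bookkeeping does not produce the factor $\e^{t-1}$, and the strategy as laid out cannot. In your combination step the left side carries $\big((1-2\e)n\big)^t$ and the right side carries $\big((1-\e)n\big)^{t-1}\cdot n$, and the ratio $\big((1-\e)n\big)^{t-1} n \big/ \big((1-2\e)n\big)^t = (1-\e)^{t-1}/(1-2\e)^t$ is $\Theta(1)$, not $\Theta(\e^{t-1})$. This is exactly the weakness of porting Lemma~\ref{lem:overview-example} verbatim: that lemma gives $\|\mu-\mu^*\| \lesssim t^{1/2}\,(|T|/|S\cap T|)^{1/t}$, which is powerful when the overlap is a small fraction (mixture models) but collapses to the trivial bound $\|\mu-\mu^*\| \lesssim t^{1/2}$ in the robust setting, where $|T| = (1-\e)n$ and $|S\cap T| \geq (1-2\e)n$ are nearly equal. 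Both of your H\"older applications pull out $\big(\sum_{i\in S_g} w_i\big)^{t-1} \approx ((1-\e)n)^{t-1}$, which is the same order as the $n^{t-1}$ you must divide out, so no $\e$ ever appears. It is also telling that your argument never invokes (E3).

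To get $\e^{t-1}$ you must arrange the decomposition so that H\"older is only ever applied across sums whose total weight is $O(\e n)$, and handle the bulk of the good samples by a different mechanism. Concretely, the paper splits $\sum_{i\in[n]} w_i\iprod{X_i-\mu^*,\mu-\mu^*}$ into (i) the \emph{unweighted} sum $\sum_{i\in S_g}\iprod{X_i-\mu^*,\mu-\mu^*}$, bounded by SoS Cauchy--Schwarz together with (E3) (the empirical mean of the good points is within $O(t^{1/2}\e^{1-1/t})$ of $\mu^*$ -- this is where $\e^{t-1}$ enters for the bulk term); (ii) $\sum_{i\in S_g}(w_i-1)\iprod{X_i-\mu^*,\mu-\mu^*}$, where H\"older is applied using $\cA \proves{2} \sum_{i\in S_g}(1-w_i) \leq \e n$, so the pulled-out factor is $(\e n)^{t-1}$, with the remaining $t$-th powers bounded via (E4); (iii) the analogous term over the removed points $S_b$ via (E4); and (iv) $\sum_{i\in S_b} w_i\iprod{X_i-\mu,\mu-\mu^*}$, where H\"older uses $\sum_{i\in S_b} w_i \leq \e n$ and the program's moment bound (the only place item~\ref{enum:moment-bounds} of Lemma~\ref{lem:general-structured-subset-polys} is needed). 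Your two-term split never isolates these $O(\e n)$-weight sums, so the claimed conclusion does not follow from the inequalities you establish; patching it amounts to redoing the argument along the lines just described.
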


Since this lemma is the core of our analysis for robust estimation, in the remainder of this section we prove it.
The proof uses the following three lemmas to control three sources of error in $\pE \mu$, which we prove in Section~\ref{sec:robust-lemma-proofs}.
The first, Lemma \ref{lem:robust-good-samples} controls sampling error from true samples from $\cD$.

\begin{lemma}
\label{lem:robust-good-samples}
  Let $\eps, t, \mu^*$ and $X_1,\ldots,X_n \in \R^d$ be as in Theorem~\ref{thm:robust-main}, and suppose they satisfy (E1)--(E4) satisfy (E1)--(E4).
  Then, we have
  \[
    \proves{O(t)} \Paren{\sum_{i \in S_g} \iprod{X_i - \mu^*, \mu - \mu^*}}^t \leq O (\eps^{t - 1}) \cdot t^{t / 2} \cdot n^t \cdot \|\mu -\mu^*\|^t\mper
  \]
\end{lemma}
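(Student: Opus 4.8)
The plan is to exploit the fact that --- in contrast to the mixture-model analogues --- the left-hand side here involves \emph{no} $w$-indeterminates: for each fixed value of $\mu$ it is simply the $t$-th power of a \emph{fixed} linear form in $\mu$. Consequently the statement has a short (in fact unconditional) SoS proof obtained from the sum-of-squares certificate for Cauchy--Schwarz followed by one round of ``raising a quadratic inequality to the $(t/2)$-th power''; no Hölder's inequality is needed.

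Concretely, I would set $v := \sum_{i \in S_g}(X_i - \mu^*) = n\cdot\bigparen{\tfrac1n\sum_{i \in S_g}X_i - \mu^*} \in \R^d$, so that $\sum_{i \in S_g}\iprod{X_i - \mu^*,\mu-\mu^*} = \iprod{v,\mu-\mu^*}$ as a polynomial identity in $\mu$. By condition (E3), $\|v\| \le O(t^{1/2}\eps^{1-1/t})\cdot n$, hence $\|v\|^2 \le C$ where $C := O(t\,\eps^{2-2/t})\cdot n^2$. First I would use the standard degree-$2$ sum-of-squares certificate for Cauchy--Schwarz (the polynomial $\|v\|^2\|\mu-\mu^*\|^2 - \iprod{v,\mu-\mu^*}^2$ is a sum of squares in the indeterminates $\mu$) together with the scalar bound $C \ge \|v\|^2$ to conclude that $P := C\,\|\mu-\mu^*\|^2 - \iprod{v,\mu-\mu^*}^2$ is a sum of squares of degree $2$. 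Then, using that $t$ is even, the elementary identity $x^{t/2} - y^{t/2} = (x-y)\sum_{j=0}^{t/2-1}x^j y^{t/2-1-j}$ applied with $x = C\|\mu-\mu^*\|^2$ and $y = \iprod{v,\mu-\mu^*}^2$ gives
\[
  C^{t/2}\|\mu-\mu^*\|^{t} - \iprod{v,\mu-\mu^*}^{t} \;=\; P\cdot\sum_{j=0}^{t/2-1}\bigparen{C\|\mu-\mu^*\|^2}^{j}\bigparen{\iprod{v,\mu-\mu^*}^2}^{t/2-1-j}\mper
\]
Since $C \ge 0$, each summand on the right is a product of powers of sums of squares and hence a sum of squares (of degree $t-2$), so the whole right-hand side is a sum of squares of degree $t$; this is a degree-$t$ SoS proof that $\iprod{v,\mu-\mu^*}^t \le C^{t/2}\|\mu-\mu^*\|^t$.

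Finally, $C^{t/2}\|\mu-\mu^*\|^t = O(t^{t/2})\cdot\eps^{t-1}\cdot n^t\cdot\|\mu-\mu^*\|^t$, using $(\eps^{2-2/t})^{t/2} = \eps^{t-1}$, and $\iprod{v,\mu-\mu^*}^t = \bigparen{\sum_{i \in S_g}\iprod{X_i-\mu^*,\mu-\mu^*}}^t$, so this is exactly the claimed bound with a proof of degree $O(t)$. I expect essentially no obstacle; the only point worth checking is that all scalar coefficients appearing (the entries of $v$ and the constant $C$) are polynomially bounded, so that the resulting SoS proof has polynomially bounded coefficients as needed downstream --- and this follows from condition (E1), since \textsc{NaivePrune} guarantees $\|X_i - \mu^*\| \le O(d)$ and hence $\|v\| \le O(dn)$.
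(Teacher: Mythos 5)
Your proposal is correct and follows essentially the same route as the paper: rewrite the sum as $\iprod{v,\mu-\mu^*}^t$ for the fixed vector $v=\sum_{i\in S_g}(X_i-\mu^*)$, apply SoS Cauchy--Schwarz, and invoke (E3) to bound $\|v\|$. The only difference is that you spell out explicitly how the degree-$2$ Cauchy--Schwarz certificate is raised to the $t/2$-th power (and note coefficient boundedness via (E1)), details the paper leaves implicit.
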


  To describe the second and third error types, we think momentarily of $w \in \R^n$ as the $0/1$ indicator for a set $S$ of samples whose empirical mean will be the output of the algorithm.
  (Of course this is not strictly true, but this is a convenient mindset in constructing SoS proofs.)
  The second type of error comes from the possible failure of $S$ to capture some $\e$ fraction of the good samples from $\cD$.
  Since $\cD$ has $O(t)$ bounded moments, if $T$ is a set of $m$ samples from $\cD$, the empirical mean of any $(1-\e)m$ of them is at most $\e^{1-1/t}$-far from the true mean of $\cD$.

\begin{lemma}
  \label{lem:robust-missed-good-samples}
  Let $\eps, t, \mu^*$ and $X_1,\ldots,X_n \in \R^d$ be as in Theorem~\ref{thm:robust-main}, and suppose they satisfy (E1)--(E4).
  Then, we have
  \[
    \cA \proves{O(t)} \Paren{\sum_{i \in S_g}(w_i-1) \iprod{X_i - \mu^*, \mu - \mu^*}}^t \leq 2 \eps^{t-1} n^t \cdot t^{t / 2} \cdot \|\mu - \mu^*\|^t\mper
  \]
\end{lemma}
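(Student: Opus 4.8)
The plan is to follow the two-step template from the proof of Lemma~\ref{lem:mixture-ident-dist-1} (H\"older to strip the $0/1$ weights, then a moment bound), but to route the moment bound through the reference samples $Y_1,\dots,Y_n$ rather than through the possibly-corrupted $X_i$. Since $t$ is even, the quantity in the statement equals $\Paren{\sum_{i \in S_g}(1-w_i)\iprod{X_i - \mu^*, \mu - \mu^*}}^t$, and since $\{w_i^2 = w_i\}$ forces $(1-w_i)^2 = 1-w_i$, the SoS H\"older inequality (Fact~\ref{fact:sos-holder}) applied with weights $v_i = 1-w_i$ gives
\[
\{w_i^2 = w_i\} \proves{O(t)} \Paren{\sum_{i \in S_g}(1-w_i)\iprod{X_i - \mu^*, \mu - \mu^*}}^t \leq \Paren{\sum_{i \in S_g}(1-w_i)}^{t-1} \Paren{\sum_{i \in S_g}(1-w_i)\iprod{X_i - \mu^*, \mu - \mu^*}^t}\mper
\]
For the first factor I would use the size axiom of $\cA$ (here $\tau = 0$, so $\sum_i w_i = (1-\e)n$) together with $1 - w_i = (1-w_i)^2 \geq 0$ to obtain $\cA \proves{O(t)} 0 \leq \sum_{i \in S_g}(1-w_i) \leq \e n$, and hence $\cA \proves{O(t)} \Paren{\sum_{i \in S_g}(1-w_i)}^{t-1} \leq (\e n)^{t-1}$ by the standard fact that $0 \leq a \leq b$ implies $a^{t-1} \leq b^{t-1}$ (factor $b^{t-1} - a^{t-1} = (b-a)\sum_{j=0}^{t-2} a^j b^{t-2-j}$, a product of SoS-nonnegative terms). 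For the second factor, each summand $\iprod{X_i - \mu^*, \mu - \mu^*}^t$ is a perfect square ($t$ even) and $1 - w_i \leq 1$ is SoS-provable (since $1 - (1-w_i) = w_i^2 \geq 0$), so $\{w_i^2 = w_i\} \proves{O(t)} \sum_{i \in S_g}(1-w_i)\iprod{X_i - \mu^*, \mu - \mu^*}^t \leq \sum_{i \in S_g}\iprod{X_i - \mu^*, \mu - \mu^*}^t$.

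It then remains to bound $\sum_{i \in S_g}\iprod{X_i - \mu^*, \mu - \mu^*}^t$ by $2nt^{t/2}\|\mu - \mu^*\|^t$ with a degree-$O(t)$ SoS proof. Using $Y_i = X_i$ for $i \in S_g$ and evenness of $t$, one has $\proves{O(t)} \sum_{i \in S_g}\iprod{X_i - \mu^*, \mu - \mu^*}^t \leq \sum_{i \in [n]}\iprod{Y_i - \mu^*, \mu - \mu^*}^t$, since the added terms are squares. Writing $\iprod{Y_i - \mu^*, \mu - \mu^*}^t$ as the value of the rank-one quadratic form $\Brac{(Y_i - \mu^*)^{\otimes t/2}}\Brac{(Y_i - \mu^*)^{\otimes t/2}}^\top$ at the vector $(\mu - \mu^*)^{\otimes t/2}$, condition (E4) — a PSD inequality, whose difference factors as a sum of rank-one terms and hence furnishes an SoS proof of the corresponding inequality of quadratic forms — yields
\[
\proves{O(t)} \sum_{i \in [n]}\iprod{Y_i - \mu^*, \mu - \mu^*}^t \leq n\Paren{\E_{X \sim \cD}\iprod{X - \mu^*, \mu - \mu^*}^t + 0.1\,\|\mu - \mu^*\|^t}\mper
\]
Finally, $10t$-explicit boundedness of $\cD$ (Definition~\ref{def:explicitly-bounded}) supplies a degree-$t$ SoS proof of $\E_{X \sim \cD}\iprod{X - \mu^*, u}^t \leq t^{t/2}\|u\|^t$ in the indeterminate $u$; substituting $u = \mu - \mu^*$ (degree one in $\mu$) keeps the proof at degree $O(t)$. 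Chaining everything and using $t^{t/2} \geq 1$ to absorb the $0.1$ into the leading constant gives $\proves{O(t)} \sum_{i \in S_g}\iprod{X_i - \mu^*, \mu - \mu^*}^t \leq 2nt^{t/2}\|\mu - \mu^*\|^t$, and multiplying by the first-factor bound $(\e n)^{t-1}$ produces exactly the claimed $2\e^{t-1}n^t t^{t/2}\|\mu - \mu^*\|^t$.

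The computation is essentially routine; the one point that needs care is the passage from the weighted good-sample sum in the $X_i$ to the full unweighted sum in the $Y_i$: one must strip the weights \emph{before} swapping $X_i$ for $Y_i$ and extending the index set to $[n]$, since only the good coordinates satisfy $X_i = Y_i$ and only then is condition (E4) applicable (this is exactly why the second H\"older factor must be kept over $S_g$, not over all of $[n]$, unlike in Lemma~\ref{lem:mixture-ident-dist-1}). A secondary bookkeeping point is checking that each reduction — the H\"older split, the bound $1 - w_i \leq 1$, and the power monotonicity $(\sum(1-w_i))^{t-1} \leq (\e n)^{t-1}$ — is a genuine low-degree SoS derivation from $\{w_i^2 = w_i\}$ and the size axiom, which follows from the basic toolkit facts on products and powers of SoS-nonnegative polynomials.
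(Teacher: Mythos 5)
Your proposal is correct and follows essentially the same route as the paper: SoS H\"older with weights $1-w_i$ (using $(1-w_i)^2 = 1-w_i$), the size axiom to bound $\sum_{i \in S_g}(1-w_i) \leq \e n$, and a spectral bound on $\sum_{i\in[n]}\iprod{Y_i-\mu^*,\mu-\mu^*}^t$ from (E4) plus explicit boundedness — the latter being exactly the paper's Lemma~\ref{lem:tensor-spectral-bound}, which you re-derive inline rather than cite. The only cosmetic difference is that you use the weighted form of Fact~\ref{fact:sos-holder} and then strip the weights via $1-w_i \leq 1$, whereas the paper applies the unweighted form directly.
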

\noindent
The third type of error is similar in spirit: it is the contribution of the original uncorrupted points that the adversary removed.
Formally:
\begin{lemma}
\label{lem:robust-removed-good-samples}
  Let $\eps, t, \mu^*$ and $X_1,\ldots,X_n \in \R^d$ and $Y_1, \ldots, Y_n \in \R^d$ be as in Theorem~\ref{thm:robust-main}, and suppose they satisfy (E1)--(E4).
  Then, we have
  \[
    \cA \proves{O(t)} \Paren{\sum_{i \in S_b} \iprod{Y_i - \mu^*, \mu - \mu^*}}^t \leq 2 \eps^{t-1} n^t \cdot t^{t / 2} \cdot \|\mu - \mu^*\|^t  \mper
  \]
\end{lemma}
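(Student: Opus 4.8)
The quantity to be bounded contains no $w$ variables, so this will be a self-contained SoS calculation in the single block of indeterminates $\mu$, with $\mu^*$, the samples $Y_1,\dots,Y_n$, $\eps$ and $n$ all treated as constants (it does not really use the axioms $\cA$, but stating it with $\cA$ in front is a harmless weakening). The plan is to extract the $\eps^{t-1}$ factor with an SoS H\"older inequality, pass from a sum over the $\eps n$ bad indices to a sum over all $n$ indices using that $t$ is even, and then control the resulting empirical $t$-th moment using condition (E4) together with the explicit $t$-boundedness of $\cD$.

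Concretely, I would first invoke the SoS H\"older inequality (Fact~\ref{fact:sos-holder}) with $|S_b| = \eps n$ and $t$ even to get
\[
  \proves{O(t)} \Paren{\sum_{i \in S_b} \iprod{Y_i - \mu^*, \mu - \mu^*}}^t \leq (\eps n)^{t-1} \sum_{i \in S_b} \iprod{Y_i - \mu^*, \mu - \mu^*}^t \mper
\]
Since $t$ is even each summand is a square, so extending the sum to all of $[n]$ only adds sums of squares:
\[
  \proves{O(t)} \sum_{i \in S_b} \iprod{Y_i - \mu^*, \mu - \mu^*}^t \leq \sum_{i \in [n]} \iprod{Y_i - \mu^*, \mu - \mu^*}^t \mper
\]
The right-hand side is the quadratic form of the matrix $\sum_{i \in [n]} \Brac{(Y_i - \mu^*)^{\otimes t/2}}\Brac{(Y_i-\mu^*)^{\otimes t/2}}^\top$ evaluated at the vector $(\mu - \mu^*)^{\otimes t/2}$ of degree-$(t/2)$ monomials in $\mu$. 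By (E4) this matrix is at most $n \Paren{ \E_{X \sim \cD}\Brac{(X-\mu^*)^{\otimes t/2}}\Brac{(X-\mu^*)^{\otimes t/2}}^\top + 0.1\,\Id }$ in the PSD order, and since $v \mapsto v^\top P v$ is a sum of squares whenever $P \succeq 0$, this PSD domination is itself a degree-$O(t)$ SoS proof of
\[
  \proves{O(t)} \sum_{i \in [n]} \iprod{Y_i - \mu^*, \mu - \mu^*}^t \leq n \cdot \E_{X \sim \cD} \iprod{X - \mu^*, \mu - \mu^*}^t + 0.1\, n\, \|\mu - \mu^*\|^t \mper
\]

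Finally, because $\cD$ is $10t$-explicitly bounded, the polynomial $t^{t/2}\|u\|^t - \E_{X\sim\cD}\iprod{X-\mu^*,u}^t$ is a sum of squares in $u$, and specializing $u = \mu - \mu^*$ gives $\proves{O(t)} \E_{X\sim\cD}\iprod{X-\mu^*,\mu-\mu^*}^t \leq t^{t/2}\|\mu-\mu^*\|^t$. Combining this with the previous display and using $t \geq 4$ (so $0.1 \leq t^{t/2}$) yields $\proves{O(t)} \sum_{i\in[n]}\iprod{Y_i-\mu^*,\mu-\mu^*}^t \leq 2 n\, t^{t/2} \|\mu-\mu^*\|^t$; substituting into the H\"older bound gives exactly $2\eps^{t-1} n^t t^{t/2}\|\mu-\mu^*\|^t$ since $(\eps n)^{t-1}\cdot 2n = 2\eps^{t-1}n^t$. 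The computation is routine; the one step deserving care is the passage from the spectral inequality (E4) to a genuine low-degree SoS proof, which rests on the standard observation that a PSD quadratic form in the degree-$(t/2)$ monomials of $\mu$ is a sum of squares of degree-$(t/2)$ polynomials, so the resulting certificate has degree $t$. One should also record that every polynomial appearing carries coefficients bounded by $\poly(n)$ (using $\|Y_i - \mu^*\| \leq \poly(d)$ with high probability), so the numerical-accuracy discussion of Section~\ref{sec:overview-A} applies verbatim.
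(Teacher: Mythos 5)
Your proof is correct and follows essentially the same route as the paper: SoS H\"older to pull out $(\eps n)^{t-1}$, extension of the sum from $S_b$ to $[n]$ using evenness of $t$, and then condition (E4) plus explicit boundedness to control $\sum_{i\in[n]}\iprod{Y_i-\mu^*,\mu-\mu^*}^t$ by $2n\,t^{t/2}\|\mu-\mu^*\|^t$. The only cosmetic difference is that the paper packages this last step as a separate lemma (Lemma~\ref{lem:tensor-spectral-bound}), whose proof you have reproduced inline, including the correct observation that the PSD domination in (E4) yields a degree-$O(t)$ SoS certificate via the quadratic form at $(\mu-\mu^*)^{\otimes t/2}$.
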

\noindent
  Finally, the fourth type of error comes from the $\e n$ adversarially-chosen vectors.
  We prove this lemma by using the bounded-moments inequality in $\cA$.

\begin{lemma}
\label{lem:robust-bad-samples}
   Let $\eps, t, \mu^*$ and $X_1,\ldots,X_n \in \R^d$ be as in Theorem~\ref{thm:robust-main}, and suppose they satisfy (E1)--(E4).
   Then, we have
  \[
    \cA \proves{O(t)} \Paren{\sum_{i \notin S_g}w_i \iprod{X_i - \mu^*, \mu - \mu^*}}^t \leq 2 \eps^{t-1} n^t \cdot t^{t/2} \cdot \|\mu - \mu^*\|^t \mper
  \]
\end{lemma}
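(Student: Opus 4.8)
The plan is a short chain: one application of the SoS H\"older inequality to decouple the selection weights $w_i$ from the inner products, followed by the moment axiom of $\cA$, with the single wrinkle that $\cA$'s moment bound is centred at the empirical mean $\mu$ rather than at $\mu^*$. The pieces of $\cA$ we need (available since $\cAhat \proves{O(t)}\cA$) are: the Booleanness axioms $w_i^2 = w_i$, which give $\cA\proves{2} 0\le w_i\le 1$ via $1-w_i = (1-w_i)^2$; the size axiom, which here (with $\tau=0$) reads $\sum_{i\in[n]} w_i = (1-\eps)n$; and, specializing item~3 of Lemma~\ref{lem:general-structured-subset-polys} to the degree-$1$ polynomial $f(\mu)=\mu-\mu^*$ (equivalently, axiom~4 of $\cA$ with the single centre $\mu_1=\mu^*$), the moment inequality $\cA\proves{O(t)} \sum_{i\in[n]} w_i\iprod{X_i-\mu,\mu-\mu^*}^t \le 2(1-\eps)n\,t^{t/2}\,\|\mu-\mu^*\|^t$. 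From the corruption model we use only $|[n]\setminus S_g| = \eps n$.

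First I would bound the number of selected outliers: $\eps n - \sum_{i\notin S_g} w_i = \sum_{i\notin S_g}(1-w_i) = \sum_{i\notin S_g}(1-w_i)^2$ is manifestly a sum of squares, so $\cA\proves{O(t)} 0 \le \sum_{i\notin S_g} w_i \le \eps n$, and iterating the multiplication (each step multiplying $\sum_{i\notin S_g}w_i\le\eps n$ by a nonnegative quantity) gives $\cA\proves{O(t)}(\sum_{i\notin S_g} w_i)^{t-1} \le (\eps n)^{t-1}$. Then, exactly as in the proof of Lemma~\ref{lem:mixture-ident-dist-1}, apply the SoS H\"older inequality (Fact~\ref{fact:sos-holder}, using $w_i^2=w_i$ and that $t$ is even) to the $w$-weighted sum over $i\notin S_g$ of the degree-$1$ polynomials $\iprod{X_i-\mu^*,\mu-\mu^*}$, and immediately enlarge the resulting inner sum to all of $[n]$, since each added term $w_i\iprod{X_i-\mu^*,\mu-\mu^*}^t = (w_i\iprod{X_i-\mu^*,\mu-\mu^*}^{t/2})^2$ is a square:
\[
\cA\proves{O(t)} \Paren{\sum_{i\notin S_g} w_i\iprod{X_i-\mu^*,\mu-\mu^*}}^t \le (\eps n)^{t-1}\sum_{i\in[n]} w_i\iprod{X_i-\mu^*,\mu-\mu^*}^t\mper
\]

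It remains to control $\sum_{i\in[n]} w_i\iprod{X_i-\mu^*,\mu-\mu^*}^t$. Here I would write $X_i-\mu^* = (X_i-\mu)+(\mu-\mu^*)$, so $\iprod{X_i-\mu^*,\mu-\mu^*} = \iprod{X_i-\mu,\mu-\mu^*} + \|\mu-\mu^*\|^2$, and use the standard SoS inequality $(a+b)^t\le 2^t(a^t+b^t)$ for even $t$; multiplying by $w_i\ge 0$ and summing bounds the quantity by $2^t\sum_i w_i\iprod{X_i-\mu,\mu-\mu^*}^t + 2^t(\sum_i w_i)\|\mu-\mu^*\|^{2t}$. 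The first term is handled by the moment axiom and the second by the size axiom, giving
\[
\cA\proves{O(t)} \Paren{\sum_{i\notin S_g} w_i\iprod{X_i-\mu^*,\mu-\mu^*}}^t \le 2^{O(t)}\,\eps^{t-1} n^t\, t^{t/2}\,\|\mu-\mu^*\|^t \;+\; (2\eps)^t n^t\,\|\mu-\mu^*\|^{2t}\mper
\]

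The step I expect to be the main obstacle, or rather the only real subtlety, is the $\|\mu-\mu^*\|^{2t}$ term: it is an unavoidable by-product of re-centering from $\mu^*$ to the empirical mean $\mu$, since no polynomial $f(\mu)$ makes $\cA$'s moment axiom speak about deviations $\iprod{X_i-\mu^*,\cdot}$ directly. This is harmless downstream: when the four error lemmas are assembled in Lemma~\ref{lem:robust-main-overview} the left-hand side is $\asymp n^t\|\mu-\mu^*\|^{2t}$, so for $\eps$ sufficiently small the $(2\eps)^t n^t\|\mu-\mu^*\|^{2t}$ contribution is simply absorbed there (one could equally well carry it in the statement of the lemma), and the $2^{O(t)} = (O(1))^t$ prefactor on the main term is absorbed into the $(O(t))^{t/2}$ slack of Lemma~\ref{lem:robust-main-overview}. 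Every ingredient used — H\"older, $(a+b)^t\le 2^t(a^t+b^t)$, $0\le w_i\le 1$ from $w_i^2=w_i$, and the $\cA$ moment axiom — has a degree-$O(t)$ SoS proof with polynomially bounded coefficients, so the whole derivation is a degree-$O(t)$ SoS proof as required.
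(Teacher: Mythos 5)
Your core chain --- SoS H\"older with the Booleanness axioms, extending the inner sum from $S_b$ to $[n]$ term-by-term since $t$ is even, invoking the moment axiom of $\cA$ in the direction $\mu-\mu^*$, and bounding $\Paren{\sum_{i\notin S_g}w_i}^{t-1}\le(\eps n)^{t-1}$ --- is exactly the paper's proof. The difference is how the centering mismatch is handled, and you correctly identified that this is the only real subtlety. The paper's own proof never bounds the quantity with $X_i-\mu^*$ at all: it bounds $\Paren{\sum_{i\in S_b}w_i\iprod{X_i-\mu,\mu-\mu^*}}^t$, i.e.\ the $\mu$-centered version, which is the quantity that actually appears in the proof of Lemma~\ref{lem:robust-main-overview} (there the identity $\iprod{X_i-\mu^*,\mu-\mu^*}=\iprod{X_i-\mu,\mu-\mu^*}+\|\mu-\mu^*\|^2$ is applied \emph{before} this lemma is invoked, and the $\sum_{i\in S_b}w_i\|\mu-\mu^*\|^2$ correction is moved to the left-hand side, turning $\sum_{i\in[n]}w_i$ into $\sum_{i\in S_g}w_i$); the statement's ``$X_i-\mu^*$'' is best read as a typo for ``$X_i-\mu$''. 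You instead keep $\mu^*$ as the center and absorb the re-centering on the right via $(a+b)^t\le 2^t(a^t+b^t)$, which is why you end up with the extra $(2\eps)^t n^t\|\mu-\mu^*\|^{2t}$ term and a $2^{O(t)}$ prefactor not present in the stated bound --- so, strictly, you prove a weaker-looking variant rather than the literal inequality. Your observation that this variant suffices is sound: in Lemma~\ref{lem:robust-main-overview} one has $\cA\proves{2}\sum_{i\in S_g}w_i\ge(1-2\eps)n$, so the stray $\|\mu-\mu^*\|^{2t}$ term is dominated by the left-hand side for small $\eps$, and $2^{O(t)}$ disappears into the $\exp(t)$ slack already there. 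In short: same proof engine; the paper does the re-centering once in the parent lemma and keeps this lemma's bound clean, while you do it locally at the cost of an extra (harmless) term --- your route is self-contained for the statement as literally written, the paper's keeps constants tight and matches how the lemma is consumed.
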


\noindent
With these lemmas in place, we now have the tools to prove Lemma~\ref{lem:robust-main-overview}.

\begin{proof}[Proof of Lemma~\ref{lem:robust-main-overview}]
Let $Y_1, \ldots, Y_n \in \R^d$ be as in Theorem~\ref{thm:robust-main}.
  We expand the norm $\|\mu - \mu^* \|^2$ as $\iprod{\mu - \mu^*, \mu - \mu^*}$ and rewrite $\sum_{i \in [n]} w_i \mu$ as $\sum_{i \in [n]} w_i X_i$:
  \begin{align}
    \sum_{i \in [n]} w_i \|\mu - \mu^*\|^2 &\stackrel{(a)}{=} \sum_{i \in [n]} w_i \iprod{X_i - \mu^*, \mu - \mu^*} \nonumber \\
    &\stackrel{(b)}{=} \sum_{i \in S_g} w_i \iprod{X_i - \mu^*, \mu - \mu^*} + \sum_{i \in S_b} w_i \iprod{X_i - \mu^*, \mu - \mu^*} \nonumber \\
    &\stackrel{(c)}{=} \sum_{i \in S_g} \iprod{X_i - \mu^*, \mu - \mu^*} + \sum_{i \in S_g}(w_i-1) \iprod{X_i - \mu^*, \mu - \mu^*} \nonumber \\
    &\qquad+ \sum_{i \in S_b} w_i \iprod{X_i - \mu^*, \mu - \mu^*} \nonumber \\
    &\stackrel{(d)}{=} \sum_{i \in [n]} \iprod{X_i - \mu^*, \mu - \mu^*} + \sum_{i \in S_g}(w_i-1) \iprod{X_i - \mu^*, \mu - \mu^*} \nonumber \\
    &\qquad-\sum_{i \in S_b} \iprod{Y_i - \mu^*, \mu - \mu^*} + \sum_{i \in S_b} w_i \iprod{X_i - \mu^*, \mu - \mu^*} \nonumber \; ,
  \end{align}
  where (a) follows from the mean axioms, (b) follows from splitting up the uncorrupted and the corrupted samples, (c) follows by adding and subtracting $1$ to each term in $S_g$, and (d) follows from the assumption that $Y_i = X_i$ for all $i \in [n]$.
  We will rearrange the last term by adding and subtracting $\mu$.
  Note the following polynomial identity:
  \[
    \iprod{X_i - \mu^*, \mu - \mu^*} = \iprod{X_i - \mu, \mu - \mu^*} + \|\mu - \mu^*\|^2
  \]
  and put it together with the above to get
  \begin{align*}
    \sum_{i \in [n]} w_i \|\mu - \mu^*\|^2 & = \sum_{i \in S_g} \iprod{X_i - \mu^*, \mu - \mu^*}+  \sum_{i \in S_g}(w_i-1) \iprod{X_i - \mu^*, \mu - \mu^*} \\
    &\qquad-\sum_{i \in S_b} \iprod{Y_i - \mu^*, \mu - \mu^*} +  \sum_{i \in S_b} w_i \iprod{X_i - \mu, \mu - \mu^*} +  \sum_{i \in S_b} w_i \|\mu - \mu^*\|^2 \mper
  \end{align*}
  which rearranges to
  \begin{align*}
  \sum_{i \in S_g} w_i \|\mu - \mu^*\|^2 &= \sum_{i \in S_g} \iprod{X_i - \mu^*, \mu - \mu^*}+  \sum_{i \in S_g}(w_i-1) \iprod{X_i - \mu^*, \mu - \mu^*} \\
    &\qquad-\sum_{i \in S_b} \iprod{Y_i - \mu^*, \mu - \mu^*} +  \sum_{i \in S_b} w_i \iprod{X_i - \mu, \mu - \mu^*} \mper
  \end{align*}
  Now we use $\proves{t} (x + y + z + w)^t \leq \exp(t) \cdot (x^t + y^t + z^t + w^t)$ for any even $t$, and Lemma \ref{lem:robust-good-samples}, Lemma~\ref{lem:robust-missed-good-samples}, and Lemma~\ref{lem:robust-bad-samples} and simplify to conclude
  \[
    \cA \proves{O(t)} \Paren{\sum_{i \in S_g} w_i}^t \|\mu - \mu^*\|^{2t} \leq \exp(t) \cdot t^{t / 2} \cdot n^t \cdot \e^{t-1} \cdot  \|\mu - \mu^* \|^t\mper
  \]
  Lastly, since $\cA \proves{2} \sum_{i \in T} w_i \geq (1 - 2\e)n$, we get
  \[
    \cA \proves{O(t)} \|\mu - \mu^*\|^{2t} \leq \exp(t) \cdot t^{t / 2} \cdot \e^{t-1} \cdot  \|\mu - \mu^* \|^t \; ,
  \]
  as claimed.
\end{proof}

\subsection{Rounding}
The rounding phase of our algorithm is extremely simple.
If $\pE$ satisfies $\cA$, we have by Lemma~\ref{lem:robust-main-overview} and pseudoexpectation Cauchy-Schwarz that
\[
\pE \|\mu - \mu^*\|^{2t} \leq  \exp(t) \cdot t^{t/2} \cdot \eps^{t - 1} \cdot \pE \Paren{ \| \mu - \mu^* \|^t} \leq \exp(t) \cdot t^{t/2} \cdot \eps^{t - 1} \cdot \pE \Paren{ \| \mu - \mu^* \|^{2t}}^{1/2}
\]
which implies that
\begin{equation}
\label{eq:robust-lift}
\pE \|\mu - \mu^*\|^{2t} \leq \exp (t) \cdot t^t \cdot \eps^{2(t - 1)}\; .
\end{equation}
Once this is known, analyzing $\| \pE \mu - \mu^*\|$ is straightforward.
By \eqref{eq:robust-lift} and pseudo-Cauchy-Schwarz again,
\[
  \|\pE[\mu] - \mu^*\|^2 \leq \pE \|\mu - \mu^*\|^2 \leq \Paren{\pE \|\mu - \mu^*\|^{2t}}^{1/t} \leq O (t \cdot \e^{2 - 2/t} ) \; ,
\]
which finishes analyzing the algorithm.

\subsection{Proofs of Lemmata~\ref{lem:robust-good-samples}--\ref{lem:robust-bad-samples}}
\label{sec:robust-lemma-proofs}

We first prove Lemma~\ref{lem:robust-good-samples}, which is a relatively straightforward application of SoS Cauchy Schwarz.
\begin{proof}[Proof of Lemma~\ref{lem:robust-good-samples}]
We have
\begin{align*}
\proves{O(t)} \Paren{\sum_{i \in S_g} \iprod{X_i - \mu^*, \mu - \mu^*}}^t &= \Paren{\Iprod{\sum_{i \in S_g} (X_i - \mu^*), \mu - \mu^*}}^t \\
&\leq \left\| \sum_{i \in S_g} (X_i - \mu^*) \right\|^t \| \mu - \mu^* \|^t \\
&\leq \Paren{n \cdot O \Paren{\eps^{1 - 1 / t}} \cdot t^{1/2}}^t \| \mu - \mu^* \|^t \; ,
\end{align*}
where the last inequality follows from (E3).
This completes the proof.
\end{proof}

Before we prove Lemmata~\ref{lem:robust-missed-good-samples}--\ref{lem:robust-bad-samples}, we prove the following lemma which we will use repeatedly:
\begin{lemma}
\label{lem:tensor-spectral-bound}
  Let $\eps, t, \mu^*$ and $Y_1, \ldots, Y_n \in \R^d$ be as in Theorem~\ref{thm:robust-main}, and suppose they satisfy (E4).
  Then, we have
  \[
  \cA \proves{O(t)} \sum_{i \in [n]} \iprod{Y_i - \mu^*, \mu - \mu^*}^t \leq 2 n t^{t / 2} \| \mu - \mu^* \|^t \mper
  \]
\end{lemma}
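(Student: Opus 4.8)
The plan is to realize the left-hand side as a quadratic form in the lifted vector $v := (\mu - \mu^*)^{\tensor t/2}$ and then apply the spectral bound (E4), thereby reducing everything to the population $t$-th moment of $\cD$, which is controlled directly by explicit boundedness. Concretely: since $t$ is even, for each fixed sample $Y_i$ one has the polynomial identity $\iprod{Y_i - \mu^*, \mu - \mu^*}^t = \Iprod{(Y_i - \mu^*)^{\tensor t/2}, \, (\mu - \mu^*)^{\tensor t/2}}^2$, so, setting $M_i := \dyad{(Y_i - \mu^*)^{\tensor t/2}}$, we obtain the degree-$t$ identity $\sum_{i \in [n]} \iprod{Y_i - \mu^*, \mu - \mu^*}^t = v^\top \Paren{\sum_{i} M_i} v$ as polynomials in $\mu$.

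Next, let $N := \E_{X \sim \cD} \dyad{(X - \mu^*)^{\tensor t/2}}$. Condition (E4) is exactly the statement $\sum_i M_i \preceq n N + 0.1 n \Id$ as $d^{t/2} \times d^{t/2}$ matrices, so $B := n N + 0.1 n \Id - \sum_i M_i$ is PSD; writing $B = W^\top W$ shows $v^\top B v = \norm{Wv}^2$ is a sum of squares of degree-$(t/2)$ polynomials in $\mu$, hence $\proves{t} \sum_i \iprod{Y_i - \mu^*, \mu - \mu^*}^t \leq n\, v^\top N v + 0.1 n \norm{v}^2$. Now $v^\top N v = \E_{X \sim \cD} \iprod{X - \mu^*, \mu - \mu^*}^t$ as polynomials in $\mu$; since $\cD$ is $10t$-explicitly bounded and $t$ is even with $t \leq 10t$, there is a degree-$t$ SoS proof in an indeterminate $u$ that $\E_{X \sim \cD} \iprod{X - \mu^*, u}^t \leq t^{t/2} \norm{u}^t$, and substituting $u \mapsto \mu - \mu^*$ (SoS proofs are closed under substitution of polynomials for indeterminates) gives $\proves{t} v^\top N v \leq t^{t/2} \norm{\mu - \mu^*}^t$. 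Finally $\norm{v}^2 = \norm{\mu - \mu^*}^t$ identically, and since $t^{t/2} \geq 1$ and $\norm{\mu - \mu^*}^t$ is a sum of squares, $0.1 n \norm{\mu - \mu^*}^t \leq 0.1 n\, t^{t/2} \norm{\mu - \mu^*}^t$; chaining these displays and using $1.1 \leq 2$ yields $\proves{O(t)} \sum_i \iprod{Y_i - \mu^*, \mu - \mu^*}^t \leq 2 n t^{t/2} \norm{\mu - \mu^*}^t$, which a fortiori gives the claimed $\cA \proves{O(t)}$ version (in fact no axiom of $\cA$ is used).

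The only steps requiring care are the two routine SoS facts invoked above — that a numeric PSD matrix $B$ yields the sum-of-squares certificate $v^\top B v = \norm{Wv}^2$ when $v$ is a vector of polynomials, and that low-degree SoS proofs survive substitution of polynomials for indeterminates — together with checking that the tensor-power indexing lines up so that the $d^{t/2} \times d^{t/2}$ matrices of (E4) act on $v = (\mu - \mu^*)^{\tensor t/2}$. Everything else is a polynomial identity or a one-line monotone comparison, and the total degree stays $O(t)$ throughout; I expect the bookkeeping of the tensor lift, rather than any genuine difficulty, to be the main thing to get right.
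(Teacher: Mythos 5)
Your proof is correct and follows essentially the same route as the paper: both realize the sum as the quadratic form of $\sum_i \dyad{(Y_i-\mu^*)^{\tensor t/2}}$ at $(\mu-\mu^*)^{\tensor t/2}$, invoke (E4) as a PSD comparison (the SoS certificate being exactly your $v^\top B v = \norm{Wv}^2$ observation), and then bound the population term $\E_{X\sim\cD}\iprod{X-\mu^*,\mu-\mu^*}^t \leq t^{t/2}\norm{\mu-\mu^*}^t$ via explicit boundedness, absorbing the $0.1\,\Id$ slack into the factor $2$. Your write-up merely spells out the routine SoS facts (substitution, PSD-to-SoS) that the paper leaves implicit, and correctly notes the tensor power should be $t/2$ (the paper's displayed $\otimes 2$ is a typo).
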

\begin{proof}
We have that
\begin{align*}
&\proves{t}\sum_{i \in [n]} \iprod{Y_i - \mu^*, \mu - \mu^*}^t = \Brac{(\mu - \mu^*)^{\otimes 2}}^\top \sum_{i \in [n]} \Brac{(Y_i - \mu^*)^{\otimes t / 2}} \Brac{ (Y_i - \mu^*)^{\otimes t / 2}}^\top \Brac{(\mu - \mu^*)^{\otimes 2}} \\
&\qquad \stackrel{(a)}{\leq} n \left( \Brac{(\mu - \mu^*)^{\otimes 2}}^\top \Paren{ \E_{X \sim \cD} \Brac{(X - \mu^*)^{\otimes t/2}} \Brac{(X - \mu^*)^{\otimes t/2}}^\top + 0.1 \cdot \Id } \Brac{(\mu - \mu^*)^{\otimes 2}} \right)\\
&\qquad = n \cdot \E_{X \sim \cD} \iprod{X - \mu^*, \mu - \mu^*}^t + n \cdot 0.1 \cdot \| \mu - \mu^* \|^t \\
&\qquad \stackrel{(b)}{\leq} 2n \cdot t^{t / 2} \| \mu - \mu^* \|^t \; ,
\end{align*}
where (a) follows from (E4) and (b) follows from $10t$-explicitly boundedness.
\end{proof}

\noindent
We now return to the proof of the remaining Lemmata.
\begin{proof}[Proof of Lemma~\ref{lem:robust-missed-good-samples}]
  We start by applying H\"older's inequality, Fact~\ref{fact:sos-holder}, (implicitly using that $w_i^2 = w_i \proves{2} (1-w_i)^2 = 1-w_i$), to get
  \begin{align*}
  \cA \proves{O(t)} & \Paren{\sum_{i \in S_g}(w_i-1) \iprod{X_i - \mu^*, \mu - \mu^*}}^t = \Paren{\sum_{i \in S_g}(1-w_i) \iprod{X_i - \mu^*, \mu - \mu^*}}^t\\
  & \leq \Paren{\sum_{i \in S_g} (w_i-1)}^{t-1} \Paren{\sum_{i \in S_g} \iprod{X_i - \mu^*, \mu - \mu^*}^t}\mper
  \end{align*}
  By Lemma~\ref{lem:tensor-spectral-bound}, we have 
  \begin{align*}
  \cA \proves{O(t)} \sum_{i \in S_g} \iprod{X_i - \mu^*, \mu - \mu^*}^t &\leq \sum_{i \in [n]} \iprod{Y_i - \mu^*, \mu - \mu^*}^t \\
  &\leq 2 n \cdot t^{t / 2}  \cdot \|\mu - \mu^*\|^t \mper
  \end{align*}
  At the same time,
  \[
    A \proves{2} \sum_{i \in T} (1-w_i) = (1-\e) n - \sum_{i \in [n]} w_i + \sum_{i \notin T} w_i = \sum_{i \notin T} w_i \leq \e n\mper
  \]
  So putting it together, we have
  \[
    \cA \proves{O(t)} \Paren{\sum_{i \in T}(w_i-1) \iprod{X_i - \mu^*, \mu - \mu^*}}^t \leq 2 (\e n)^{t-1} \cdot n \cdot t^{t / 2} \cdot \|\mu - \mu^*\|^t \; ,
  \]
  as claimed.
\end{proof}

\begin{proof}[Proof of Lemma~\ref{lem:robust-removed-good-samples}]
We apply H\"{o}lder's inequality to obtain that
\begin{align*}
\proves{O(t)} \Paren{\sum_{i \in S_b} \iprod{X_i - \mu^*, \mu - \mu^*}}^t &\leq |S_b|^{t - 1} \sum_{i \in S_b} \iprod{Y_i - \mu^*, \mu - \mu^*}^t \\
&\stackrel{(a)}{\leq} (\eps n)^{t - 1} \sum_{i \in [n]} \iprod{Y_i - \mu^*, \mu - \mu^*}^t \\
&\stackrel{(b)}{\leq} 2 (\eps n)^{t - 1} n t^{t / 2} \| \mu - \mu^* \|^t \; ,
\end{align*}
where (a) follows from the assumption on the size of $S_b$ and since the additional terms in the sum are SoS, and (b) follows follows from Lemma~\ref{lem:tensor-spectral-bound}.
This completes the proof.
\end{proof}

\begin{proof}[Proof of Lemma~\ref{lem:robust-bad-samples}]
  The proof is very similar to the proof of the two previous lemmas, except that we use the moment bound inequality in $\cA$.
  Getting started, by H\"older's:
  \[
    \cA \proves{O(t)} \Paren{\sum_{i \in S_b}w_i \iprod{X_i - \mu, \mu - \mu^*}}^t \leq \Paren{\sum_{i \in S_b} w_i}^{t-1} \Paren{\sum_{i \in S_b} w_i \iprod{X_i - \mu, \mu - \mu^*}^t}
  \]
  By evenness of $t$,
  \[
   \proves{t} \sum_{i \in S_b} w_i \iprod{X_i - \mu, \mu - \mu^*}^t \leq \sum_{i \in [n]} w_i \iprod{X_i - \mu, \mu - \mu^*}^t\mper
  \]
  Combining this with the moment bound in $\cA$,
  \[
   \cA \proves{O(t)} \Paren{\sum_{i \in S_b}w_i \iprod{X_i - \mu, \mu - \mu^*}}^t \leq \Paren{\sum_{i \in S_b} w_i}^{t-1} \cdot 2 \cdot t^{t / 2} \cdot n \cdot \|\mu - \mu^*\|^t\mper
  \]
  Finally, clearly $\cA \proves{2} \sum_{i \notin T} w_i \leq \e n$, which finishes the proof.
\end{proof}


\section{Encoding structured subset recovery with polynomials}
\label{sec:moment-polys}

The goal in this section is to prove Lemma~\ref{lem:general-structured-subset-polys}.
The eventual system $\cAhat$ of polynomial inequalities we describe will involve inequalities among matrix-valued polynomials.
We start by justifying the use of such inequalities in the SoS proof system.

\subsection{Matrix SoS proofs}
Let $x = (x_1,\ldots,x_n)$ be indeterminates.
We describe a proof system which can reason about inequalities of the form $M(x) \succeq 0$, where $M(x)$ is a symmetric matrix whose entries are polynomials in $x$.

Let $M_1(x),\ldots,M_m(x)$ be symmetric matrix-valued polynomials of $x$, with $M_i(x) \in \R^{s_i \times s_i}$, and let $q_1(x),\ldots,q_m(x)$ be scalar polynomials.
(If $s_i = 1$ then $M_i$ is a scalar valued polynomial.)
Let $M(x)$ be another matrix-valued polynomial.
We write
\[
  \{ M_1 \succeq 0,\ldots,M_m \succeq 0, q_1(x)=0,\ldots,q_m(x)=0 \} \proves{d} M \succeq 0
\]
if there are vector-valued polynomials $\{r_{S}^j\}_{j \leq N, S \subseteq [m]}$ (where the $S$'s are multisets), a matrix $B$, and a matrix $Q$ whose entries are polynomials in the ideal generated by $q_1,\ldots,q_m$, such that
\[
  M = B^\top \Brac{\sum_{S \subseteq [m]} \Paren{\sum_j (r_S^j(x))(r_S^j(x))^\top } \otimes \Brac{\otimes_{i \in S} M_i(x)}} B + Q(x)
\]
and furthermore that $\deg \Paren{\sum_j (r_S^j(x))(r_S^j(x))^\top } \otimes \Brac{\otimes_{i \in S} M_i(x)} \leq d$ for every $S \subseteq [m]$, and $\deg Q \leq d$.
Observe that in the case $M_1,\ldots,M_m,M$ are actually $1 \times 1$ matrices, this reduces to the usual notion of scalar-valued sum of squares proofs.

Adapting pseudodistributions to the matrix case,
we say a pseudodistribution $\pE$ of degree $2d$ satisfies the inequalities $\{ M_1(x) \succeq 0, \ldots, M_m(x) \succeq 0 \}$ if for every multiset $S \subseteq [m]$ and $p \in \R[x]$ such that $\deg \Brac{p(x)^2 \cdot (\otimes_{i \in S} M_i(x))} \leq 2d$,
\[
  \pE \Brac{p(x)^2 \cdot (\otimes_{i \in S} M_i(x))} \succeq 0\mper
\]

For completeness, we prove the following lemmas in the appendix.
\begin{lemma}[Soundness]
  Suppose $\pE$ is a degree-$2d$ pseudodistribution which satisfies constraints $\{M_1 \succeq 0,\ldots,M_m \succeq 0\}$, and
  \[
    \{M_1 \succeq 0,\ldots,M_m \succeq 0\} \proves{2d} M \succeq 0\mper
  \]
  Then $\pE$ satisfies $\{M_1 \succeq 0,\ldots,M_m \succeq 0, M\succeq 0 \}$.
\end{lemma}

\begin{lemma}
  \label{lem:matrix-sos-quad-form}
  Let $f(x)$ be a degree-$\ell$ $s$-vector-valued polynomial in indeterminates $x$.
  Let $M(x)$ be a $s \times s$ matrix-valued polynomial of degree $\ell'$.
  Then
  \[
    \{ M \succeq 0\} \proves{\ell \ell'} \iprod{f(x), M(x) f(x)} \geq 0\mper
  \]
\end{lemma}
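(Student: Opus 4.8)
The plan is to prove this by simply unpacking the definition of a matrix SoS proof given just above; there is no analytic content here, only bookkeeping with Kronecker products. Since the only hypothesis is the single matrix inequality $M(x) \succeq 0$ and there are no scalar equality axioms, the ideal term $Q$ in the definition is forced to be $0$, and producing a degree-$d$ matrix SoS proof of $\iprod{f(x),M(x)f(x)} \geq 0$ amounts to exhibiting vector-valued polynomials $r^j(x)$ and a \emph{constant} matrix $B$ with
\[
  \iprod{f(x), M(x) f(x)} \;=\; B^\top \Brac{\Paren{\textstyle\sum_j r^j(x) r^j(x)^\top} \tensor M(x)} B \mcom
\]
where we take the multiset $S$ in the general definition to be the singleton picking out $M$ (so $\tensor_{i\in S} M_i = M$), and set all other $r_S^j$ to zero.

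The key step is an elementary identity: for any vector $v \in \R^s$ and any $s \times s$ matrix $N$ one has $\iprod{v,Nv} = \Tr(N v v^\top) = b^\top\,(v v^\top \tensor N)\, b$, where $b \in \R^{s^2}$ is the constant vector $b = \mathrm{vec}(\Id_s)$, i.e.\ $b_{(a,c)} = 1$ if $a = c$ and $0$ otherwise. To verify this I would index the Kronecker product by pairs and observe $(v v^\top \tensor N)_{(a,b),(c,d)} = v_a v_c N_{bd}$; sandwiching with $b$ keeps exactly the entries with $a = b$ and $c = d$, giving $\sum_{a,c} v_a v_c N_{ac} = \iprod{v,Nv}$. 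Applying this with $v = f(x)$ and $N = M(x)$, and taking the single vector-valued polynomial $r^1(x) := f(x)$ (so $\sum_j r^j(x) r^j(x)^\top = f(x) f(x)^\top$) together with the constant matrix $B := \mathrm{vec}(\Id_s)$, yields precisely the displayed form. For the degree: the matrix $(f(x) f(x)^\top) \tensor M(x)$ has entries of degree at most $2\ell + \ell'$ and $B$ is constant, so this is a certificate of degree $2\ell + \ell'$, which suffices.

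I do not expect any genuine obstacle. The only points requiring care are: (i) getting the Kronecker-product index convention right so that $B$ can genuinely be taken to be the \emph{constant} vectorization $\mathrm{vec}(\Id_s)$ (rather than something depending on $x$), since the definition of matrix SoS proofs insists $B$ be a constant matrix; and (ii) noting that the absence of equality axioms forces $Q = 0$, so nothing outside the sum-of-squares-times-axiom term is needed. Both are one-line checks, and the soundness lemma stated above then lets this inequality be used against pseudodistributions satisfying $M \succeq 0$ wherever the paper invokes it.
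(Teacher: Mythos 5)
Your proposal is correct and is essentially the paper's own proof: the paper likewise takes the constant vector $u=\mathrm{vec}(\Id_s)$ and writes $\iprod{f(x),M(x)f(x)} = u^\top\bigl(M(x)\otimes f(x)f(x)^\top\bigr)u$, differing from yours only in the (immaterial) order of the Kronecker factors. Your extra remarks on $Q=0$ and on the degree of the certificate are fine and consistent with how the lemma is actually invoked.
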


Polynomial-time algorithms to find pseudodistributions satisfying matrix-SoS constraints follow similar ideas as in the non-matrix case.
In particular, recall that to enforce a scalar constraint $\{ p(x) \geq 0 \}$, one imposes the convex constraint $\pE p(x) (x^{\tensor d})(x^{\tensor d})^\top \succeq 0$.
Enforcing a constraint $\{ M(x) \succeq 0\}$ can be accomplished similarly by adding constraints of the form $\pE M(x) \succeq 0, \pE M(x) p(x) \succeq 0$, etc.

\subsection{Warmup: Gaussian moment matrix-polynomials}
In this section we develop the encoding as low degree polynomials of the following properties of an $n$-variate vector $w$ and a $d$-variate vector $\mu$.
We will not be able to encode exactly these properties, but they will be our starting point.
Let $d, n \in \N$, and suppose there are some vectors (a.k.a. samples) $X_1,\ldots,X_n \in \R^d$.
\begin{enumerate}
\item Boolean: $w \in \{0,1\}^n$.
\item Size: $(1 - \tau) \alpha n \leq \sum_{i \in [n]} w_i \leq (1 + \tau) \alpha n$.
\item Empirical mean: $\mu = \tfrac 1 {\sum_{i \in [n]} w_i} \sum_{i \in [n]} w_i X_i$.
\item $t$-th Moments: the $t$-th empirical moments of the vectors selected by the vector $w$, centered about $\mu$, are subgaussian.
That is,
\[
  \max_{u \in \R^d} \frac 1 {\alpha n} \sum_{i \in [n]} w_i \iprod{X_i - \mu, u}^t \leq 2 \cdot t^{t/2} \|u\|^t \mper
\]
\end{enumerate}
The second property is already phrased as two polynomial inequalities, and the third can be rearranged to a polynomial equation.
For the first, we use polynomial equations $w_i^2 = w_i$ for every $i \in [n]$.
The moment constraint will be the most difficult to encode.
We give two versions of this encoding: a simple one which will work when the distribution of the structured subset of samples to be recovered is Gaussian, and a more complex version which allows for any explicitly bounded distribution.
For now we describe only the Gaussian version.
We state some key lemmas and prove them for the Gaussian case.
We carry out the general case in the following section.

To encode the bounded moment constraint, for this section we let $M(w,\mu)$ be the following matrix-valued polynomial
\[
  M(w,\mu) = \frac 1 {\alpha n} \sum_{i \in [n]} w_i \Brac{\Paren{X_i - \mu}^{\otimes t/2}}\Brac{\Paren{X_i - \mu} ^{\otimes t/2}}^\top
\]

\begin{definition}[Structured subset axioms, Gaussian version]
For parameters $\alpha \in [0,1]$ (for the size of the subset), $t$ (for which empirical moment to control), and $\tau > 0$ (to account for some empirical deviations), the structured subset axioms are the following matrix-polynomial inequalities on variables $w = (w_1,\ldots,w_n), \mu = (\mu_1,\ldots,\mu_d)$.
  \begin{enumerate}
    \item booleanness: $w_i^2 = w_i$ for all $i \in [n]$
    \item size: $(1 - \tau) \alpha n \leq \sum_{i \in [n]} w_i \leq (1 + \tau) \alpha n$
    \item $t$-th moment boundedness: $ M(w,\mu) \preceq 2 \cdot \E_{X \sim \cN(0,\Id)} \Brac{X^{\tensor t/2}}\Brac{X^{\tensor t/2}}^\top$.
    \item $\mu$ is the empirical mean: $\mu \cdot \sum_{i \in [n]} w_i  = \sum_{i \in [n]} w_i X_i$.
  \end{enumerate}
\end{definition}
Notice that in light of the last constraint, values for the variables $\mu$ are always determined by values for the variables $w$, so strictly speaking $\mu$ could be removed from the program.
However, we find it notationally convenient to use $\mu$.
We note also that the final constraint, that $\mu$ is the empirical mean, will be used only for the robust statistics setting but seems unnecessary in the mixture model setting.

Next, we state and prove some key lemmas for this Gaussian setting, as warmups for the general setting.

\begin{lemma}[Satisfiability, Gaussian case]
  Let $d \in \N$ and $\alpha = \alpha(d) > 0$.
  Let $t \in \N$.
  Suppose $(1 - \tau) \alpha n \geq d^{100 t}$.
  Let $0.1 > \tau > 0$.
  If $X_1,\ldots,X_n \in \R^d$ has a subset $S \subseteq [n]$ such that $\{X_i\}_{i \in S}$ are iid samples from $\cN(\mu^*,\Id)$ and $|S| \geq (1 - \tau) \alpha n$, then with probability at least $1 - d^{-8}$ over these samples, the $\alpha,t,\tau$ structured subset axioms are satisfiable.
\end{lemma}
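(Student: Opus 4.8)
The plan is to exhibit an explicit point satisfying all four axioms and then verify the only nontrivial one by matrix concentration. First I would pick a subset $S' \subseteq S$ with $(1-\tau)\alpha n \le |S'| \le (1+\tau)\alpha n$ (possible since $|S| \ge (1-\tau)\alpha n$: take $S' = S$ if $|S|$ is already at most $(1+\tau)\alpha n$, otherwise any subset of the right size), let $w \in \{0,1\}^n$ be the indicator vector of $S'$, and set $\mu := \frac{1}{|S'|}\sum_{i \in S'}X_i$. With these assignments booleanness, the size bound, and the empirical-mean equation hold by construction, so the entire content of the lemma is the $t$-th moment boundedness axiom $M(w,\mu) \preceq 2\Sigma$, where $\Sigma := \E_{X \sim \cN(0,\Id)}[X^{\otimes t/2}][X^{\otimes t/2}]^\top$. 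A useful observation I would use throughout is that $M(w,\mu)$ is a nonnegative combination of matrices $vv^\top$ with $v$ a symmetric tensor, so, exactly like $\Sigma$, it is supported on the symmetric subspace $V_{\mathrm{sym}} := \mathrm{span}\{u^{\otimes t/2}:u \in \R^d\}$; every error term below will be supported there too, and this is what lets me absorb errors into $\Sigma$ at the end.

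Next I would recenter at the true mean $\mu^*$. Since $\{X_i - \mu^*\}_{i \in S'}$ are iid $\cN(0,\Id)$ and $|S'| \ge d^{100t}$, Gaussian tail bounds and a union bound give, with probability $\ge 1 - d^{-10}$, both $\norm{\mu - \mu^*} \le d^{-40t}$ and $\norm{X_i - \mu^*} \le 3\sqrt{d\log(dn)}$ for every $i \in S'$. Expanding $(X_i - \mu)^{\otimes t/2} = \bigl((X_i - \mu^*)-(\mu-\mu^*)\bigr)^{\otimes t/2}$ by the tensor binomial theorem, every term other than $(X_i - \mu^*)^{\otimes t/2}$ carries at least one factor $(\mu-\mu^*)$ and so has norm at most $2^{t/2}\norm{X_i-\mu^*}^{t/2-1}\norm{\mu-\mu^*}$; expanding $vv^\top$ accordingly and averaging over $i \in S'$ yields
\[
  M(w,\mu) = \tfrac{|S'|}{\alpha n}\cdot\tfrac{1}{|S'|}\sum_{i \in S'}Z_i + E, \qquad Z_i := [(X_i-\mu^*)^{\otimes t/2}][(X_i-\mu^*)^{\otimes t/2}]^\top ,
\]
where $E$ is supported on $V_{\mathrm{sym}}$ with $\norm{E}_{\mathrm{op}} \le d^{-\Omega(t)}$ (using the two tail bounds and $|S'|/(\alpha n) \le 1+\tau$).

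Then I would prove the matrix concentration bound for $\frac{1}{|S'|}\sum_{i \in S'}Z_i$. The $Z_i$ are iid PSD with mean $\Sigma$ and $\norm{\Sigma}_{\mathrm{op}} \le t^{O(t)}$; on the truncation event $\norm{Z_i}_{\mathrm{op}} = \norm{X_i - \mu^*}^t = d^{O(t)}$, and $Z_iZ_i^\top \preceq \norm{Z_i}_{\mathrm{op}}Z_i$ controls the matrix variance. The matrix Bernstein inequality, together with the enormous surplus $|S'| \ge d^{100t}$, then gives $\Norm{\tfrac{1}{|S'|}\sum_{i \in S'}Z_i - \Sigma}_{\mathrm{op}} \le d^{-\Omega(t)}$ with probability $\ge 1 - d^{-10}$. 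Combining this with the previous paragraph, $M(w,\mu) \preceq (1+\tau)\Sigma + E'$ for some $E'$ supported on $V_{\mathrm{sym}}$ with $\norm{E'}_{\mathrm{op}} \le d^{-\Omega(t)}$. It is standard that $\Sigma$ is positive definite on $V_{\mathrm{sym}}$ with least eigenvalue bounded below by a positive quantity depending only on $t$ (the condition number of the Gaussian degree-$(t/2)$ moment matrix restricted to symmetric tensors is $t^{O(t)}$, independent of $d$), so $E' \preceq \norm{E'}_{\mathrm{op}}\,\lambda_{\min}(\Sigma|_{V_{\mathrm{sym}}})^{-1}\Sigma \preceq d^{-\Omega(t)}\Sigma$, whence $M(w,\mu) \preceq (1+\tau+d^{-\Omega(t)})\Sigma \preceq 2\Sigma$ for $d$ large (recall $\tau < 0.1$). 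A union bound over the $O(1)$ bad events gives total failure probability at most $d^{-8}$.

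I expect the main obstacle to be the matrix concentration step: obtaining operator-norm concentration of the empirical $(t/2)$-th moment tensor at $d^{O(t)}$ sample complexity requires truncating the heavy degree-$t$ polynomial-of-Gaussian tails of the $Z_i$ and controlling a matrix variance proxy, although the surplus $|S'| \ge d^{100t}$ makes every quantitative estimate slack. The recentering of the second paragraph and the least-eigenvalue bound used to absorb errors into $\Sigma$ are routine but still need to be carried out. Note that Gaussianity is used only through these clean tail facts about $\norm{X_i - \mu^*}$ and about $\Sigma$, which is precisely where the general (merely $t$-explicitly-bounded) case cannot follow this route and instead relies on the more elaborate matrix-polynomial encoding behind Lemma~\ref{lem:general-structured-subset-polys}.
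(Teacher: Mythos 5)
Your proposal is correct and follows essentially the same route as the paper's (very terse) proof: take $w$ to be the indicator of the good set and $\mu$ its empirical mean, note that booleanness, size, and the mean equation are immediate, and establish the spectral axiom via concentration of the empirical mean to $\mu^*$ together with standard matrix concentration for the recentered empirical $t/2$-moment matrix. Your write-up simply supplies details the paper leaves to the reader, e.g.\ the tensor-binomial recentering and the absorption of the operator-norm error into $2\Sigma$ using that all matrices involved live on the symmetric subspace where $\Sigma$ has least eigenvalue $t^{-O(t)}$.
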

\begin{proof}
  Suppose $S$ has size exactly $(1 - \tau) \alpha n$; otherwise replace $S$ with a random subset of $S$ of size exactly $(\alpha - \tau)n$.
  As a solution to the polynomials, we will take $w$ to be the indicator vector of $S$ and $\mu = \frac 1 {|S|} \sum_{i \in [n]} w_i X_i$.
  The booleanness and size axioms are trivially satisfied.
  The spectral inequality
  \[
    \frac 1 {\alpha n} \sum_{i \leq [n]} w_i \Brac{\Paren{X_i - \mu}^{\otimes t/2}}\Brac{\Paren{X_i - \mu} ^{\otimes t/2}}^\top \preceq 2 \cdot \E_{X \sim \cN(0,\Id)} \Brac{X^{\tensor t/2}}\Brac{X^{\tensor t/2}}^\top
  \]
  follows from concentration of the empirical mean to the true mean $\mu^*$ and standard matrix concentration (see e.g. \cite{DBLP:journals/focm/Tropp12}).
\end{proof}

The next lemma is actually a corollary of Lemma~\ref{lem:matrix-sos-quad-form}.
\begin{lemma}[Moment bounds for polynomials of $\mu$, Gaussian case]
  Let $f(\mu)$ be a length-$d$ vector of degree-$\ell$ polynomials in indeterminates $\mu = (\mu_1,\ldots,\mu_k)$.
  The $t$-th moment boundedness axiom implies the following inequality with a degree $t\ell$ SoS proof.
  \begin{align*}
    & \left \{ M(w,\mu)\preceq 2 \cdot \E_{X \sim \cN(0,\Id)} \Brac{X^{\tensor t/2}}\Brac{X^{\tensor t/2}}^\top \right \}\\
    & \qquad \qquad \proves{O(t \ell)} \frac 1 {\alpha n} \sum_{i \in [n]} w_i \iprod{X_i - \mu, f(\mu)}^t \leq 2\cdot \E_{X \sim \cN(0,\Id)} \iprod{X,f(\mu)}^t\mper
  \end{align*}
\end{lemma}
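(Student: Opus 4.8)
The plan is to derive this as an immediate corollary of Lemma~\ref{lem:matrix-sos-quad-form}, as the sentence preceding the statement already signals; the whole content is choosing the right matrix and vector to plug in. Take the matrix-valued polynomial
\[
  N(w,\mu) \;=\; 2 \cdot \E_{X \sim \cN(0,\Id)} \Brac{X^{\tensor t/2}}\Brac{X^{\tensor t/2}}^\top \;-\; M(w,\mu)\mcom
\]
which is a genuine polynomial in the indeterminates $(w,\mu)$: the expectation is a fixed constant matrix and contributes only to the constant term, while $M(w,\mu)$ has degree $1$ in $w$ and degree $t$ in $\mu$, so $\deg N = O(t)$. The $t$-th moment boundedness axiom is precisely the assertion $N(w,\mu) \succeq 0$. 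For the vector, take $g(\mu) = f(\mu)^{\tensor t/2}$, a $d^{t/2}$-dimensional vector whose entries are polynomials of degree $\ell t/2$ in $\mu$.

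First I would invoke Lemma~\ref{lem:matrix-sos-quad-form} with this $N$ and $g$, obtaining
\[
  \{ N(w,\mu) \succeq 0 \} \;\proves{O(t\ell)}\; \iprod{g(\mu),\, N(w,\mu)\, g(\mu)} \geq 0\mper
\]
Next I would expand the two quadratic forms appearing in $N$ using the elementary tensor identity $\iprod{a^{\tensor m}, b^{\tensor m}} = \iprod{a,b}^m$, which is a polynomial identity and hence consumes no axioms. On the $M$ side each summand becomes $w_i \iprod{f(\mu)^{\tensor t/2}, (X_i-\mu)^{\tensor t/2}}^2 = w_i \iprod{X_i - \mu, f(\mu)}^t$, so $\iprod{g(\mu), M(w,\mu) g(\mu)} = \tfrac 1{\alpha n}\sum_{i} w_i \iprod{X_i - \mu, f(\mu)}^t$; on the constant side $\iprod{g(\mu),\, \E_X[X^{\tensor t/2}(X^{\tensor t/2})^\top]\, g(\mu)} = \E_X \iprod{X, f(\mu)}^t$. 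Substituting these two identities into the displayed inequality and rearranging gives exactly the claimed bound, with an SoS proof of degree $O(t\ell)$ — namely the degree of the quadratic form $\iprod{g(\mu), N(w,\mu) g(\mu)}$ itself.

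I do not anticipate a genuine obstacle; this lemma is essentially a change of variables inside Lemma~\ref{lem:matrix-sos-quad-form}. The only points requiring a little care are (i) confirming that $N(w,\mu)$ is a legitimate matrix-valued polynomial in the program variables, so that the matrix-SoS machinery of the previous subsection applies; and (ii) keeping the degree bookkeeping honest, since $\deg g = \ell t/2$ and $\deg N = O(t)$ yield a certificate of degree $O(t\ell)$ as stated. The same template will carry over to the general explicitly-bounded case treated in the next subsection, once $\cN(0,\Id)$ is replaced by the relevant distribution and its matrix moment bound is suitably encoded.
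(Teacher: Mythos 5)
Your proposal is correct and matches the paper's argument: the paper likewise obtains this lemma as a direct corollary of Lemma~\ref{lem:matrix-sos-quad-form}, applying the quadratic form at $f(\mu)^{\tensor t/2}$ to the PSD matrix polynomial given by the moment axiom and expanding via $\iprod{a^{\tensor m},b^{\tensor m}} = \iprod{a,b}^m$. Your degree bookkeeping and the identification of the axiom with $N(w,\mu)\succeq 0$ are exactly what the paper intends, so there is nothing to add.
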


\subsection{Moment polynomials for general distributions}
In this section we prove Lemma~\ref{lem:general-structured-subset-polys}.

We start by defining polynomial equations $\cAhat$, for which we introduce some extra variables
For every pair of multi-indices $\gamma,\rho$ over $[k]$ with degree at most $t/2$, we introduce a variable $M_{\gamma,\rho}$.
The idea is that $M = [ M_{\gamma,\rho} ]_{\gamma,\rho}$ forms an $n^{t/2} \times n^{t/2}$ matrix.
By imposing equations of the form $M_{\gamma,\rho} = f_{\gamma,\rho}(w,\mu)$ for some explicit polynomials $f_{\gamma,\rho}$ of degree $O(t)$, we can ensure that
\[
  \iprod{u^{\tensor t/2}, M u^{\tensor t/2}} = 2 \cdot t^{t/2} \|u\|^t - \frac 1 {\alpha n} \sum_{i \in [n]} w_i \iprod{X_i - \mu, u}^t\mper
\]
(This equation should be interpreted as an equality of polynomials in indeterminates $u$.)
Let $\cL$ be such a family of polynomial equations.
Our final system $\cAhat(\alpha,t,\tau)$ of polynomial equations and inequalities follows.
The important parameters are $\alpha$, controlling the size of the set of samples to be selected, and $t$, how many moments to control.
The parameter $\tau$ is present to account for random fluctuations in the sizes of the cluster one wants to recover.

\begin{definition}
  Let $\cAhat(\alpha,t,\tau)$ be the set of (matrix)-polynomial equations and inequalities on variables $w, \mu, M_{\gamma,\rho}$ containing the following.
  \begin{enumerate}
    \item Booleanness: $w_i^2 = w_i$ for all $i \in [n]$
    \item Size: $(1 - \tau) \alpha n \leq \sum w_i \leq (1 + \tau) \alpha n$.
    \item Empirical mean: $\mu \cdot \sum_{i \in [n]} w_i = \sum_{i \in [n]} w_i X_i$.
    \item The equations $\cL$ on $M$ described above.
    \item Positivity: $M \succeq 0$.
  \end{enumerate}
\end{definition}

In the remainder of this section we prove the satisfiability and moment bounds parts of Lemma~\ref{lem:general-structured-subset-polys}.
To prove the lemma we will need a couple of simple facts about SoS proofs.

\begin{fact}
  \label{fact:general-dist-1}
  Let $X_1,\ldots,X_m \in \R^d$.
    Let $v \in \R^d$ have $\|v\| \leq 1$.
  Let $Y_i = X_i + v$.
  Let $t \in \N$ be even.
  Suppose there is $C \in \R$ with $C \geq 1$ such that for all $s \leq t$,
  \[
    \frac 1m \sum_{i \in [m]} \|X_i\|^s \leq C^s
  \]
  Then
  \[
    \proves{t} \frac 1m \sum_{i \in [n]} \Brac{\iprod{X_i,u}^t - \iprod{Y_i,u}^t} \leq \Paren{2^{t} C^{t-1} \|v\|} \|u\|^t
  \]
  and similarly for $\tfrac 1m \sum_{i \in [n]} \Brac{\iprod{Y_i,u}^t - \iprod{X_i,u}^t}$.
\end{fact}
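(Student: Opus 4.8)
The plan is to expand by the binomial theorem and reduce to a single per-sample, per-degree sum-of-squares estimate. Writing $\iprod{Y_i,u} = \iprod{X_i,u} + \iprod{v,u}$ and expanding the $t$-th power yields the polynomial identity
\[
  \frac1m\sum_{i}\Paren{\iprod{X_i,u}^t - \iprod{Y_i,u}^t} = \sum_{j=1}^{t}\binom{t}{j}\,\frac1m\sum_{i}\Paren{-\iprod{v,u}^j\iprod{X_i,u}^{t-j}}\mper
\]
Since all the scalar coefficients $\binom{t}{j}/m$ are nonnegative, it suffices to prove, for every sample $X_i$ and every $1\le j\le t$, the degree-$t$ sum-of-squares inequality
\[
  \proves{t}\ \pm\,\iprod{v,u}^j\iprod{X_i,u}^{t-j}\ \le\ \|v\|^j\,\|X_i\|^{t-j}\,\|u\|^t\mper
\]
Call this $(\star)$; the ``$-$'' version summed against the identity gives the stated bound for $\tfrac1m\sum_i(\iprod{X_i,u}^t - \iprod{Y_i,u}^t)$, and the ``$+$'' version gives it for $\tfrac1m\sum_i(\iprod{Y_i,u}^t - \iprod{X_i,u}^t)$ (expand $\iprod{Y_i,u}^t$ the same way), which is the ``similarly'' clause.

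To prove $(\star)$ I would split on the parity of $j$. The basic ingredient is the standard fact that for any fixed vector $w$ and any $\ell$, $\|w\|^{2\ell}\|u\|^{2\ell} - \iprod{w,u}^{2\ell}$ is a sum of squares in $u$: it equals $(a-b)\sum_{p=0}^{\ell-1}a^{p}b^{\ell-1-p}$ with $a=\|w\|^2\|u\|^2$, $b=\iprod{w,u}^2$, where $a-b$ is SoS by the Lagrange identity and $a,b$ are SoS. When $j$ is even, $t-j$ is even too, so $\iprod{v,u}^j$ and $\iprod{X_i,u}^{t-j}$ are squares; for the ``$-$'' sign the left side of $(\star)$ is then $\le 0$, and for the ``$+$'' sign $(\star)$ follows by multiplying the two ``Cauchy--Schwarz to a power'' inequalities. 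When $j$ is odd, write $j=2q+1$ and $t-j=2r+1$ (so $t=2(q+r+1)$) and set $z = \iprod{v,u}^q\iprod{X_i,u}^r$, so that $\iprod{v,u}^j\iprod{X_i,u}^{t-j} = \iprod{v,u}\iprod{X_i,u}\,z^2$. Assuming $v\neq 0$ (the claim is trivial otherwise) and $X_i\neq 0$ (the term vanishes otherwise), set $\lambda = \|X_i\|/\|v\|$; expanding the square $\tfrac12\Paren{\sqrt\lambda\,\iprod{v,u}\,z \mp \tfrac1{\sqrt\lambda}\iprod{X_i,u}\,z}^2 \ge 0$ gives the SoS inequality $\pm\iprod{v,u}\iprod{X_i,u}z^2 \le \tfrac\lambda2\iprod{v,u}^2z^2 + \tfrac1{2\lambda}\iprod{X_i,u}^2z^2$. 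The two terms on the right are products of even powers of linear forms, namely $\iprod{v,u}^{2q+2}\iprod{X_i,u}^{2r}$ and $\iprod{v,u}^{2q}\iprod{X_i,u}^{2r+2}$, so the Cauchy--Schwarz powers bound them by $\|v\|^{2q+2}\|X_i\|^{2r}\|u\|^t$ and $\|v\|^{2q}\|X_i\|^{2r+2}\|u\|^t$; the choice of $\lambda$ makes both resulting quantities equal to $\tfrac12\|v\|^{2q+1}\|X_i\|^{2r+1}\|u\|^t$, which sum to the right side of $(\star)$. Everything here is homogeneous of degree exactly $t$ in $u$, with coefficients bounded in terms of the $\|X_i\|$ and $\|v\|$.

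Finally, summing $(\star)$ (with the ``$-$'' sign) over $i$ and $j$ against the identity and using the hypothesis $\tfrac1m\sum_i\|X_i\|^{s}\le C^{s}$ for $0\le s\le t$ gives
\[
  \proves{t}\ \frac1m\sum_{i}\Paren{\iprod{X_i,u}^t - \iprod{Y_i,u}^t}\ \le\ \|u\|^t\sum_{j=1}^{t}\binom{t}{j}\|v\|^j C^{t-j}\ =\ \|u\|^t\Paren{(\|v\|+C)^t - C^t}\mper
\]
It then remains to invoke the elementary scalar bound $(\|v\|+C)^t - C^t = \|v\|\sum_{j=1}^{t}\binom{t}{j}\|v\|^{j-1}C^{t-j} \le \|v\|\,C^{t-1}\sum_{j=1}^{t}\binom{t}{j} \le 2^t C^{t-1}\|v\|$, valid since $0\le\|v\|\le 1$ and $C\ge 1$, which yields the claimed estimate; the companion statement follows identically from the ``$+$'' version of $(\star)$. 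I expect the only genuine work to be the odd-$j$ case of $(\star)$, i.e. producing an explicit bounded-coefficient SoS certificate for a product of two odd powers of linear forms, while the even-$j$ case and the binomial accounting are routine manipulations built on the ``Cauchy--Schwarz to a power is SoS'' fact.
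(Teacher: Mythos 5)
Your proposal is correct and follows essentially the same route as the paper: expand $\iprod{Y_i,u}^t$ binomially, bound each cross term $\pm\iprod{v,u}^j\iprod{X_i,u}^{t-j}$ by $\|v\|^j\|X_i\|^{t-j}\|u\|^t$ via a degree-$t$ SoS Cauchy--Schwarz argument, and then sum using $\tfrac1m\sum_i\|X_i\|^s\le C^s$, $\|v\|\le 1$, and $C\ge 1$ to collect the $2^tC^{t-1}\|v\|$ factor. The only difference is that you spell out the SoS certificate for the per-term inequality (the even/odd-$j$ split with the weighted-square trick), which the paper simply invokes as ``by Cauchy--Schwarz.''
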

\begin{proof}
  Expanding $\iprod{Y_i,u}^t$, we get
  \[
    \iprod{Y_i,u}^t = \iprod{X_i+v,u}^t = \sum_{s \leq t} \binom ts \iprod{X_i,v}^s \iprod{v,u}^{t-s}\mper
  \]
  So,
  \[
    \frac 1m \sum_{i \in [m]} \Brac{\iprod{X_i,u}^t - \iprod{Y_i,u}^t} = - \frac 1m \sum_{i \in [m]} \sum_{s < t} \binom ts \iprod{X_i,u}^s \iprod{v,u}^{t-s}\mper
  \]
  For each term, by Cauchy-Schwarz, $\proves{t} \iprod{X_i,u}^s \iprod{v,u}^{t-s} \leq \|X_i\|^s\|v\|^{t-s} \cdot \|u\|^t$.
  Putting these together with the hypothesis on $\tfrac 1n \|X_i\|^s$ and counting terms finishes the proof.
\end{proof}

\begin{proof}[Proof of Lemma~\ref{lem:general-structured-subset-polys}: Satisfiability]
  By taking a random subset $S$ if necessary, we assume $|S| = (1 - \tau) \alpha n = m$.
  We describe a solution to the system $\cAhat$.
  Let $w$ be the $0/1$ indicator vector for $S$.
  Let $\mu = \tfrac 1m \sum_{i \in S} w_i X_i$.
  This satisfies the Boolean-ness, size, and empirical mean axioms.

  Describing the assignment to the variables $\{M_{\gamma,\rho}\}$ takes a little more work.
  Re-indexing and centering, let $Y_1 = X_{i_1} - \mu ,\ldots,Y_m = X_{i_m} - \mu$ be centered versions of the samples in $S$, where $S = \{i_1,\ldots,i_m\}$ and $\mu$ remains the empirical mean $\tfrac 1m \sum_{i \in S} X_i$.
  First suppose that the following SoS proof exists:
  \[
    \proves{t} \frac 1 {\alpha n} \sum_{i \in S} \iprod{Y_i,u}^t \leq 2 \cdot t^{t/2} \|u\|^t\mper
  \]
  Just substituting definitions, we also obtain
  \[
    \proves{t} \frac 1 {\alpha n} \sum_{i \in [n]} w_i \iprod{X_i - \mu, u}^t \leq 2 \cdot t^{t/2} \|u\|^t\mper
  \]
  \emph{where now $w$ and $\mu$ are scalars, not variables, and $u$ are the only variables remaining}.
  The existence of this SoS proof means there is a matrix $P \in \R^{d^{t/2} \times d^{t/2}}$ such that $P \succeq 0$ and
  \[
    \iprod{u^{\tensor t/2}, P u^{\tensor t/2}} = 2 t^{t/2} \|u\|^t - \frac 1 {\alpha n} \sum_{i \in [n]} w_i \iprod{X_i - \mu,u}^t\mper
  \]
  Let $M_{\gamma,\rho} = P_{\gamma,\rho}$.
  Then clearly $M \succeq 0$ and $M,w,\mu$ together satisfy $\cL$.

  It remains to show that the first SoS proof exists with high probability for large enough $m$.
  Since $t$ is even and $0.1 > \tau > 0$, it is enough to show that
  \[
    \proves{t} \frac 1m \sum_{i \in [S]} \iprod{Y_i,u}^t \leq 1.5 \cdot t^{t/2} \|u\|^t
  \]
  Let $Z_i = X_i - \mu^*$, where $\mu^*$ is the true mean of $\cD$.
  Let
  \[
    a(u) = \frac 1m \sum_{i \in S} \Brac{ \iprod{Z_i,u}^t - \iprod{Y_i,u}^t } \qquad b(u) = \frac 1m \sum_{i \in S} \iprod{Z_i,u}^t - \E_{Z \sim \cD - \mu^*} \iprod{Z,u}^t\mper
  \]
  We show that for $d \geq 2$,
  \[
    \proves{t} a(u) \leq \tfrac 1 4 \|u\|^t \qquad \proves{t} b(u) \leq \tfrac 1 4 \|u\|^t
  \]
  so long as the following hold
  \begin{enumerate}
    \item (bounded norms) for every $s \leq t$ it holds that $\tfrac 1m \sum_{i \in [m]} \|Z_i\|^s \leq s^{100s} d^{s/2}$. \label{itm:gendist-1}
    \item (concentration of empirical mean) $\|\mu - \mu^*\| \leq d^{-5t}$. \label{itm:gendist-2}
    \item (bounded coefficients) For every multiindex $\theta$ of degree $|\theta| = t$, one has $\Abs{\tfrac 1m \sum_{i \in [m]} Z_i^\theta - \E_{Z \sim \cD}Z^\theta} \leq d^{-10t}$.  \label{itm:gendist-3}
  \end{enumerate}
  We verify in Fact~\ref{fact:gendist-concentration} following this proof that these hold with high probability by standard concentration of measure, for $m \geq d^{100t}$ and $\cD$ $10t$-explicitly bounded, as assumed.
  Together with the assumption $\proves{t} \E_{Z \sim \cD - \mu^*} \iprod{Z,u}^t \leq t^{t/2} \|u\|^t$, this will conclude the proof.

  Starting with $a(u)$, using Fact~\ref{fact:general-dist-1}, it is enough that $2^t C^{t-1} \|v\| \leq \tfrac 14$, where $v = \mu - \mu^*$ and $C$ is such that $\tfrac 1m \sum_{i \in [m]} \|Z_i\|^s \leq C^s$.
  By \ref{itm:gendist-1} and \ref{itm:gendist-2}, we can assume $\|v\| \leq d^{-5t}$ and $C = t^{100} d^{1/2}$.
  Then the conclusion follows for $t \geq 3$.

  We turn to $b(u)$.
  A typical coefficient of $b(u)$ in the monomial basis---say, the coefficient of $u^\theta$ for some multiindiex $\theta$ of degree $|\theta| = t$, looks like
  \[
    \frac 1m \sum_{i \in [m]} Y_i^\theta - \E_{Y \sim \cD} Y^\theta\mper
  \]
  By assumption this is at most $d^{-10t}$ in magnitude, so the sum of squared coefficients of $b(u)$ is at most $d^{-18t}$.
  The bound on $b(u)$ for $d \geq 2$.
\end{proof}

\begin{proof}[Proof of Lemma~\ref{lem:general-structured-subset-polys}: Moment bounds]
  As in the lemma statement, let $f(\mu)$ be a vector of degree-$\ell$ polynomials in $\mu$.
  By positivity and Lemma~\ref{lem:matrix-sos-quad-form},
  \[
    M(w,\mu) \geq 0 \proves{O(t\ell)} \iprod{f(\mu)^{\tensor t/2}, M(w,\mu) f(\mu)^{\tensor t/2}} \geq 0\mper
  \]
  Using this in conjunction with the linear equations $\cL$,
  \[
    \cAhat \proves{O(t\ell)} 2 t^{t/2} \|f(\mu)\|^t - \frac 1 {\alpha n} \sum_{i \in [n]} w_i \iprod{X_i - \mu, f(\mu)}^t \geq 0
  \]
  which is what we wanted to show.
\end{proof}

\begin{fact}[Concentration for items \ref{itm:gendist-1}, \ref{itm:gendist-2},\ref{itm:gendist-3}]
\label{fact:gendist-concentration}
  Let $d,t \in \N$.
  Let $\cD$ be a mean-zero distribution on $\R^d$ such that $\E \iprod{Z,u}^s \leq s^s \|u\|^s$ for all $s \leq 10t$ for every $u \in \R^d$.
  Then for $t \geq 4$ and large enough $d$ and $m \geq d^{100t}$, for $m$ independent samples $Z_1,\ldots,Z_m \sim \cD$,
  \begin{enumerate}
    \item (bounded norms)  for every $s \leq t$ it holds that $\tfrac 1m \sum_{i \in [m]} \|Z_i\|^s \leq s^{100s} d^{s/2}$.
    \item (concentration of empirical mean) $\Norm{\tfrac 1m \sum_{i \in [m]} Z_i} \leq d^{-5t}$.
    \item (bounded coefficients) For every multiindex $\theta$ of degree $|\theta| = t$, one has $\Abs{\tfrac 1m \sum_{i \in [m]} Z_i^\theta - \E_{Z \sim \cD}Z^\theta} \leq d^{-10t}$.
  \end{enumerate}
\end{fact}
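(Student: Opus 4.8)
The plan is to prove all three items by routine first/second‑moment concentration: for each quantity I compute its expectation from the one–dimensional moment hypothesis $\E\iprod{Z,u}^s\le s^s\|u\|^s$, bound the variance of the corresponding empirical average, and then apply Markov's/Chebyshev's inequality. The enormous sample size $m\ge d^{100t}$ makes every deviation probability polynomially small in $1/d$, and a union bound over the $O(t)$ values of $s$ in item~(1) and the at most $d^t$ multi-indices $\theta$ in item~(3) then yields the claim; a final union bound over the three events finishes.

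\textbf{Items (1) and (2).} For item~(1) I first bound $\E\|Z\|^s$ by passing from the norm to coordinates: $\|Z\|^s=(\sum_j Z_j^2)^{s/2}\le d^{s/2-1}\sum_j|Z_j|^s$ by the power-mean inequality (for $s=1$ use $\E\|Z\|\le(\E\|Z\|^2)^{1/2}$ directly), and $\E|Z_j|^s\le(O(s))^s$ by applying the moment hypothesis to $u=e_j$ (raising to the next even power if $s$ is odd). Thus $\E\|Z\|^s\le(O(s))^s d^{s/2}$, comfortably below the target $s^{100s}d^{s/2}$. For the tail I bound the second moment of the empirical average, $\E\Paren{\tfrac1m\sum_i\|Z_i\|^s}^2=(\E\|Z\|^s)^2+\tfrac1m\mathrm{Var}(\|Z\|^s)$ with $\mathrm{Var}(\|Z\|^s)\le\E\|Z\|^{2s}\le(O(s))^{2s}d^s$ (the moment hypothesis still applies since $2s\le10t$); since $m\ge d^{100t}$ and $s\le t$ this variance term is $d^{-\Omega(t)}$, so Chebyshev together with a union bound over $s\in\{1,\dots,t\}$ proves item~(1). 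Item~(2) is even simpler: $\E Z=0$ gives $\E\Snorm{\tfrac1m\sum_iZ_i}=\tfrac1m\E\|Z\|^2\le O(d)/m\le O(d^{1-100t})$, where $\E\|Z\|^2=\sum_j\E\iprod{Z,e_j}^2\le O(d)$, so Markov gives $\|\tfrac1m\sum_iZ_i\|\le d^{-5t}$ except with probability $O(d^{1-90t})$.

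\textbf{Item (3).} Fix a multi-index $\theta$ with $|\theta|=t$. Then $\E\big[\tfrac1m\sum_iZ_i^\theta\big]=\E Z^\theta$ and $\mathrm{Var}\Paren{\tfrac1m\sum_iZ_i^\theta}=\tfrac1m\mathrm{Var}(Z^\theta)\le\tfrac1m\E[Z^{2\theta}]$. By weighted AM--GM, $Z^{2\theta}=\prod_j|Z_j|^{2\theta_j}\le\sum_j\tfrac{\theta_j}{t}|Z_j|^{2t}$, hence $\E[Z^{2\theta}]\le\max_j\E|Z_j|^{2t}\le(2t)^{2t}$ by the moment hypothesis with $s=2t\le10t$. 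So the variance of $\tfrac1m\sum_iZ_i^\theta$ is at most $(2t)^{2t}d^{-100t}$, and Chebyshev bounds the probability of a deviation exceeding $d^{-10t}$ by $(2t)^{2t}d^{-80t}$. Taking a union bound over the at most $d^t$ multi-indices of degree $t$ leaves total failure probability $d^t(2t)^{2t}d^{-80t}=d^{-\Omega(t)}$ for $d$ large, which completes item~(3).

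\textbf{Expected main obstacle.} There is no genuine obstacle: this is pure concentration of measure. The only points needing a little care are (i) extracting strong enough tail bounds rather than mere control of expectations---this is why in items~(1) and~(3) one passes to the second moment of the empirical average and uses the slack $m\ge d^{100t}$ to beat both the $(O(s))^{O(s)}$ and $(2t)^{2t}$ prefactors and, in~(3), the $d^t$-sized union bound---and (ii) bounding $\E\|Z\|^s$ and $\E Z^{2\theta}$ in terms of the given one-dimensional moments via elementary power-mean/AM--GM inequalities.
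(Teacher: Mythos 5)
Your proposal is correct, and it establishes all three items at a level of detail the paper does not even attempt: the paper's proof is a two-line sketch that invokes the Berry--Esseen central limit theorem for item (1) only (bounding the second and third moments of $\|Z\|^s$ and reading off a tail bound of the form $e^{-\Omega(r^2)}+s^{O(s)}m^{-1/2}$), declaring the other items "similar." You instead run a plain first/second-moment argument -- bound $\E\|Z\|^s\le (O(s))^s d^{s/2}$ and $\E Z^{2\theta}\le (2t)^{2t}$ from the one-dimensional moment hypothesis via power-mean/AM--GM, then apply Chebyshev and union-bound over the $t$ values of $s$ and the $\le d^t$ multi-indices. This is a genuinely more elementary route: it avoids third-moment bookkeeping and the extraneous additive $m^{-1/2}$ term from Berry--Esseen, and it makes the union bounds (which the paper leaves implicit) explicit; both arguments ultimately rest on the same moment estimates and on the huge slack $m\ge d^{100t}$. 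One small caveat, which you share with the paper rather than introduce: at $s=1$ the claim "$\E\|Z\|\le (O(1))\sqrt d$ is comfortably below $1^{100}d^{1/2}$" is not literally true under the stated hypothesis (e.g.\ $Z=2g$, $g\sim\cN(0,\Id)$, has $\E\|Z\|\approx 2\sqrt d$), so item (1) as written is only safe for $s\ge 2$ or after adjusting the constant; the paper's own asserted bound $\E\|Z\|^s\le s^{2s}d^{s/2}$ has the identical defect, and in the application the constant $C=t^{100}d^{1/2}$ absorbs it, so this is an artifact of the statement's constants and not a gap in your argument.
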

\begin{proof}
  The proofs are standard applications of central limit theorems, in particular the Berry-Esseen central limit theorem \cite{berry1941accuracy}, since all the quantities in question are sums of iid random variables with bounded moments.
  We will prove only the first statement; the others are similar.

  Note that $\tfrac 1m \sum_{i \in [m]} \|Z_i\|^s$ is a sum of iid random variables.
  Furthermore, by our moment bound assumption, $\E_{Z \sim \cD} \|Z\|^s \leq s^{2s} d^{s/2}$.
  We will apply the Berry-Esseen central limit theorem \cite{berry1941accuracy}.
  The second and third moments $\E (\|Z\|^s - \E\|Z\|^s)^{2}, \E (\|Z\|^s - \E\|Z\|^s)^{3}$ are bounded, respectively, as $s^{O(s)} k^{s}$ and $s^{O(s)} d^{3s/2}$.
  By Berry-Esseen,
  \[
  \Pr \left \{ \frac {\sqrt m}{d^{s/2}} \cdot \frac 1m \sum_{i \in [m]} \|Z_i\|^s > r + \frac{\sqrt m}{d^{s/2}} \E \|Z\|^s \right \} \leq 
  e^{-\Omega(r^2)} + s^{O(s)} \cdot m^{-1/2}\mper
  \]
\end{proof}

Finally we remark on the polynomial-time algorithm to find a pseudoexpectation satisfying $\cAhat$.
As per \cite{sos-notes-12}, it is just necessary to ensure that if $x = (w,\mu)$, the polynomials in $\cAhat$ include $\|x\|^2 \leq M$ for some large number $M$.
In our case the equation $\|x\|^2 \leq (nkm)^{O(1)}$ can be added without changing any arguments.

\subsection{Modifications for robust estimation}
\label{sec:corrupted-polys-proof}
We briefly sketch how the proof of Lemma~\ref{lem:general-structured-subset-polys} may be modified to prove Lemma~\ref{lem:corrupted-structured-subset-polys}.
The main issue is that $\cAhat$ of Lemma~\ref{lem:general-structured-subset-polys} is satisfiable when there exists an SoS proof
\[
  \proves{t} \frac 1 {(1-\eps)n} \sum_{i \in [n]} w_i \iprod{X_i - \mu, u}^t \leq 2 t^{t/2} \|u\|^t
\]
where $\mu$ is the empirical mean of $X_i$ such that $w_i = 1$.
In the proof of Lemma~\ref{lem:general-structured-subset-polys} we argued that this holds when $w$ is the indicator for a set of iid samples from a $10t$-explicitly bounded distribution $\cD$.
However, in the robust setting, $w$ should be taken to be the indicator of the $(1-\e)n$ good samples remaining from such a set of iid samples after $\eps n$ samples are removed by the adversary.
If $Y_1,\ldots,Y_n$ are the original samples, with empirical mean $\mu^*$, the proof of Lemma~\ref{lem:general-structured-subset-polys} (with minor modifications in constants) says that with high probability,
\[
\proves{t} \frac 1 n \sum_{i \in [n]} \iprod{Y_i - \mu^*, u}^t \leq 1.1 t^{t/2} \|u\|^t
\]
For small-enough $\eps$, this also means that
\[
  \proves{t} \frac 1{(1-\e)n} \sum_{i \text{ good}} \iprod{X_i - \mu^*, u}^t \leq 1.2 t^{t/2} \|u\|^t\mper
\]
This almost implies that $\cAhat$ is satisfiable given the $\e$-corrupted vectors $X_1,\ldots,X_n$ and parameter $\alpha = (1-\e)n$, except for that $\mu^* = \tfrac 1n \sum_{i \in [n]} Y_i$ and we would like to replace it with $\mu = \tfrac 1 {(1-\e)n} \sum_{i \text{ good}} X_i$.
This can be accomplished by noting that, as argued in Section~\ref{sec:robust}, with high probability $\|\mu - \mu^*\| \leq O(t \cdot \e^{1 - 1/t})$.

\section{Acknowledgements}
The authors would like to thank David Steurer, Daniel Freund, Guatam Kamath, Pablo Parillo, and especially Aravindan Vijayaraghavan for some helpful conversations.

\bibliographystyle{amsalpha}
\bibliography{mathreview,wiki,dblp,custom,scholar}

\appendix

\section{Toolkit for sum of squares proofs}
\label{sec:toolkit}
\begin{fact}[See Fact A.1 in \cite{DBLP:conf/focs/MaSS16} for a proof]
  \label{fact:sos-cs}
  Let $x_1,\ldots,x_n,y_1,\ldots,y_n$ be indeterminates.
  Then
  \[
    \proves{4} \Paren{\sum_{i \leq n} x_i y_i}^2 \leq \Paren{\sum_{i \leq n} x_i^2} \Paren{\sum_{i \leq n} y_i^2}\mper
  \]
\end{fact}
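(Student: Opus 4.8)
The plan is to exhibit an explicit sum-of-squares certificate for the polynomial
\[
p(x,y) \;=\; \Paren{\sum_{i \leq n} x_i^2}\Paren{\sum_{i \leq n} y_i^2} - \Paren{\sum_{i \leq n} x_i y_i}^2\mper
\]
First I would expand the two products: $\Paren{\sum_i x_i^2}\Paren{\sum_j y_j^2} = \sum_{i,j} x_i^2 y_j^2$ and $\Paren{\sum_i x_i y_i}^2 = \sum_{i,j} x_i y_i x_j y_j$. Subtracting, $p(x,y) = \sum_{i,j}\Paren{x_i^2 y_j^2 - x_i y_i x_j y_j}$. Then I would invoke the Lagrange identity: symmetrizing the index pair $(i,j)$ gives
\[
p(x,y) \;=\; \tfrac12 \sum_{i,j}\Paren{x_i^2 y_j^2 + x_j^2 y_i^2 - 2 x_i y_i x_j y_j} \;=\; \tfrac12 \sum_{i,j}\Paren{x_i y_j - x_j y_i}^2\mper
\]

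The conclusion is then immediate: each $x_i y_j - x_j y_i$ is a polynomial of degree $2$ in the indeterminates $x,y$, so each $\Paren{x_i y_j - x_j y_i}^2$ has degree $4$, and $p$ is a nonnegative combination (all coefficients $\tfrac12 \geq 0$) of such squares. Hence $p$ is a sum of squares of degree-$2$ polynomials, which by definition is a degree-$4$ SoS proof that $\Paren{\sum_i x_i y_i}^2 \leq \Paren{\sum_i x_i^2}\Paren{\sum_i y_i^2}$, i.e. $\proves{4}$ the claimed inequality with the empty axiom set.

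The only step requiring any verification is the symmetrized Lagrange identity, which is a one-line expansion and presents no real obstacle; one just has to be careful that the cross terms $-2 x_i y_i x_j y_j$ summed over ordered pairs $(i,j)$ reproduce exactly $-2\Paren{\sum_i x_i y_i}^2$, and that the diagonal terms $i=j$ contribute $0$ to the right-hand side (consistent with their vanishing contribution $x_i^2 y_i^2 - x_i^2 y_i^2 = 0$ on the left). No SDP duality or pseudoexpectation machinery is needed here — this is a purely syntactic identity of polynomials — so the fact can be cited freely in later arguments exactly as the paper does.
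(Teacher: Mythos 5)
Your certificate is correct and complete: the symmetrized Lagrange identity does exhibit the difference as $\tfrac12\sum_{i,j}(x_iy_j-x_jy_i)^2$, a nonnegative combination of squares of degree-$2$ polynomials (absorb the $\tfrac12$ into the squares), which is precisely a degree-$4$ SoS proof from the empty axiom set. The paper itself offers no proof, deferring to Fact A.1 of the cited reference, and the argument there is essentially this same standard Lagrange-identity computation, so there is nothing further to reconcile.
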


\begin{fact}
\label{fact:sos-triangle}
  Let $x,y$ be $n$-length vectors of indeterminates.
  Then
  \[
  \proves{2} \|x + y\|^2 \leq 2 \|x\|^2 + 2 \|y\|^2\mper
  \]
\end{fact}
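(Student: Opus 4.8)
The plan is to prove this directly by exhibiting the sum-of-squares certificate explicitly; no clever manipulation is needed since the claim is essentially the parallelogram law phrased in the SoS proof system. First I would expand the right-hand side minus the left-hand side as a polynomial in the indeterminates $x = (x_1,\ldots,x_n)$ and $y = (y_1,\ldots,y_n)$:
\[
2\|x\|^2 + 2\|y\|^2 - \|x+y\|^2 = 2\|x\|^2 + 2\|y\|^2 - \|x\|^2 - 2\iprod{x,y} - \|y\|^2 = \|x\|^2 - 2\iprod{x,y} + \|y\|^2.
\]
The key observation is that this last expression is exactly $\|x - y\|^2 = \sum_{i \leq n}(x_i - y_i)^2$, a sum of squares of linear forms, and hence has a degree-$2$ SoS proof.

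Concretely, I would write
\[
2\|x\|^2 + 2\|y\|^2 - \|x+y\|^2 = \sum_{i \leq n}(x_i - y_i)^2,
\]
where each summand $(x_i - y_i)^2$ is a square of a degree-$1$ polynomial, so the whole right-hand side is a degree-$2$ sum of squares. By the definition of SoS proofs (with the empty axiom set $\cA$), this establishes $\proves{2} \|x+y\|^2 \leq 2\|x\|^2 + 2\|y\|^2$, as desired.

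There is no real obstacle here: the only thing to be careful about is bookkeeping in the expansion of $\|x+y\|^2 = \|x\|^2 + 2\iprod{x,y} + \|y\|^2$, which is immediate, and noting that the resulting certificate has degree exactly $2$ so that the claimed proof degree is correct. This fact will then be available as a building block (together with Fact~\ref{fact:sos-cs}) for the SoS triangle-type inequalities used elsewhere, e.g. in establishing Lemma~\ref{lem:mixture-sos-triangle}.
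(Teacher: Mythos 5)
Your proof is correct and matches the paper's argument in essence: both reduce to the observation that $2\|x\|^2 + 2\|y\|^2 - \|x+y\|^2 = \|x\|^2 - 2\iprod{x,y} + \|y\|^2$ is a sum of squares; you exhibit it explicitly as $\sum_{i\leq n}(x_i-y_i)^2$, while the paper invokes the SoS certificate underlying Cauchy--Schwarz for the same polynomial. Your version is, if anything, slightly more direct, and the degree bookkeeping is right.
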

\begin{proof}
  The sum of squares proof of Cauchy-Schwarz implies that $\|x\|^2 + \|y\|^2 - 2\iprod{x,y}$ is a sum of squares.
  Now we just expand
  \[
  \|x+y\|^2 = \|x\|^2 + \|y\|^2 + 2\iprod{x,y} \preceq 2(\|x\|^2 + \|y\|^2)\mper
  \]
\end{proof}

\begin{fact}
\label{fact:poly-moment-bound}
  Let $P(x) \in \R[x]_{\ell}$ be a homogeneous degree $\ell$ polynomial in indeterminates $x = x_1,\ldots,x_n$.
  Suppose that the coefficients of $P$ are bounded in $2$-norm:
  \[
    \sum_{\alpha \subseteq [n]} \hat{P}(\alpha)^2 \leq C\mper
  \]
  (Here $\hat{P}(\alpha)$ are scalars such that $P(x) = \sum_{\alpha} \hat{P}(\alpha) x^{\alpha}$.)
  Let $a,b \in \N$ be integers such that $a + b = \ell$.
  Then
  \[
    \proves{\max(2a,2b)} P(x) \leq \sqrt{C} ( \|x\|^{2a} + \|x\|^{2b} )\mper
  \]
\end{fact}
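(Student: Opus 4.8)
The plan is to realize $P$ as a bilinear form in the degree-$a$ and degree-$b$ monomials of $x$ whose coefficient matrix has operator norm at most $\sqrt{C}$, and then to obtain the inequality by completing the square. Since $P$ is homogeneous of degree $\ell = a+b$, for each multi-index $\alpha$ with $|\alpha| = \ell$ I would fix once and for all a splitting $\alpha = \beta(\alpha) + \gamma(\alpha)$ with $|\beta(\alpha)| = a$ and $|\gamma(\alpha)| = b$. Let $y$ be the vector whose entries are the monomials $x^{\beta}$ over all $\beta$ with $|\beta| = a$, let $z$ be the analogous vector of degree-$b$ monomials, and let $A$ be the constant matrix whose $(\beta(\alpha), \gamma(\alpha))$ entry equals $\hat{P}(\alpha)$ for each $\alpha$ and whose remaining entries vanish. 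The map $\alpha \mapsto (\beta(\alpha), \gamma(\alpha))$ is injective, so $A$ is well defined; moreover $y^{\top} A z = \sum_{\alpha} \hat{P}(\alpha)\, x^{\beta(\alpha)} x^{\gamma(\alpha)} = P(x)$ as a polynomial identity, and $\|A\|_{\mathrm{op}} \le \|A\|_{F} = \bigl(\sum_{\alpha} \hat{P}(\alpha)^2\bigr)^{1/2} \le \sqrt{C}$. If $C = 0$ then $P \equiv 0$ and there is nothing to prove, so I assume $C > 0$ and set $B = A/\sqrt{C}$, so that $\|B\|_{\mathrm{op}} \le 1$ and $I - B^{\top} B \succeq 0$.

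First I would establish the SoS inequality $2 P(x) \le \sqrt{C}\,(\|y\|^2 + \|z\|^2)$, of degree $\max(2a,2b)$, via the polynomial identity
\[
  \sqrt{C}\,\|y\|^2 - 2 P(x) + \sqrt{C}\,\|z\|^2 \;=\; \sqrt{C}\,\|y - Bz\|^2 \;+\; \sqrt{C}\; z^{\top} (I - B^{\top} B)\, z \mper
\]
The term $\|y - Bz\|^2$ is manifestly a sum of squares, each coordinate of $y - Bz$ being a polynomial in $x$; and $z^{\top}(I - B^{\top}B) z$ is a sum of squares because $I - B^{\top}B$ is a constant PSD matrix, hence of the form $\sum_{j} \lambda_{j} u_{j} u_{j}^{\top}$ with $\lambda_{j} \ge 0$, giving $z^{\top}(I - B^{\top}B)z = \sum_{j} \lambda_{j} \iprod{u_{j}, z}^2$. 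So the left-hand side is an explicit SoS.

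Next I would record the elementary bounds $\|y\|^2 \le \|x\|^{2a}$ and $\|z\|^2 \le \|x\|^{2b}$, of SoS degree $2a$ and $2b$ respectively: expanding $\|x\|^{2a} = \bigl(\sum_i x_i^2\bigr)^a = \sum_{|\beta| = a} \binom{a}{\beta}\, x^{2\beta}$ by the multinomial theorem gives $\|x\|^{2a} - \|y\|^2 = \sum_{|\beta| = a} \bigl(\binom{a}{\beta} - 1\bigr)\, x^{2\beta}$, which is a sum of squares since every multinomial coefficient is at least $1$ and $x^{2\beta} = (x^{\beta})^2$. Chaining the three displayed inequalities yields $2 P(x) \le \sqrt{C}\,(\|x\|^{2a} + \|x\|^{2b})$, hence $P(x) \le \sqrt{C}\,(\|x\|^{2a} + \|x\|^{2b})$ --- in fact with a factor of $\tfrac12$ to spare --- all within SoS degree $\max(2a, 2b)$, as claimed.

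I do not expect a real obstacle here; the only place requiring care is the degree bookkeeping in the completion-of-the-square identity. Each coordinate of $y - Bz$ is a polynomial with a degree-$a$ component (from $y$) and a degree-$b$ component (from $Bz$), so its square has degree at most $2\max(a,b) = \max(2a,2b)$, which is exactly the budget in the statement; similarly each summand of $z^{\top}(I - B^{\top}B)z$ is a square of a degree-$b$ polynomial, and the multinomial-expansion steps are of degree $2a$ and $2b$. Everything else is routine.
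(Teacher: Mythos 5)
Your proof is correct and takes essentially the same route as the paper: both realize $P$ as a bilinear pairing between degree-$a$ and degree-$b$ monomials whose coefficient matrix has operator norm at most $\sqrt{C}$ (via the Frobenius bound), and both certify the inequality by the positive semidefiniteness of the resulting $2\times 2$ block structure — your completion-of-the-square identity is exactly the explicit SoS witness for the paper's block-matrix PSD claim. The only cosmetic difference is that the paper indexes by tensor powers $x^{\otimes a}, x^{\otimes b}$, for which $\|x^{\otimes a}\|^2 = \|x\|^{2a}$ holds identically, whereas your use of distinct monomials requires the (correct) extra multinomial-coefficient step $\|y\|^2 \le \|x\|^{2a}$.
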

\begin{proof}
  Let $M$ be a matrix whose rows and columns are indexed by multisets $S \subseteq [n]$ of sizes $a$ and $b$.
  Thus $M$ has four blocks: an $(a,a)$ block, an $(a,b)$ block, a $(b,a)$ block, and a $(b,b)$ block.
  In the $(a,b)$ and $(b,a)$ blocks, put matrices $M_{ab}, M_{ba}$ such that $\iprod{x^{\tensor a}, M_{ab} x^{\tensor b}} = \tfrac 12 \mper P(x)$.
  In the $(a,a)$ and $(b,b)$ blocks, put $\sqrt C \cdot I$.
  Then, letting $z = (x^{\tensor a},x^{\tensor b})$, we get $\iprod{z, Mz} = \sqrt C( \|x\|^{2a} + \|x\|^{2b}) - P(x)$.
  Note that $\|M_{ab}\| \leq \sqrt{C}$ by hypothesis, so $M \succeq 0$, which completes the proof.
\end{proof}

\begin{fact}
\label{fact:sos-gaussian-moment}
  Let $u = (u_1,\ldots,u_k)$ be a vector of indeterminantes.
  Let $D$ be sub-Gaussian with variancy proxy $1$.
  Let $t \geq 0$ be an integer.
  Then we have
  \begin{align*}
  &\proves{2t} \E_{X \sim D} \iprod{X,u}^{2t} \leq (2t)! \cdot \|u\|^{2t} \\
  &\proves{2t} \E_{X \sim D} \iprod{X,u}^{2t} \geq -(2t)! \cdot \|u\|^{2t} \; .
  \end{align*}
\end{fact}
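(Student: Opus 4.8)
The plan is to prove the two bounds separately: the lower bound is essentially immediate, and the upper bound carries all the work.

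For the lower bound, $\proves{2t} \E_{X \sim D}\iprod{X,u}^{2t} \geq -(2t)!\,\|u\|^{2t}$, I would observe that, since $2t$ is even, for every fixed realization $x$ of $X$ the polynomial $\iprod{x,u}^{2t} = \paren{\iprod{x,u}^t}^2$ is a single square of a degree-$t$ polynomial in the indeterminates $u$; equivalently $\E_{X\sim D}\iprod{X,u}^{2t} = \iprod{u^{\tensor t},\, M\, u^{\tensor t}}$ with $M = \E_{X\sim D}(X^{\tensor t})(X^{\tensor t})^{\top} \succeq 0$, so diagonalizing $M$ exhibits this expectation as an explicit sum of squares of degree-$t$ polynomials. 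Since $\|u\|^{2t} = (\sum_i u_i^2)^t$ is a power of an SoS polynomial and hence itself SoS, $\E_{X\sim D}\iprod{X,u}^{2t} + (2t)!\,\|u\|^{2t}$ is a sum of squares of degree $2t$, which is the claim.

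For the upper bound, $\proves{2t} \E_{X\sim D}\iprod{X,u}^{2t} \leq (2t)!\,\|u\|^{2t}$, I would first note what does \emph{not} suffice: the bound $\E_{X\sim D}\iprod{X,u}^{2t}\leq (2t-1)!!\,\|u\|^{2t}$ holds pointwise on $\R^k$ (since $\iprod{X,u}$ is a one-dimensional sub-Gaussian with variance proxy $\|u\|^2$), but a pointwise inequality is not an SoS proof, and the naive Gram-matrix certificate $(2t)!\,\Id - M \succeq 0$ is false because $M$ has dimension-dependent operator norm on the symmetric subspace; likewise, applying Fact~\ref{fact:poly-moment-bound} to $P(u) = \E_{X\sim D}\iprod{X,u}^{2t}$ yields an SoS proof but only with a constant comparable to the $\ell_2$-norm of the coefficient vector of $P$, which grows with $k$ (already $\approx \sqrt k$ for $D = \cN(0,\Id_k)$). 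To obtain the dimension-free constant $(2t)!$, I would reduce to the case in which the fact is actually applied, namely $D$ obtained by applying an orthogonal map $R$ to a product distribution $D'$ with mean-zero symmetric coordinates $X'=(X'_1,\ldots,X'_k)$: then $\E_{X\sim D}\iprod{X,u}^{2t} = \E_{X'\sim D'}\iprod{X',\, R^{\top}u}^{2t}$, and since SoS-ness is preserved under the linear change of variables $u \mapsto R^{\top}u$ and $\|R^{\top}u\| = \|u\|$ as polynomials, it is enough to treat product $D'$. For such $D'$ I would expand $\E_{X'\sim D'}\iprod{X',u}^{2t} = \sum_{|\alpha|=2t}\binom{2t}{\alpha}\paren{\prod_i \E (X'_i)^{\alpha_i}}u^{\alpha}$ in the monomial basis; only multi-indices $\alpha$ with every $\alpha_i$ even survive (by symmetry of the coordinates), and for such $\alpha$ the monomial $u^{\alpha} = \paren{\prod_i u_i^{\alpha_i/2}}^2$ is a perfect square. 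The Gaussian analogue is the polynomial identity $\E_{G\sim\cN(0,\Id)}\iprod{G,u}^{2t} = (2t-1)!!\,\|u\|^{2t}$ (as $\iprod{G,u}\sim\cN(0,\|u\|^2)$), whose monomial expansion has coefficient $\binom{2t}{\alpha}\prod_i(\alpha_i-1)!!$ on each all-even $\alpha$. Sub-Gaussianity with variance proxy $1$ gives $0 \leq \E (X'_i)^{\alpha_i}\leq(\alpha_i-1)!!$ for every even $\alpha_i$, so every coefficient of $(2t-1)!!\,\|u\|^{2t} - \E_{X\sim D}\iprod{X,u}^{2t}$ is a nonnegative multiple of a perfect-square monomial; this difference is therefore SoS, and adding the SoS polynomial $\bigl((2t)! - (2t-1)!!\bigr)\|u\|^{2t}$ shows $(2t)!\,\|u\|^{2t} - \E_{X\sim D}\iprod{X,u}^{2t}$ is a sum of squares of degree $2t$.

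I expect the main obstacle to be precisely this passage from the easy pointwise bound to an honest SoS certificate with a constant independent of the ambient dimension; the product-structure argument above circumvents it by ensuring that the only monomials that appear in $(2t-1)!!\,\|u\|^{2t} - \E_{X\sim D}\iprod{X,u}^{2t}$ have all-even exponents and hence are squares. One bookkeeping point I would double-check is the case of coordinates that are mean-zero but not symmetric, where odd cross-moments appear and would have to be absorbed separately; for the distributions used in this paper the coordinates are symmetric, so this does not arise.
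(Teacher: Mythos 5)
Your proposal is correct and, at its core, is the same argument the paper gives: expand $\E_{X \sim D} \iprod{X,u}^{2t}$ in the monomial basis, observe that the surviving monomials have all-even exponents and hence are squares, and bound the coefficients by the univariate moment bounds, so that $(2t)!\,\|u\|^{2t}$ minus the expectation is a nonnegative combination of square monomials (the lower bound being immediate since the expectation is itself the quadratic form of the PSD matrix $\E (X^{\otimes t})(X^{\otimes t})^{\top}$ at $u^{\otimes t}$). Your reduction to a rotated product distribution and your explicit handling of the lower bound are in fact more careful than the paper's two-line proof, which tacitly assumes the odd-exponent terms vanish. One quantitative slip to fix: sub-Gaussianity with variance proxy $1$ does not give $\E (X_i')^{\alpha_i} \leq (\alpha_i - 1)!!$ exactly (under either the MGF definition or the paper's moment-based one there is polynomial-in-$\alpha_i$ slack over the Gaussian moments), but since the target constant $(2t)!$ exceeds $(2t-1)!!$ by a factor of $2^t\, t!$, any standard sub-Gaussian moment bound such as $\E (X_i')^{\alpha_i} \leq \alpha_i^{\alpha_i/2} \leq \alpha_i!$ suffices and your coefficient-wise comparison goes through unchanged.
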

\begin{proof}
  Expand the polynomial in question.
  We have
  \[
  \E_{X \sim D} \iprod{X,u}^{2t} = \E_{X \sim D} \sum_{\beta} u^{\beta} \E [X^\beta] \; .
  \]
  Let $\beta$ range over $[k]^{2t}$
  \[
  \proves{2t}  \sum_{\beta} u^{2\beta} \E X^{2\beta} \leq (2t)! \sum_{\beta \text{ even }} u^{\beta} \leq \|u\|_2^{2t}\mper
  \]
  where we have used upper bounds on the Gaussian moments $\E X^{2\beta}$ and that every term is a square in $u$.
\end{proof}

\begin{fact}[SoS Cauchy-Schwarz (see Fact A.1 in \cite{DBLP:conf/focs/MaSS16} for a proof)]
  Let $x_1,\ldots,x_n,y_1,\ldots,y_n$ be indeterminates.
  Then
  \[
    \proves{4} \Paren{\sum_{i \leq n} x_i y_i}^2 \leq \Paren{\sum_{i \leq n} x_i^2} \Paren{\sum_{i \leq n} y_i^2}\mper
  \]
\end{fact}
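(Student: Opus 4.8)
The plan is to exhibit the polynomial $\Paren{\sum_{i \leq n} x_i^2}\Paren{\sum_{j \leq n} y_j^2} - \Paren{\sum_{i \leq n} x_i y_i}^2$ explicitly as a sum of squares of degree-two polynomials in the indeterminates $x_1,\ldots,x_n,y_1,\ldots,y_n$; by definition of the SoS proof system, such a certificate is exactly what $\proves{4}$ requires. The identity to use is the classical Lagrange (Binet--Cauchy) identity
\[
  \Paren{\sum_{i \leq n} x_i^2}\Paren{\sum_{j \leq n} y_j^2} - \Paren{\sum_{i \leq n} x_i y_i}^2 = \frac 12 \sum_{i,j \leq n} (x_i y_j - x_j y_i)^2 \mper
\]

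First I would verify this identity by direct expansion. The left-hand side expands to $\sum_{i,j} x_i^2 y_j^2 - \sum_{i,j} x_i y_i x_j y_j$. On the right-hand side, $\tfrac 12 \sum_{i,j} (x_i y_j - x_j y_i)^2 = \tfrac 12 \sum_{i,j}\Paren{x_i^2 y_j^2 - 2 x_i y_j x_j y_i + x_j^2 y_i^2}$, and using the symmetry $\sum_{i,j} x_j^2 y_i^2 = \sum_{i,j} x_i^2 y_j^2$ under relabeling the index pair, this equals $\sum_{i,j} x_i^2 y_j^2 - \sum_{i,j} x_i x_j y_i y_j$, matching the left-hand side termwise. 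Next I would observe that each summand $(x_i y_j - x_j y_i)^2$ is literally the square of the degree-two polynomial $x_i y_j - x_j y_i$, so the right-hand side is a nonnegative ($1/2$-weighted) combination of squares of degree-two polynomials, i.e.\ a sum of squares whose constituent squares have degree $4$. Hence $\Paren{\sum_i x_i^2}\Paren{\sum_j y_j^2} - \Paren{\sum_i x_i y_i}^2$ is SoS of degree $4$, which is precisely the statement $\proves{4} \Paren{\sum_i x_i y_i}^2 \leq \Paren{\sum_i x_i^2}\Paren{\sum_j y_j^2}$.

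There is essentially no obstacle: the proof is a one-line algebraic identity. The only points to be careful about are the bookkeeping in the expansion (matching the doubled diagonal terms and the cross terms, and getting the factor $\tfrac 12$ right) and noting that the certificate has degree $4$ because each square is of a degree-two polynomial, so it is in fact a \emph{degree-$4$} SoS proof as claimed. Alternatively one may simply invoke Fact A.1 of \cite{DBLP:conf/focs/MaSS16}, but the self-contained Lagrange-identity argument above is short enough to record directly.
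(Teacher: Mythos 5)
Your proof is correct: the Lagrange identity $\bigl(\sum_i x_i^2\bigr)\bigl(\sum_j y_j^2\bigr) - \bigl(\sum_i x_i y_i\bigr)^2 = \tfrac 12 \sum_{i,j}(x_i y_j - x_j y_i)^2$ exhibits the difference as a sum of squares of degree-two polynomials (the weight $\tfrac12$ can be absorbed into each square), which is exactly a degree-$4$ SoS certificate. The paper itself gives no proof of this fact, deferring entirely to Fact A.1 of the cited reference, and your self-contained certificate is the standard argument for it, so there is nothing to reconcile.
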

\begin{fact}[SoS H\"older]
  \label{fact:sos-holder}
  Let $w_1,\ldots,w_n$ and $x_1,\ldots,x_n$ be indeterminates.
  Let $q \in \N$ be a power of $2$.
  Then
  \[
    \{ w_i^2 = w_i \, \forall i \in[n] \} \proves{O(q)} \Paren{\sum_{i \leq n} w_i x_i}^q \leq \Paren{\sum_{i \leq n} w_i}^{q-1} \cdot \Paren{ \sum_{i \leq n} x_i^q}
  \]
  and
  \[
    \{ w_i^2 = w_i \, \forall i \in[n] \} \proves{O(q)} \Paren{\sum_{i \leq n} w_i x_i}^q \leq \Paren{\sum_{i \leq n} w_i}^{q-1} \cdot \Paren{ \sum_{i \leq n} w_i \cdot x_i^q}\mper
  \]
\end{fact}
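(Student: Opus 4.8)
The plan is to prove the $w_i$-weighted inequality first and then deduce the unweighted one, using SoS Cauchy--Schwarz (Fact~\ref{fact:sos-cs}) as the engine that halves the exponent, inducting on $\log_2 q$.

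First I would record two consequences of $w_i^2 = w_i$ that get used constantly: $w_i = w_i^2 \geq 0$ and $1 - w_i = (1-w_i)^2 \geq 0$, both with degree-$O(1)$ SoS proofs. From these, $\{w_i^2 = w_i\} \proves{q} x_i^q - w_i x_i^q = (1-w_i)x_i^q = \bigparen{(1-w_i)x_i^{q/2}}^2 \geq 0$ for even $q$, so summing shows the weighted bound implies the unweighted one. Thus it suffices to prove $\{w_i^2 = w_i\} \proves{O(q)} \paren{\sum_i w_i x_i}^q \leq \paren{\sum_i w_i}^{q-1}\paren{\sum_i w_i x_i^q}$.

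For the base case $q = 2$, I would invoke Fact~\ref{fact:sos-cs} with $a_i = w_i$ and $b_i = w_i x_i$, giving $\paren{\sum_i w_i\cdot w_i x_i}^2 \leq \paren{\sum_i w_i^2}\paren{\sum_i w_i^2 x_i^2}$; rewriting every $w_i^2$ as $w_i$ (a degree-$O(1)$ manipulation via the ideal) turns this into exactly $\paren{\sum_i w_i x_i}^2 \leq \paren{\sum_i w_i}\paren{\sum_i w_i x_i^2}$. For the inductive step $q \to 2q$, write $P = \sum_i w_i x_i$, $W = \sum_i w_i$, $p = P^q$, $r = W^{q-1}\paren{\sum_i w_i x_i^q}$; the induction hypothesis is $\proves{O(q)} r - p \geq 0$. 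Since $q$ is even, $p = (P^{q/2})^2$ is a perfect square, so I would square this inequality via the identity $r^2 - p^2 = (r-p)^2 + 2p(r-p)$: the first summand is a manifest square, and the second is the SoS polynomial $2p$ times the degree-$O(q)$ certificate for $r - p \geq 0$, giving $\proves{O(q)} P^{2q} = p^2 \leq r^2$. Then apply the base-case polynomial identity with $x_i^q$ in the role of $x_i$ to get $\paren{\sum_i w_i x_i^q}^2 \leq W\paren{\sum_i w_i x_i^{2q}}$, and multiply through by the perfect square $W^{2q-2} = (W^{q-1})^2$ to obtain $r^2 \leq W^{2q-1}\paren{\sum_i w_i x_i^{2q}}$. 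Chaining the two displays (transitivity adds, rather than multiplies, degrees) finishes the step.

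The main point requiring care, and the reason to organize the induction around powers of $2$, is \emph{squaring an inequality without a degree blow-up}: naively composing two SoS proofs multiplies their degrees, so I must instead arrange each such step as multiplication of an existing proof by an SoS \emph{polynomial} (here $2p$ and $W^{2q-2}$, both perfect squares precisely because $q$ and $2q-2$ are even), which only adds $O(q)$ to the degree. Granting that, the total degree is the geometric sum $O(1) + O(2) + \cdots + O(q) = O(q)$, as claimed. Everything else — nonnegativity of $W$, $\sum_i w_i x_i^q$, $r$, and the bookkeeping to keep coefficients polynomially bounded — is routine from $0 \leq w_i \leq 1$ and the composition rules for SoS proofs.
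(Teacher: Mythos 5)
Your proof is correct and is essentially the paper's argument: both rest on iterated SoS Cauchy--Schwarz (Fact~\ref{fact:sos-cs}) together with the idempotence relations $w_i^2 = w_i$, organized along powers of two so that certificate degrees add and stay $O(q)$. The only difference is organizational: the paper proves the unweighted inequality by an exponent-halving induction and then obtains the weighted one by substituting $w_i x_i$ for $x_i$ (using $w_i^q = w_i$), whereas you prove the weighted one by an exponent-doubling induction (squaring the inductive inequality via the identity $(r-p)^2 + 2p(r-p)$, with $p$ a perfect square) and then derive the unweighted one from $(1-w_i)x_i^{q} \geq 0$ modulo the ideal; both routes are sound.
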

\begin{proof}
  We will only prove the first inequality. 
  The second inequality follows since $w_i^2 = w_i \proves{2} w_i x_i = w_i \cdot (w_i x_i)$, applying the first inequality, and observing that $w_i^2 = w_i \proves{q} w_i^q = w_i$.

  Applying Cauchy-Schwarz (Fact~\ref{fact:sos-cs}) and the axioms, we obtain to start that for any even number $t$,
  \begin{align*}
  \{ w_i^2 = w_i \, \forall i \in[n] \} \proves{O(t)} & \Brac{\Paren{\sum_{i \leq n} w_i x_i}^2}^{t/2} = \Brac{\Paren{\sum_{i\leq n} w_i^2 x_i}^2}^{t/2}\\
  & \leq \Brac{ \Paren{\sum_{i \leq n} w_i^2} \Paren{\sum_{i \leq n} w_i^2 x_i^2}}^{t/2} = \Paren{\sum_{i \leq n} w_i}^{t/2} \Paren{\sum_{i \leq n} w_i x_i^2}^{t/2} \mper
  \end{align*}
  It follows by indution that
  \[
   \{ w_i^2 = w_i \, \forall i \in[n] \} \proves{O(t)} \Brac{\Paren{\sum_{i \leq n} w_i x_i}}^q \leq \Paren{\sum_{i \leq n} w_i}^{q-2} \Paren{\sum_{i \leq n} w_i x_i^{q/2}}^2\mper
  \]
  Applying Fact~\ref{fact:sos-cs} one more time to get $\Paren{\sum_{i \leq n} w_i x_i^{q/2}} \leq \Paren{\sum_{i \leq n} w_i^2} \Paren{\sum_{i \leq n} x_i^q}$ and then the axioms $w_i^2 = w_i$ completes the proof.
\end{proof}

\subsection{Examples of explicitly bounded distributions}
\label{sec:explicitly-bounded-families}
In this section, we show that many natural high dimensional distributions are explicitly bounded.
Recall that if a univariate distribution $X$  \emph{sub-Gaussian} (with variancy proxy $\sigma$) with mean $\mu$ then we have the following bound on its even centered moments for $t \geq 4$:
\[
\E [(X - \mu)^{t}] \leq \sigma^t \Paren{\frac t2}^{t / 2} \; ,
\]
if $t$ is even.

More generally, we will say a univariate distribution is $t$-bounded with mean $\mu$ and variance proxy $\sigma$ if the following general condition holds for all even $4 \leq s \leq t$:
\[
\E [(X - \mu)^{s}] \leq \sigma^s \Paren{ \frac s2}^{s / 2} \; .
\]
The factor of $1/2$ in this expression is not important and can be ignored upon first reading.

Our main result in this section is that any rotation of products of independent $t$-bounded distributions with variance proxy $1/2$ is $t$-explicitly bounded with variance proxy $1$:
\begin{lemma}
Let $\cD$ be a distribution over $\R^d$ so that $\cD$ is a rotation of a product distribution $\cD'$ where each coordinate of $\cD$ is a $t$-bounded univariate distribution with variance proxy $1/2$.
Then $\cD$ is $t$-explicitly bounded (with variance proxy $1$).
\end{lemma}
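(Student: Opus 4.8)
The plan is to reduce, in three sum-of-squares-preserving steps, to an elementary combinatorial inequality. First I would dispose of the rotation: if $\cD$ is the pushforward of a product distribution $\cD'$ under an orthogonal matrix $R$, then for $Z\sim\cD'$ we have $\iprod{Y-\mu,u}=\iprod{Z-\mu',R^{\top}u}$, while $\norm{R^{\top}u}^2=\norm{u}^2$ is a polynomial identity, so the polynomial $s^{s/2}\norm{u}^s-\E_{Y\sim\cD}\iprod{Y-\mu,u}^s$ is obtained from the corresponding polynomial for $\cD'$ by the linear substitution $u\mapsto R^{\top}u$; since linear substitution carries sums of squares to sums of squares and preserves degree, it suffices to treat $\cD=\cD'$. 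Translating each coordinate, I may assume $\cD'=\cD'_1\otimes\cdots\otimes\cD'_d$ with each $Z_i$ mean zero and $\E[Z_i^s]\le 2^{-s}(s/2)^{s/2}$ for even $4\le s\le t$ (and $\E[Z_i^2]\le(\E[Z_i^4])^{1/2}\le 1/2$ from the $s=4$ bound). So the goal becomes: for every even $s\le t$ there is a degree-$s$ SoS proof in $u$ that $\E_Z\iprod{Z,u}^s\le s^{s/2}\norm{u}^s$.

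The key step is symmetrization. Let $\tilde Z$ be an independent copy of $Z$ and set $W=Z-\tilde Z$, a mean-zero product distribution whose coordinates are \emph{symmetric} and satisfy $\E[W_i^{2m}]\le m^m$ for $1\le m\le t/2$ (for $2m\ge 4$ from $(a-b)^{2m}\le 2^{2m-1}(a^{2m}+b^{2m})$ and the hypothesis, and $\E[W_i^2]=2\,\E[Z_i^2]\le 1$). I claim the inequality $\E_Z\iprod{Z,u}^s\le\E_{Z,\tilde Z}\iprod{Z-\tilde Z,u}^s$ has a degree-$s$ SoS proof. The point is that $x\mapsto x^s$ is SoS-convex: the polynomial $y^s-x^s-s x^{s-1}(y-x)$ is a sum of squares in $(x,y)$, since it equals $(y-x)^2\int_0^1(1-\lambda)\,s(s-1)\bigl(x+\lambda(y-x)\bigr)^{s-2}\,d\lambda$, a limit of nonnegative combinations of squares of degree-$(s/2)$ polynomials. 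Substituting the linear forms $x=\iprod{Z,u}$ and $y=\iprod{Z-\tilde Z,u}$ (which preserves SoS-ness) and integrating over $\tilde Z$, the term $s x^{s-1}\cdot(-\iprod{\tilde Z,u})$ integrates to zero because $\E_{\tilde Z}\iprod{\tilde Z,u}\equiv 0$, leaving $\E_{\tilde Z}\iprod{Z-\tilde Z,u}^s-\iprod{Z,u}^s$ as a sum of squares for each fixed $Z$; integrating over $Z$ proves the claim. (For finitely supported $\cD'$ these integrals are finite convex combinations; the general case follows by approximating each $\cD'_i$ by mean-zero finitely supported distributions with matching moments up to order $s$ and using that the cone of degree-$\le s$ sums of squares is closed, since the polynomial in question depends only on the first $s$ moments.)

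It then remains to give a degree-$s$ SoS proof that $\E_W\iprod{W,u}^s\le s^{s/2}\norm{u}^s$ for a symmetric product distribution $W$ with $\E[W_i^{2m}]\le m^m$. Expanding multinomially and using that all odd moments of each $W_i$ vanish, only all-even multi-indices $\alpha=2\beta$ with $|\beta|=s/2$ contribute, so
\[
\E_W\iprod{W,u}^s=\sum_{|\beta|=s/2}\binom{s}{2\beta}\Bigl(\prod_i\E[W_i^{2\beta_i}]\Bigr)(u^\beta)^2\ \preceq\ \sum_{|\beta|=s/2}\binom{s}{2\beta}\Bigl(\prod_i\beta_i^{\beta_i}\Bigr)(u^\beta)^2\mcom
\]
the last inequality being a nonnegative combination of squares because $\E[W_i^{2\beta_i}]\le\beta_i^{\beta_i}$. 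Now I would invoke the elementary bound $\binom{s}{2\beta}\prod_i\beta_i^{\beta_i}\le s^{s/2}\binom{s/2}{\beta}$: writing the ratio as $\tfrac{s!}{(s/2)!}\prod_i\tfrac{\beta_i!\,\beta_i^{\beta_i}}{(2\beta_i)!}$, one has $\tfrac{s!}{(s/2)!}=\prod_{j=s/2+1}^{s}j\le s^{s/2}$ and $\tfrac{\beta_i!\,\beta_i^{\beta_i}}{(2\beta_i)!}=\beta_i^{\beta_i}/\prod_{j=\beta_i+1}^{2\beta_i}j\le 1$. Hence the right-hand side above is at most $s^{s/2}\sum_{|\beta|=s/2}\binom{s/2}{\beta}(u^\beta)^2=s^{s/2}\bigl(\sum_i u_i^2\bigr)^{s/2}=s^{s/2}\norm{u}^s$, again an SoS inequality. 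Chaining the three SoS inequalities exhibits $s^{s/2}\norm{u}^s-\E_Z\iprod{Z,u}^s$ as a sum of squares of degree-$(s/2)$ polynomials, which is exactly what the definition of $t$-explicitly bounded (with variance proxy $1$) requires.

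The routine parts are the reductions in the first paragraph and the two elementary inequalities in the third. \textbf{The delicate step is the symmetrization}, specifically turning the Jensen/convexity inequality into a genuine low-degree SoS proof valid for arbitrary (not necessarily finitely supported) coordinate distributions: this needs the integral-of-squares representation of SoS-convexity of $x\mapsto x^s$ together with the closedness/approximation argument sketched above. An alternative that avoids this — and which incidentally explains why the hypothesis asks for variance proxy $1/2$ rather than $1$, since symmetrization doubles the variance proxy and the factor $\tfrac12$ absorbs exactly the $2^{2m}$ loss — is to skip symmetrization and bound the odd-exponent monomials in $\E_Z\iprod{Z,u}^s$ directly, repeatedly using $\pm 2u_i u_j\le u_i^2+u_j^2$ (that is, multiplying the square $\tfrac12(u_i\pm u_j)^2$ by a monomial square) to rewrite the entire polynomial as $\preceq$ a nonnegative combination of squares $(u^\beta)^2$; this works as well, at the cost of a messier accounting of how the odd multi-indices redistribute onto the $\beta$'s.
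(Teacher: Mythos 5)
Your proof follows essentially the same route as the paper's: reduce to the product case by rotation invariance, symmetrize by writing $\iprod{Z-\mu,u}^s=(\E_{\tilde Z}\iprod{Z-\tilde Z,u})^s\le\E_{\tilde Z}\iprod{Z-\tilde Z,u}^s$, then expand in the monomial basis, discard the odd terms by symmetry of $Z-\tilde Z$, and bound the even coefficients against $s^{s/2}\|u\|^s$. The only difference is that you spell out the two steps the paper treats tersely — a genuine degree-$s$ SoS justification of the Jensen/symmetrization inequality (the paper just invokes its SoS Cauchy--Schwarz fact) and the multinomial bookkeeping $\binom{s}{2\beta}\prod_i\beta_i^{\beta_i}\le s^{s/2}\binom{s/2}{\beta}$ — so your argument is correct and, if anything, more careful than the original.
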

\begin{proof}
Since the definition of explicitly bounded is clearly rotation invariant, it suffices to show that $\cD'$ is $t$-explicitly bounded.
For any vector of indeterminants $u$, and for any $4 \leq s \leq t$ even, we have
\begin{align*}
\proves{s} \E_{X \sim \cD'} \iprod{X - \mu, u}^s &= \E_{X \sim \cD'} \iprod{X - \E_{X' \sim \cD'} X', u}^s \\
&= \E_{X \sim \cD'} \Paren{\E_{X'} \iprod{X - X', u}}^s \\
&\leq \E_{X, X' \sim \cD'} \iprod{X - X', u}^s \; ,
\end{align*}
where $X'$ is an independent copy of $X$, and the last line follows from SoS Cauchy-Schwarz.
We then expand the resulting polynomial in the monomial basis:
\begin{align*}
\E_{X, X' \sim \cD'} \iprod{X - X', u}^s &= \sum_\alpha u^\alpha \E_{X, X'} (X - X')^\alpha \\
&= \sum_{\mbox{$\alpha$ even}} u^\alpha \E_{X, X'} (X - X')^\alpha \; ,
\end{align*}
since all $\alpha$ with odd monomials disappear since $X - X'$ is a symmetric product distribution.
By $t$-boundedness, all remaining coefficients are at most $s^{cs}$, from which we deduce
\[
\proves{s} \E_{X, X' \sim \cD'} \iprod{X - X', u}^s \leq s^{s / 2} \sum_{\mbox{$\alpha$ even}} u^\alpha = s^{s / 2} \| u \|^s \; ,
\]
which proves that $\cD'$ is $t$-explicitly bounded, as desired.
\end{proof}
\noindent
As a corollary observe this trivially implies that all Guassians $\normal (\mu, \Sigma)$ with $\Sigma \preceq I$ are $t$-explicitly bounded for all $t$.

We note that our results are tolerant to constant changes in the variancy proxy (just by scaling down).
In particular, this implies that our results immediately apply for all rotations of products of $t$-bounded distributions with a loss of at most $2$.

\section{Sum of squares proofs for matrix positivity -- omitted proofs}
\begin{lemma}[Soundness]
  Suppose $\pE$ is a degree-$2d$ pseudodistribution which satisfies constraints $\{M_1 \succeq 0,\ldots,M_m \succeq 0\}$, and
  \[
    \{M_1 \succeq 0,\ldots,M_m \succeq 0\} \proves{2d} M \succeq 0\mper
  \]
  Then $\pE$ satisfies $\{M_1 \succeq 0,\ldots,M_m \succeq 0, M\succeq 0 \}$.
\end{lemma}
\begin{proof}
  By hypothesis, there are $r_S^j$ and $B$ such that
  \[
    M = B^\top \Brac{\sum_{S \subseteq [m]} \Paren{\sum_j (r_S^j(x))(r_S^j(x))^\top } \otimes \Brac{\otimes_{i \in S} M_i(x)}} B\mper
  \]
  Now, let $T \subseteq [m]$ and $p$ be a polynomial.
  Let $M' = \tensor_{i \in T} M_i$.
  Suppose that $\deg \Paren{p^2 \cdot M \tensor M'} \leq 2d$.
  Using the hypothesis on $M$, we obtain
  \begin{align*}
  p^2 \cdot M \tensor M' & = p^2 \cdot B^\top \Brac{\sum_{S \subseteq [m]} \Paren{\sum_j (r_S^j(x))(r_S^j(x))^\top } \otimes \Brac{\otimes_{i \in S} M_i(x)}} B \tensor M'\\
  & = (B \tensor I)^\top \Brac{ p^2 \cdot \Brac{\sum_{S \subseteq [m]} \Paren{\sum_j (r_S^j(x))(r_S^j(x))^\top } \otimes \Brac{\otimes_{i \in S} M_i(x)}} \tensor M' } (B \tensor I)\mper
  \end{align*}
  Applying $\pE$ to the above, note that by hypothesis,
  \[
    \pE \Brac{ p^2 \cdot \Brac{\sum_{S \subseteq [m]} \Paren{\sum_j (r_S^j(x))(r_S^j(x))^\top } \otimes \Brac{\otimes_{i \in S} M_i(x)}} \tensor M'} \succeq 0\mper
  \]
  The lemma follows by linearity.
\end{proof}

\begin{lemma}
  Let $f(x)$ be a degree-$\ell$ $s$-vector-valued polynomial in indeterminates $x$.
  Let $M(x)$ be a $s \times s$ matrix-valued polynomial of degree $\ell'$.
  Then
  \[
    \{ M \succeq 0\} \proves{\ell \ell'} \iprod{f(x), M(x) f(x)} \geq 0\mper
  \]
\end{lemma}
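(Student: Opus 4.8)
The plan is to produce the required matrix sum-of-squares certificate by hand. Since the only axiom is $M \succeq 0$ and there are no equality constraints, the ideal term $Q$ in the definition will be $0$, so all I need is to write $\iprod{f(x), M(x) f(x)}$ in the prescribed form $B^\top\bigl[\,(\sum_j r\,r^\top)\otimes M\,\bigr] B$ with a constant matrix $B$ and polynomial vectors $r$.

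First I would take the multiset $S$ to consist of the single index of $M$ (so that $\otimes_{i\in S} M_i = M$), take a single vector-valued polynomial $r_S^1(x) := f(x) \in \R^s$, and set all other $r$'s to zero. Then $\sum_j r_S^j(x)\, r_S^j(x)^\top = f(x) f(x)^\top \succeq 0$ is an admissible Gram term, and the only term appearing in the expansion is $\bigl(f(x) f(x)^\top\bigr)\otimes M(x) \in \R^{s^2\times s^2}$, whose degree is $2\ell + \ell'$ (at most $\ell\ell'$ in the parameter regimes where the lemma is applied).

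Second I would take $B := \mathrm{vec}(\Id_s) \in \R^{s^2}$, a fixed $0/1$ vector independent of $x$ whose $(i,k)$-entry is $\mathbf{1}[i=k]$. Indexing the rows and columns of $(ff^\top)\otimes M$ by pairs $(i,k)$ and $(j,l)$, with entry $f_i f_j M_{kl}$, a one-line index computation gives
\[
B^\top \Bigl[\bigl(f(x) f(x)^\top\bigr) \otimes M(x)\Bigr] B = \sum_{i,j} f_i(x)\, M_{ij}(x)\, f_j(x) = \iprod{f(x), M(x) f(x)}\mper
\]
Hence $\iprod{f, Mf}$ is literally of the form demanded by the definition of a degree-$(2\ell+\ell')$ matrix-SoS proof from the axiom $\{M \succeq 0\}$, which is the claim.

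The only thing requiring care is matching the conventions in the definition of matrix-SoS proofs: that one is allowed to put the $x$-dependence of the quadratic form into the polynomial vector $r_S^1 = f$ while keeping $B$ constant, and that the single rank-one term $f f^\top$ is a legitimate choice of $\sum_j r_S^j (r_S^j)^\top$. Granting those, the statement carries no further content — it is simply the observation that a quadratic form in a PSD matrix-polynomial, evaluated at a polynomial vector, lies in the associated SoS cone via a fixed change of basis. (The same argument will also give Lemma~\ref{lem:matrix-sos-quad-form} in the body, applied there with $M = M(w,\mu)$ and $f = f(\mu)^{\otimes t/2}$.)
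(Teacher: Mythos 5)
Your proof is correct and is essentially the paper's own argument: the paper likewise takes $u = \mathrm{vec}(\Id_s)$ (the $0/1$ vector with $u_{ij}=\mathbf{1}[i=j]$) and observes $\iprod{f(x), M(x) f(x)} = u^\top \bigl(M(x) \otimes f(x)f(x)^\top\bigr) u$, which is exactly your certificate with the tensor factors in the opposite order. The only cosmetic difference is that you spell out the bookkeeping against the formal definition (choice of $S$, $r_S^1 = f$, constant $B$) and note the degree count $2\ell+\ell'$, which the paper's one-line proof glosses over.
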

\begin{proof}
  Let $u \in \R^{s \tensor s}$ have entries $u_{ij} = 1$ if $i = j$ and otherwise $u_{ij} = 0$.
  Then $\iprod{f(x), M(x) f(x)} = u^\top (M(x) \otimes f(x) f(x)^\top) u$.
\end{proof}


\section{Omitted Proofs from Section \ref{sec:robust}}
\label{sec:robust-proofs}
\subsection{Proof of Lemma \ref{lem:robust-deterministic-conditions}}

We will show that each event (E1)--(E4) holds with probability at least $1 - d^{-8}$.
Clearly for $d$ sufficiently large this implies the desired guarantee.
That (E1) and (E2) occur with probability $1 - d^{-8}$ follow from Lemmas~\ref{lem:naive-pruning} and \ref{lem:corrupted-structured-subset-polys}, respectively.
It now suffices to show (E3) and (E4) holds with high probability.
Indeed, that (E4) holds with probability $1 - d^{-8}$ follows trivially from the same proof of Lemma~\ref{lem:general-structured-subset-polys} (it is in fact a simpler version of this fact).

Finally, we show that (E3) holds.

By basic concentration arguments (see e.g. \cite{DBLP:journals/corr/abs-1011-3027}), we know that by our choice of $n$, with probability $1 - d^{-8}$ we have that
\begin{equation}
\label{eq:basic-mean-conc}
\left\| \frac{1}{n} \sum_{i \in [n]} X_i - \mu^* \right\| \leq \eps \; .
\end{equation}
Condition on the event that this and (E4) simultaneously hold.
Recall that $Y_i$ for $i = 1, \ldots, n$ are defined so that $Y_i$ are iid and $Y_i = X_i$ for $i \in S_g$.
By the triangle inequality, we have
\begin{align*}
\left\| \frac{1}{|S_g|} \sum_{i \in S_g} X_i - \mu^* \right\| &\leq \frac{n}{|S_g|} \left\| \frac{1}{n} \sum_{i \in [n]} Y_i - \mu^* \right\| + \frac{|S_b|}{|S_g|} \left\| \frac{1}{|S_b|} \sum_{i \in S_b} Y_i - \mu^* \right\| \\
&\stackrel{(a)}{\leq} \frac{\eps}{1 - \eps} + \frac{|S_b|}{|S_g|} \left\| \frac{1}{|S_b|} \sum_{i \in S_b} Y_i - \mu^* \right\| \; , \numberthis \label{eq:app-robust-1}
\end{align*}
where (a) follows from~\eqref{eq:basic-mean-conc}. 

We now bound the second term in the RHS.
For any unit vector $u \in \R^d$, by H\"{o}lder's inequality,
\begin{align*}
\Iprod{\sum_{i \in S_b} (Y_i - \mu^*), u}^t &\leq |S_b|^{t - 1} \sum_{i \in S_b} \Iprod{(Y_i - \mu^*), u}^t \\&
\leq |S_b|^{t - 1} \sum_{i \in [n]} \Iprod{(Y_i - \mu^*), u}^t \\
&= |S_b|^{t - 1} \Brac{u^{\otimes t / 2}}^\top \sum_{i \in [n]} \Brac{(Y_i - \mu^*)^{\otimes t / 2}} \Brac{(Y_i - \mu^*)^{\otimes t / 2}}^\top \Brac{u^{\otimes t / 2}} \\
&\stackrel{(a)}{\leq} |S_b|^{t - 1} \cdot n \cdot \Brac{u^{\otimes t / 2}}^\top \Paren{ \E_{Y \sim D} \Brac{(Y - \mu^*)^{\otimes t / 2}} \Brac{(Y - \mu^*)^{\otimes t / 2}}^\top + \delta \cdot \Id} \Brac{(Y - \mu^*)^{\otimes t / 2}} \\
&= |S_b|^{t - 1} \cdot n \cdot \Paren{\E_{Y \sim D} \Iprod{Y - \mu^*, u}^t + \delta} \\
&\leq |S_b|^{t - 1} \cdot n \cdot (t^{t / 2} + \delta) \\
&\stackrel{(b)}{\leq} 2 |S_b|^{t - 1} \cdot n \cdot t^{t / 2} \; ,
\end{align*}
where (a) follows from (E4), and (b) follows since $\delta \ll t^{t}$.
Hence 
\[
\left\| \sum_{i \in S_b} (Y_i - \mu^*) \right\| = \max_{\| u \| = 1} \Iprod{\sum_{i \in S_b} (Y_i - \mu^*), u} \leq O( |S_b|^{1 - 1/t} \cdot n^{1/t} \cdot t^{1/2})
\]
Taking the $t$-th root on both sides and combining it with~\eqref{eq:app-robust-1} yields
\[
\left\| \frac{1}{|S_g|} \sum_{i \in S_g} X_i - \mu^* \right\| \leq \frac{\eps}{1 - \eps} + \frac{\eps}{1 - \eps} (n / |S_b| )^{-1/t} \cdot t^{1/2} = O(\eps^{1 - 1 /t} \cdot t^{1/2}) \; ,
\]
as claimed.


\section{Mixture models with nonuniform weights}
\label{sec:nonuniform-mixtures}

In this section we describe at a high level how to adapt the algorithm given in Section~\ref{sec:mixture} to handle non-uniform weights.
We assume the mixture components now have mixture weights $\eta \leq \lambda_1 \leq \ldots \leq \lambda_k \leq 1$ where $\sum \lambda_i = 1$, where $\eta > 0$ is some fixed constant.
We still assume that all pairs of means satisfy $\| \mu_i - \mu_j \| \geq k^\gamma$ for all $i \neq j$.
In this section we describe an algorithm \textsc{LearnNonUniformMixtureModel}, and we sketch a proof of the following theorem concerning its correctness:
\begin{theorem}
\label{thm:mixture-main-nonuniform}
Let $\eta, \gamma > 0$ be fixed.
Let $\cD$ be a non-uniform mixture of $k$ distributions $\cD_1, \ldots, \cD_k$ in $\R^d$, where each $\cD_j$ is a $O(1 / \gamma)$-explicitly bounded distribution with mean $\mu_j$, and we have $\| \mu_i - \mu_j \| \geq k^{\gamma}$.
Furthermore assume that the smallest mixing weight of any component is at least $\eta$.
Then, given $X_1, \ldots, X_n$ iid samples from $\cD$ where $n \geq \frac{1}{\eta} (dk)^{O(1 / \gamma)}$, \textsc{LearnNonUniformMixtureModel} runs in $O(n^{1/t})$ time and outputs estimates $\muhat_1, \ldots, \muhat_m$ so that there is some permutation $\pi:[m] \to [m]$ so that $\| \muhat_i - \mu_{\pi(i)} \|_2 \leq k^{-10}$ with probability at least $1 - k^{-5}$.
\end{theorem}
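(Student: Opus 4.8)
The plan is to reduce the non-uniform case to the uniform-weights analysis of Section~\ref{sec:mixture} by \emph{guessing} the mixing weight of the cluster one wants to recover. Fix $t = O(1/\gamma)$ and let $W = \{\eta (1+\gamma/100)^i : i \geq 0\} \cap [\eta, 1]$ be a geometric grid of candidate weights; it has only $O(\tfrac1\gamma \log \tfrac1\eta)$ points. For $\alpha \in W$ set $J_\alpha = \{ j \in [k] : \alpha \leq \lambda_j \leq (1+\gamma/100)\alpha \}$, so the clusters in $J_\alpha$ are ``near-uniform'' (weights within a $(1+\gamma/100)$ factor) and every $j \in [k]$ lies in $J_\alpha$ for at least one $\alpha \in W$ (take $\alpha$ to be the largest grid point $\leq \lambda_j$). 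The algorithm \textsc{LearnNonUniformMixtureModel} loops over $\alpha \in W$, each time running the \textsc{LearnMixtureMeans} template with the size parameter $\alpha$ in place of $1/m$: by the solvability part of Lemma~\ref{lem:general-structured-subset-polys} compute a degree-$O(t)$ pseudoexpectation $\pE_{\alpha}$ satisfying $\cAhat$ with parameters $(\alpha, t, \tau = \Delta^{-t})$ and with $\|\pE_{\alpha} ww^\top\|_F$ minimal, form $M_\alpha = |J_\alpha| \cdot \pE_{\alpha} ww^\top$, round it with \textsc{RoundSecondMoments}, and apply \textsc{EstimateMean} (Theorem~\ref{thm:robust-main}) to each returned cluster. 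The final output is the union of all mean estimates produced over all $\alpha \in W$, with estimates within $2k^{-10}$ of one another identified; since the true means are $k^\gamma \gg k^{-10}$ apart this merging is unambiguous and collapses the list to exactly $k$ estimates.

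The deterministic conditions are the analogues of (D1)--(D2): with $n \geq \tfrac1\eta (dk)^{O(1/\gamma)}$ samples, the satisfiability part of Lemma~\ref{lem:general-structured-subset-polys} (which needs $\lambda_j n \geq d^{100t}$, guaranteed by $\lambda_j \geq \eta$ and the sample bound) together with standard concentration gives that, with probability $\geq 1 - k^{-5}$, every cluster $S_j$ has $|S_j| = (1 \pm \Delta^{-t})\lambda_j n$, the system $\cAhat(\lambda_j, t, \Delta^{-t})$ is satisfied by the indicator of $S_j$, and the empirical means obey $\|\bmu_j - \mu_j\| \leq \Delta^{-t}$; more generally, for any $\alpha \in W$ and any $j \in J_\alpha$ the indicator of a uniformly random size-$(1-\tau)\alpha n$ subset of $S_j$ satisfies $\cAhat(\alpha, t, \tau)$.

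The core is a re-parametrized identifiability lemma. Re-running the proofs of Lemma~\ref{lem:mixture-ident-dist} and Lemma~\ref{lem:nice-pe} verbatim with the free parameter $\alpha$ substituted for $1/k$ (each occurrence of $n/k$ becomes $\alpha n$) shows that any $\pE$ satisfying $\cAhat(\alpha, t, \tau)$ has $\pE\iprod{a_j,w}\iprod{a_\ell,w} \leq 2^{O(t)} t^{t/2} \alpha n^2 / \Delta^t$ for all $j \neq \ell$; by the H\"older argument of Lemma~\ref{lem:overview-example}, any feasible $w$ with $\iprod{a_j,w}$ and $\iprod{a_\ell,w}$ both large forces its empirical mean $\mu$ to be simultaneously close to $\bmu_j$ and $\bmu_\ell$, impossible since $\|\bmu_j - \bmu_\ell\| \geq \Delta - 2\Delta^{-t}$. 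Combining this cross-term bound with the lower bound $\pE_{\alpha} (\sum_{j \in J_\alpha}\iprod{a_j,w})^2 \geq (1 - o(1)) \alpha^2 n^2$, which holds because the pseudodistribution uniform over $\{a_j\}_{j\in J_\alpha}$ is feasible so minimality of $\|\pE_{\alpha} ww^\top\|_F$ forces the selection mass to spread over \emph{all} of $J_\alpha$ rather than concentrate on one cluster, one obtains, exactly as in the proof of Theorem~\ref{thm:mixture-main}, that $\|M_\alpha - A_\alpha\|_F \leq \e n$ with $\e = 2^{O(t)} t^{t/2} k^{O(1)} / \Delta^t$, where $A_\alpha$ is the $0/1$ indicator matrix of a partition of a subset of $[n]$ into pieces each contained in a single true cluster (and containing essentially all of $S_j$ for $j \in J_\alpha$). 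Since the pieces have sizes within a constant factor of one another, Theorem~\ref{thm:round-main} recovers these pieces up to an $\e^2 k^{O(1)} n$-fraction of their points, and Theorem~\ref{thm:robust-main} converts each into an estimate of the corresponding true mean with error $O(\sqrt t \cdot (\e^2 k^{O(1)})^{1 - 1/t}) = 2^{O(t)} t^{t/2} k^{O(1)} / \Delta^{t-1}$, which is at most $k^{-10}$ once $t = O(1/\gamma)$ is large enough and $\Delta = k^\gamma$. Taking $\alpha = \lambda_j$ exactly, the indicator of $S_j$ is feasible and participates in the minimal solution, so $\mu_j$ is among the estimates produced on that iteration; hence every $\mu_j$ is covered. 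After a union bound over the $O(\tfrac1\gamma\log\tfrac1\eta)$ iterations the running time is $n^{O(t)} = n^{O(1/\gamma)}$, and the sample complexity is dominated by the satisfiability requirement $\eta n \geq d^{100 t}$ plus $k^{O(1)}$ for concentration, i.e. $n \geq \tfrac1\eta (dk)^{O(1/\gamma)}$.

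The step I expect to be the main obstacle is making the per-bucket identifiability and rounding argument robust to variable cluster sizes. In the uniform case every feasible $w$ is pinned to essentially a single cluster of the known size and \textsc{RoundSecondMoments} operates on a matrix close to a balanced partition; here one must (i) re-derive Lemmas~\ref{lem:mixture-ident-dist}--\ref{lem:nice-pe} with $\alpha$ a guessed parameter while keeping the constants good enough that the cross-cluster terms beat $\Delta^{-t}$, (ii) ensure the Frobenius-norm minimization genuinely spreads $\pE_{\alpha}$ over all of $J_\alpha$ while tolerating that it may additionally place mass on size-$\alpha n$ sub-samples of strictly larger clusters, which is harmless since those only yield extra, correct mean estimates, and (iii) run \textsc{RoundSecondMoments} on a matrix whose blocks are only approximately equal in size, which needs the weights within $J_\alpha$ to be comparable --- precisely why the grid spacing $1+\gamma/100$ is chosen so small. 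Everything else is a direct transcription of Sections~\ref{sec:mixture} and~\ref{sec:rounding} with $1/k$ replaced by $\alpha$.
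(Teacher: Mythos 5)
There is a genuine gap, and it is exactly at the step you flagged as the ``main obstacle'' and then waved away. Your argument needs $M_\alpha = |J_\alpha|\cdot\pE_\alpha ww^\top$ to be Frobenius-close to the $0/1$ indicator matrix $A_\alpha$ of the partition given by the bucket $J_\alpha$, and you justify this by asserting that minimality of $\|\pE_\alpha ww^\top\|_F$ ``forces the selection mass to spread over \emph{all} of $J_\alpha$.'' That inference is false: the constraint system $\cAhat(\alpha,t,\tau)$ is also satisfied by the indicator of \emph{any} size-$\alpha n$ subset of a cluster whose weight strictly exceeds $(1+\gamma/100)\alpha$, and averaging over many such subsets of heavier clusters \emph{decreases} the Frobenius norm further (the diagonal block of a heavy cluster $S_j$ then carries entries of order $(\alpha n/|S_j|)^2$ rather than $0$ or $1$). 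So the minimizer has no reason to concentrate on $J_\alpha$, the normalization $|J_\alpha|$ is not the right one, and $M_\alpha$ need not be close to \emph{any} $0/1$ partition matrix with nearly-equal parts --- which is precisely the hypothesis of Theorem~\ref{thm:round-main}. Calling the mass on heavier clusters ``harmless extra correct estimates'' does not rescue this: fractional-valued diagonal blocks are not extra clusters, the good/bad-row analysis behind \textsc{RoundSecondMoments} breaks (rows in a heavy cluster are uniformly far from the cluster's indicator vector), and once a rounding call can output garbage your merging step can no longer be argued to collapse to exactly $k$ correct means. Your claimed lower bound $\pE_\alpha\bigl(\sum_{j\in J_\alpha}\iprod{a_j,w}\bigr)^2 \geq (1-o(1))\alpha^2 n^2$ inherits the same problem: the size constraint only controls $\sum_{j\in[k]}\iprod{a_j,w}$, not its restriction to $J_\alpha$.

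The paper confronts this head on --- it states that one ``cannot give tight upper and lower bounds'' on the Frobenius norm of the optimum because components with weight just below the threshold may or may not contribute --- and resolves it with two ingredients your proposal lacks. First, the sweep over $\alpha'$ goes from large to small, and each time the SDP is feasible the algorithm extracts a cluster and \emph{removes its points} before continuing; the maintained invariant is that no remaining cluster has weight significantly above the current threshold, which is what eliminates the ``sub-samples of heavier clusters'' degeneracy (together with Lemma~\ref{lem:no-weight-on-small}, which kills mass on much lighter clusters). Second, the rounding is replaced by \textsc{RoundSecondMomentsNonuniform}: take a Gram factorization $\iprod{v_i,v_j}=\pE w_iw_j$, pick a random index with $\|v_i\|^2 \geq \alpha'/100$, and output its ball of radius $O(\sqrt{d\xi})$; this extracts one cluster at a time without needing to know how many clusters sit at the current weight level or the Frobenius norm of the intended solution. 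Your re-parametrized identifiability lemma (Lemmas~\ref{lem:mixture-ident-dist} and~\ref{lem:nice-pe} with $1/k$ replaced by $\alpha$) is fine and matches the paper's use of the cross-term bound, but without the iterative removal and the norm-based one-cluster-at-a-time rounding, the reduction to the uniform-case rounding does not go through.
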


Our modified algorithm is as follows: take $n$ samples $X_1, \ldots, X_n$ where $n$ is as in Theorem~\ref{thm:mixture-main-nonuniform}.
Then, do single-linkage clustering as before, and work on each cluster separately, so that we may assume without loss of generality that all means have pairwise $\ell_2$ distance at most $O(\poly (d, k))$.

Within each cluster, we do the following.
For $\alpha' = 1, 1 - \xi, 1 - 2 \xi, \ldots, \eta$ for $\xi = \poly (\eta / k)$, iteratively form $\cAhat$ with $\alpha = \alpha'$, $t = O \Paren{\frac{1}{\gamma} }$, and $\tau, \delta = k^{-10}$.
Attempt to find a pseudo-expectation $\pE$ that satisfies $\cAhat$ with these parameters with minimal $\| \pE ww^\top \|_F$.
If none exists, then retry with the next $\alpha'$.
Otherwise, a rounding algorithm on $\pE ww^\top$ to extract clusters.
Remove these points from the dataset, and then continue with the next $\alpha'$.

However, the rounding algorithm we require here is somewhat more involved than the naive rounding algorithm used previously for learning mixture models.
In particular, we no longer know exactly the Frobenius norm of the optimal solution: we cannot give tight upper and lower bounds.
This is because components with mixing weights which are just below the threshold $\alpha'$ may or may not contribute to the optimal solution that the SDP finds.
Instead, we develop a more involved rounding algorithm \textsc{RoundSecondMomentsNonuniform}, which we describe below.

Our invariant is that every time we have a feasible solution to the SDP, we remove at least one cluster (we make this more formal below).
Repeatedly run the SDP with this $\alpha'$ until we no longer get a feasible solution, and then repeat with a slightly smaller $\alpha'$.
After the loop terminates, output the empirical mean of every cluster.
The formal specification of this algorithm is given in Algorithm~\ref{alg:learn-mixture-nonuniform}.

\begin{algorithm}[htb]{}
\begin{algorithmic}[1]
\Function{LearnNonuniformMixtureMeans}{$t,\eta,X_1,\ldots,X_n$}
\State Let $\xi \gets \eta^2 / (dk)^{-100}$
\State Let $\mathcal{C} \gets \{ \}$, the empty set of clusters
\State Let $\cX \gets \{X_1, \ldots, X_n\}$

\State Perform naive clustering on $\cX$ to obtain $\cX_1, \ldots, \cX_\ell$.

\For{each $\cX_r$}

	\State Let $\alpha' \gets 1$

	\While{$\alpha'  \geq \eta - k^{-8}$}
		\State By semidefinite programming (see Lemma~\ref{lem:general-structured-subset-polys}, item \ref{itm:solve}), find a pseudoexpectation of degree $t = O( \frac{1}{\gamma} )$ which satisfies the structured subset polynomials from Lemma~\ref{lem:general-structured-subset-polys}, with $\alpha = \alpha' n,$ and $\delta, \tau = k^{-8}$ with data points as in $\cX$.
		\While{the SDP is feasible}
			\State Let $\pE$ be the pseudoexpectation returned
			\State Let $M \gets \pE ww^\top$.
			\State Run the algorithm \textsc{RoundSecondMomentsNonuniform} on $M$ to obtain a cluster $C$.
			\State Let $\mathcal{C} \gets \mathcal{C} \cup \{C\}$
			\State Remove all points in $C$ from $\cX_r$
		\EndWhile
		\State Let $\alpha' \gets \alpha' - \xi$
	\EndWhile
\EndFor
\State \textbf{return} The empirical mean of every cluster in $\mathcal{C}$
\EndFunction
\end{algorithmic}
\caption{Mixture Model Learning}
\label{alg:learn-mixture-nonuniform}
\end{algorithm}

For $j = 1, \ldots, k$ let $S_j$ be the set of indices of points in $X_1, \ldots, X_n$ which were drawn from $\cD_j$, and let $a_j \in \R^n$ be the indicator vectors for these sets as before.
Our key invariant is the following: for every $\alpha'$ such that the SDP returns a feasible solution, we must have $|\alpha' - \lambda_j| \leq O(\xi)$ for some $j$, and moreover, for every $j$ so that $\lambda_j \geq \alpha' + O(\xi)$, there must be exactly one cluster $C_\ell$ output by the algorithm at this point so that $|C_\ell \triangle S_j| \leq k^{-10} \poly (\eta) \cdot  \cdot n$.
Moreover, every cluster output so far must be of this form.
For any $\alpha'$, we say that the algorithm up to $\alpha'$ is \emph{well-behaved} if it satisfies this invariant for the loops in the algorithm for $\alpha''$ for $\alpha'' > \alpha'$.

It is not hard to show, via arguments exactly as in Section~\ref{sec:robust} and \ref{sec:moment-polys} that the remaining fraction of points from these components which we have not removed as well as the small fraction of points we have removed from good components do not affect the calculations, and so we will assume for simplicity in the rest of this discussion that we have removed all samples from components  $j$ with $\lambda_j \geq \alpha' + O(\xi)$.

\subsection{Sketch of proof of correctness of Algorithm \ref{alg:learn-mixture-nonuniform}}
Here we outline the proof of correctness of Algorithm \ref{alg:learn-mixture-nonuniform}.
The proof follows very similar ideas as the proof of correctness of Algorithm \ref{alg:learn-mixture}, and so for conciseness we omit many of the details.
As before, for simplicity assume that the naive clustering returns only one cluster, as otherwise we can work on each cluster separately, so that for all $i$, we have $\| \mu_i \| \leq O(\poly (d, k))$ after centering.

We now show why this invariant holds.
Clearly this holds at the beginning of the algorithm.
We show that if it holds at any step, it must also hold at the next time at which the SDP is feasible.
Fix such an $\alpha'$.
By assumption, we have removed almost all points from components $j$ with $\lambda_j \geq \alpha' + k^{-8}$, and have only removed a very small fraction of points not from these components.

By basic concentration, we have $\left| \lambda_j n - |S_j| \right| \leq o(n)$ for all $j$ except with negligble probability, and so for the rest of the section, for simplicity, we will slightly cheat and assume that $\lambda_j n = |S_j|$.
It is not hard to show that this also does not effect any calculations.

The main observation is that for any choice of $\alpha'$, by essentially same logic as in Section~\ref{sec:mixture}, we still have the following bound for all $i \neq j$ for an $\alpha'$ well-behaved run:
\begin{equation}
\label{eq:nonuniform-moment-bound}
\cAhat \proves{O(t)} \iprod{a_i, w} \iprod{a_j, w} \leq \frac{\eta n^2 t^{O(t)}}{k^{2 t \gamma}} = O(\eta \xi^2) \cdot (\alpha')^2 n^2 \; ,
\end{equation}
for $\cAhat$ instantiated with $\alpha = \alpha'$, where the last line follows by our choice of $t$ sufficiently large.

We now show this implies:
\begin{lemma}
\label{lem:no-weight-on-small}
With parameters as above, for any $\alpha'$ well-behaved run, we have $\cAhat \proves{O(t)} \iprod{a_i, w} \leq O(\xi^2) \cdot \alpha' n$ for any $j$ so that $\lambda_j n \leq (\alpha' - O(\xi^4)) n$.
\end{lemma}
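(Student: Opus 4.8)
The plan is to read everything off three ingredients, all available at SoS degree $O(t)$: the booleanness axioms $w_i^2 = w_i$, which yield the SoS facts $w_i \geq 0$ and $1 - w_i = (1-w_i)^2 \geq 0$, hence $\iprod{a_j,w} = \sum_{i \in S_j} w_i \leq |S_j| = \lambda_j n$; the size axiom $\sum_{i} w_i \geq (1-\tau)\alpha' n$; and the moment bound \eqref{eq:nonuniform-moment-bound}, which makes every cross term $\iprod{a_i,w}\iprod{a_{j'},w}$ with $i \neq j'$ negligibly small. As in the surrounding discussion, I would first reduce to the clean situation in which the surviving points are exactly $\bigcup_{j'} S_{j'}$ over the remaining components $j'$ (those with $\lambda_{j'} \leq \alpha' + O(\xi)$), so that the $a_{j'}$ partition the index set and $\sum_i w_i = \sum_{j'}\iprod{a_{j'},w}$; the effect of the components already removed, and of the small fraction of good points removed in error, is absorbed exactly as in Sections~\ref{sec:robust} and~\ref{sec:moment-polys}.

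Writing $P = \iprod{a_j,w}$, multiplying the SoS inequality $P \geq 0$ by the size axiom and using $\sum_i w_i = \sum_{j'}\iprod{a_{j'},w}$ gives
\[
  \cAhat \proves{O(t)} P\Paren{(1-\tau)\alpha' n - P} \leq P \cdot \sum_{j' \neq j}\iprod{a_{j'},w} = \sum_{j' \neq j}\iprod{a_j,w}\iprod{a_{j'},w} \leq \epsilon_0 n^2 ,
\]
where $\epsilon_0$ is the $O(k)$-fold sum of the right-hand side of \eqref{eq:nonuniform-moment-bound}; since $t = O(1/\gamma)$ is taken with a large enough constant, the factor $k^{-2t\gamma}$ there lets us make $\epsilon_0$ as small as, say, $O(\eta\,\xi^6)$. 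On the other hand $P \leq \lambda_j n \leq (\alpha' - \Omega(\xi^4))n$; multiplying this by $P \geq 0$ to get $P^2 \leq (\alpha' - \Omega(\xi^4))n\,P$ and substituting into the display above yields
\[
  \cAhat \proves{O(t)} n P \Paren{\Omega(\xi^4) - \tau \alpha'} \leq P\Paren{(1-\tau)\alpha' n - P} \leq \epsilon_0 n^2 .
\]
Choosing $\tau$ polynomially smaller than $\xi^4$ makes $\Omega(\xi^4) - \tau\alpha' = \Omega(\xi^4)$, so rearranging gives $\cAhat \proves{O(t)} P \leq \epsilon_0 n / \Omega(\xi^4) \leq O(\xi^2)\,\alpha' n$, which is the claim. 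Conceptually this just formalizes that the selected mass must concentrate on essentially one component, and a component whose weight $\lambda_j$ is bounded below $\alpha'$ cannot be that component.

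The main obstacle is not conceptual but the bookkeeping of parameters consistently with Algorithm~\ref{alg:learn-mixture-nonuniform}: one needs $\tau\alpha' \ll \xi^4$ (so the gap $\alpha' - \lambda_j$ genuinely survives in the denominator) and the aggregated cross-term error to be $o(\xi^6 n^2)$ (so the final quotient is $O(\xi^2)\alpha' n$ rather than something larger), and both are obtained by taking the constant in $t = C/\gamma$ large, exploiting the $k^{-2C}$ decay in \eqref{eq:nonuniform-moment-bound}. A secondary point needing a line of justification is that each manipulation above is legitimate at SoS degree $O(t)$ and that the reduction to the cleanly partitioned index set is valid at the level of SoS proofs, but this is routine given the analogous steps already carried out in the uniform case.
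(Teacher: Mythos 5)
Your proof is correct and is essentially the paper's own argument: both lower-bound $\sum_{j' \neq j} \iprod{a_{j'}, w}$ via the size axiom together with $\iprod{a_j, w} \leq \lambda_j n$, multiply by $\iprod{a_j, w} \geq 0$, and invoke the cross-term bound \eqref{eq:nonuniform-moment-bound} (with the constant in $t = O(1/\gamma)$ taken large enough) before dividing through. The only difference is that you track $\tau$ and the $\xi^4$ gap explicitly, whereas the paper's sketch treats the size constraint as an exact equality and works with an $\Omega(\xi^2)$ gap; your added requirements ($\tau \alpha' \ll \xi^4$ and cross terms of size $O(\eta \xi^6)$) are just a more careful version of the same parameter bookkeeping.
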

\begin{proof}
We have
\begin{align*}
\cAhat \proves{t} \sum_{j' \neq j} \iprod{a_i, w} &= \alpha' n - \iprod{a_j, w} \geq \Omega (\xi^2) n \; ,
\end{align*}
and hence 
\begin{align*}
\cAhat \proves{O(t)} \Omega (\xi^2) n \iprod{a_i, w} &\leq \iprod{a_i, w} \sum_{j \neq i} \iprod{a_j, w} \\
&\leq \frac{1}{\eta} O(\eta \xi^4) \cdot (\alpha')^2 \cdot n^2 \; ,
\end{align*}
from which we deduce $\cAhat \proves{O(t)} \iprod{a_i, w} \leq O(\xi^2) \cdot \alpha' n$.
\end{proof}

We now show that under these conditions, there is an algorithm to remove a cluster:

\subsection{Rounding Well-behaved runs}

\begin{lemma}
\label{lem:rounding-main-nonuniform}
Let $\alpha', \eta, \gamma, t$ be as in Theorem~\ref{thm:mixture-main-nonuniform}.
Suppose that $\cAhat$ is satisfiable with this set of parameters, that the algorithm has been $\alpha'$ well-behaved, and~\eqref{eq:nonuniform-moment-bound} holds.
Then, there is an algorithm \textsc{RoundSecondMomentsNonuniform} which given $\pE$ outputs a cluster $C$ so that $|C \triangle S_j| \leq (\eta / dk)^{O(1)} n$ with probability $1 - (\eta / dk)^{O(1)}$.
\end{lemma}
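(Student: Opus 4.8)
The plan is to prove \cref{lem:rounding-main-nonuniform} in three stages: first show that the hypotheses force the pseudoexpectation $\pE$ to concentrate essentially all of its mass on a single cluster indicator $a_{j^*}$ with $|S_{j^*}| = (1\pm o(1))\alpha' n$; then deduce that $M=\pE ww^\top$ is close in Frobenius norm to the rank-one $0/1$ matrix $a_{j^*}a_{j^*}^\top$; and finally read off $S_{j^*}$ from $M$ by a rounding step essentially identical to \textsc{RoundSecondMoments}.

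For the first stage I would set $b_j := \pE\iprod{a_j,w}$ and combine three facts: the size axiom of $\cAhat$ together with $w_i^2=w_i$ gives $\sum_j b_j = \pE\sum_i w_i = (1\pm\tau)\alpha' n$ and $0\le b_j\le |S_j|$; positivity of $\pE$ on squares (applied to $\iprod{a_j,w}-b_j$) gives $\pE\iprod{a_j,w}^2\ge b_j^2$; and summing \eqref{eq:nonuniform-moment-bound} over the at most $k^2$ pairs $i\ne j$ shows the cross terms contribute only $o((\alpha' n)^2)$ because $\xi=\poly(\eta/k)$. Since $\pE(\sum_i w_i)^2 = \sum_j b_j^2 + \sum_{i\ne j}\pE\iprod{a_i,w}\iprod{a_j,w}$ and the left side is $(1\pm o(1))(\alpha' n)^2$, this forces $\sum_j b_j^2 = (1\pm o(1))(\sum_j b_j)^2$; as the $b_j$ are nonnegative and at most $k$ in number, an elementary inequality yields a single $j^*$ with $b_{j^*}\ge(1-o(1))\alpha' n$ and $\sum_{j\ne j^*}b_j = o(\alpha' n)$. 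Together with $b_{j^*}\le |S_{j^*}|$ and the well-behaved invariant (which I expect to show, exactly as in \cref{sec:mixture}, forces every surviving cluster to have $|S_j|\le(\alpha'+O(\xi))n$) this pins $|S_{j^*}| = (1\pm o(1))\alpha' n$; \cref{lem:no-weight-on-small} is the qualitative shadow of this computation.

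For the second stage I would note that the point mass on $a_{j^*}$ is a feasible solution of $\cAhat$, so minimality of $\|\pE ww^\top\|_F$ gives $\|M\|_F\le\|a_{j^*}a_{j^*}^\top\|_F = |S_{j^*}|$, while $\iprod{M,a_{j^*}a_{j^*}^\top} = \pE\iprod{a_{j^*},w}^2\ge b_{j^*}^2 \ge (1-o(1))|S_{j^*}|^2$ by Stage 1. Hence Cauchy--Schwarz is nearly tight, and the identity $\|M-a_{j^*}a_{j^*}^\top\|_F^2 = \|M\|_F^2 - 2\iprod{M,a_{j^*}a_{j^*}^\top} + |S_{j^*}|^2 \le o(1)\cdot |S_{j^*}|^2$ gives $\|M-a_{j^*}a_{j^*}^\top\|_F\le\varepsilon n$ with $\varepsilon = (\eta/dk)^{O(1)}$, provided $t$ is a large enough constant multiple of $1/\gamma$ so that the error $t^{O(t)}/k^{2t\gamma}$ in \eqref{eq:nonuniform-moment-bound} is $(\eta/dk)^{O(1)}$. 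Then Stage 3 is the analysis behind \cref{lem:rounding-main}: by the counting argument of \cref{lem:few-bad}, all but $\varepsilon^2\poly(k)\,n$ rows $v_i$ of $M$ are within $\sqrt n/\poly(k)$ of $a_{j^*}\cdot\mathbf 1[i\in S_{j^*}]$, so \textsc{RoundSecondMomentsNonuniform} picks a uniformly random row and — using that the row norm separates good rows in $S_{j^*}$ (norm $\approx\sqrt{|S_{j^*}|}$) from good rows outside it (norm $\approx 0$), while $\|M\|_F\approx |S_{j^*}|$ is observable — accepts the radius-$2\sqrt n/\poly(k)$ ball around that row when its squared norm is at least $\tfrac12\|M\|_F$, retrying otherwise. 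A random row lies in $S_{j^*}$ with probability $\ge\lambda_{j^*}-\varepsilon^2\poly(k)\ge\eta/2$, so a few retries succeed with probability $1-(\eta/dk)^{O(1)}$, and on that event $C\triangle S_{j^*}$ has size at most the number of bad rows, i.e.\ $(\eta/dk)^{O(1)}n$.

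The main obstacle I anticipate is Stage 2: unlike the uniform case, the minimum Frobenius norm of a feasible $\pE ww^\top$ is not known in advance, since clusters with $\lambda_j$ just below $\alpha'$ may or may not be charged by the optimum. The resolution is to let the SDP itself select $j^*$ through the concentration of Stage 1 and only ever compare $\|M\|_F$ and $\iprod{M,a_{j^*}a_{j^*}^\top}$ to $|S_{j^*}|$, never to an externally known target; making the well-behaved invariant precise enough to conclude $|S_{j^*}| = (1\pm o(1))\alpha' n$ — in particular that feasibility of $\cAhat$ at level $\alpha'$ forces a surviving cluster of roughly that size, which itself rests on the moment axioms ruling out ``unions of separated clusters'' as feasible $w$ — is where the bulk of the remaining work lives, and it mirrors the deterministic-conditions analyses of \cref{sec:mixture} and \cref{sec:moment-polys}.
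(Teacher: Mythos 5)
There is a genuine gap, and it is exactly at your Stage 1: the hypotheses do \emph{not} force $\pE$ to concentrate on a single cluster indicator, and the lemma must be proved without that claim. Concretely, suppose two surviving clusters $j_1,j_2$ both have mixing weight within $O(\xi)$ of $\alpha'$ (nothing in the well-behaved invariant rules this out --- only clusters with $\lambda_j\ge \alpha'+O(\xi)$ have been removed). The $50/50$ mixture of the two point masses $w=a_{j_1}$ and $w=a_{j_2}$ is an honest distribution over solutions of $\cAhat$, hence a valid pseudoexpectation satisfying all hypotheses, and it even satisfies \eqref{eq:nonuniform-moment-bound} with cross terms equal to zero (the supports are disjoint). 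For it, $b_{j_1}=b_{j_2}=\tfrac12\alpha' n$, so there is no $j^*$ with $b_{j^*}\ge(1-o(1))\alpha' n$, and $M=\tfrac12(a_{j_1}a_{j_1}^\top+a_{j_2}a_{j_2}^\top)$ is at Frobenius distance $\Omega(\alpha' n)$ from \emph{every} $a_ja_j^\top$, so Stage 2's conclusion $\|M-a_{j^*}a_{j^*}^\top\|_F\le\varepsilon n$ and the reduction to the uniform rounding analysis of \cref{lem:rounding-main} both fail. The algebra you propose cannot rescue this: from the size axiom and \eqref{eq:nonuniform-moment-bound} you can only conclude $\sum_j\pE\iprod{a_j,w}^2\approx(\alpha' n)^2$ (your displayed identity should have $\sum_j\pE\iprod{a_j,w}^2$, not $\sum_j b_j^2$), and the positivity inequality $\pE\iprod{a_j,w}^2\ge b_j^2$ points in the wrong direction to lower-bound $\sum_j b_j^2$ by $(1-o(1))(\sum_j b_j)^2$. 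Appealing to minimality of $\|\pE ww^\top\|_F$ makes matters worse rather than better --- spreading mass over several clusters \emph{decreases} the Frobenius norm (and the nonuniform algorithm's lemma is stated for an arbitrary feasible $\pE$ in any case).

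This is precisely why the paper's rounding is ``more involved'': it never asserts closeness of $M$ to a single rank-one matrix. Instead it takes Gram vectors $v_i$ with $\iprod{v_i,v_j}=\pE w_iw_j$, shows (using \eqref{eq:nonuniform-moment-bound} together with \cref{lem:no-weight-on-small}) that for each cluster $\ell$ in the near-threshold set $T$ the vectors $\{v_i\}_{i\in S_\ell}$ lie, up to an $O(d\xi)\alpha' n$ exceptional set, in a ball $B_\ell$ of radius $O(\sqrt{d\xi})$; that two such balls whose centers have non-negligible norm are disjoint (this is where the cross-term bound is used, replacing your ``single $j^*$'' step); and that a uniformly random index with $\|v_i\|^2\ge\alpha'/100$ lands in some such ball with probability $1-O(d\xi)$. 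Thresholding distances around that random vector then recovers one cluster of $T$ up to symmetric difference $(\eta/dk)^{O(1)}n$, even when the pseudoexpectation genuinely hedges between several clusters. If you want to salvage your outline, you would have to replace Stages 1--2 by an argument of this per-cluster, ball-disjointness type; as written, the claimed concentration on $a_{j^*}$ is false and the proof does not go through.
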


Formally, let $v_i \in \R^n$  be so that for all $i, j$, we have $\iprod{v_i, v_j} = \pE w_i w_j$.
Such $v_i$ exist because $\pE ww^\top$ is PSD, and can be found efficiently via spectral methods.
For any cluster $j$, let $V_j$ denote the set of vectors $v_i$ for $i \in S_j$.

Our algorithm will proceed as follows: choose a random $v_i$ with $\| v_i \|^2 \geq \alpha' / 100$, and simply output as the cluster the set of $\ell$ so that $\| v_i - v_\ell \| \leq O(\sqrt{d \xi})$.

We now turn to correctness of this algorithm.
Define $T$ to be the set of clusters $j$ with $|\lambda_j - \alpha'| \leq O(\xi^4)$.
We first show:
\begin{lemma}
Assume that~\eqref{eq:nonuniform-moment-bound} holds.
Then
\[
\sum_{\ell \in T} \sum_{i, j \in S_\ell} \| v_i - v_j \|^2 \leq O(d^2 \xi^2) (\alpha')^2 n^2 \; .
\]
\end{lemma}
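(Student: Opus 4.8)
The plan is to convert the claim into a statement about pseudo-expectations of the cluster-selection quantities $\iprod{a_\ell,w}$ and then feed in~\eqref{eq:nonuniform-moment-bound}. I would start from two elementary identities. Since the $v_i$ are chosen so that $\iprod{v_i,v_j}=\pE w_iw_j$, expanding the square gives $\|v_i-v_j\|^2 = \pE w_i^2 + \pE w_j^2 - 2\pE w_iw_j = \pE\bigl[(w_i-w_j)^2\bigr]$ for all $i,j$; and because $\pE$ satisfies the Booleanness axioms $w_i^2=w_i$ of $\cAhat$, summing over $i,j\in S_\ell$ yields
\[
  \sum_{i,j\in S_\ell}\pE\bigl[(w_i-w_j)^2\bigr] \;=\; 2\,|S_\ell|\,\pE\iprod{a_\ell,w}\;-\;2\,\pE\iprod{a_\ell,w}^2 \;=\; 2\,\pE\Bigl[\iprod{a_\ell,w}\bigl(|S_\ell|-\iprod{a_\ell,w}\bigr)\Bigr].
\]
Hence after summing over $\ell\in T$ the left side of the lemma equals $2\sum_{\ell\in T}\pE\bigl[\iprod{a_\ell,w}(|S_\ell|-\iprod{a_\ell,w})\bigr]$, so it suffices to bound this by $O(d^2\xi^2)(\alpha')^2 n^2$.

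Next I would bound each summand by splitting off the missing mass of cluster $\ell$. With $W=\sum_{i\in[n]}w_i$ we have the polynomial identity $|S_\ell|-\iprod{a_\ell,w} = (|S_\ell|-W) + \sum_{m\ne\ell}\iprod{a_m,w}$, hence
\[
  \iprod{a_\ell,w}\bigl(|S_\ell|-\iprod{a_\ell,w}\bigr) \;=\; \iprod{a_\ell,w}\,(|S_\ell|-W)\;+\;\sum_{m\ne\ell}\iprod{a_\ell,w}\iprod{a_m,w}.
\]
Each cross term $\iprod{a_\ell,w}\iprod{a_m,w}$ with $\ell\ne m$ is a product of two sums of squares modulo $w_i^2=w_i$, so $\pE$ of it is nonnegative and, by~\eqref{eq:nonuniform-moment-bound} and soundness, at most $\tfrac{\eta n^2 t^{O(t)}}{k^{2t\gamma}}$; summing the at most $k^2$ of them over $\ell\in T$, $m\ne\ell$ and taking $t$ a large enough multiple of $1/\gamma$ makes this contribution $\ll d^2\xi^2(\alpha')^2 n^2$. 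For the defect term, $\cAhat$ forces $W\ge(1-\tau)\alpha' n$, and for $\ell\in T$ the size $|S_\ell|$ differs from $\alpha' n$ by at most $O(\xi^4)n$ plus a Chernoff fluctuation at most $\xi^4 n$ (our $n$ is large enough); so $\cAhat\proves{2} 0\le |S_\ell|-W\le E_0$ for a scalar $E_0 = O(\tau+\xi^4)\alpha' n$, and multiplying this by the SoS-nonnegative $\iprod{a_\ell,w}$ (which $\cAhat$ also bounds by $|S_\ell|\le(1+o(1))\alpha' n$ via $w_i\le 1$) gives $\pE\bigl[\iprod{a_\ell,w}(|S_\ell|-W)\bigr]\le E_0(1+o(1))\alpha' n$. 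Since $j\in T$ forces $\lambda_j\ge\eta/2$ and $\sum_j\lambda_j=1$ we get $|T|\le 2/\eta$, so the total defect contribution is $O(\tau/\eta)(\alpha')^2 n^2$. Adding the two pieces gives $\sum_{\ell\in T}\sum_{i,j\in S_\ell}\|v_i-v_j\|^2 = O\bigl(\tfrac{\tau}{\eta}+\mathrm{poly}(k)\,\eta\xi^2\bigr)(\alpha')^2 n^2$.

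The only place requiring care is matching the stated numerical bound $O(d^2\xi^2)$: the defect term contributes $O(\tau/\eta)(\alpha')^2 n^2$, which fits inside the target only if the size-slack $\tau$ (and the window used for the cluster-size concentration) is taken polynomially small in $\eta,1/d,1/k$, rather than the illustrative value $k^{-8}$. This is pure constant-bookkeeping with no conceptual content, and I would handle it by fixing a single error budget at the start of the proof into which all the $1/\mathrm{poly}(d,k)$ slacks are absorbed. Everything genuinely substantive is already in~\eqref{eq:nonuniform-moment-bound}, which says the weight-profiles $\iprod{a_\ell,w}$ of distinct clusters are nearly uncorrelated under $\pE$; the two identities above simply repackage that fact as the desired Frobenius bound on the Gram vectors $v_i$. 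In particular, note that this lemma, unlike the preceding one, does not require the minimality of $\|\pE ww^\top\|_F$.
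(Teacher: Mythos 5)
Your proof is correct and rests on the same two ingredients as the paper's---the cross-cluster bound \eqref{eq:nonuniform-moment-bound} and the size/Booleanness axioms---but it is organized differently. The paper expands the left-hand side as $2\sum_{\ell\in T}\bigl(|S_\ell|\,\pE\iprod{a_\ell,w}-\pE\iprod{a_\ell,w}^2\bigr)$ and compares each sum separately to $(\alpha')^2n^2$: the first via the size axiom together with Lemma~\ref{lem:no-weight-on-small} (to show almost all of the weight lies on clusters in $T$), the second by expanding $\pE\bigl(\sum_\ell\iprod{a_\ell,w}\bigr)^2=(\alpha' n)^2$ and controlling cross terms by \eqref{eq:nonuniform-moment-bound} and out-of-$T$ diagonal terms by Lemma~\ref{lem:no-weight-on-small} again. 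You instead stay per cluster, writing the quantity as $2\sum_{\ell\in T}\pE\bigl[\iprod{a_\ell,w}(|S_\ell|-\iprod{a_\ell,w})\bigr]$ and splitting $|S_\ell|-\iprod{a_\ell,w}$ into a size defect $|S_\ell|-W$ plus the mass $\sum_{m\neq\ell}\iprod{a_m,w}$ on other clusters, which \eqref{eq:nonuniform-moment-bound} kills directly; this bypasses Lemma~\ref{lem:no-weight-on-small} entirely, a genuine if modest simplification (at the cost of handling the defect term explicitly, which is where the $\tau$-slack surfaces). Your closing caveat is fair and applies equally to the paper: its proof silently treats $\sum_i w_i$ as exactly $\alpha' n$, so it has the same $\tau$-versus-$\xi$ bookkeeping looseness you flag, and absorbing all $1/\poly(d,k)$ slacks into one budget is the right fix in this sketch-level section. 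Two harmless nits: the claim $\cAhat\proves{2}0\le|S_\ell|-W$ is not justified (the size axiom permits $W$ up to $(1+\tau)\alpha' n$ while $|S_\ell|$ may fall slightly below $\alpha' n$), but you only ever use the upper bound $|S_\ell|-W\le E_0$ multiplied by the SoS-nonnegative $\iprod{a_\ell,w}$, so nothing breaks; and the nonnegativity of the cross terms is never actually needed, since every term is bounded from above.
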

\begin{proof}
Observe that 
\begin{align*}
\sum_{\ell \in T} \sum_{i, j \in S_\ell} \| v_i - v_j \|^2 &= \sum_{\ell \in T} \sum_{i, j \in S_\ell} \| v_i \|^2 + \| v_j \|^2 - 2 \iprod{v_i, v_j} \\
&= \sum_{\ell \in T} \left( 2 |S_\ell| \sum_{i \in S_\ell} \| v_i \|^2 - 2 \sum_{i, j \in S_\ell} \iprod{v_i, v_j} \right) \; .
\end{align*}
By assumption, we have
\[
\sum_{\ell \in T} \sum_{i \in S_\ell} |S_\ell| \| v_\ell \|^2 &= (\alpha' \pm O(\xi^4)) n \sum_{\ell \in T} \| v_\ell \|^2 &= (\alpha' \pm O(\xi^4)) n \cdot \pE \left( \sum_{\ell \in T} \sum_{i \in S_\ell} w_i^2 \right) \; .
\]
Since by Lemma~\ref{lem:no-weight-on-small} we have $\pE [\sum_{\ell \not\in T} \sum_{i \in S_\ell} w_i^2] \leq d O(\xi^2) \alpha n$, we conclude that 
\[
\alpha n \geq \pE \left( \sum_{\ell \in T} \sum_{i \in S_\ell} w_i^2 \right) \geq (1 - d O(\xi^2)) \alpha' n \; .
\]
All of this allows us to conclude 
\[
\sum_{\ell \in T} \sum_{i \in S_\ell} |S_\ell| \| v_\ell \|^2 = (1 \pm O(d \xi^2)) (\alpha')^2 n^2 \; .
\]
On the other hand, we have 
\begin{align*}
\sum_{\ell \in T} \sum_{i, j \in S_\ell} \iprod{v_i, v_j} &= \sum_{\ell \in T} \pE \iprod{a_\ell, w}^2 \; ,
\end{align*}
but we have
\begin{align*}
(\alpha')^2 n^2 &= \pE \Paren{\sum_{\ell} \iprod{a_\ell, w}}^2 \\
&= \sum_{\ell \neq j} \pE [\iprod{a_\ell, w} \iprod{a_j, w}] + \sum_{\ell \not\in T} \iprod{a_\ell, w}^2 + \sum_{\ell \in T} \iprod{a_\ell, w}^2 \; .
\end{align*}

The first term is at most $O(d^2 \eta \xi^2) (\alpha')^2 n^2$ by~\eqref{eq:nonuniform-moment-bound} and the second term is at most $d O(\xi^2) \alpha' n$ by Lemma~\ref{lem:no-weight-on-small}, so overall we have that 
\[
\sum_{\ell \in T} \pE \iprod{a_\ell, w}^2 &= (1 \pm O(d^2 \xi^2))(\alpha')^2 n^2 \; .
\]
Hence putting it all together we have
\[
\sum_{\ell \in T} \sum_{i, j \in S_\ell} \| v_i - v_j \|^2 &= O(d^2 \xi^2) (\alpha')^2 n^2 \; ,
\]
as claimed.
\end{proof}
\noindent
As a simple consequence of this we have:
\begin{lemma}
Assume that~\eqref{eq:nonuniform-moment-bound} holds.
For all $\ell \in T$, there exists a ball $B$ of radius $O(\sqrt{d \xi})$ so that $|V_\ell \triangle B| \leq O(d \xi) \alpha' n$.
\end{lemma}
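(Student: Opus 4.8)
## Proof Proposal for the Ball-Covering Lemma

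\textbf{Proof plan.} The plan is to take $B$ to be a Euclidean ball about the \emph{centroid} of $V_\ell$. Fix $\ell \in T$. Since the preceding lemma bounds a sum of nonnegative terms, it already yields the single-cluster estimate $\sum_{i,j \in S_\ell} \|v_i - v_j\|^2 \le O(d^2\xi^2)(\alpha')^2 n^2$. As $\ell \in T$ we have $|S_\ell| = \lambda_\ell n = (1\pm o(1))\alpha' n$ by basic concentration, so dividing through and using the standard identity $\frac{1}{|S_\ell|}\sum_{i\in S_\ell}\|v_i - \bar v_\ell\|^2 = \frac{1}{2|S_\ell|^2}\sum_{i,j\in S_\ell}\|v_i-v_j\|^2$, where $\bar v_\ell := \frac{1}{|S_\ell|}\sum_{i\in S_\ell} v_i$, I get that the average squared distance from a point of $V_\ell$ to $\bar v_\ell$ is at most $O(d^2\xi^2)$. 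I will then let $B$ be the ball of radius $R := O(\sqrt{d\xi})$ about $\bar v_\ell$, noting $R^2 = O(d\xi) \gg O(d^2\xi^2)$ since $d\xi = \poly(\eta/k) \ll 1$.

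\textbf{Inliers.} The first half of the claim then follows by Markov applied to the numbers $\|v_i - \bar v_\ell\|^2$, $i\in S_\ell$: the count of $i\in S_\ell$ with $\|v_i - \bar v_\ell\| > R$ is at most $O(d^2\xi^2)|S_\ell|/R^2 = O(d\xi)|S_\ell| = O(d\xi)\alpha' n$, so $|V_\ell\setminus B| \le O(d\xi)\alpha' n$, with no extra hypotheses on $\ell$.

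\textbf{Foreign points.} For the reverse inclusion I would use that distinct clusters have nearly orthogonal Gram vectors. Since $\pE$ satisfies $w_i^2 = w_i$, each $\langle v_i,v_j\rangle = \pE[w_iw_j] = \pE[(w_iw_j)^2] \ge 0$, so~\eqref{eq:nonuniform-moment-bound} reads $\sum_{i\in S_\ell,\,j\in S_{j'}}\langle v_i,v_j\rangle = \pE\iprod{a_\ell,w}\iprod{a_{j'},w} \le O(\eta\xi^2)(\alpha')^2 n^2$ for each $j'\ne\ell$; summing over the $\le k$ choices of $j'$ and dividing by $|S_\ell|$ gives $\bigl\langle \bar v_\ell,\ \sum_{j\notin S_\ell} v_j\bigr\rangle \le O(k\eta\xi^2)\,\alpha' n$. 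Any $v_j$ with $j\notin S_\ell$ inside $B$ obeys $\langle v_j,\bar v_\ell\rangle \ge \tfrac12(\|\bar v_\ell\|^2 - R^2) \ge \tfrac14\|\bar v_\ell\|^2$ once $\|\bar v_\ell\|^2 \ge 2R^2$, so the number of such $j$ is at most $4\,O(k\eta\xi^2)\alpha' n / \|\bar v_\ell\|^2$. Finally, from the same identity, $\|\bar v_\ell\|^2 \ge \pE\iprod{a_\ell,w}/|S_\ell| - O(d^2\xi^2)$, which is $\Omega(1)$ exactly for the ``heavy'' clusters with $\pE\iprod{a_\ell,w} = \Omega(\alpha' n)$; for these, the foreign-point count is $O(k\eta\xi^2)\alpha' n \ll O(d\xi)\alpha' n$ by the choice $\xi = \poly(\eta/k)$, giving $|B\setminus V_\ell|\le O(d\xi)\alpha' n$ and hence $|V_\ell\triangle B|\le O(d\xi)\alpha' n$.

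\textbf{Main obstacle.} The delicate point is precisely this foreign-point bound: for ``light'' clusters whose Gram vectors all collapse toward the origin, a radius-$O(\sqrt{d\xi})$ ball about the centroid can swallow foreign points from other light clusters, so the symmetric-difference statement is only meaningful for heavy $\ell$. This is reconciled with the algorithm by observing that \textsc{RoundSecondMomentsNonuniform} only ever centers its ball at a $v_i$ with $\|v_i\|^2 \ge \alpha'/100$, which together with the intra-cluster variance bound above forces the cluster containing $i$ to be heavy (and its centroid $\Omega(1)$-far from $0$); so the lemma is only applied, and is only needed, for heavy $\ell$. The remaining work is just bookkeeping the constants, which all close because $\xi$ is polynomially small in $\eta/(dk)$, rendering $k\eta\xi^2$ and $d^2\xi^2$ negligible against the target error $d\xi$.
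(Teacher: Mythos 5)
Your proposal is correct, and its first half is in substance the same argument as the paper's: the paper also deduces the lemma solely from the preceding bound $\sum_{i,j\in S_\ell}\|v_i-v_j\|^2\le O(d^2\xi^2)(\alpha')^2n^2$, though it argues by contradiction (if every ball of radius $O(\sqrt{d\xi})$ centered at a point of $V_\ell$ missed $\Omega(d\xi)\alpha' n$ points of $V_\ell$, the pairwise sum would be $\Omega(d^2\xi^2)(\alpha')^2n^2$), whereas you center at the centroid and apply Markov --- an equivalent and if anything cleaner route, and also compatible with the later lemmas, which only use that $B_\ell$ has some center $\phi_\ell$. Where you genuinely diverge is the ``foreign points'' half: the paper's proof does not address points $v_j$ with $j\notin S_\ell$ landing in $B$ at all --- it implicitly reads $|V_\ell\triangle B|$ one-sidedly as $|V_\ell\setminus B|$, and defers contamination to the subsequent lemmas (disjointness of balls with far centers, and the fact that the rounding only seeds at some $v_i$ with $\|v_i\|^2\ge\alpha'/100$). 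Your extra argument, using $\iprod{v_i,v_j}=\pE[w_iw_j]=\pE[(w_iw_j)^2]\ge 0$ together with \eqref{eq:nonuniform-moment-bound} to bound $\iprod{\bar v_\ell,\sum_{j\notin S_\ell}v_j}$ and then lower-bounding $\|\bar v_\ell\|^2$, is sound, and your diagnosis is accurate: the two-sided statement cannot be extracted from \eqref{eq:nonuniform-moment-bound} alone for clusters whose Gram vectors collapse toward the origin, so it only holds for ``heavy'' $\ell$, which is exactly the regime the rounding procedure actually uses. So relative to the paper you prove strictly more (the literal symmetric-difference claim, under the heaviness caveat), at the cost of having to invoke that caveat; the paper avoids it only because its proof, and its later use of the lemma, silently concern just $|V_\ell\setminus B|$.
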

\begin{proof}
Suppose not, that is, for all $B$ with radius $O(d \xi)$, we have $|S_\ell \triangle B| \leq \Omega(d \xi) \alpha' n$.
Consider the ball of radius $O(\sqrt{m \xi})$ centered at each $v_i$ for $i \in S_\ell$.
By assumption there are $\Omega(d \xi) \alpha' n$ vectors outside the ball, that is, with distance at least $\Omega(\sqrt{d \xi})$ from $v_i$.
Then
\[
\sum_{i, j \in S_\ell} \| v_i - v_j \|_2^2 \geq n \cdot \Omega (d \xi) \Omega(d \xi) \alpha n \geq \Omega (d^2 \xi^2) \alpha' n \; ,
\]
which contradicts the previous lemma.
\end{proof}

Associate to each cluster $\ell \in T$ a ball $B_\ell$ so that $|V_\ell \triangle B| \leq \Omega(d \xi) \alpha' n$.
Let $\phi_\ell$ denote the center of $B_\ell$.
We now show that if we have two $j, \ell$ so that either $\| \phi_j \|$ or $\| \phi_\ell \|$ is large, then $B_\ell$ and $B_j$ must be disjoint.
Formally:
\begin{lemma}
Assume that~\eqref{eq:nonuniform-moment-bound} holds.
Let $j, \ell \in T$ so that $\| \phi_j \|^2 + \| \phi_\ell \|^2 \geq \Omega(\alpha')$ .
Then $B_j \cap B_\ell = \emptyset$.
\end{lemma}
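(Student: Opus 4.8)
The plan is to convert the moment bound~\eqref{eq:nonuniform-moment-bound} into the statement that the ball centers $\phi_j$ and $\phi_\ell$ are nearly orthogonal, and then combine this with the hypothesis $\Norm{\phi_j}^2 + \Norm{\phi_\ell}^2 \ge \Omega(\alpha')$ to conclude that $\Norm{\phi_j - \phi_\ell} = \Omega(\sqrt{\alpha'})$, which far exceeds the sum $O(\sqrt{d\xi})$ of the two radii; hence $B_j \cap B_\ell = \emptyset$. (Throughout $j \ne \ell$, since otherwise the hypothesis cannot hold for a genuine ball.)

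First I would record the elementary facts about the Gram vectors $v_i$, defined by $\iprod{v_i, v_{i'}} = \pE w_i w_{i'}$: since $\pE$ satisfies $w_i^2 = w_i$ and $\pE q^2 \ge 0$ for low-degree $q$, one has $\iprod{v_i, v_{i'}} = \pE(w_i w_{i'})^2 \ge 0$ and $\Norm{v_i}^2 = \pE w_i \in [0,1]$, and moreover $\iprod{\sum_{i \in S_j} v_i, \sum_{i' \in S_\ell} v_{i'}} = \pE \iprod{a_j, w}\iprod{a_\ell, w}$, which by~\eqref{eq:nonuniform-moment-bound} lies in $[0,\ O(\eta \xi^2)(\alpha')^2 n^2]$. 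Next, using $|V_j \triangle B_j| \le O(d\xi)\alpha' n$, the bound $|S_j \cap B_j| \ge \alpha' n/2$ (valid for $j \in T$ since $|S_j| = (1\pm o(1))\alpha' n$), and the radius $O(\sqrt{d\xi})$ of $B_j$, I would write $\sum_{i \in S_j} v_i = |S_j \cap B_j|\,\phi_j + E_j$ with $\Norm{E_j} \le |S_j \cap B_j|\cdot O(\sqrt{d\xi}) + |S_j \setminus B_j|\cdot 1 = O(\sqrt{d\xi})\,\alpha' n$ (the points inside $B_j$ lie within the radius of $\phi_j$, and each of the at most $O(d\xi)\alpha' n$ points outside has norm at most $1$); picking any $i \in S_j \cap B_j$ also gives $\Norm{\phi_j} \le 1 + O(\sqrt{d\xi}) \le 2$. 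Expanding $\iprod{\sum_{i \in S_j} v_i, \sum_{i' \in S_\ell} v_{i'}}$ in terms of $\phi_j, \phi_\ell, E_j, E_\ell$, the main term is $|S_j\cap B_j|\,|S_\ell\cap B_\ell|\,\iprod{\phi_j, \phi_\ell}$ and each of the three remaining cross/error terms is bounded by $O(\sqrt{d\xi})(\alpha')^2 n^2$ via $\Norm{\phi_j}, \Norm{\phi_\ell} \le 2$ and the bounds on $\Norm{E_j}, \Norm{E_\ell}$; dividing through by $|S_j\cap B_j|\,|S_\ell\cap B_\ell| = \Omega((\alpha' n)^2)$ then gives $|\iprod{\phi_j, \phi_\ell}| \le O(\sqrt{d\xi})$. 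Finally $\Norm{\phi_j - \phi_\ell}^2 = \Norm{\phi_j}^2 + \Norm{\phi_\ell}^2 - 2\iprod{\phi_j, \phi_\ell} \ge \Omega(\alpha') - O(\sqrt{d\xi}) = \Omega(\alpha')$, and since $\sqrt{d\xi} \ll \sqrt{\eta} \le \sqrt{\alpha'}$ this forces $\Norm{\phi_j - \phi_\ell}$ to exceed $O(\sqrt{d\xi})$, proving disjointness.

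The main obstacle is purely bookkeeping: I must check that every error scale appearing — $\eta\xi^2$ from the moment bound, $\sqrt{d\xi}$ from the ball approximations, $d\xi$ from the miscounted points, and the $\xi^4$ slack in $|\lambda_j - \alpha'|$ — is genuinely of lower order than $\alpha' \ge \eta$, which is exactly what the deliberately tiny choice $\xi = \eta^2/(dk)^{100}$ buys. The one step worth stating carefully is $\Norm{\phi_j} = O(1)$, i.e. that $\phi_j$ is an honest approximate centroid of the unit-norm vectors $V_j$ and not an arbitrary point of a covering ball; this is what controls the cross terms $|S_j\cap B_j|\,\iprod{\phi_j, E_\ell}$, and it follows because a constant fraction of the vectors of $V_j$ lie within distance $O(\sqrt{d\xi})$ of $\phi_j$.
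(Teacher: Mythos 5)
Your proposal is correct: every step checks out, including the facts that $\pE w_i w_{i'} = \pE (w_iw_{i'})^2 \ge 0$ and $\|v_i\|^2 = \pE w_i \le 1$ modulo the Booleanness constraints, the decomposition $\sum_{i \in S_j} v_i = |S_j \cap B_j|\phi_j + E_j$ with $\|E_j\| \le O(\sqrt{d\xi})\,\alpha' n$, and the scale bookkeeping $\sqrt{d\xi} \ll \eta \le \alpha'$. The route differs from the paper's in a genuine, if modest, way. The paper never isolates the centers: it lower-bounds the aggregate $\sum_{i \in B_j, k \in B_\ell} \|v_i - v_k\|^2$ by $(\|\phi_j\|^2 + \|\phi_\ell\|^2 - o(\alpha'))(\alpha' n)^2 = \Omega(\alpha')(\alpha' n)^2$, using \eqref{eq:nonuniform-moment-bound} to kill the cross term $\pE\iprod{a_j,w}\iprod{a_\ell,w}$, and then derives a contradiction because intersecting balls of radius $O(\sqrt{d\xi})$ would force all these pairwise squared distances to be $O(d\xi)$. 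You instead convert the same cross-term bound into the clean intermediate statement $|\iprod{\phi_j,\phi_\ell}| \le O(\sqrt{d\xi})$ (near-orthogonality of the centers) and conclude directly that $\|\phi_j - \phi_\ell\| = \Omega(\sqrt{\alpha'})$ exceeds the sum of the radii. The two arguments use identical ingredients (\eqref{eq:nonuniform-moment-bound}, the ball approximation of $V_j$, $\|v_i\| \le 1$, $|S_j| = (\alpha' \pm O(\xi^4))n$ for $j \in T$), but yours is direct rather than by contradiction and produces a reusable geometric fact about the centers, at the cost of having to introduce and control the error vectors $E_j$ and to verify $\|\phi_j\| = O(1)$ explicitly — a point you rightly flag, and which the paper's pairwise-distance computation also implicitly relies on but never has to state.
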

\begin{proof}
We have 
\begin{align*}
\sum_{i \in B_j, k \in B_\ell} \| v_i - v_k \|^2 &= \sum_{i \in B_j, k \in B_\ell} \| v_i \|^2 + \|v_k\|^2 - 2 \langle v_i, v_k \rangle \\
&= |B_\ell| \sum_{i \in B_j} \| v_i \|^2 + |B_j| \sum_{k \in B_\ell} \| v_k \|^2 - 2 \sum_{i \in B_j, k \in B_\ell} \pE w_i w_k \\
&\geq (\alpha' - O(\xi^4)) n \left( \sum_{i \in B_j} \| v_i \|^2 + |B_j| \sum_{k \in B_\ell} \| v_k \|^2 \right) - 2 \pE \iprod{a_j, w} \iprod{a_\ell, w} \\
&\geq (\alpha' - O(\xi^4)) n \left( \sum_{i \in B_j} \| v_i \|^2 + \sum_{i \in B_k} \| v_k \|^2 \right) - O(\eta \xi^2) (\alpha')^2 n^2 \; .
\end{align*}
Observe that 
\begin{align*}
\sum_{i \in B_j} \| v_i \|^2 &= \sum_{i \in B_j, v_i \in B_j} \| v_i \|^2 + \sum_{ \in B_j, v_i \not\in B_j} \| v_i \|^2 \\
&\geq (1 - O(d \xi)) \alpha' n \left( \| \phi_0 \|^2  - d \xi \right) + O(d \xi) \alpha' n \\
&\geq \alpha' n \| \phi_0 \|^2 - O(m \xi) \alpha' n \; .
\end{align*}
since generically $\| v_i \|^2 = \pE w_i^2 \leq 1$.
Symmetrically we have $\sum_{k \in B_\ell} \| v_k \|^2 \geq (\| \phi_1 \|^2 - O(d \xi)) \alpha' n$.
Hence we have
\[
\sum_{i \in B_j, k \in B_\ell} \| v_i - v_k \|^2 &\geq (\| \phi_1 \|^2 + \| \phi_2 \|^2 - O(m \xi)) (\alpha')^2 n^2 \geq \Omega (\alpha')^2 \cdot (\alpha')^2 n^2 \; .
\]
Now suppose that $B_j \cap B_\ell \neq \emptyset$.
This implies that for all except for a $O(d \xi) (\alpha')^2 n^2$ set of pairs $i, j$ (i.e. those containing $v_i \not\in B_j$ or $v_j \not\in B_\ell$), the pairwise squared distance is at most $O(d \xi)$.
Since the pairwise distance between any two points is at most $2$, this is a clear contradiction.
\end{proof}

Finally, we show that a random point with large norm will likely be within a $B_\ell$.
\begin{lemma}
Let $i$ be a uniformly random index over the set of indices so that $\| v_i \|^2 \geq \alpha' / 100$.
Then, with probability $1 - O(d \xi)$, $v_i \in B_\ell$ for some $\ell$.
\end{lemma}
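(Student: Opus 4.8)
The plan is to prove this by a direct counting argument layered on top of the two structural estimates already established above: the bound $\sum_{\ell \in T}\sum_{i,j \in S_\ell}\|v_i - v_j\|^2 \le O(d^2\xi^2)(\alpha')^2 n^2$ (together with the intermediate estimate $\sum_{\ell\in T}\sum_{i\in S_\ell}\|v_i\|^2 \ge (1-O(d\xi))\alpha' n$ appearing in its proof), and Lemma~\ref{lem:no-weight-on-small}. Write $G = \{\, i \in [n] : \|v_i\|^2 \ge \alpha'/100\,\}$, the set over which $i$ is drawn uniformly, and recall $\|v_i\|^2 = \iprod{v_i,v_i} = \pE w_i^2 = \pE w_i \in [0,1]$ and $\sum_{i} \|v_i\|^2 = \pE\sum_i w_i \ge (1-\tau)\alpha' n$ by the size axiom. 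First I would record that $|G| = \Omega(\alpha' n)$: the indices outside $G$ contribute at most $\alpha' n/100$ to $\sum_i \|v_i\|^2$ while each term is at most $1$, so $|G| \ge (1-\tau-\tfrac1{100})\alpha' n \ge \tfrac12\alpha' n$.

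Next I would bound the number of ``heavy'' indices that escape every ball. Let $W$ be the set of $i$ with $v_i \notin B_\ell$ for all $\ell \in T$, where the $B_\ell$ (of radius $O(\sqrt{d\xi})$, with $|V_\ell \triangle B_\ell| \le O(d\xi)\alpha' n$) are the balls produced above. The key is that the mass $\sum_{i\in W}\|v_i\|^2$ is small. Writing $\sum_{i\in W}\|v_i\|^2 = \sum_{i\in[n]}\|v_i\|^2 - \sum_{i\notin W}\|v_i\|^2$, I would upper bound the first term by $(1+\tau)\alpha' n$ and lower bound the second by $\sum_{\ell\in T}\sum_{i\in S_\ell,\, v_i\in B_\ell}\|v_i\|^2 \ge \sum_{\ell\in T}\sum_{i\in S_\ell}\|v_i\|^2 - \sum_{\ell\in T}|V_\ell\setminus B_\ell| \ge (1-O(d\xi))\alpha' n - \poly(k)\cdot O(d\xi)\alpha' n$, using $\|v_i\|^2 \le 1$ on the discarded terms. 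Hence $\sum_{i\in W}\|v_i\|^2 = O(d\xi)\alpha' n$, after absorbing the $\poly(k)$ and $\tau$ overheads into the constant (legitimate since $\xi$ is a large inverse power of $dk/\eta$). Since every $i\in G\cap W$ has $\|v_i\|^2 \ge \alpha'/100$, this gives $|G\cap W| \le 100\,\sum_{i\in W}\|v_i\|^2/\alpha' = O(d\xi)\,n$.

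Finally I would conclude by averaging: $\Pr_{i\sim G}[\,i\in W\,] = |G\cap W|/|G| \le O(d\xi)n/\Omega(\alpha' n) = O(d\xi)$, using $\alpha'\ge\eta=\Omega(1)$; so with probability $1-O(d\xi)$ the chosen $v_i$ lies in $B_\ell$ for some $\ell\in T$, as claimed. As a consistency check relevant to the next step of the rounding procedure: when $\|v_i\|^2\ge\alpha'/100$ and $v_i\in B_\ell$ with $B_\ell$ of radius $o(1)$, the center obeys $\|\phi_\ell\|^2 = \Omega(\alpha')$, which is precisely the regime in which two such balls are forced to be disjoint. The main obstacle is not conceptual — the real work lives in the preceding lemmas — but bookkeeping: one must keep every small-error term ($\tau$, the per-cluster ball errors summed over $\ell\in T$, and the $(1\pm O(d\xi))$ slacks) comfortably below the $\Omega(\eta)$ mass of $G$, which is exactly what dictates the particular (large) choice of $\xi$ in terms of $d$, $k$, and $\eta$.
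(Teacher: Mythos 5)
Your proof is correct and follows essentially the same route as the paper's: both lower-bound the number of heavy indices by $\Omega(\alpha' n)$, bound the heavy indices not captured by any $B_\ell$ using the mass estimate coming from Lemma~\ref{lem:no-weight-on-small} together with the $|V_\ell \triangle B_\ell| \leq O(d\xi)\alpha' n$ property, and finish by Markov/averaging. The only difference is cosmetic bookkeeping --- you do one global mass-accounting step (total mass minus in-ball mass) where the paper splits the bad heavy indices into ``small-cluster'' and ``in $T$ but outside $B_\ell$'' categories --- and both arguments absorb the same harmless factors of $|T|$ and $1/\eta$ into the $O(d\xi)$ constant, which the tiny choice of $\xi$ justifies.
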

\begin{proof}
Observe that since $\| v_i \|^2 \leq 1$ and $\sum \| v_i \|^2 = \alpha' n$ there are at least $(1 - 1 / 100) \alpha' n$ vectors with $\| v_i \|^2 \geq \alpha' / 100$.
We have
\[
\sum_{\ell \not\in T} \| v_i \|^2 = \sum_{\ell \not\in T} \pE \iprod{a_\ell, w} \leq O(d \xi^2) \alpha' n \; ,
\]
so by Markov's inequality the number of $i$ with $i \in \cup_{\ell \not\in T} S_\ell$ and $\| v_i \|^2 \geq \alpha' / 100$ is at most $100 \cdot O(d \xi^2) n \ll O(m \xi) \alpha' n$.
There are at most $O(d \xi) \alpha' n$ vectors $v_i$ so that $v_i \in S_\ell$ for $\ell \in T$ and $v_i \not\in B_\ell$, and so the probability that a vector with $\| v_i \|^2 \geq \alpha' / 100$ is not of the desired form is at most $O(d \xi)$, as claimed.
\end{proof}

This completes the proof of Lemma~\ref{lem:rounding-main-nonuniform}, since this says that if we choose $i$ uniformly at random amongst all such $\| v_i \|^2 \geq \alpha / 100$, then with probability $1 - O(d \xi)$, we have $v_i \in B_\ell$ for some $B_\ell$ with $\| \phi_\ell \| = \Omega (\alpha')$, and hence if we look in a $O(\sqrt{d \xi})$ ball around it, it will contain all but a $O(d \xi) \alpha' n$ fraction of points from $S_\ell$.

\end{document}